\documentclass{article}

\usepackage{soul}

\usepackage[affil-it]{authblk}
\usepackage[usenames,dvipsnames]{xcolor}
\usepackage{amsfonts}
\usepackage{amsmath,amsthm,amssymb,dsfont}
\usepackage{enumerate}
\usepackage[english]{babel}
\usepackage{graphicx}	
\usepackage[caption=false]{subfig}
\usepackage[margin=3cm]{geometry}   
\usepackage{url}
\usepackage{todonotes}
\usepackage{bbm}
\usepackage{booktabs}
\usepackage{subcaption}
\usepackage{tabularray}
\UseTblrLibrary{booktabs}

\usepackage[T1]{fontenc}

\usepackage{hyperref}
\usepackage{url}

\usepackage{pifont}
\newcommand{\cmark}{\ding{51}}%
\newcommand{\xmark}{\ding{55}}%

\usepackage{tikz}
\usetikzlibrary{chains}
\usetikzlibrary{fit}
\usepackage{pgflibraryarrows}		
\usepackage{pgflibrarysnakes}		

\usepackage{makecell}

\usepackage{epsfig}
\usetikzlibrary{shapes.symbols,patterns} 
\usepackage{pgfplots}
\pgfplotsset{compat=newest}

\usepackage{mathrsfs}

\usepackage{hyperref}
\hypersetup{colorlinks=true,citecolor=blue,linkcolor=blue,filecolor=blue,urlcolor=blue,breaklinks=true}

\usepackage[capitalise]{cleveref}

\usepackage{nicefrac}
\usepackage{mathtools}

\theoremstyle{plain}
\newtheorem{theorem}{Theorem}[section]
\newtheorem{lemma}[theorem]{Lemma}

\newtheorem{claim}[theorem]{Claim}

\newtheorem{corollary}[theorem]{Corollary}

\theoremstyle{definition}
\newtheorem{definition}[theorem]{Definition}

\newcommand{\mcA}{\mathcal{A}}
\newcommand{\mcB}{\mathcal{B}}
\newcommand{\mcC}{\mathcal{C}}
\newcommand{\mcD}{\mathcal{D}}
\newcommand{\mcE}{\mathcal{E}}
\newcommand{\mcF}{\mathcal{F}}

\newcommand{\mcH}{\mathcal{H}}
\newcommand{\mcI}{\mathcal{I}}

\newcommand{\mcK}{\mathcal{K}}
\newcommand{\mcL}{\mathcal{L}}
\newcommand{\mcM}{\mathcal{M}}
\newcommand{\mcN}{\mathcal{N}}

\newcommand{\mcP}{\mathcal{P}}

\newcommand{\mcR}{\mathcal{R}}
\newcommand{\mcS}{\mathcal{S}}

\newcommand{\mcZ}{\mathcal{Z}}

\newcommand{\mbE}{\mathbb{E}}

\newcommand{\mbH}{\mathbb{H}}

\newcommand{\mbR}{\mathbb{R}}

\usepackage{mathtools}
\DeclarePairedDelimiter\ceil{\lceil}{\rceil}

\newcommand{\ket}[1]{| #1 \rangle}
\newcommand{\bra}[1]{\langle #1 |}
\newcommand{\braket}[2]{\langle #1|#2\rangle}
\newcommand{\ketbra}[2]{|#1\rangle\!\langle#2|}
\newcommand{\id}{I}
\newcommand{\tr}{{\mathrm {tr}}}

\usepackage[most]{tcolorbox}
\begin{document}

\title{Chain rules for conditional entropies in quantum cryptography: limitations and improvements} 

\author[1]{Lewis Wooltorton\thanks{\href{mailto:lewis.wooltorton@ens-lyon.fr}{lewis.wooltorton@ens-lyon.fr}}}
\author[2]{Peter Brown\thanks{\href{mailto:peter.brown@telecom-paris.fr}{peter.brown@telecom-paris.fr}}}
\author[1]{Omar Fawzi\thanks{\href{mailto:omar.fawzi@ens-lyon.fr}{omar.fawzi@ens-lyon.fr}}}
\affil[1]{Inria, ENS de Lyon, UCBL, LIP, 46 Allee d’Italie, 69364 Lyon Cedex 07, France}
\affil[2]{Télécom Paris, LTCI, Institut Polytechnique de Paris, 19 Place Marguerite Perey, 91120 Palaiseau, France}

\date{May 28, 2026}

\maketitle

\begin{abstract}
    Security proofs in quantum cryptography rely on conditional entropies. In a many-round protocol, their estimation is a challenging task; one must account for the most general attacks by an eavesdropper, including those that are not independently and identically distributed (i.i.d.) across all rounds. Chain rules address this problem by relating the conditional entropy of a structured, but non-i.i.d.~process to a sum of entropy contributions from each round. They are a key ingredient in entropy accumulation theorems (EATs), which provide a versatile security proof framework for many protocols in quantum cryptography. Recently, chain rules in the setting of trusted devices have lead to tight i.i.d.~reductions at a finite number of rounds, and whether analogous results can be recovered in the device-independent (DI) setting has not been addressed. Surprisingly, we show that a natural tightening of the chain rule of Dupuis \emph{et al.} [Commun. Math. Phys. \textbf{379}, 867–913 (2020)] that would answer this question affirmatively cannot hold, highlighting a limitation of the current DI security proof approach. Nonetheless, we show that an intermediate improvement is possible by proving a new chain rule in this setting. Following the framework of Arqand \emph{et al.} [Phys. Rev. X \textbf{15}, 041013 (2025)], we use our chain rule to provide a slightly tighter version of the R\'enyi EAT in certain contexts. In addition, we provide a self-contained framework that unifies existing chain rules and compares their applications, framing our results in a broader context.    
\end{abstract}

\newpage

\tableofcontents 

\newpage

\section{Introduction}
\label{sec:intro}

Conditional entropies characterize the performance of many information processing tasks. A principle example is in quantum cryptography, where one or more honest parties wish to generate data that is secret from an adversary, Eve. Instances include the generation of private randomness or the distribution of a shared secret key~\cite{BB84,Ekert,Renner,ColbeckThesis}. Depending on the setting, Eve may access numerous features of the experiment; she might intercept and manipulate quantum signals sent along an insecure quantum channel, or, in the most severe scenario, prepare the devices used and exploit memory effects. Based on the assumptions made and the statistics collected while running the experiment, a protocol is secure if its output data is decoupled from any system that could be held by Eve. Conditional entropies provide a framework for proving such statements. Specifically, security can be expressed in terms of the entropy of the raw data generated during the protocol conditioned on all possible side information held by Eve at the end of its execution. 

The leftover hashing lemma~\cite{Renner,TSSR} is an example of this connection: it states that a secure key can be extracted from an imperfect source using two-universal hashing when the key length roughly equals the smooth conditional min-entropy of the source, a procedure known as randomness extraction~\cite{RK_05,Konig_2008, De_2012, Foreman2025cryptomite}. The amount of secure key that can be extracted is often more directly quantified by R\'enyi entropies~\cite{renyi61,ML_2013}. In particular, it was shown by Renner that bounding the sandwiched R\'enyi entropy of order $\alpha = 2$ is sufficient~\cite{Renner}, and this was extended to all $\alpha \in (1,2]$ by Dupuis~\cite{Dupuis23}. 

Thus, for many security proofs, the central challenge is bounding a type of conditional entropy between the raw data accumulated across many rounds and all relevant side information. This problem is challenging as the number of rounds is large and one must account for the most general attacks by Eve, which may not be independent and identically distributed (i.i.d.). An important step towards making such a problem tractable is the use of chain rules, which allow us to rewrite or bound the total entropy in terms of the entropies of the sub-systems. Applied to randomness extraction, the problem of characterizing the total entropy produced during the protocol can then often be reduced to characterizing the entropy produced during a single round, which can be easier to analyze. 

Many chain rules have been developed for different entropies and different applications. For example, quantum key distribution (QKD) can be described by the preparation of a state entangled between two sub-systems in a secure laboratory by one party, named Alice. Alice then transmits one of the sub-systems through an insecure quantum channel to her recipient, Bob. Eve may modify this signal before it reaches Bob, however she cannot influence the sub-system held in Alice's lab, resulting in a marginal constraint on the measured state~\cite{Curty_2004}. When considering the accumulated entropy over many such interactions, one must consider chain rules that incorporate these marginal constraints~\cite{VHB25,Fawzi2026,arqand2025}. A different situation arises in the study of device-independent (DI) protocols, where no aspect of the underlying state is characterized and the devices can hold an internal memory. These memories cannot leak arbitrary information to Eve, otherwise the protocol becomes trivially insecure. It is therefore common to consider the accumulated entropy from a sequence of quantum channels in which certain sub-systems can be updated each round, but not included in the side information~\cite{DFR,MetgerGEAT,arqand2024}.

For applications, a chain rule is often presented as an entropy accumulation theorem (EAT)~\cite{DFR}. This consists of a successive application of the rule to an $n$-round process, combined with a technical procedure called testing: in a standard cryptographic protocol, classical data may be collected each round that is used as an indicator of how well the devices are performing at the given task. For example, it may be used to estimate the quantum bit error rate between two separated devices during a QKD experiment, or the average score achieved in a nonlocal game. This testing data is crucial for inferring the presence of an eavesdropper, something that can be observed as excess noise. In this scenario the user can abort the protocol if they detect that the eavesdropper potentially has gained too much information about their secret. Otherwise, the testing procedure allows the user to certify an accumulated entropy of roughly $n$ times the worst case single round entropy compatible with the expected level of noise. This approach has been successful in establishing finite size security under general attacks for a variety of cryptographic protocols~\cite{Arnon-Friedman2018,LLR&,Metger2023,tupkary2026}.

The EAT and its variants are often referred to as an i.i.d.~reduction. They imply that the optimal attack by Eve is close to one in which she behaves independently and identically according to the expected level of noise in every round. Moerover, as the number of rounds increases, the difference between general and i.i.d.~attacks diminishes at a speed of roughly $O(1/\sqrt{n})$~\cite{DFR}. In fact, certain chain rules in the device-dependent regime are tight for a standard choice of R\'enyi entropy~\cite{VHB25,Fawzi2026,arqand2025}. That is, the worst case entropy over $n$ rounds is equal to $n$ times the worst case entropy of a single round, without the need for an $O(\sqrt{n})$ correction. This implies that an i.i.d.~attack is the optimal general attack for Eve even at a finite number of rounds.  While many advancements have been made in the DI setting~\cite{DFR,arqand2024}, it is yet to be seen if the same holds:

\vspace{0.2cm}

\noindent \textbf{Question 1:} \textit{Are there DI protocols for which i.i.d.~attacks are optimal at a finite number of rounds?}

\vspace{0.2cm}

In this manuscript, we identify a natural improvement to the chain rule of Dupuis, Fawzi and Renner~\cite{DFR} that would answer Question 1 affirmatively. Surprisingly, we show this chain rule cannot hold via an explicit counterexample, suggesting that a positive answer to Question 1 cannot follow from naturally tightening the existing DI proof framework. Nonetheless, we prove a new chain rule in terms of an intermediate conditional entropy that sits in between our no-go result and the existing chain rule of Ref.~\cite{DFR}. Equipped with this, we present a self-contained proof of the R\'enyi EAT by Arqand, Hahn and Tan~\cite{arqand2024} that is tighter in certain contexts; it is an open question which protocols benefit from this improvement, and we discuss some possibilities. As an additional contribution, we present a unified framework for comparing chain rules in quantum cryptography, capturing key examples in the literature. This overview discusses how different chain rules relate to each other, and to which settings they apply, allowing us to clearly highlight the difference between the device-dependent and device-independent setting and improve the accessibility of the topic. 

The remainder of the paper is structured as follows. In \cref{sec:setting}, we introduce our unified framework for chain rules, outlining the general quantum process and template chain rule we consider. In \cref{sec:survey}, we present a survey of the existing chain rules; a summary can be found in \cref{tab:summary}. \cref{sec:new_chain} contains a counterexample, followed by our improved chain rule. \cref{sec:REAT} presents our new variant of the entropy accumulation theorem based on this chain rule, and we conclude with a discussion in \cref{sec:disc}. 

\section{Cryptographic setting}\label{sec:setting}

\subsection{Notation and definitions}
Quantum systems are denoted in italics, e.g., $A$, and $\mcH_{A}$ is their associated Hilbert space that is assumed to be finite dimensional. The set of linear operators and positive semidefinite operators on $\mcH_{A}$ is denoted by $\mcB(\mcH_{A})$ and $\mcP(\mcH_{A})$, respectively. A normalized (sub-normalized) state $\rho$ is an element of $\mcP(\mcH_{A})$ with $\tr[\rho] = 1$ ($\tr[\rho] \leq 1$). The set of all normalized and sub-normalized states is denoted by $\mcD(\mcH_{A})$ and $\mcD_{\leq}(\mcH_{A})$, respectively. A quantum channel is a completely positive trace preserving (CPTP) map $\mcM : \mcB(\mcH_{A}) \to \mcB(\mcH_{A'})$, which we abbreviate to $\mcM:A \to A'$. We will also abbreviate sets of operators, e.g., $\mcD(\mcH_{A})$ to $\mcD(A)$, and $\mcD(\mcH_{A} \otimes \mcH_{B})$ to $\mcD(AB)$. When a channel $\mcM : A \to A'$ acts on a sub-system $A$ of a bipartite system $AB$, we will suppress its tensor product with the identity channel on $B$, $\mcI_{B}$. That is, we write $\mcM(\rho_{AB})$ for $(\mcM \otimes \mcI_{B})(\rho_{AB})$. We will also suppress the tensor product on states when it is clear from context. For example, we write $\sigma_{B}\rho_{AB}$ for $(\id_{A} \otimes \sigma_{B})\rho_{AB}$. Given multiple sub-systems $A_{1}$, $A_{2}$, ..., $A_{n}$, we denote the joint system by $A^{n} = A_{1}A_{2}\cdots A_{n}$. For a bipartite state $\rho_{AB} \in \mcD(AB)$, we define the conditional operator $\rho_{A|B} = \rho_{B}^{-\frac{1}{2}} \rho_{AB} \rho_{B}^{-\frac{1}{2}}$, where $\rho_{B} = \tr_{A}[\rho_{AB}]$. For a classical system $C$ with a finite alphabet $\mcC$, the set of probability distributions on $C$ is denoted by $\Delta(C)$. Throughout, $\log$ denotes the base 2 logarithm. 

Given a pair of operators $\rho, \, \sigma \in \mcP(A)$, the sandwiched R\'enyi divergence between $\rho$ and $\sigma$ is defined for $\alpha \in (1,\infty)$ as~\cite{ML_2013}
\begin{equation}
    D_{\alpha}(\rho \| \sigma) := \frac{1}{\alpha - 1} \log \Bigg( \frac{\tr\big[\big(\sigma^{\frac{1-\alpha}{2\alpha}} \rho \sigma^{\frac{1-\alpha}{2\alpha}} \big)^{\alpha}\big]}{\tr[\rho]} \Bigg) 
\end{equation}
if the support of $\rho$ is contained in the support of $\sigma$, and $D_{\alpha}(\rho \| \sigma) := + \infty$ otherwise. In the limit $\alpha \to 1$, the Umegaki relative entropy, $D(\rho \| \sigma) = \tr[\rho (\log \rho - \log \sigma)]$, is recovered. Using this R\'enyi divergence, the conditional (sandwiched) R\'enyi entropy of a bipartite state $\rho_{AB} \in \mcD(AB)$ can be defined in multiple ways:
\begin{equation}
    H^{\downarrow}_{\alpha}(A|B)_{\rho} := -D_{\alpha}(\rho_{AB} \| \id_{A} \otimes \rho_{B})
\end{equation}
and
\begin{equation}
    H^{\uparrow}_{\alpha}(A|B)_{\rho} := \sup_{\sigma_B \in \mcD(B)} -D_{\alpha}(\rho_{AB} \| \id_{A} \otimes \sigma_{B}).
\end{equation}
When $A$ and $B$ are classical random variables with a probability distribution $p_{AB}(a,b)$, the conditional R\'enyi entropies reduce to (see~\cite[Section 5.2.2]{Tomamichel_2016}) 
\begin{equation}
    H_{\alpha}^{\downarrow}(A|B)_{p} = \frac{1}{1-\alpha} \log \Bigg( \sum_{b} p_{B}(b)\sum_{a}p_{A|B}(a|b)^{\alpha}\Bigg) \label{eq:classical_down}
\end{equation}
and
\begin{equation}
    H_{\alpha}^{\uparrow}(A|B)_{p} = \frac{\alpha}{1-\alpha} \log \Bigg( \sum_{b} p_{B}(b) \Bigg[\sum_{a}p_{A|B}(a|b)^{\alpha}\Bigg]^{\frac{1}{\alpha}}\Bigg)\,. \label{eq:classical_up}
\end{equation}

\subsection{Problem setup} \label{sec:setup}    

Cryptographic protocols can be modeled using a sequence of quantum channels. Each channel accepts an input state on multiple sub-systems, some of which may be controlled by Eve, while others may be a trusted input made by the user. We refer to the quantum systems controlled by or accessible to Eve as side information. Each channel then outputs a system that is not accessible to Eve, and the collection of all such outputs constitutes the raw secret data generated by the protocol. Each channel also updates the side information of the previous round, and updates any internal memory held by the devices. The former might include generating new information that will become known to Eve, or modeling any operation performed by Eve on the sub-systems she controls. 

We now formalize this process. Consider a sequence of $n$ quantum channels, $\mcM_{i}$, for $i \in \{1,...,n\}$. Each channel has three inputs and three outputs, $\mcM_{i}:Y_{i-1}R_{i-1}E_{i-1} \to A_{i}R_{i}E_{i}$. The composition of two such channels is depicted in \cref{fig:chan}, and we describe the role of each sub-system below in the context of cryptographic applications.

\begin{figure}[h]
\includegraphics[width=10cm]{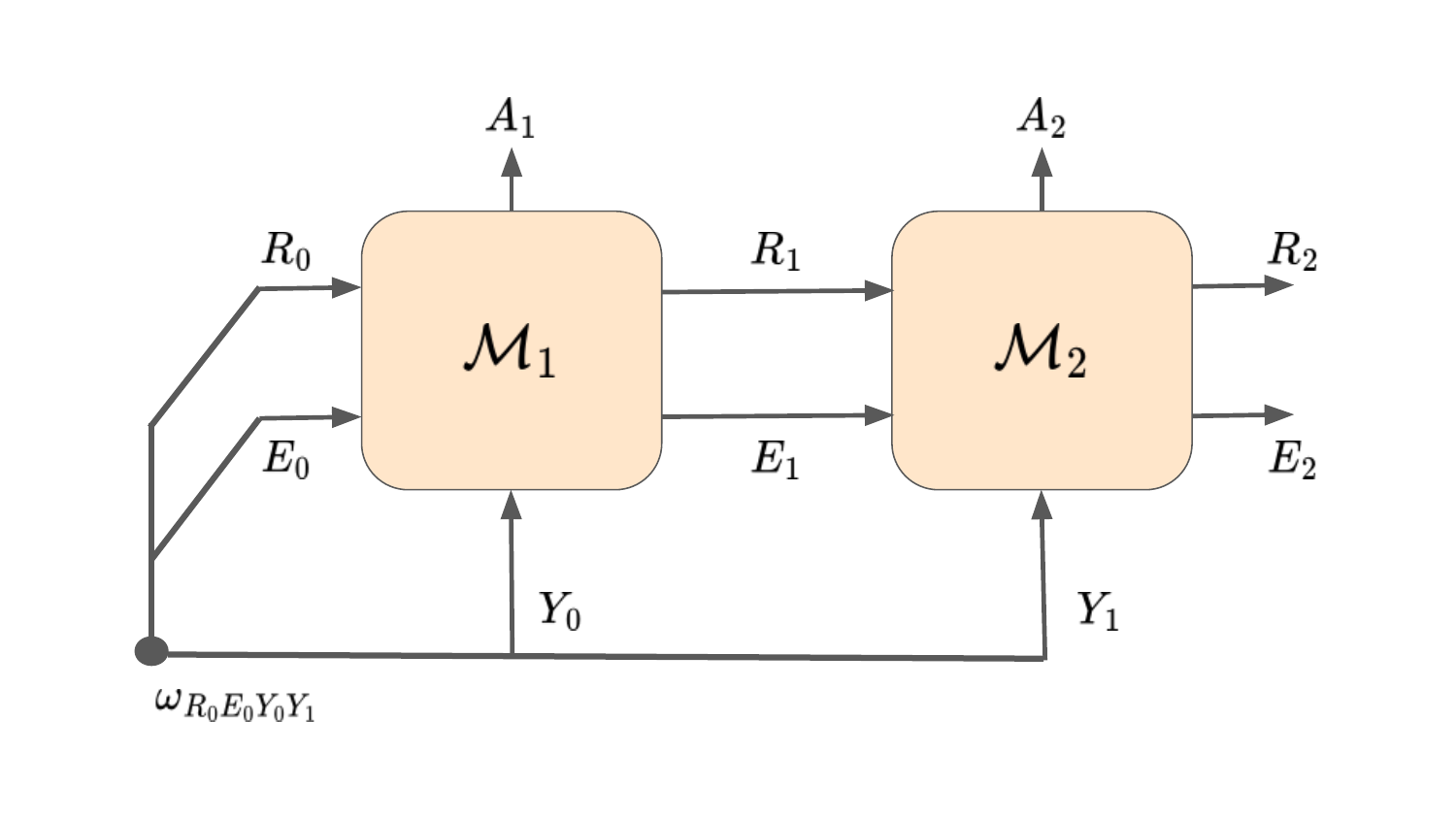}
\centering
\caption{Two steps of a general cryptographic protocol. The protocol accepts an input state $\omega_{R_{0}E_{0}Y_{0}Y_{1}}$, and performs a sequence of channels that output a piece of secret information $A_{i}$ at each step, and update the side information $E_{i}$ along with an internal memory $R_{i}$. The systems $Y_{i}$ are trusted, in the sense that the marginal $\omega_{Y_{0}Y_{1}}$ (or some linear constraints on it) is fixed and known to the user. The objective is to certify the entropy of the systems $A_{1}A_{2}$ conditioned on the side information $E_{2}$, $\mbH(A_{1}A_{2}|E_{n})$. Under certain assumptions on $\mcM_{1}$ and $\mcM_{2}$, a chain rule lower bounds this by the conditional entropy at each step; roughly speaking, the sum of $\mbH(A_{1}|E_{1})$ and $\mbH(A_{2}|A_{1}E_{2})$.} 
\label{fig:chan}
\end{figure}

\begin{itemize}
    \item The system $Y_{i-1}$ is a trusted input to $\mcM_{i}$. This means that, for any input state $\rho_{Y_{i-1}R_{i-1}E_{i-1}}$, the marginal $\rho_{Y_{i-1}} = \tr_{R_{i-1}E_{i-1}}[\rho_{Y_{i-1}R_{i-1}E_{i-1}}]$ satisfies some linear constraints. Furthermore, each sub-system $Y_{i-1}$ is consumed, and therefore does not appear as an output of $\mcM_{i}$. In the cryptographic context, this may account for a quantum system that Eve cannot manipulate. Such marginal constraints arise in prepare and measure QKD (see \cref{sec:marg1,sec:marg2}).

    \item The second type of system, $R_{i-1}$, is an untrusted input to $\mcM_{i}$ that can be updated, $R_{i-1} \to R_{i}$. This represents the internal memory of any devices used to run the protocol (see \cref{sec:markov,sec:NS}).

    \item The third type of system, $E_{i-1}$, is the side information associated to round $i-1$, and can be updated, $E_{i-1} \to E_{i}$, by $\mcM_{i}$. There are at least three scenarios this system can capture. Firstly, it may represent any side information present before the protocol commences, e.g., a system held by Eve that is entangled with the initial system held by the honest parties. Secondly, it can account for new information generated each round that becomes available to Eve. This could, for example, include measurement settings that are publicly announced, or measurement outcomes that are leaked to Eve. Thirdly, it can account for a sequential interaction between Eve and the honest parties. A key example of this is in prepare and measure QKD, where any system sent by Alice to Bob through an insecure quantum channel can be intercepted and modified by Eve (see \cref{sec:NS}). This gives her the opportunity to update her side information during the protocol's execution. 

    \item The final type of system, $A_{i}$, is the secret information generated by $\mcM_{i}$. These registers are not accessible to Eve at any point of the protocol, and the objective of the honest parties is to certify the entropy of $A_{1}A_{2}\cdots A_{n}$ conditioned on Eve's side information at the end of the sequence, $E_{n}$.     
\end{itemize}
We summarize the role of each sub-system in \cref{tab:labels}. 

\begin{table}
\centering
\begin{tabular}{||c  | c ||} 
 \hline
 System  & Meaning \\ [0.5ex] 
 \hline \hline
 $R_{i-1}$  & Untrusted input to $\mcM_{i}$  \\  
 \hline 
 $Y_{i-1}$ & Trusted input to $\mcM_{i}$ \\ \hline
 $E_{i-1}$ & Side information available to $\mcM_{i}$ \\
 \hline $A_{i}$  & Secret information generated by $\mcM_{i}$ \\[1ex] 
 \hline
\end{tabular}
\caption{Summary of the different sub-systems depicted in \cref{fig:chan}.}
\label{tab:labels}
\end{table}

\subsection{A template chain rule}

In the context of \cref{fig:chan}, the chain rules we are interested in generally take the following form.

\vspace{0.4cm}

\begin{tcolorbox}[colback=blue!5!white, colframe=blue!15!white, coltitle=blue!20!black, title=Template chain rule] 
Let $\mcS_{1}$, $\mcS_{2}$ and $\mcS_{3}$ be subsets of input states to the channels $\mcM_{1}$, $\mcM_{2}$ and $\mcM_{2}\circ \mcM_{1}$, respectively. Then our template chain rule takes the form
\begin{equation}
    \inf_{\omega_{3} \in \mcS_{3}} \mbH_{\alpha}(A_{1}A_{2}|E_{2})_{[\mcM_{2} \circ \mcM_{1}](\omega_{3})} \geq  \inf_{\omega_{1} \in \mcS_{1}} \mbH_{\alpha}(A_{1}|E_{1})_{ \mcM_{1}(\omega_{1})} +  \inf_{\omega_{2} \in \mcS_{2}} \widehat{\mbH}_{\beta}(A_{2}|E_{2})_{ \mcM_{2}(\omega_{2})}, \label{eq:temp}
\end{equation}
where $\alpha,\beta \in (1,\infty)$ and $\mbH_{\alpha}$ and $\widehat{\mbH}_{\beta}$ are quantum R\'enyi entropies (e.g., $H^{\uparrow}_{\alpha}$ or $H^{\downarrow}_{\alpha}$).
\end{tcolorbox}

\vspace{0.4cm}

\noindent More specifically, the sets $\mcS_{1}$, $\mcS_{2}$ and $\mcS_{3}$ are subsets of the form
\begin{equation*}
        \mcS_{1} \subset \mcD(Y_{0}R_{0}E_{0}),\ \ \
        \mcS_{2} \subset \mcD(Y_{1}R_{1}E_{1}), \ 
        \  \text{and} \ \ 
        \mcS_{3} \subset \mcD(Y_{0}Y_{1}R_{0}E_{0}).
\end{equation*}
\cref{eq:temp} includes an infimum over these sets because in cryptography, we are rarely interested in the entropy of a specific state. Instead, the state measured in the protocol can be manipulated, or even prepared by Eve, and we must evaluate the worst case entropy over all such states to obtain a security proof. In some cases, there are also well justified constraints to take into account, such as the linear constraints on the systems $Y_{i-1}$ discussed in \cref{sec:setup}, and $\mcS_{1}$, $\mcS_{2}$ and $\mcS_{3}$ can be chosen to reflect this. Furthermore, we may also extend these sets to include a register $\tilde{E}$ that purifies the input state $\omega_{Y_{i-1}R_{i-1}E_{i-1}}$. Considering both $\tilde{E}$ and the updated side information $E_{i}$ as conditioning registers in the R\'enyi entropy represents, in general, an overestimation of the amount of side information Eve could have before the channel in question is applied, that is,
\begin{equation*}
    \inf_{\omega \in \mcS_{i}} \mbH_{\alpha}(A_{i}|E_{i})_{\mcM_{i}(\omega)} \geq \inf_{\omega \in \tilde{\mcS}_{i}} \mbH_{\alpha}(A_{i}|E_{i}\tilde{E})_{\mcM_{i}(\omega)},
\end{equation*}
where $\tilde{S}_{i}$ is a subset of pure states on $Y_{i-1}R_{i-1}E_{i-1}\tilde{E}$.

\subsection{When does entropy accumulate?} \label{sec:when_ent}
In the survey that follows, we will look at different instances of the chain rule \eqref{eq:temp}. Before doing so, it is natural to ask why, throughout the entropy accumulation literature, so many variants of this chain rule have been considered. The answer is that entropy cannot accumulate in general for the specific setup of \cref{eq:temp}, as illustrated by the following example. Consider a fully classical instance of \cref{fig:chan}, where $E_{0}$ and $A_{2}$ are empty, and each channel $\mcM_{i}$ outputs a new piece of side information $B_{i}$ every round so that $E_{1} = B_{1}$, and $E_{2} = B_{1}B_{2}$. Without any restriction on the way the side information $B_{2}$ is generated, information about $A_{1}$ could be contained in $B_{2}$ via the internal memory $R_{1}$. This would imply
\begin{equation*}
    \mbH_{\alpha}(A_{1}|B_{1})_{\mcM_{1}(\omega)} > \mbH_{\alpha}(A_{1}|B_{1}B_{2})_{[\mcM_{1}\circ \mcM_{2}](\omega)} = \mbH_{\alpha}(A_{1}A_{2}|B_{1}B_{2})_{[\mcM_{1}\circ \mcM_{2}](\omega)},
\end{equation*}
resulting in a violation of \eqref{eq:temp}. It is therefore necessary to impose structural assumptions on the channel $\mcM_{2}$.

For example, consider again a setting where all registers are classical and suppose that we assume that the side information $B_{2}$ is generated independently of the past $A_{1}B_{1}$, as captured by the condition $p(a_1,b_1,b_2) = p(a_1,b_1)\, p(b_2)$ on the joint distribution\footnote{This is a special case of the requirement that $A_{1} \leftrightarrow B_{1} \leftrightarrow B_{2}$ forms a Markov chain, outlined in \cref{sec:markov}.}. Then a chain rule for $H_{\alpha}^{\downarrow}$ can be derived directly from \cref{eq:classical_down}:
\begin{equation*}
    \begin{aligned}
        2^{(1-\alpha)H_{\alpha}^{\downarrow}(A_{1}A_{2}|B_{1}B_{2})_{p}} &= \sum_{a_1,b_{1},a_2,b_{2}} p(b_{1})\, p(b_{2}) \, p(a_1|b_1)^{\alpha} p(a_2|a_1,b_1,b_2)^{\alpha} \\
        &= \sum_{b_{1}} p(b_{1})\, \frac{\sum_{a_1'}p(a_1'|b_1)^\alpha }{\sum_{a_1'}p(a_1'|b_1)^\alpha} \sum_{a_1,a_2,b_2}p(b_{2})\, p(a_1|b_1)^{\alpha} p(a_2|a_1,b_1,b_2)^{\alpha} \\
        &= \sum_{a_1',b_{1}} p(b_{1})\,p(a_1'|b_1)^\alpha \sum_{a_1,a_2,b_2} q(a_1|b_1) \, p(b_{2})\, p(a_2|a_1,b_1,b_2)^{\alpha} \\
        &\leq 2^{(1-\alpha)H_{\alpha}^{\downarrow}(A_1|B_1)_{p}} \cdot \max_{a_1,b_1} \sum_{a_{2},b_{2}} p(b_{2}) \, p(a_2|a_1,b_1,b_2)^{\alpha},
    \end{aligned}
\end{equation*}
where we defined the conditional distribution $q(a_1|b_1) = p(a_1|b_1)^{\alpha}/(\sum_{a_1'}p(a_1'|b_1)^{\alpha})$. Taking the logarithm and multiplying both sides by $1/(1-\alpha)$, we obtain the following chain rule:
\begin{equation*}
    H_{\alpha}^{\downarrow}(A_{1}A_{2}|B_{1}B_{2})_{p} \geq H_{\alpha}^{\downarrow}(A_1|B_1)_{p} + \min_{a_{1},b_{1}} H_{\alpha}^{\downarrow}(A_2|B_2,A_1=a_1,B_1=b_1)_{p},
\end{equation*}
where the minimum is taken over all possible values the past side information $A_1B_1$ can take. 

The assumptions made on the channel structure are specific to the cryptographic protocol being considered, and these assumptions are one of the main differences between the various chain rules. Furthermore, certain assumptions can result in tighter chain rules than others; in some cases, it is even possible for equality to hold in \eqref{eq:temp}, in which case we say the entropy in question is additive. 


\section{Survey of existing chain rules} \label{sec:survey}
We now present instances of the template chain rule \eqref{eq:temp} that have been established in the literature, and discuss their applications to cryptography. 

\subsection{Independent processes} 
We begin with the most straightforward instance of our template chain rule: when both the channels and the input state are independent. Consider the case where $R_{i}$ is empty, and $\mcM_{i}$ acts only on $Y_{i-1}$ to produce a fresh piece of side information $B_{i}$, $\mcM_{i}:Y_{i-1}\to A_{i}B_{i}$. If we further assume that the initial side information takes a product form $E_{0} = E_{0}'E_{1}'$, where $E_{i-1}'$ is entangled with $Y_{i-1}$ only, we have the substitution 
\begin{equation*}
    \underbrace{E_{0}'E_{1}'}_{E_{0}} \xrightarrow{\mcM_{1}} \underbrace{E_{0}'E_{1}'B_{1}}_{E_{1}} \xrightarrow{\mcM_{2}} \underbrace{E_{0}'E_{1}'B_{1}B_{2}}_{E_{2}}, 
\end{equation*}
and the initial state is of the form $\omega_{E_{0}'Y_{0}} \otimes \omega_{E_{1}'Y_{1}}$. Using the fact that the conditional entropy $H_{\alpha}^{\uparrow}$ is additive on tensor product states~\cite[Corollary 5.9.]{Tomamichel_2016}, we immediately have
\begin{equation}
    H_{\alpha}^{\uparrow}(A_{1}A_{2}|E_{0}'E_{1}'B_{1}B_{2})_{[\mcM_{2}\circ \mcM_{1}](\omega \otimes \omega)} = H_{\alpha}^{\uparrow}(A_{1}|E_{0}'B_{1})_{\mcM_{1}(\omega)} + H_{\alpha}^{\uparrow}(A_{2}|E_{1}'B_{2})_{\mcM_{2}(\omega)}. \label{eq:iid}
\end{equation}
This is an instance of \cref{eq:temp}. Furthermore, an independent processes will typically be a special case of a more general non-i.i.d.~scenario. For example, we could no longer restrict to input states of the form $\omega_{E_{0}'Y_{0}} \otimes \omega_{E_{1}'Y_{1}}$, or we could allow for a non-trivial memory register $R_{i}$. Consequently, \cref{eq:iid} will generally upper bound the left hand side of \cref{eq:temp}, and we can conclude that a chain rule is tight when this is matched by a lower bound. 

Additivity of the conditional R\'enyi entropy plays a key role in establishing fundamental results such as the fully quantum asymptotic equipartition property (AEP)~\cite{TCR}, which states that for many independent and identical copies of a quantum experiment, the output sequence almost always lies in the typical set. Phrased in a cryptographic context, as $n$ becomes very large, the smooth min-entropy of the outputs $A^n$ given the side information $(E'B)^n$ evaluated on the state $\mcM(\omega_{E'Y})^{\otimes n}$ is roughly equal to $n$ times the conditional von Neumann entropy $H(A|E'B)_{\mcM(\omega)}$. In particular, given a bound on the single round von Neumann entropy, the fully quantum AEP can be applied to establish security at a finite number of rounds under the i.i.d.~assumption.

\subsection{Markov channels}
\label{sec:markov}
Having shown how the fully i.i.d.~case fits into our framework, we now look at results that lift this assumption. Our first example is a chain rule derived by Dupuis, Fawzi and Renner \cite[Corollary 3.5]{DFR}. Here, the authors consider the setting in which the systems $Y_{0}$ and $Y_{1}$ are empty, and the channels $\mcM_{1}$ and $\mcM_{2}$ generate a new piece of side information $B_{1}$ and $B_{2}$, respectively. Specifically, letting $E_{0}$ represent the initial system held by Eve (that may be entangled with $R_{0}$), the side information is updated in the following way:
\begin{equation*}
    E_{0} \xrightarrow{\mcM_{1}} \underbrace{E_{0}B_{1}}_{E_{1}} \xrightarrow{\mcM_{2}} \underbrace{E_{0}B_{1}B_{2}}_{E_{2}} .
\end{equation*}
This is illustrated in \cref{fig:EAT}.

\begin{figure}[h]
\includegraphics[width=10cm]{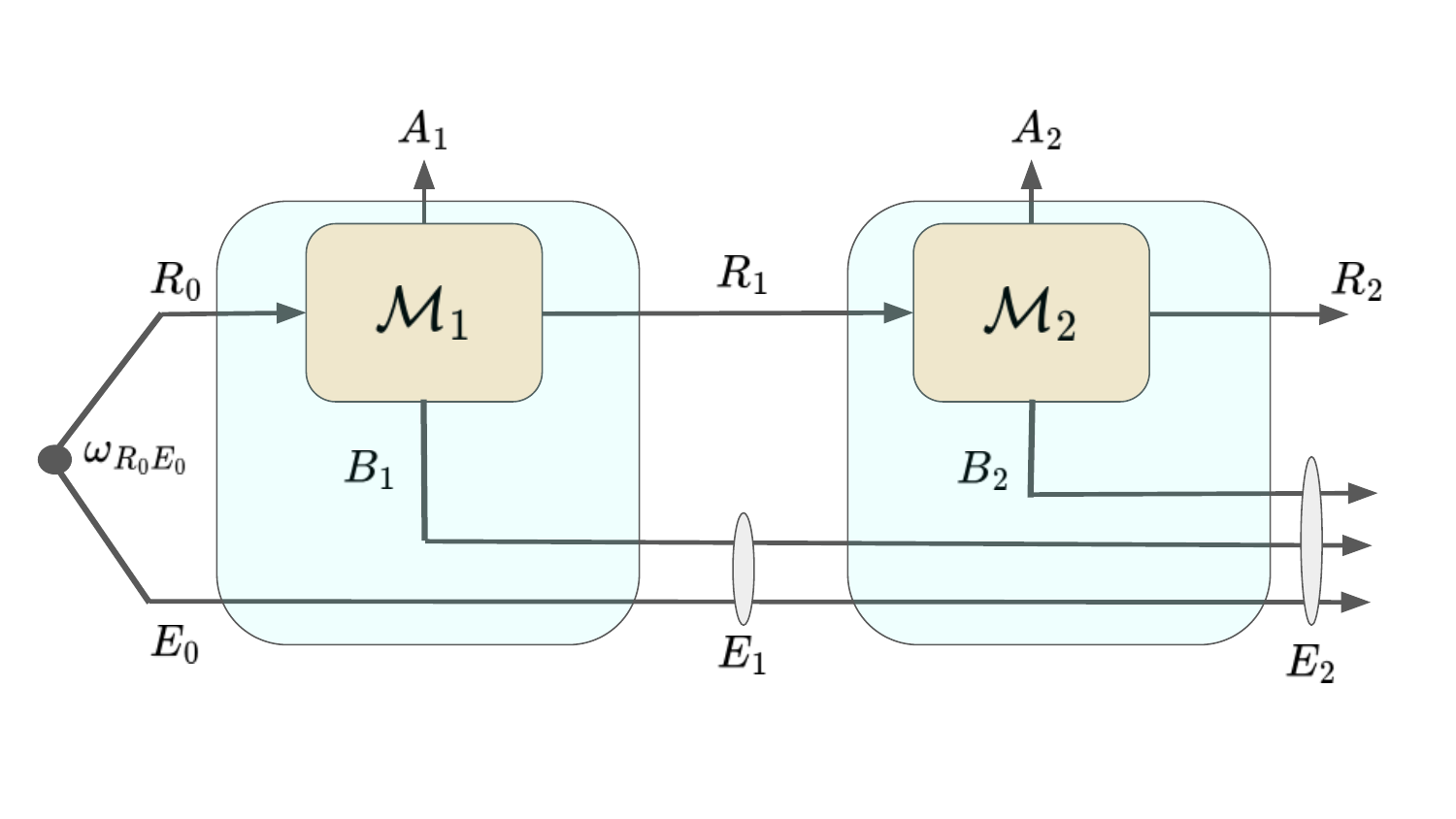}
\centering
\caption{The setting considered in~\cite{DFR}, which is a special case of~\cref{fig:chan}. The trusted inputs $Y_{0}Y_{1}$ are taken to be empty, and a new piece of side information $B_{2}$ is generated such that $A_{1}\leftrightarrow E_{0}B_{1} \leftrightarrow B_{2}$ forms a Markov chain.} 
\label{fig:EAT}
\end{figure}

In order to derive a chain rule of the form \eqref{eq:temp}, the authors require that $A_{1}$, $E_{0}B_{1}$ and $B_{2}$ form a Markov chain, $A_{1} \leftrightarrow E_{0}B_{1} \leftrightarrow B_{2}$. That is, for any initial state $\omega_{R_{0}E_{0}}$,
\begin{equation}
    I\big(A_{1} : B_{2} | E_{0}B_{1} \big)_{[\mcM_{2} \circ \mcM_{1}](\omega)} = 0\,, \label{eq:MI}
\end{equation}
where $I(A:B|C)_{\rho} := H(AC)_{\rho} + H(BC)_{\rho} - H(ABC)_{\rho} - H(C)_{\rho}$ is the conditional mutual information and $H(A)_{\rho} = -\tr[\rho \log \rho]$ is the von Neumann entropy. In words, \eqref{eq:MI} states that any new side information $B_{2}$ generated by $\mcM_{2}$ must be independent of the previous secret information $A_{1}$ when conditioned on the existing side information $E_{0}B_{1}$. Intuitively, this prevents any information about $A_{1}$ that is not already present in the side information $E_{0}B_{1}$ becoming accessible to Eve via $B_{2}$ (cf. \cref{sec:when_ent} and~\cite[Appendix C]{DFR}). 

We now state the chain rule from~\cite{DFR}.

\vspace{0.4cm}

\begin{tcolorbox}[colback=blue!5!white, colframe=blue!15!white, coltitle=blue!20!black, title=Chain rule for Markov channels~\cite{DFR}] 
Let $\omega_{R_{0}E_{0}} \in \mcD(R_{0}E_{0})$ be any initial state, $\mcM_{1}:R_{0} \to A_{1}B_{1}R_{1}$, $\mcM_{2}:R_{1} \to A_{2}B_{2}R_{2}$ and $\alpha \in (1,\infty)$. Then provided the output state $[\mcM_{2} \circ \mcM_{1}] (\omega)$ forms a Markov chain $A_{1} \leftrightarrow E_{0}B_{1} \leftrightarrow B_{2}$,  
\begin{equation}
    H_{\alpha}^{\downarrow}(A_{1}A_{2}|E_{0}B_{1}B_{2})_{[\mcM_{2} \circ \mcM_{1}] (\omega)} \geq H_{\alpha}^{\downarrow}(A_{1}|E_{0}B_{1})_{\mcM_{1} (\omega)} + \inf_{\omega' \in \mcD(R_{1}\tilde{E})} H_{\alpha}^{\downarrow}(A_{2}|B_{2}\tilde{E})_{\mcM_{2} (\omega')}, \label{eq:EAT_chain_1}
\end{equation}
where the system $\tilde{E}$ has the same dimension as $R_{1}$. 
\end{tcolorbox}

\vspace{0.4cm}

\noindent To see how \eqref{eq:EAT_chain_1} relates to the template chain rule \eqref{eq:temp}, consider the case when $Y_{0}$ and $Y_{1}$ are empty, $\mbH_{\alpha} = \widehat{\mbH}_{\beta} =H^{\downarrow}_{\alpha}$, and the following substitutions are made:
\begin{equation*}
    \begin{aligned}
        E_{1} \mapsto E_{0}B_{1}, \ \ 
        E_{2} \mapsto E_{0}B_{1}B_{2},\ \  
        \mcS_{1} = \mcS_{3} = \{ \omega_{R_{0}E_{0}} \}, \ \ \text{and} \ \
        \mcS_{2} = \mcD(R_{1}\tilde{E}).
    \end{aligned}
\end{equation*}
A modified version of \eqref{eq:EAT_chain_1} can also been obtained under the same Markov conditions~\cite[Corollary A.7]{Marwah2024}, when the R\'enyi entropies evaluated on the fixed state are replaced with their optimized versions~\cite[Corollary A.7]{Marwah2024},
\begin{equation}
    H_{\alpha}^{\uparrow}(A_{1}A_{2}|E_{0}B_{1}B_{2})_{[\mcM_{2} \circ \mcM_{1}] (\omega)} \geq H_{\alpha}^{\uparrow}(A_{1}|E_{0}B_{1})_{\mcM_{1} (\omega)} + \inf_{\omega' \in \mcD(R_{1}\tilde{E})} H_{\alpha}^{\downarrow}(A_{2}|B_{2}\tilde{E})_{\mcM_{2} (\omega')}. \label{eq:EAT_chain_mod}
\end{equation}
We emphasize that, similarly to \eqref{eq:EAT_chain_1}, the last entropy term is in terms of $H_{\alpha}^{\downarrow}$ and not $H_{\alpha}^{\uparrow}$. Whether \eqref{eq:EAT_chain_mod} can be tightened will be the subject of \cref{sec:counter}. 

\vspace{0.4cm}

\begin{tcolorbox}[colback=blue!5!white, colframe=blue!15!white, coltitle=blue!20!black, title=From chain rule to entropy accumulation,breakable] 
As discussed in the introduction, a chain rule can be presented as a type of entropy accumulation theorem~\cite{DFR,EAT2,MetgerGEAT,marwah2024b,VHB25,arqand2024,arqand2025}, which is often more useful for applications. Generally speaking, an EAT is the product of the following ingredients: 
\begin{enumerate}[(i)]
    \item A channel model for the process the user wishes to describe (e.g., an instance of \cref{fig:chan}). 
    \item An $n$-round conditional entropy that the user wishes to certify (e.g., the smooth min or max entropy, or the optimized R\'enyi entropy).  
    \item A chain rule for this or a related entropy that fits the assumptions of the process. 
    \item A means to handle testing (i.e., a way to translate observations made during the protocols execution to constraints on the final optimization problem).
\end{enumerate}
In the following, we give a rough sketch of how this procedure goes for the chain rule discussed in this subsection using DI cryptography as an example. We will omit the testing step for simplicity, and we note that all other chain rules in this survey can be lifted to an entropy accumulation theorem using a similar procedure. 

\vspace{0.2cm}

In a typical DI protocol, the honest parties provide random classical inputs to their devices each round $i \in \{1,...,n\}$, and obtain classical outputs. These can be modeled by the systems $B_{i}$ and $A_{i}$, respectively. Eve may prepare the initial state $\omega_{R_{0}E_{0}}$, and $E_{0}$ (the system held by Eve) is not modified at any stage of the protocol. Nothing further is assumed about the state or measurements used to run the protocol, and a non-trivial bound on the conditional entropy is established by witnessing nonlocal correlations among the devices~\cite{CK2,PABGMS,PAMBMMOHLMM}. The channels that describe this process are of the form $\mcM_{i}:R_{i-1} \to A_{i}B_{i}R_{i}$, and since every input $B_{i}$ is generated uniformly at random, the state of the $i^{\text{th}}$ round, $\rho_{A^{i}E_{0}B^{i}} = [\mcM_{i}  \circ \cdots \circ \mcM_{1}](\omega_{R_{0}E_{0}})$, satisfies
\begin{equation*}
    \tr_{A_{i}}[\rho_{A^{i}E_{0}B^{i}}] = \rho_{A^{i-1}E_{0}B^{i-1}} \otimes \rho_{B_{i}}.
\end{equation*}
It thus follows that $A^{i-1} \leftrightarrow E_{0}B^{i-1} \leftrightarrow B_{i}$ forms a Markov chain, and a recursive application of \eqref{eq:EAT_chain_1} with the substitution $A_{1}\mapsto A^{i-1}$, $B_{1} \mapsto B^{i-1}$, $A_{2} \mapsto A_{i}$ and $B_{2} \mapsto B_{i}$ yields
\begin{equation*}
    H_{\alpha}^{\downarrow}(A^{n}|B^{n}E_{0})_{[\mcM_{n} \circ \cdots \circ \mcM_{1}](\omega)} \geq \sum_{i=1}^{n} \inf_{\omega' \in \mcD(R_{i-1}\tilde{E})} H_{\alpha}^{\downarrow}(A_{i}|B_{i}\tilde{E})_{\mcM_{i}(\omega')}.
\end{equation*}
By relating the R\'enyi entropy $H_{\alpha}^{\downarrow}$ to the smooth min-entropy $H_{\text{min}}^{\epsilon}$ on the left hand side~\cite[Lemma B.10]{DFR}, and to the von Neumann entropy $H$ for every term in the summation~\cite[Lemma B.9]{DFR} (see also~\cite[Corollary IV.2]{EAT2}), one obtains (following a suitable choice of $\alpha = 1 + O(1/\sqrt{n})$) , 
\begin{equation}
    H_{\text{min}}^{\epsilon}(A^{n}|B^{n}E_{0})_{[\mcM_{n} \circ \cdots \circ \mcM_{1}](\omega)} \geq \sum_{i=1}^{n} \inf_{\omega' \in \mcD(R_{i-1}\tilde{E})} H(A_{i}|B_{i}\tilde{E})_{\mcM_{i}(\omega')} - O(\sqrt{n}). \label{eq:EAT_eq}
\end{equation}
This is the statement of the entropy accumulation theorem: the min-entropy of a Markov process is no greater than the sum of worst case von Neumann entropies arising from each round. Furthermore, for any $\epsilon < 1$, in the limit $n\to \infty$ the asymptotic i.i.d.~rate is recovered~\cite{TCR}, in which all the channels $\mcM_{i}$ are identical and independently generate the variables $A_{i}B_{i}$ at each step. 

\vspace{0.2cm}

To apply this theorem in practice, one needs a refined version that takes into account testing. In the DI case, this corresponds to restricting each infimum to input states that are compatible with the observed nonlocal statistics when the protocol does not abort (see \cref{sec:REAT,app:REAT} for an example). Furthermore, a method for lower bounding this constrained infimum is also required~\cite{TanDI,BrownDeviceIndependent,BrownDeviceIndependent2,KS25}.
\end{tcolorbox} 

\vspace{0.4cm}

Note that in the above example, all entanglement between the devices and Eve is distributed in the initial state $\omega_{E_{0}R_{0}}$, since Eve's system $E_{0}$ cannot be updated once the protocol commences. This may appear to be incompatible with common experimental practice, in which a new entangled state is generated and measured each round. However, Eve could equivalently prepare an initial state $\omega_{E_{0}'R_{0}'E_{1}'R_{1}'\cdots E_{n-1}'R_{n-1}'}$, and instruct each channel $\mcM_{i}$ to act only on $R_{i-1}'$. By writing $E_{0} = E_{0}'\cdots E_{n-1}'$ and $R_{i} = R_{i}'\cdots R_{i-1}'$, this is an instance of the setting illustrated in \cref{fig:EAT}, and Eve is still able to implement such strategies from a security analysis perspective. The distribution of all entanglement before the protocol starts is therefore not required in an experimental implementation.     

\subsection{Non-signaling channels}
\label{sec:NS}

While the Markov condition used to prove the EAT chain rule \eqref{eq:EAT_chain_1} suffices to prevent the secret information of past rounds being leaked by the side information generated in future rounds, it may be overly restrictive in some contexts. 

\begin{figure}
\centering
\begin{minipage}{.5\textwidth}
  \centering
  \includegraphics[width=\linewidth]{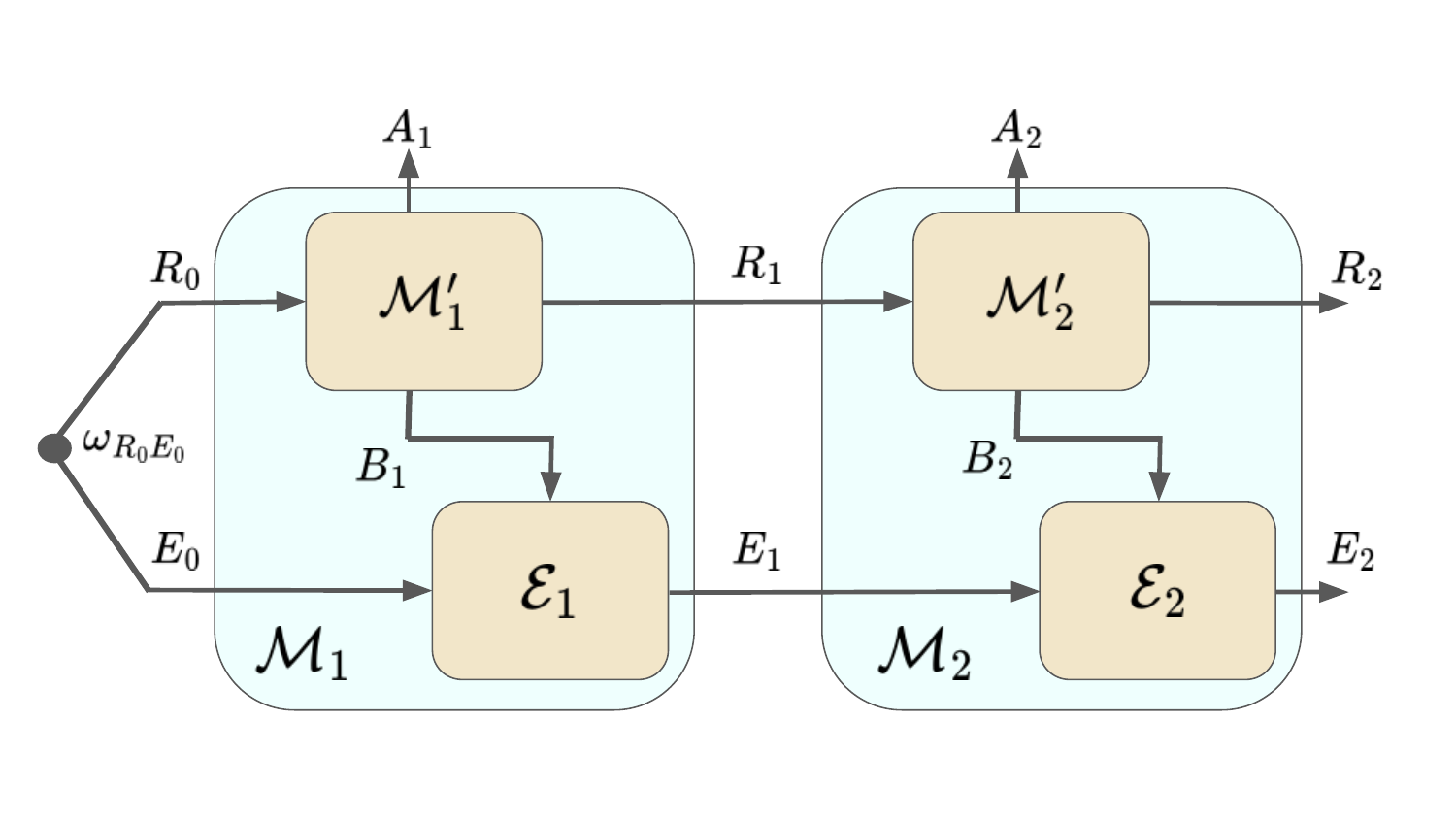}
  \caption*{(a)}
\end{minipage}%
\begin{minipage}{.5\textwidth}
  \centering
  \includegraphics[width=\linewidth]{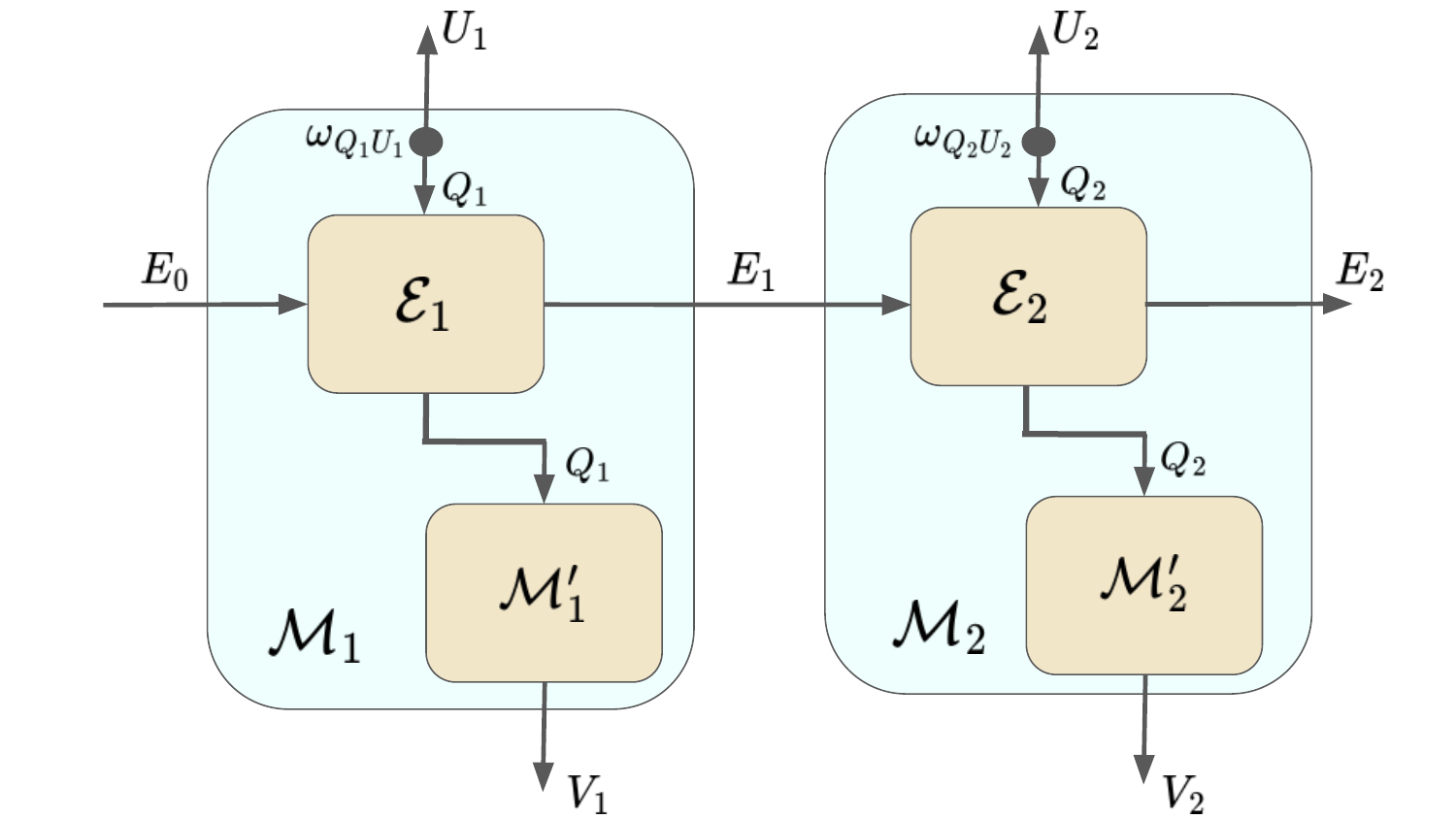}
  \caption*{(b)}
\end{minipage}
\caption{Two instances of a pair of non-signaling channels. Neither can be modeled by Markov channels of the form shown in \cref{fig:EAT}, illustrating how the non-signaling condition of the generalized entropy accumulation theorem~\cite{MetgerGEAT} is strictly weaker than the Markov condition~\cite{DFR}. (a) New side information $B_{i}$ is leaked each round to Eve, allowing her to perform an operation on her system during the runtime of the protocol. (b) The setup for prepare and measure QKD described in~\cite{Metger2023}. Roughly speaking, each round a cq-state $\omega_{Q_{i}U_{i}}$ is generated, where $U_{i}$ is Alice's raw key. Eve can then interact with the system $Q_{i}$ freely, which is then measured by Bob to obtain his raw key $V_{i}$. 
}
\label{fig:GEAT}
\end{figure}

Consider the scenario depicted in \cref{fig:GEAT} (a). Eve possesses an initial system $E_{0}$, and every round a channel $\mcM_{i}':R_{i-1}\to A_{i}B_{i}R_{i}$ generates a new piece of side information $B_{i}$ sampled randomly, and a new piece of secret information $A_{i}$. Now however, the system $B_{1}$ is immediately accessible to Eve after it has been generated. This gives Eve the opportunity to update her side information before the next round, which can be modeled by a quantum channel $\mcE_{1}:E_{0}B_{1} \to E_{1}$. Then, in the next round, Eve receives $B_{2}$, and can again update her information via a quantum channel $\mcE_{2}:E_{1}B_{2} \to E_{2}$. The overall channel $\mcM_{i}:E_{i-1}R_{i-1} \to A_{i}E_{i}R_{i}$ denotes the joint action of the channels $\mcM_{i}'$ and $\mcE_{i}$. Importantly, the system $E_{2}$ still contains no more information about $A_{1}$ than $E_{1}$. To see this, note that any input state $\omega_{A_{1}R_{1}E_{1}}$ to $\mcM_{2}$ satisfies the following equation:
\begin{equation}
    \rho_{A_{1}E_{2}} = [\tr_{A_{2}R_{2}} \circ \mcM_{2}](\omega_{A_{1}R_{1}E_{1}}) = [\mcE'_{2} \circ \tr_{R_{1}}](\omega_{A_{1}R_{1}E_{1}}), \label{eq:NSexample}
\end{equation}
where $\mcE'_{2}:E_{1} \to E_{2}$ is a quantum channel that samples $B_{2}$ randomly, and then applies $\mcE_{2}$ to $E_{1}B_{2}$. In words, the system $E_{2}$ can be generated without access to $R_{1}$. This implies
\begin{equation*}
    \mbH_{\alpha}(A_{1}|E_{2})_{\rho} = \mbH_{\alpha}(A_{1}|E_{2})_{\mcE'_{2}(\omega_{A_{1}E_{1}})} \geq \mbH_{\alpha}(A_{1}|E_{1})_{\omega_{A_{1}E_{1}}},
\end{equation*}
where we applied the data processing inequality. This indicates that a chain rule of the form \eqref{eq:temp} might hold. However, due to the way the registers $A_{1}$, $E_{1}$ and $E_{2}$ are generated, they do not form a Markov chain, and the chain rule \eqref{eq:EAT_chain_1} does not apply.

In~\cite[Lemma 3.6]{MetgerGEAT}, Metger et al. show that a chain rule of the form \eqref{eq:temp} holds when the Markov condition is relaxed to a non-signaling condition. This encompasses the example illustrated in \cref{fig:GEAT} (a). The systems $Y_{0}$ and $Y_{1}$ are again taken to be empty, and the channels can take a general form $\mcM_{1}:R_{0}E_{0}\to A_{1}R_{1}E_{1}$ and $\mcM_{2}:R_{1}E_{1}\to A_{2}R_{2}E_{2}$. The channel $\mcM_{2}$ is then said to be non-signaling between $R_{1}$ and $E_{2}$ if there exists a channel $\mcE_{2}' : E_{1} \to E_{2}$ such that for all input states $\omega_{R_{1}E_{1}}$,
\begin{equation}
    [\tr_{A_{2}R_{2}} \circ \mcM_{2} ](\omega_{R_{1}E_{1}}) = [\mcE_{2}' \circ \tr_{R_{1}}](\omega_{R_{1}E_{1}}).
\end{equation}
The above equation implies that the marginal state $\rho_{E_{2}} = [\tr_{A_{2}R_{2}} \circ \mcM_{2} ](\omega_{R_{1}E_{1}})$ is independent of $R_{1}$. This condition prevents $R_{1}$ from passing on information about previous outcomes (i.e., the system $A_{1}$) to $E_{2}$, and is directly satisfied by our example channel in \cref{fig:GEAT} (a), as shown by \cref{eq:NSexample}. The authors of~\cite{MetgerGEAT} show that this is a sufficient condition for entropy to accumulate.

\vspace{0.4cm}

\begin{tcolorbox}[colback=blue!5!white, colframe=blue!15!white, coltitle=blue!20!black, title=Chain rule for non-signaling channels~\cite{MetgerGEAT}] 
Let $\omega_{R_{0}E_{0}} \in \mcD(R_{0}E_{0})$ be any initial state, $\mcM_{1}:R_{0}E_{0} \to A_{1}R_{1}E_{1}$, $\mcM_{2}:R_{1}E_{1} \to A_{2}R_{2}E_{2}$ and $\alpha \in (1,2)$. Then provided the channel $\mcM_{2}$ is non-signaling between $R_{1}$ and $E_{2}$,  
\begin{equation}
    H_{\alpha}^{\downarrow}(A_{1}A_{2}|E_{2})_{[\mcM_{2} \circ \mcM_{1}] (\omega)} \geq H_{\alpha}^{\downarrow}(A_{1}|E_{1})_{\mcM_{1} (\omega)} + \inf_{\omega' \in \mcD(R_{1}E_{1}\tilde{E})} H_{\frac{1}{2-\alpha}}^{\downarrow}(A_{2}|E_{2}\tilde{E})_{\mcM_{2} (\omega')}, \label{eq:GEAT_chain_1}
\end{equation}
where the system $\tilde{E}$ has the same dimension as $R_{1}E_{1}$. 
\end{tcolorbox}

\vspace{0.4cm}

\noindent We observe that \eqref{eq:GEAT_chain_1} is another instance of \eqref{eq:temp}, where $Y_{0}$, and $Y_{1}$ are empty, $\mbH_{\alpha} =  H^{\downarrow}_{\alpha}$, $\widehat{\mbH}_{\beta} = H^{\downarrow}_{\frac{1}{2-\alpha}}$ and the following substitution is made:
\begin{equation*}
    \begin{aligned}
        \mcS_{1} = \mcS_{3} = \{ \omega_{R_{0}E_{0}} \}, \ \ \text{and} \ \
        \mcS_{2} = \mcD(R_{1}E_{1}\tilde{E}).
    \end{aligned}
\end{equation*}
Similar to the Markov chain rule in \cref{eq:EAT_chain_1}, a modified version of \eqref{eq:GEAT_chain_1} can be obtained with  optimized R\'enyi entropies on all but the right most term~\cite[Lemma 5.2]{arqand2024_mutual}: 
\begin{equation*}
    H_{\alpha}^{\uparrow}(A_{1}A_{2}|E_{2})_{[\mcM_{2} \circ \mcM_{1}] (\omega)} \geq H_{\alpha}^{\uparrow}(A_{1}|E_{1})_{\mcM_{1} (\omega)} + \inf_{\omega' \in \mcD(R_{1}E_{1}\tilde{E})} H_{\frac{1}{2-\alpha}}^{\downarrow}(A_{2}|E_{2}\tilde{E})_{\mcM_{2} (\omega')}.
\end{equation*}

As we saw with \eqref{eq:EAT_chain_1}, the chain rule \eqref{eq:GEAT_chain_1} can be used to derive a version of the entropy accumulation theorem with weaker assumptions on the channels $\mcM_{i}$. Specifically, the generalized EAT (GEAT) requires each channel $\mcM_{i}:R_{i-1}E_{i-1} \to A_{i}R_{i}E_{i}$ to be non-signaling between $R_{i-1}$ and $E_{i}$, i.e., there exists a channel $\mcE_{i}'$ such that 
\begin{equation*}
    \tr_{A_{i}R_{i}} \circ \mcM_{i}  = \mcE_{i}' \circ \tr_{R_{i-1}}.
\end{equation*}
Then a statement analogous to \cref{eq:EAT_eq} holds. Testing can also be incorporated, and the GEAT has found applications in quantum cryptography that go beyond those that can be handled using the EAT, including blind randomness expansion~\cite{MillerBlind} (see also~\cite{MetgerGEAT}) and prepare and measure QKD \cite{Metger2023}. 

For example, the security of prepare and measure QKD was established using the GEAT in~\cite{Metger2023}. Roughly speaking, this setting includes a set of quantum systems $Q_{1}\cdots Q_{n}$ prepared by Alice, along with an initial system $E_{0}$ held by Eve that is unentangled with $Q^{n}$. On round $i \in \{1,...,n\}$, Alice prepares a state $\omega_{U_{i}Q_{i}}$, where $U_{i}$ is a classical system holding her raw key bit for that round. The system $Q_{i}$ is then transmitted from Alice to Bob across an insecure channel held by Eve, allowing her to perform a quantum channel $\mcE_{i}:E_{i-1}Q_{i} \to E_{i}Q_{i}$. After this, Bob receives the system $Q_{i}$ from Eve, and measures it to obtain his raw key $V_{i}$, described by a channel $\mcM_{i}':Q_{i} \to V_{i}$. This overall process can be described by a single channel $\mcM_{i}:E_{i-1} \to U_{i}V_{i}E_{i}$ that we illustrate in \cref{fig:GEAT} (b). In an actual QKD protocol, the channels $\mcM_{i}$ also include some public announcements and additional processing of the registers $U_{i}$ and $V_{i}$. This would form Alice's sifted key, $A_{i}$, and release additional information to Eve, $B_{i}$. We refer the reader to~\cite{Metger2023} for full details. Note that in this setting, each system $R_{i}$ is empty, and therefore the non-signaling condition between $R_{i-1}$ and $E_{i}$ holds trivially. Furthermore, Eve cannot act on multiple $Q_{i}$ systems at once. Specifically, each channel $\mcE_{i}$ is assumed to act on $Q_{i}$ only, whereas a more general update channel that maps $E_{i-1}'Q^{n}$ to $E_{i}'Q^{n}$ is not permitted. Alice must therefore only send the system $Q_{i+1}$ once Bob has measured the system $Q_{i}$, which can be enforced by Bob announcing when his measurement has taken place across an authenticated communication channel. As a result, the repetition rate of the protocol is limited, since Alice must always wait until Bob has made this announcement before sending the next signal. 

\subsection{Independent and identical channels with linear constraints}
\label{sec:marg1}

In the previous subsections, we discussed chain rules used to prove the EAT and GEAT. Both fit into the setting described in \cref{fig:chan}, where the trusted input systems $Y_{i}$ are empty, while the untrusted input systems $R_{i}$ can be nontrivial. Such cases are particularly relevant to DI protocols, where $R_{i}$ represents the internal system of an untrusted device after round $i$. This may encompass, for example, entangled sub-systems that are measured to obtain the classical outcomes of future rounds, an internal memory, or even pre-programmed instructions by Eve that force the device to behave deterministically. We now explore chain rules in a very different setting, where there are trusted inputs and the untrusted inputs are empty. 

To illustrate the significance of this setting, it is instructive to consider again the security of prepare and measure QKD. It is not immediate that this problem can be tackled using the EAT, since the prepare and measure setting requires a model of side information that can be updated each round. The GEAT circumvents this issue by allowing for a more general type of side information at the cost of a limited repetition rate of the protocol. An alterative security approach is to employ the so called ``source replacement scheme''~\cite{BBM92,Scarani09}. Rather than preparing a classical quantum state $\omega_{UQ}$ each round, Alice could equivalently prepare an entangled state $\omega_{Q_AQ_B}$, and later measure the quantum system $Q_B$ to obtain her raw key $U$. We then give Eve control of the entangled state $\omega_{Q_AQ_B}$, reducing the security analysis to that of an entanglement based protocol. However, if Eve has full control of $\omega_{Q_AQ_B}$ in the source replacement scheme, no secret key can be certified; the marginal state $\omega_{Q_A}$ of $\omega_{Q_AQ_B}$ must be constrained to equal a known state $\sigma_{Q_A}$ to obtain non-trivial key rates. This captures the fact that in a prepare and measure protocol, Eve can only access the system $Q_B$. Marginal constraints such as these are not incorporated into the EAT or GEAT directly. Bäuml \textit{et al.}~\cite{Bauml2024} overcame this challenge by adding an additional tomography step for Alice in randomly chosen rounds. The data she collects from these rounds can be used to verify that she has the correct marginal state, or abort the protocol if not. However, the final key rate incurs a penalty arising from the statistical uncertainty of this test. A security framework that handles marginal constraints more directly via a chain rule with trusted inputs would circumvent these issues. 

Brown and Van Himbeeck~\cite{VHB25} provide the first result of this type. The systems $R_{i}$ are empty, there are linear constraints on the inputs $Y_{i}$, and each channel is independent and identical. Furthermore, under these conditions an exact additivity result can be established.

\vspace{0.4cm}

\begin{tcolorbox}[colback=blue!5!white, colframe=blue!15!white, coltitle=blue!20!black, title=Chain rule for independent and identical CP maps with linear constraints \cite{VHB25}]
Let $\mcM:Y \to AB$ be a quantum to classical CP map, $\mcN : Y \to Q_{A}$ be a quantum channel, $\sigma_{Q_{A}} \in \mcD(Q_{A})$ be a state and $\alpha \in (1,\infty)$. Then for any positive integer $n$,
\begin{equation}
    \inf_{\omega \in \mcS_{n}(\mcN,\sigma)} H_{\alpha}^{\uparrow}(A^{n}|B^nE^n)_{\mcM^{\otimes n}(\omega)} = n \inf_{\omega \in \mcS_{1}(\mcN,\sigma)} H_{\alpha}^{\uparrow}(A|BE)_{\mcM(\omega)}, \label{eq:tensorChain}
\end{equation}
where
\begin{equation*}
    \mcS_{n}(\mcN,\sigma) := \big\{ \omega_{E^nY^n} \in \mcD(E^nY^n) \ : \ \mcN^{\otimes n}(\omega_{Y^{n}}) = \sigma_{Q_{A}}^{\otimes n} \big\}.
\end{equation*}
\end{tcolorbox}

\vspace{0.4cm}

\noindent To see that the chain rule \eqref{eq:tensorChain} is another instance of the template \eqref{eq:temp}, consider the case $n = 2$ and the following substitution: $R_{0}, \ R_{1}$ and $R_{2}$ are empty, $\mbH_{\alpha} = \widehat{\mbH}_{\beta} = H_{\alpha}^{\uparrow}$, and\footnote{Strictly speaking, to ensure $\mcS_{1}$ and $\mcS_{2}$ match the form given below \cref{eq:temp}, we should take the tensor product with an arbitrary state on $E_{1}$ and $B_{1}E_{0}$, respectively. We have not included this for simplicity.} 
\begin{equation*}
    \begin{gathered}
        E_{0} \mapsto E_{0}E_{1}, \ \ E_{1} \mapsto B_{1}E_{0}E_{1}, \ \ 
        E_{2} \mapsto B_{1}B_{2}E_{0}E_{1}, \\
        \mcS_{1} \mapsto \{ \omega_{E_0Y_0} \ : \ \omega_{E_0Y_0} \in \mcS_{1}(\mcN,\sigma) \}, \\  \mcS_{2} \mapsto \{ \omega_{E_1Y_1}  \ : \ \omega_{E_1Y_1} \in \mcS_{1}(\mcN,\sigma) \} \ \text{and} \\
        \mcS_{3} \mapsto \mcS_{2}(\mcN,\sigma).
    \end{gathered}
\end{equation*}
We then view the channel $\mcM_{1} : E_{0}Y_{0} \to A_{1}E_{1}$ as $\mcM$ tensored with the identity channel on $E_{0}E_{1}$, and the channel $\mcM_{2} : E_{1}Y_{1} \to A_{2}E_{2}$ as $\mcM$ tensored with the identity channel on $B_{1}E_{0}E_{2}$. It can then be verified that $\mcM_{2} \circ \mcM_{1} = \mcM^{\otimes 2}$. This is illustrated in \cref{fig:tensor1}, and under this substitution the template chain rule \eqref{eq:temp} reduces to \eqref{eq:tensorChain}. Furthermore, because \cref{eq:tensorChain} allows for CP maps, it implies a chain rule for the more general family of $f$-weighted R\'enyi entropies used in the security analysis of QKD~\cite{VHB25} (see also \cref{sec:REAT} for a brief discussion).       

As a concrete example, consider the case where the system $Y$ decomposes into two sub-systems $Q_{A}$ and $Q_{B}$, and $\mcN = \tr_{Q_{B}}$ is the partial trace. Then any state $\omega_{E^nQ_{A}^nQ_{B}^n} \in \mcS_{n}(\mcN,\sigma)$ satisfies $\tr_{E^nQ_{B}^n}[\omega_{E^nQ_{A}^nQ_{B}^n}] = \sigma_{Q_{A}}$, i.e., the marginal state on the system $Q_{A}$ is fixed. Thus, the additivity result \eqref{eq:tensorChain} can accommodate marginal constraints directly, enabling a security proof of prepare and measure QKD via the source replacement scheme and without the need for tomographic tests on the systems $Q_{A}^{n}$. Furthermore, the security analysis does not restrict the repetition rate of the protocol. Specifically, in the setting of \cref{eq:tensorChain}, Eve prepares an initial entangled state $\omega_{E^nQ_{A}^nQ_{B}^n}$ subject to a marginal constraint on $Q_{A}^{n}$. Alice and Bob then measure $Q_{A}^{n}$ and $Q^{n}_{B}$ to obtain their raw data $U^{n}$ and $V^{n}$, along with any announcement data $B^{n}$. The system $A^{n}$ represents the raw key of Alice, and is generated by processing $(UVB)^{n}$. This is equivalent to a prepare and measure setting in which Alice prepares an initial cq-state $\omega_{U^nQ_{B}^n}$ and sends $Q^{n}_{B}$ to Eve, who applies a channel $\mcE^{\otimes n} : Q^{n}_{B} \to E^nQ_{B}^n$. Bob then receives and measures $Q^{n}_{B}$ to obtain $V^nB^n$, after which $A^{n}$ can be generated. Crucially, it is not necessary that Alice waits for Bob to measure the $i^{\text{th}}$ $Q_B$ system before sending system $i+1$, as was the case for the GEAT analysis.    

Another difference to the EAT and GEAT chain rules is that \eqref{eq:tensorChain} is a statement for $H_{\alpha}^{\uparrow}$, rather than $H_{\alpha}^{\downarrow}$, and no loss in the parameter $\alpha$ is incurred. Dupuis~\cite{Dupuis23} showed that $H_{\alpha}^{\uparrow}(A^{n}|E)_{\rho}$ for $\alpha \in (1,2]$ can control the soundness error when distilling a secret key $S^{l}$ of length $l$ from a raw key $A^{n}$ against quantum side information $E$ using two-universal hashing~\cite{Carter79,Stinson92}:
\begin{equation}
    \mbE_{f \in \mcF} \big\| \rho_{f(A^{n})E} - \frac{\id_{S^l}}{2^{l}} \otimes \rho_{E} \big\|_{1} \leq 2^{\frac{2}{\alpha} - 1}2^{\frac{\alpha - 1}{\alpha}( l - H_{\alpha}^{\uparrow}(A^{n}|E)_{\rho})}, \label{eq:PA}
\end{equation}
where $\mcF$ is a family of two-universal hash functions\footnote{$\mcF$ is a family of two-universal hash functions from $\{0,1\}^{n} \to \{0,1\}^{l}$ if, for any two functions $f, \ g\in \mcF$ sampled uniformly at random, $\text{Pr}[f(x) = g(x)] \leq 2^{-l}$ for all $x \in \{0,1\}^{n}$.} from $\{0,1\}^{n} \to \{0,1\}^{l}$. The special case when $\alpha = 2$ was previously established by Renner~\cite{Renner}, ultimately leading to a result in terms of $H_{\text{min}}^{\epsilon}$\footnote{This can be obtained from $H_{2}^{\uparrow}$, and then handled using, e.g., the EAT or GEAT}. The advantage of using the general statement \eqref{eq:PA} is that by working with the R\'enyi entropy directly, no penalty is incurred by reducing to the smooth min-entropy\footnote{Specifically, the first step of the EAT proof~\cite{DFR} is to lower bound $H_{\text{min}}^{\epsilon}$ by $H_{\alpha}^{\uparrow}$ for $\alpha \in (1,2]$, incurring a penalty in the second order term (in $n$).}. Chain rules such as \cref{eq:tensorChain} that satisfy $\mbH_{\alpha} = \widehat{\mbH}_{\beta} = H_{\alpha}^{\uparrow}$ (i.e., do not use the bound $H_{\alpha}^{\uparrow} \geq H_{\alpha}^{\downarrow}$ or have a loss in $\alpha$) can provide a tight finite size analysis with respect to the randomness extraction result in \eqref{eq:PA}~\cite{VHB25}. This is because, unlike $H_{\text{min}}^{\epsilon}$, $H_{\alpha}^{\uparrow}$ is additive when Eve's attack is i.i.d., matching the lower bound given by applying the chain rule. 

\begin{figure}
\centering
\begin{minipage}{.5\textwidth}
  \centering
  \includegraphics[width=\linewidth]{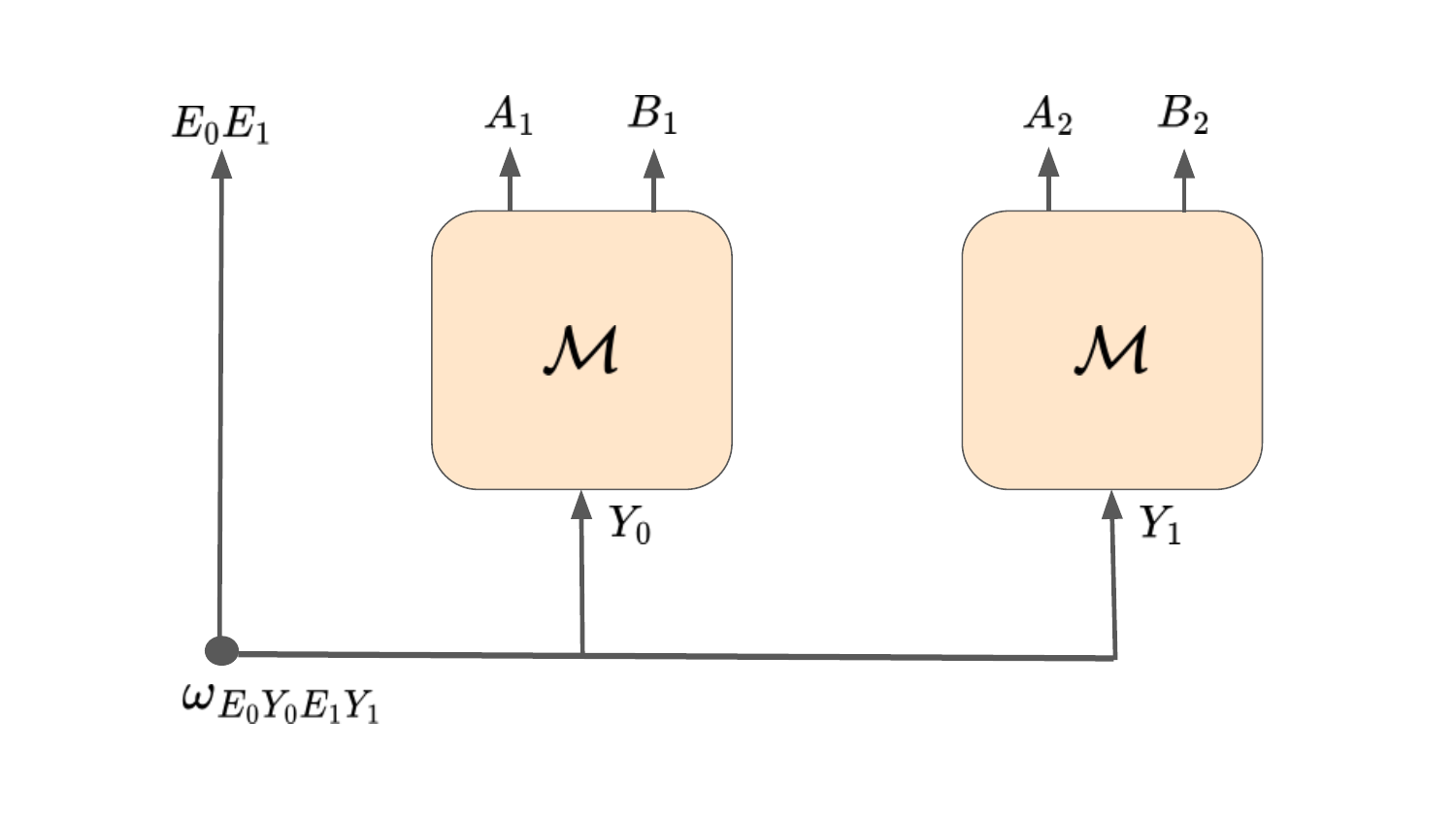}
  \caption*{(a)}
\end{minipage}%
\begin{minipage}{.5\textwidth}
  \centering
  \includegraphics[width=\linewidth]{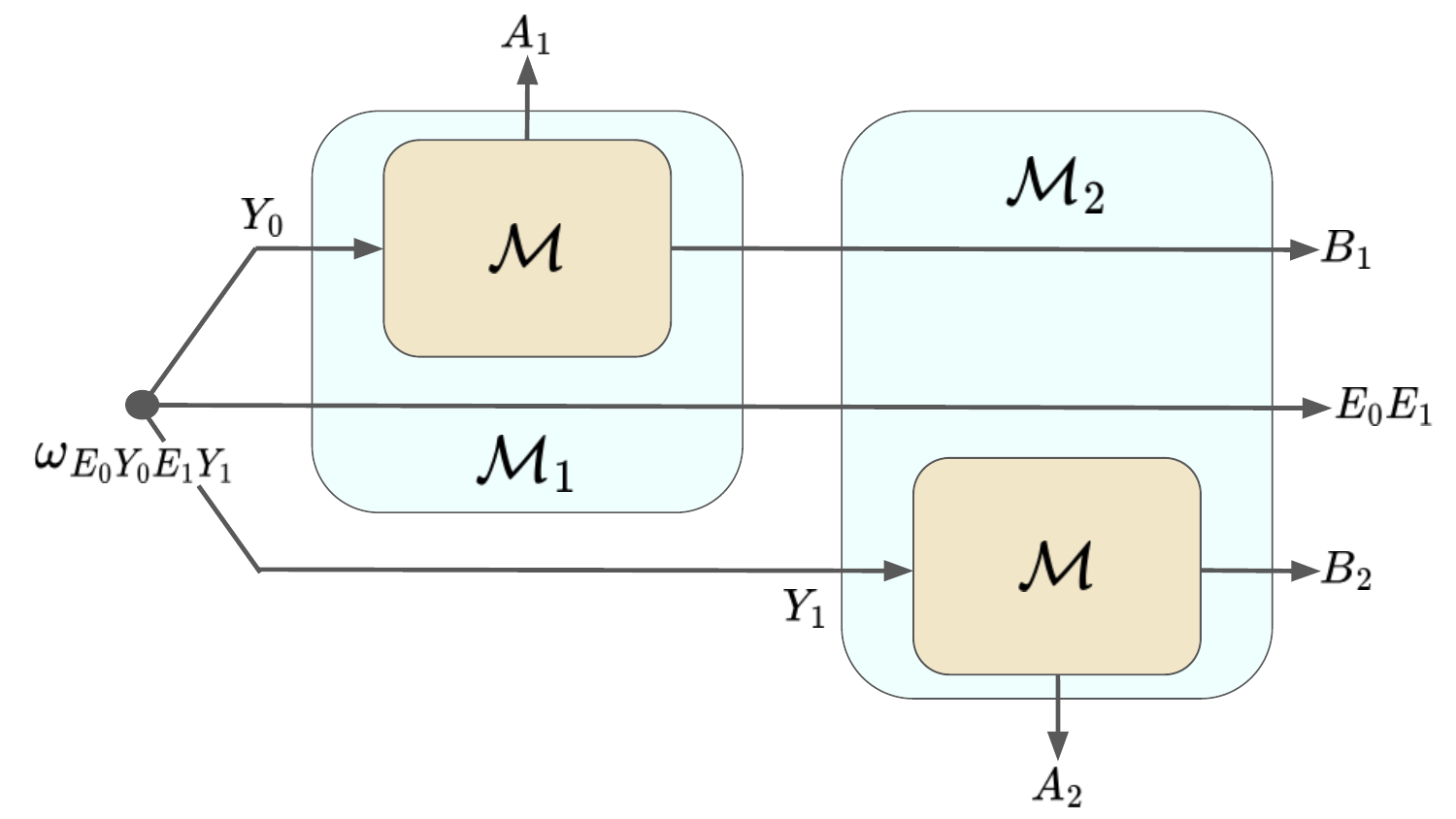}
  \caption*{(b)}
\end{minipage}
\caption{(a) The i.i.d.~channel structure considered in~\cite{VHB25}, where each system $Y$ of the state $\omega_{E_{0}Y_{0}E_{1}Y_{1}}$ is independently measured by the same channel $\mcM$ to obtain a pair of classical outcomes $A$ and $B$, where $A$ corresponds to secret information, and $B$ side information. The state $\omega_{E_{0}Y_{0}E_{1}Y_{1}}$ can however be arbitrarily entangled across the different sub-systems, subject to linear constraints on each system $Y$. (b) The same scenario illustrated in (a), now viewed as the sequential application of the channel $\mcM_{1}$ followed by $\mcM_{2}$.}
\label{fig:tensor1}
\end{figure}

\subsection{Non-identical channels with linear constraints}
\label{sec:marg2}
In the work of Arqand and Tan~\cite{arqand2025} and Fawzi \textit{et al.}~\cite{Fawzi2026}, similar results to~\cite{VHB25} are obtained under weaker conditions. Firstly, both works obtain an additivity result when the channels admit a tensor product structure, but are not necessarily identical each round. Secondly, the trusted input to each channel can have different constraints; for~\cite{arqand2025}, a different marginal state, and for~\cite{Fawzi2026}, a different linear constraint.    

We begin by outlining the chain rule derived by Arqand and Tan~\cite[Theorem 3.1]{arqand2025}. 

\vspace{0.4cm}

\begin{tcolorbox}[colback=blue!5!white, colframe=blue!15!white, coltitle=blue!20!black, title=Chain rule for sequential channels with marginal constraints~\cite{arqand2025}]
Let $\mcM_{1}:E_{0}Y_{0} \to A_{1}E_{1}$ and $\mcM_{2}:E_{1}Y_{1} \to A_{2}E_{2}$ be quantum channels, $\sigma_{Y_{0}} \in \mcD(Y_{0})$ and $\tau_{Y_{1}} \in \mcD(Y_{1})$ be states, and $\alpha \in (1,\infty)$. Then
\begin{multline}
    \inf_{\substack{\omega \in \mcD(E_{0}Y_{0}Y_{1}\tilde{E}) \\ \omega_{Y_{0}Y_{1}} = \sigma_{Y_{0}} \otimes \tau_{Y_{1}} }} H_{\alpha}^{\uparrow}(A_{1}A_{2}|E_{2}\tilde{E})_{[\mcM_{2} \circ \mcM_{1}](\omega)} \geq \inf_{\substack{\omega \in \mcD(E_{0}Y_{0}\tilde{E}) \\ \omega_{Y_{0}} = \sigma_{Y_{0}}}} H_{\alpha}^{\uparrow}(A_{1}|E_{1}\tilde{E})_{\mcM_{1}(\omega)} \\ + \inf_{\substack{\omega \in \mcD(E_{1}Y_{1}\tilde{E}) \\ \omega_{Y_{1}} = \tau_{Y_{1}}}} H_{\alpha}^{\uparrow}(A_{2}|E_{2}\tilde{E})_{\mcM_{2}(\omega)}.  \label{eq:meatChain}
\end{multline}
\end{tcolorbox}

\vspace{0.4cm}

\noindent The chain rule \eqref{eq:meatChain} is an instance of the template chain rule \eqref{eq:temp} when the untrusted inputs $R_{i}$ are empty, the sets $\mcS_{1}, \ \mcS_{2}$ and $\mcS_{3}$ are chosen to be the set of input states that satisfy an appropriate marginal constraint, and $\mbH_{\alpha} = \widehat{\mbH}_{\beta} = H_{\alpha}^{\uparrow}$. Furthermore, the purifying registers $\tilde{E}$ are included on all three entropy terms, as opposed to the last term only as encountered in \cref{eq:EAT_chain_1,eq:GEAT_chain_1}. We illustrate this setup in \cref{fig:tensor2} (a).

A special case of \eqref{eq:meatChain} is given in~\cite[Corollary 3.3]{arqand2025} for the pair of channels $\mcM_{1}:Q_{A,0}Q_{B,0}\to A_{1}B_{1}$ and $\mcM_{2}:Q_{A,1}Q_{B,1}\to A_{2}B_{2}$. Specifically, letting $Y_{i} = Q_{A,i}Q_{B,i}$,
\begin{multline}
    \inf_{\substack{\omega \in \mcD(Y_{0}Y_{1}\tilde{E}) \\ \omega_{Q_{A,0}Q_{A,1}} = \sigma_{Q_{A,0}} \otimes \tau_{Q_{A,1}} }} H_{\alpha}^{\uparrow}(A_{1}A_{2}|B_{1}B_{2}\tilde{E})_{[\mcM_{2} \otimes \mcM_{1}](\omega)} = \inf_{\substack{\omega \in \mcD(Y_{0}\tilde{E}) \\ \omega_{Q_{A,0}} = \sigma_{Q_{A,0}}}} H_{\alpha}^{\uparrow}(A_{1}|B_{1}\tilde{E})_{\mcM_{1}(\omega)} \\ + \inf_{\substack{\omega \in \mcD(Y_{1}\tilde{E}) \\ \omega_{Q_{A,1}} = \tau_{Q_{A,1}}}} H_{\alpha}^{\uparrow}(A_{2}|B_{2}\tilde{E})_{\mcM_{2}(\omega)}.  \label{eq:meatChain_v2}
\end{multline}
This can be directly compared to the additivity result of Van Himbeeck and Brown \eqref{eq:tensorChain}: \cref{eq:meatChain_v2} is more general in that the channels $\mcM_{1}$ and $\mcM_{2}$ can differ, as can the marginal states $\sigma$ and $\tau$. Furthermore, the parent chain rule \eqref{eq:meatChain} also allows for sequential channels rather than restricting to a tensor product structure. This can be helpful for certain QKD protocols where some announcements take place before all measurements are completed~\cite{arqand2025,tupkary2026}. However, \cref{eq:meatChain,eq:meatChain_v2} are not strictly more general, since arbitrary linear constraints on the trusted inputs are not permitted, only marginal ones.

The chain rule derived by Fawzi \textit{et al.}~\cite[Corollary 5.1]{Fawzi2026} is a strict generalization of~\cite{VHB25}. 

\vspace{0.4cm}

\begin{tcolorbox}[colback=blue!5!white, colframe=blue!15!white, coltitle=blue!20!black, title=Chain rule for independent CP maps with linear constraints~\cite{Fawzi2026}]
Let $n$ be a positive integer, and for all $i \in \{1,...,n\}$ let $\mcM_{i}:Y_{i-1} \to A_{i}B_{i}$ be a CP map, $\mcN_{i-1} : Y_{i-1} \to Y_{i-1}'$ be a quantum channel and $\sigma_{Y_{i-1}'}^{(i-1)} \in \mcD(Y_{i-1}')$ be a state. For any $\alpha \in (1,\infty)$, 
\begin{equation}
    \inf_{\omega_{Y^{n-1}E} \in \mcD(Y^{n-1}E)} H_{\alpha}^{\uparrow}(A^{n}|B^{n}E)_{\bigotimes_{i=1}^{n}\mcM_{i}(\omega)} = \sum_{i=1}^{n}\inf_{\omega \in \mcS(\mcN_{i-1},\sigma^{(i-1)})} H_{\alpha}^{\uparrow}(A_{i}|B_{i}\tilde{E}_{i})_{\mcM_{i}(\omega)}, \label{eq:schChain}
\end{equation}
where
\begin{equation*}
    \mcS(\mcN_{i-1},\sigma^{(i-1)}) := \big\{ \omega_{Y_{i-1}\tilde{E}_{i}} \in \mcD(Y_{i-1}\tilde{E}_{i}) \ : \ \mcN_{i-1}(\omega_{Y_{i-1}}) = \sigma^{(i-1)}_{Y_{i-1}'} \big\}.
\end{equation*}
\end{tcolorbox}

\vspace{0.4cm}

\noindent This fits out template chain rule under a similar substitution to \eqref{eq:tensorChain}. An illustration can be found in \cref{fig:tensor2} (b). A key advantage of both chain rules discussed in this subsection is that they can provide a security analysis for protocols that vary in time, i.e., when the marginal of Alice's system and the channel changes at each time step. We refer the reader to the discussions in~\cite[Section 5]{arqand2025} and~\cite[Section 5]{Fawzi2026} for details.  

\begin{figure}
\centering
\begin{minipage}{.5\textwidth}
  \centering
  \includegraphics[width=\linewidth]{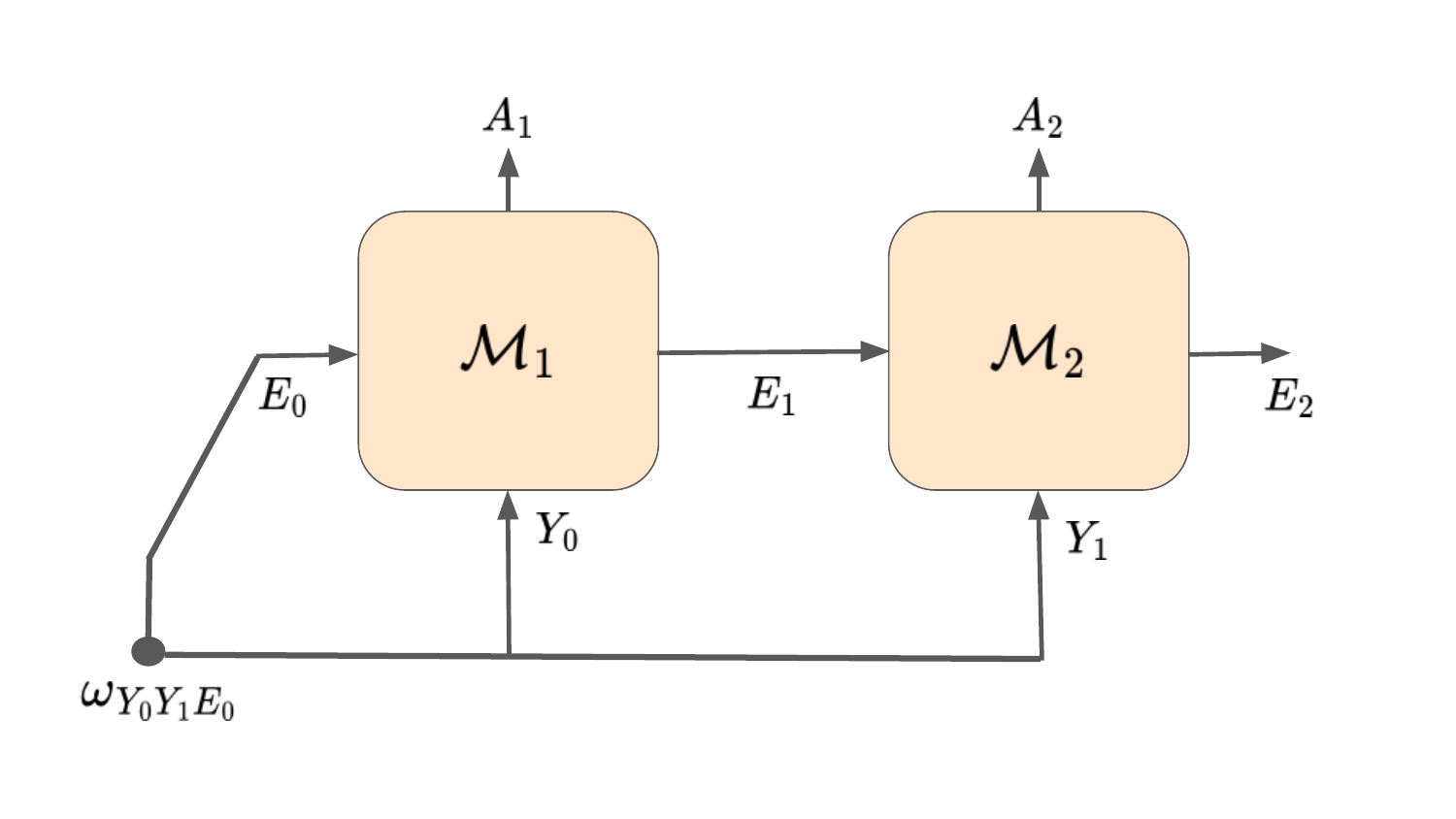}
  \caption*{(a)}
\end{minipage}%
\begin{minipage}{.5\textwidth}
  \centering
  \includegraphics[width=\linewidth]{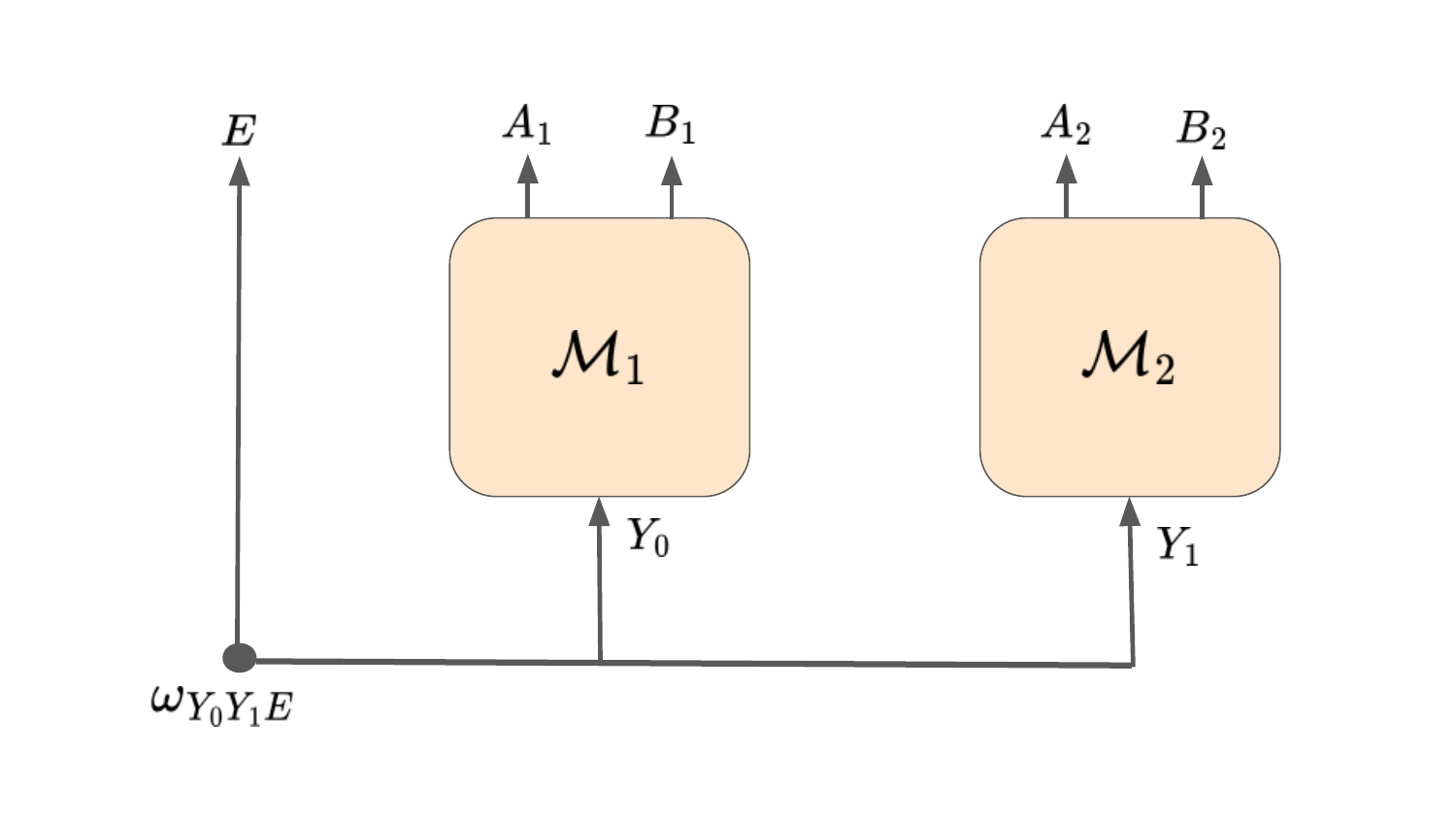}
  \caption*{(b)}
\end{minipage}
\caption{(a) The channel structure considered in the marginal constrained entropy accumulation theorem~\cite{arqand2025}. An initial state $\omega_{Y_{0}Y_{1}E_{0}}$ is sequentially measured by two (potentially different) channels $\mcM_{1}$ and $\mcM_{2}$. Each channel has the ability to update the adversaries side information $E_{i}$, and the initial state can be arbitrarily entangled across all sub-systems subject to a (potentially different) marginal constraint on $Y_{0}$ and $Y_{1}$. (b) A generalization of the channel structure depicted in \cref{fig:tensor1} (a), considered in~\cite{Fawzi2026}. Here, the channels $\mcM_{i}$ are allowed to differ, as are the linear constraints on each $Y_{i}$.}
\label{fig:tensor2}
\end{figure}

\subsection{Further examples}
In this final subsection, we collect additional examples of chain rules that fit (or are related to) our template. A summary of all the chain rules discussed in this survey can be found in \cref{tab:summary}.

There are two more chain rules derived by Fawzi \emph{et al.} that fit our template~\cite[Corollaries 4.2 and 4.8]{Fawzi2026}. A graphical illustration can be found in \cref{fig:Schatten}.  

\vspace{0.4cm}

\begin{tcolorbox}[colback=blue!5!white, colframe=blue!15!white, coltitle=blue!20!black, title=Chain rule without trusted or untrusted inputs~\cite{Fawzi2026}]
Let $\mcM_{1}:E_{0} \to A_{1}E_{1}$ and $\mcM_{2}:E_{1} \to A_{2}E_{2}$ be CP maps, $\omega_{E_{0}} \in \mcD(E_{0})$ be a state and $\alpha \in (1,\infty)$. Then 
\begin{equation}
    H_{\alpha}^{\uparrow}(A_{1}A_{2}|E_{2})_{[\mcM_{2} \circ \mcM_{1}](\omega)} \geq H_{\alpha}^{\uparrow}(A_{1}|E_{1})_{\mcM_{1}(\omega)} + \inf_{\substack{\omega' \in \mcD(E_{1})}} H_{\alpha}^{\uparrow}(A_{2}|E_{2})_{\mcM_{2}(\omega')}.  \label{eq:SchChain2}
\end{equation}
\end{tcolorbox}

\vspace{0.4cm}

\noindent The chain rule \eqref{eq:SchChain2} can be directly compared to the GEAT chain rule (see~\cite[Remark 4.3]{Fawzi2026}) when the untrusted inputs $R_{i}$ are empty and the non-signaling condition holds trivially. The key difference is that \eqref{eq:SchChain2} is in terms of $H_{\alpha}^{\uparrow}$ only, i.e., $\mbH_{\alpha} = \widehat{\mbH}_{\beta} = H_{\alpha}^{\uparrow}$. Note the similarities between this channel structure in \cref{fig:Schatten} (a) and that of the GEAT when applied to prepare and measure QKD in \cref{fig:GEAT} (b). By using \eqref{eq:SchChain2} instead of the GEAT, one could perform a similar analysis, except without incurring a loss in the parameter $\alpha$ or needing a purifying register in the entropy term $\widehat{\mbH}_{\beta}$. This would still be subject to the same repetition rate constraints, however. 

\vspace{0.4cm}

\begin{tcolorbox}[colback=blue!5!white, colframe=blue!15!white, coltitle=blue!20!black, title=Chain rule with untrusted inputs and product CP maps~\cite{Fawzi2026}]
Let $\mcM_{1}:E_{0}R_{0} \to A_{1}E_{1}$ and $\mcM_{2}:E_{1}R_{1} \to A_{2}E_{2}$ be CP maps, $\omega_{R_{0}R_{1}E_{0}} \in \mcD(R_{0}R_{1}E_{0})$ be a state and $\alpha \in (1,\infty)$. Then provided the channel $\mcM_{2}$ admits a tensor product structure $\mcM_{2} = \mcM_{2}' \otimes \mcE_{2}$, where $\mcM_{2}':R_{1} \to A_{2}$ and $\mcE_{2}:E_{1} \to E_{2}$,
\begin{equation}
    H_{\alpha}^{\uparrow}(A_{1}A_{2}|E_{2})_{[\mcM_{2} \circ \mcM_{1}](\omega)} \geq H_{\alpha}^{\uparrow}(A_{1}|E_{1})_{\mcM_{1}(\omega)} + \inf_{\substack{\omega' \in \mcD(R_{1}E_{1}\tilde{E})}} H_{\alpha}^{\uparrow}(A_{2}|E_{2}\tilde{E})_{\mcM_{2}(\omega')},  \label{eq:SchChain3}
\end{equation}
where $\tilde{E}$ has the same dimension as $R_{1}E_{1}$.
\end{tcolorbox}

\vspace{0.4cm}

\noindent This can also be compared to the GEAT chain rule when the untrusted inputs $R_{i}$ are not updated and the channels admit a tensor product structure across $R_{i}$ and $E_{i}$ (see~\cite[Remark 4.9]{Fawzi2026}). This structure guarantees that there cannot be any signaling between $R_{0}$ and $E_{1}$, or between $R_{1}$ and $E_{2}$. As with the previous example, the benefit is that $H_{\alpha}^{\uparrow}$ appears for all terms.

\begin{figure}
\centering
\begin{minipage}{.5\textwidth}
  \centering
  \includegraphics[width=\linewidth]{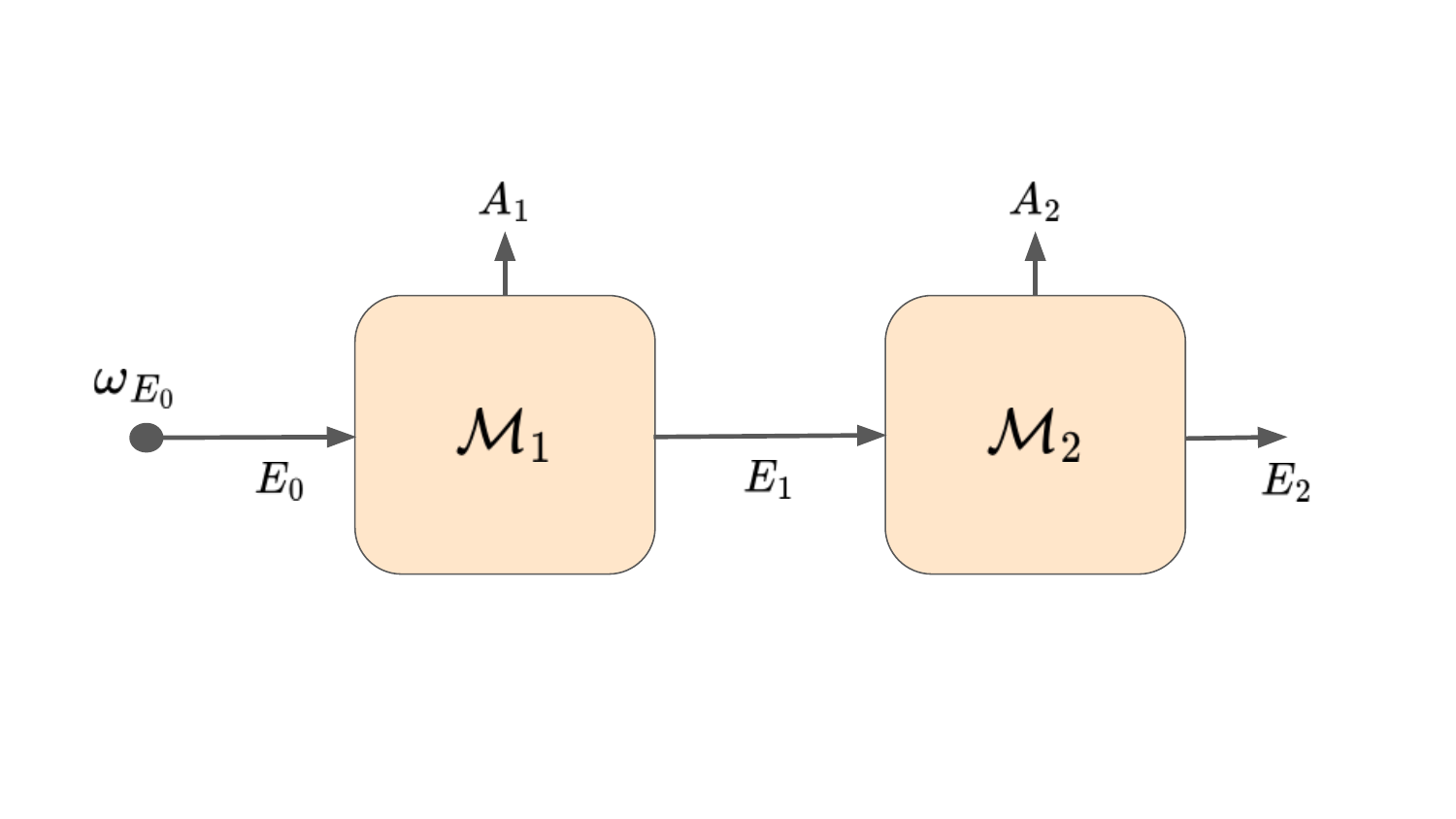}
  \caption*{(a)}
\end{minipage}%
\begin{minipage}{.5\textwidth}
  \centering
  \includegraphics[width=\linewidth]{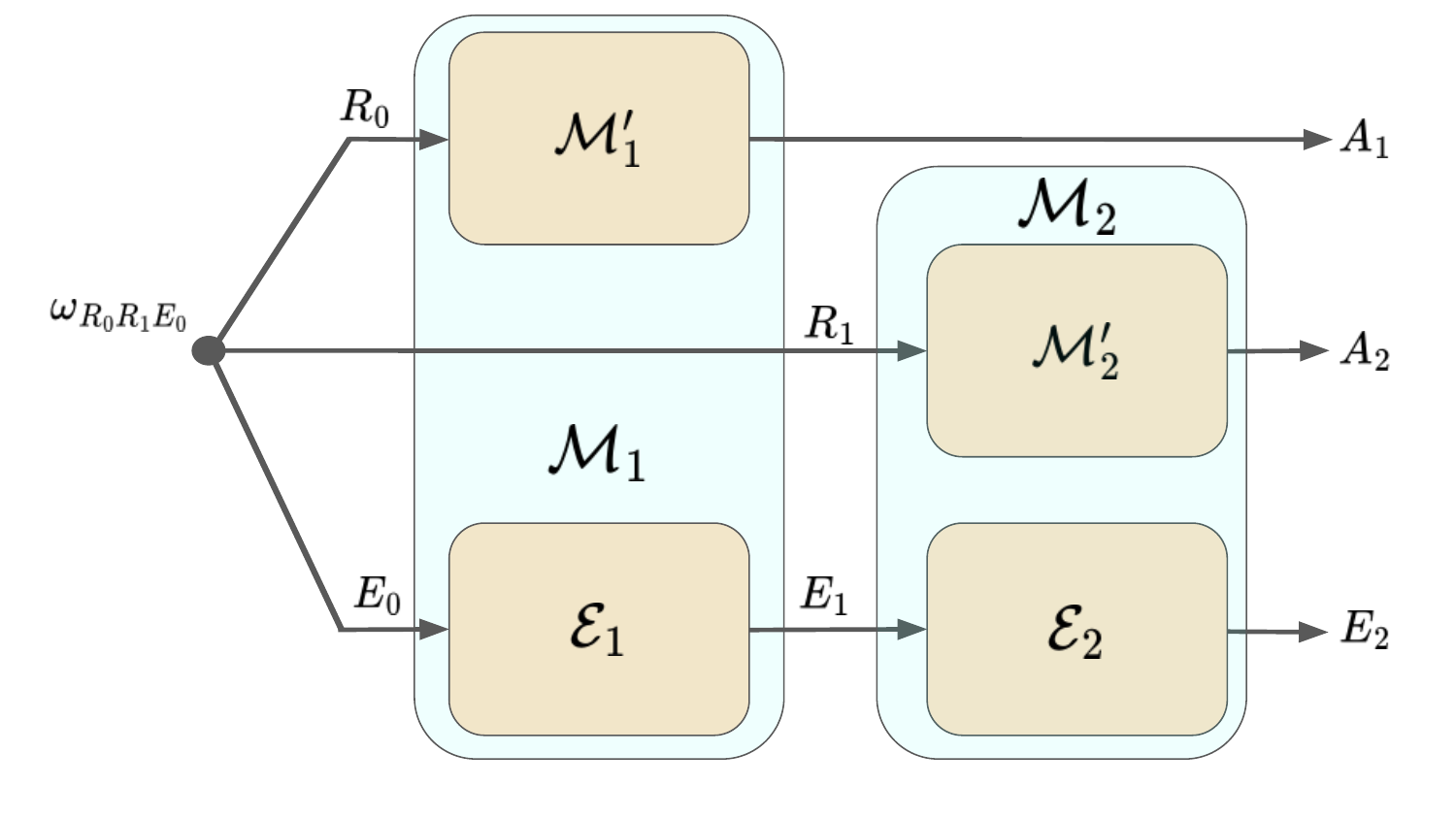}
  \caption*{(b)}
\end{minipage}
\caption{(a) The non-signaling channel structure considered in~\cite[Corollary 4.2]{Fawzi2026}. This can be viewed as a special case of the non-signaling channels used in the generalized entropy accumulation theorem, where the memory registers $R_{i}$ are empty. (b) Another non-signaling channel structure considered in~\cite[Corollary 4.8]{Fawzi2026} where both channels are independent.}
\label{fig:Schatten}
\end{figure}

An extension of the chain rule for non-signaling channels, used to prove the GEAT, was proven by Arqand, Metger and Tan~\cite[Theorem 5.1.]{arqand2024_mutual}. The same setting as that described in \cref{sec:NS} is considered, where the trusted inputs $Y_{i}$ are empty and the untrusted inputs $R_{i}$ can be updated each round, as can the side information $E_{i}$. The condition that $\mcM_{i}$ is non-signaling between $R_{i-1}$ and $E_{i}$ is then relaxed. Specifically, a leakage channel $\mcL_{i}$ between $R_{i-1}$ and $E_{i}$ is permitted, and a penalty is subtracted that depends on the maximum correlation permitted by $\mcL_{i}$ over all possible input states.   

\vspace{0.4cm}

\begin{tcolorbox}[colback=blue!5!white, colframe=blue!15!white, coltitle=blue!20!black, title=Chain rule for non-signaling channels with leakage~\cite{arqand2024_mutual},breakable]
Let $\mcL_{1}:R_{0} \to R_{0}'L_{0}$, $\mcM_{1}':R_{0}'L_{0}E_{0} \to A_{1}R_{1}E_{1}$, $\mcL_{2}:R_{1} \to R_{1}'L_{1}$ and $\mcM_{2}':R_{1}'L_{1}E_{1} \to A_{2}R_{2}E_{2}$ be quantum channels, and define $\mcM_{1} = \mcM_{1}' \circ \mcL_{1}$ and $\mcM_{2} = \mcM_{2}' \circ \mcL_{2}$. Let $\omega_{R_{0}E_{0}} \in \mcD(R_{0}E_{0})$ be a state and $\alpha \in (1,2)$. Then provided the channel $\mcM_{2}$ is non-signaling between $R_{1}'$ and $E_{2}$, 
\begin{multline}
    H_{\alpha}^{\uparrow}(A_{1}A_{2}|E_{2})_{[\mcM_{2} \circ \mcM_{1}](\omega)} \geq H_{\alpha}^{\uparrow}(A_{1}|E_{1})_{\mcM_{1}(\omega)}  + \inf_{\substack{\omega' \in \mcD(R_{1}'E_{1}\tilde{E})}} H_{\frac{1}{2-\alpha}}^{\downarrow}(A_{2}|E_{2}\tilde{E})_{\mcM_{2}(\omega')} \\ - \sup_{\omega'' \in \mcD(R_{1}\tilde{R})} I_{\alpha}^{\downarrow}(\tilde{R};L_{1})_{\mcL_{2}(\omega'')},  \label{eq:leakyGEAT}
\end{multline}
where $\tilde{E}$ has the same dimension as $R_{1}'L_{1}E_{1}$, $\tilde{R}$ has the same dimension as $R_{1}$ and for a bipartite state $\rho_{AB} \in \mcD(AB)$, $I_{\alpha}^{\downarrow}(A;B)_{\rho} = \inf_{\sigma \in \mcD_{\leq}(B)}D_{\alpha}(\rho_{AB}\|\rho_{A} \otimes \sigma_{B})$ is the R\'enyi mutual information, where $\rho_{A} = \tr_{B}[\rho_{AB}]$.
\end{tcolorbox}

\vspace{0.4cm}

\noindent An illustration can be found in \cref{fig:leak}. The first two terms on the right hand side of \eqref{eq:leakyGEAT} directly correspond to the original GEAT chain rule \eqref{eq:GEAT_chain_1} when $\mbH_{\alpha} = H_{\alpha}^{\uparrow}$, since than channel $\mcM_{2}'$ is non-signaling between $R_{1}'$ and $E_{2}$. However, the global channel $\mcM_{2} = \mcM_{2}' \circ \mcL_{2}$ can pass signals from $R_{1}$ to $E_{2}$, since some information stored in $R_{1}$ could be passed to $L_{1}$ via $\mcL_{1}$, and subsequently leaked to $E_{2}$ via $\mcM_{2}'$ (recall $\mcM_{2}'$ is allowed to signal from $L_{1}$ to $E_{2}$). The extent to which this can take place is bounded by the mutual information term subtracted at the end of \eqref{eq:leakyGEAT}. It captures the maximum amount of correlation that can be shared between the leakage register $L_{1}$ and a register $\tilde{R}$ that purifies the input $R_{1}$. If minimal leakage takes place, i.e., $L_{1}$ is independent of $R$, $L_{1}$ must also be independent of $\tilde{R}$ and the penalty is zero. On the other hand, in the event of maximal leakage, i.e., $\mcL_{1}$ outputs the contents of $R_{1}$ in $L_{1}$, a highly entangled input state between $R_{1}$ and $\tilde{R}$ will result in a large penalty.   

\begin{figure}[h]
\includegraphics[width=10cm]{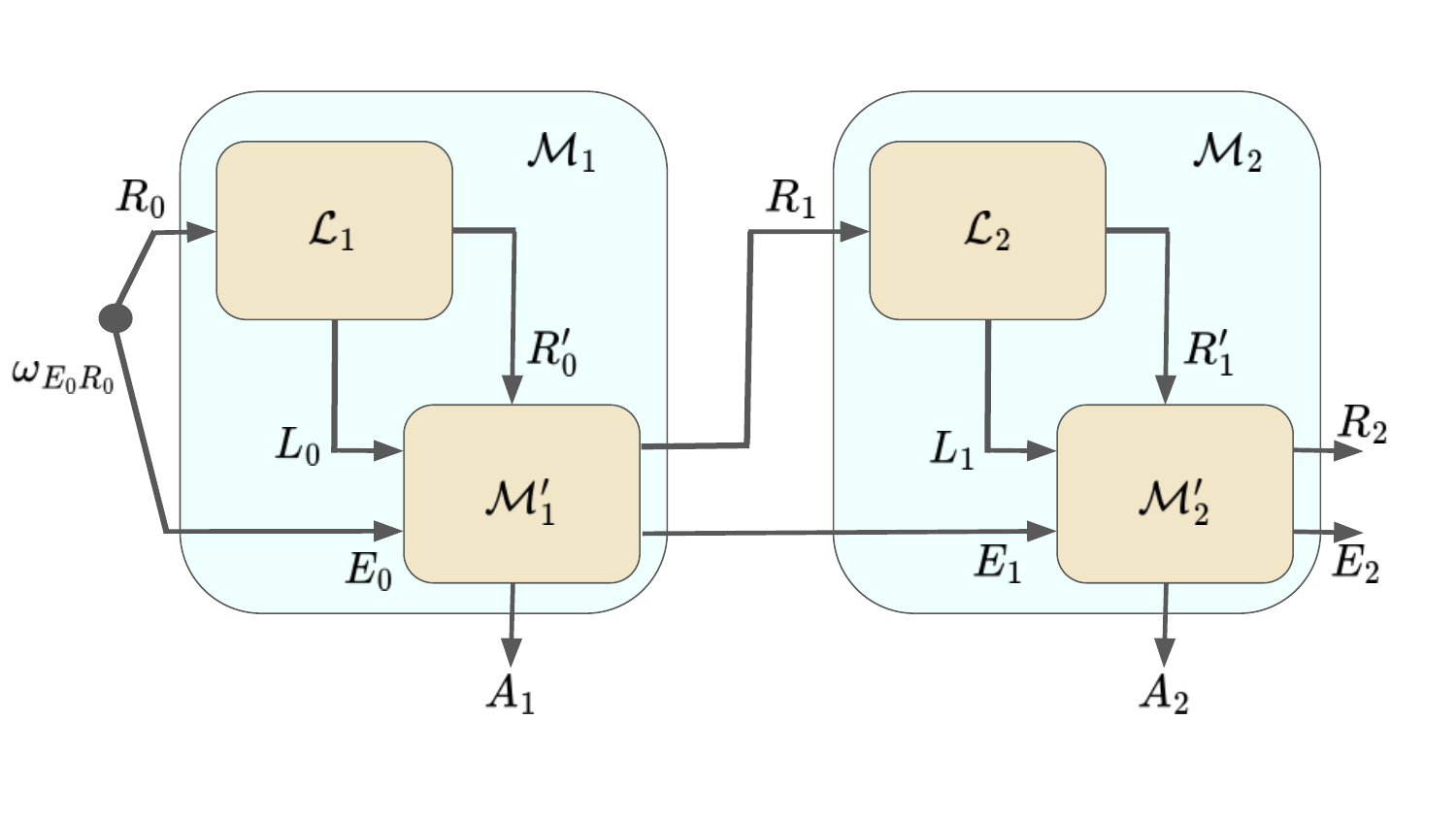}
\centering
\caption{Relaxation of the non-signaling channel structure considered in~\cite{arqand2024_mutual}. The channels $\mcM_{i}$ are non-signaling from $R_{i-1}'$ to $E_{i}$. However, depending on the leakage channel $\mcL_{i}$, $R_{i-1}$ might signal to $E_{i}$ via the leakage register $L_{i-1}$. If the $L_{i}$ systems are always empty, we recover the non-signaling channels considered in~\cite{MetgerGEAT}. If $L_{i}$ contains a copy of $R_{i}$, we have perfect signaling between the devices and the adversary, and no entropy can be accumulated.} 
\label{fig:leak}
\end{figure}

\begin{table}
  \centering
  \begin{tblr}{
      colspec={l|l|l|l|l|l},
      row{1}={font=},
      column{1}={font=\itshape},
      row{even}={bg=blue!10},
    }
      & Channel constraints  & Input constraints   & $\widehat{\mbH}_{\beta}$ & \makecell{Apply to \\ DD?} & \makecell{Apply to \\ DI?} \\
    \toprule
    \makecell{\cite{DFR}} & Markov, CPTP & None  &$H_{\alpha}^{\downarrow}$ & \cmark$^{\text{(b)}}$ & \cmark \\ 
    \cite{MetgerGEAT} & \makecell{Non-signaling, \\ CPTP} & None & \makecell{$H_{\frac{1}{2-\alpha}}^{\downarrow}$, \\ $\alpha \in (1,2)$} & \cmark$^{\text{(c)}}$ & \cmark \\ 
    \makecell{\cite{VHB25}} & \makecell{Independent \\ and identical, CP} & \makecell{Memoryless, \\ linear constraints} & $H_{\alpha}^{\uparrow}$ & \cmark & \xmark \\ 
    \makecell{\cite{arqand2025}} & CPTP & \makecell{Memoryless, \\ marginal constraints} & $H_{\alpha}^{\uparrow}$ & \cmark & \xmark\\ 
    \cite{Fawzi2026} & Independent, CP & \makecell{Memoryless, \\ linear constraints} & $H_{\alpha}^{\uparrow}$ & \cmark & \xmark \\ 
    \cite{Fawzi2026} & CP & Memoryless & $H_{\alpha}^{\uparrow}$ & \cmark$^{\text{(c)}}$ & \xmark \\
    \cite{Fawzi2026} & \makecell{Non-signaling, CP} & Memoryless & $H_{\alpha}^{\uparrow}$ &\cmark$^{\text{(c)}}$ & \xmark \\ 
    \makecell{\cite{arqand2024_mutual}} & \makecell{Non-signaling \\ with leakage, CPTP} & None & \makecell{$H_{\frac{1}{2-\alpha}}^{\downarrow}$, \\ $\alpha \in (1,2)$} & \cmark$^{\text{(b)}}$ &\cmark \\
    This work & \makecell{Independent and \\ classical new \\ side information, \\ CPTP} & None & $H_{\alpha}^{\uparrow \downarrow}$ $^{\text{(a)}}$ &  \cmark$^{\text{(b)}}$ & \cmark \\
    \bottomrule
  \end{tblr}
  \caption{A summary of the different chain rules captured by our framework. Each row corresponds to a chain rule for the optimized sandwiched R\'enyi entropy. That is, for a state $\rho_{A_{1}A_{2}E}$ generated by the sequential application of two channels that generate raw data $A_{1}$ and $A_{2}$, respectively, we consider a lower bound on $H_{\alpha}^{\uparrow}(A_{1}A_{2}|E)_{\rho}$ by the sum of a term similar to $H_{\alpha}^{\uparrow}(A_{1}|E)_{\rho}$, and a term similar to $\widehat{\mbH}_{\beta}(A_{2}|E)_{\rho}$, for $1 < \alpha \leq \beta$. ``Channel constraints'' refer to the assumptions made on the channels. ``Input constraints'' refer to constraints on the input states. $\widehat{\mbH}_{\beta}$ denotes the R\'enyi entropy of the second term in the summation, along with the R\'enyi parameter and any necessary restrictions. ``Apply to DD'' and ``Apply to DI'' indicate the applicability of the chain rule to device-dependent and device-independent protocol, respectively. (a) $H_{\alpha}^{\uparrow \downarrow}$ refers to the partially optimized R\'enyi entropy defined in \cref{def:partial_ent}. (b) To apply to prepare and measure QKD, extra ``tomography rounds'' are required~\cite{Bauml2024} (see \cref{sec:marg1}). (c) To apply to prepare and measure QKD, a constraint on the protocol's repetition rate is required~\cite{Metger2023} (see \cref{sec:NS}).}
  \label{tab:summary}
\end{table}

\subsection{Unstructured processes}

In this survey, we have restricted our focus to chain rules that apply to structured processes, i.e., ones that hold for states generated by a sequence of channels obeying certain properties. Such cases arise naturally in many applications, particularly in cryptography. There are however examples of chain rules for conditional R\'enyi entropies that hold for unstructured processes. While beyond the scope of this manuscript, we briefly mention some examples here and refer the reader to the references for further information. 

A chain rule for $H_{\alpha}^{\uparrow}$ that holds for all tripartite states $\rho_{ABC}$ was given by Dupuis~\cite[Theorem 1]{Dupuis_2015},
\begin{equation}
    H_{\alpha}^{\uparrow}(AB|C) \geq H_{\beta}^{\uparrow}(A|BC)_{\rho} + H_{\gamma}^{\uparrow}(B|C)_{\rho} \label{eq:unstr_chain}
\end{equation}
for an appropriately chosen $(\alpha,\beta,\gamma)$. A similar inequality holds in the opposite direction for a complementary regime of R\'enyi parameters. This can be viewed as a R\'enyi entropy analogue of the chain rule for the von Neumann entropy $H(AB|C)_{\rho} = H(A|BC)_{\rho} + H(B|C)_{\rho}$. Note that the change of R\'enyi parameters results in a weaker chain rule than the examples considered in previous subsections. In return however, this chain rule applies to all quantum states rather than only states arising from the application of, e.g., non-signaling channels. Chain rules for the smooth min and max-entropy that hold for all states and take a similar form to \eqref{eq:unstr_chain} were provided earlier in~\cite{Vitanov_2013}. Notably, instead of a loss in R\'enyi parameter, a loss in the smoothing parameter is incurred after each application of the rule.  

The un-optimized conditional min entropy $H_{\infty}^{\downarrow}$\footnote{We have frequently refered to the min-entropy, given by $H_{\text{min}} = \lim_{\alpha \to \infty} H_{\alpha}^{\uparrow}$. Its smoothed version $H_{\text{min}}^{\epsilon}$ is obtained by maximizing over states within an $\epsilon$-ball, and is an operationally relevant quantity that arises in, e.g., randomness extraction~\cite{TSSR}. The un-optimized version is given by $H_{\infty}^{\downarrow} = \lim_{\alpha \to \infty} H_{\alpha}^{\downarrow}$, and it is known that the smoothed version $H_{\infty}^{\downarrow,\epsilon}$ is equal to the smooth min-entropy up to a constant~\cite[Lemma 20]{TSSR}.} is known to satisfy a chain rule of the form \eqref{eq:unstr_chain}~\cite[Proposition 5.12]{Tomamichel_2016}, and a smoothed version of this was more recently proven by Marwah and Dupuis~\cite{marwah2024a,marwah2024b}, referred to as a universal chain rule. In the same work, the authors further prove an unstructured version of the entropy accumulation theorem. Broadly, this is the statement that the smooth min-entropy can be lower bounded by roughly $n$ times the conditional von Neumann entropy when the state can be approximated, in a certain sense, by the output of a sequence of quantum channels fulfilling the Markov conditions of the original EAT (cf. \cref{sec:markov}). 

\section{Counterexample and a modified chain rule} \label{sec:new_chain}
In this section, we explore potential modifications of the chain rule used to prove the EAT that is particularly relevant to the security of device-independent protocols (see \cref{fig:EAT} for an illustration). As discussed in \cref{sec:markov}, a key step in the EAT is reducing the $n$-round optimized R\'enyi entropy $H_{\alpha}^{\uparrow}(A^{n}|B^{n}E_{0})_{\rho}$ to a sum of entropy contributions from each round, roughly of the form $H_{\alpha}^{\downarrow}(A_{i}|B_{i}\tilde{E})_{\mcM_{i}(\omega)}$. Note that the initial optimized entropy $H_{\alpha}^{\uparrow}$ has been replaced by a sum of non-optimized entropies $H_{\alpha}^{\downarrow}$, contrasting the $n$-round entropy of an i.i.d.~process, equal to a sum of terms $H_{\alpha}^{\uparrow}(A_{i}|B_{i}\tilde{E})_{\mcM_{i}(\omega)}$. This difference does not impact the asymptotic rate $n \to \infty$ since both $H_{\alpha}^{\uparrow}$ and $H_{\alpha}^{\downarrow}$ converge to the von Neumann entropy as $\alpha \to 1$. However, it does suggest room for improvement at a finite $n$. This is well motivated by the device-dependent context where such improvements have been made. As discussed in \cref{sec:marg1}, the lower bound on the accumulated entropy exactly matches the upper bound given by an i.i.d.~process. By strengthening the EAT chain rule \eqref{eq:EAT_chain_1}, it may be possible to recover such additivity results in the device-independent setting.  

Towards this end, we are interested in improving the chain rule~\cite{DFR},
\begin{equation}
    H_{\alpha}^{\uparrow}(A_{1}A_{2}|B_{1}B_{2}E_{0})_{[\mcM_{2} \circ \mcM_{1}] (\omega)} \geq H_{\alpha}^{\uparrow}(A_{1}|B_{1}E_{0})_{\mcM_{1} (\omega)} + \inf_{\omega' \in \mcD(R_{1}\tilde{E})} H_{\alpha}^{\downarrow}(A_{2}|B_{2}\tilde{E})_{\mcM_{2} (\omega')}, \label{eq:EAT_chain_2}
\end{equation}
where $\omega_{R_{0}E_{0}}$ is an initial state and $\mcM_{i}:R_{i-1} \to A_{i}B_{i}R_{i}$ are quantum channels that satisfy the Markov condition $A_{1} \leftrightarrow E_{0}B_{1} \leftrightarrow B_{2}$. Restricting our focus to DI applications, $A_{i}$ stores the raw key generated at round $i$, $E_{0}$ is the initial system held by the adversary and $B_{i}$ stores the inputs of each device at round $i$ that are assumed to be chosen independently and randomly each round. As a consequence, for any input state $\omega_{R_{1}A_{1}B_{1}E_{0}}$ to the second channel, the output must satisfy 
\begin{equation}
    [\tr_{A_{2}R_{2}} \circ \mcM_{2}](\omega_{R_{1}A_{1}B_{1}E_{0}}) = \rho_{A_{1}B_{1}E_{0}} \otimes \rho_{B_{2}}. \label{eq:DIch}
\end{equation}
This implies that $A_{1} \leftrightarrow E_{0}B_{1} \leftrightarrow B_{2}$ forms a Markov chain. Thus, for DI applications, it will be sufficient to restrict our attention to channels $\mcM_{2}$ that satisfy \eqref{eq:DIch} rather than the more general Markov condition. We refer to such channels as DI channels.

\subsection{Classical counterexample} \label{sec:counter}
We first ask if the following holds when $\mcM_{2}$ satisfies \eqref{eq:DIch}, each $A_{i}$ and $B_{i}$ are classical and $\alpha \in (1,\infty)$:  
\begin{equation}
    H_{\alpha}^{\uparrow}(A_{1}A_{2}|E_{0}B_{1}B_{2})_{[\mcM_{2} \circ \mcM_{1}] (\omega)} \stackrel{\text{?}}{\geq} H_{\alpha}^{\uparrow}(A_{1}|E_{0}B_{1})_{\mcM_{1} (\omega)} + \inf_{\omega' \in \mcD(R_{1}\tilde{E})} H_{\alpha}^{\uparrow}(A_{2}|B_{2}\tilde{E})_{\mcM_{2} (\omega')}. \label{eq:EAT_wrong}
\end{equation}
Note the key difference between \eqref{eq:EAT_wrong} and the existing chain rule \eqref{eq:EAT_chain_2} is that every term in \eqref{eq:EAT_wrong} is an optimized R\'enyi entropy $H_{\alpha}^{\uparrow}$. Unfortunately, we provide a counterexample to \eqref{eq:EAT_wrong} where the state and channels that violate the inequality are fully classical.

Consider the case where all variables are bits. We take the channel $\mcM_{1}$ to prepare a fixed state on $A_{1}B_{1}$ (ignoring the input state, meaning we can omit $E_{0}$ throughout this example) described by the probability distribution $p_{A_{1}B_{1}}(a_{1},b_{1}) = p_{B_{1}}(b_{1}) \, p_{A_{1}|B_{1}}(a_{1}|b_{1})$ where
\begin{equation*}
    p_{B_{1}}(b_{1}) = \frac{1}{2}, \ \forall b_{1} \in \{0,1\} \ \text{and} \ 
        p_{A_{1}|B_{1}}(a_{1}|b_{1}) = \begin{cases}
            \frac{3}{4}\delta_{a_{1},0} + \frac{1}{4} \delta_{a_{1},1} \ \text{if} \ b_{1}=0,\\
            \delta_{a_{1},1} \ \text{if} \ b_{1}=1.
        \end{cases}
\end{equation*}
The value of $R_{1}$ is then chosen to be independent of $B_{1}$, and set to equal $A_{1}$, i.e.,
\begin{equation}
    p_{R_{1}|A_{1}B_{1}}(r_{1}|a_{1},b_{1}) = \delta_{r_{1},a_{1}}. \label{eq:p_r1} 
\end{equation}
For $\mcM_{2}$, we consider channels that generate $B_2$ ignoring the input state and then generate $A_2$ depending on $R_1B_2$. Then given any input distribution $p_{R_{1}R'}$ for any classical system $R'$, the output distribution after applying the map $\mcM_{2}:R_{1}\to A_{2}B_{2}$ is of the form
\begin{equation}
    p_{A_{2}B_{2}R'}(a_{2},b_{2},r') = \sum_{r_{1}} p_{R_{1}R'}(r_{1},r')\,p_{B_{2}}(b_{2}) \, p_{A_{2}|R_{1}B_{2}}(a_{2}|r_{1},b_{2}). \label{eq:p_r2}
\end{equation}
Note that by summing over $a_{2}$, we obtain $p_{B_{2}R'}(b_{2},r') = p_{B_{2}}(b_{2}) \, p_{R'}(r')$, i.e., any channel $\mcM_{2}$ of this form satisfies the condition \eqref{eq:DIch}. We then consider the following instance of this channel:
\begin{equation}
    p_{B_{2}}(b_{2}) = \frac{1}{2} \ \forall b_{2} \in \{0,1\}, \ \text{and} \ p_{A_{2}|R_{1}B_{2}}(a_{2}|r_{1},b_{2}) = \begin{cases}
            \frac{1}{2} \ \text{if} \ r_{1} \oplus b_{2} = 0\\
            \delta_{a_{2},0} \ \text{if} \ r_{1} \oplus b_{2} = 1.
        \end{cases} \label{eq:classical_ch1}
\end{equation}
By combining \cref{eq:p_r1,eq:p_r2}, the distribution of all variables $A_{1}B_{1}A_{2}B_{2}$ after applying $\mcM_{2} \circ \mcM_{1}$ to any input state $p_{R_0}$ is given by
\begin{equation*}
\begin{aligned}
    p_{R_{0}}(r_{0}) &\xrightarrow{\mcM_{1}} p_{R_{0}}(r_{0}) \, p_{B_1}(b_1)\, p_{A_1|B_1}(a_1|b_1) \, \delta_{r_1,a_1} \\
    &\xrightarrow{\tr_{R_{0}}\circ \mcM_{2}}  \sum_{r_1} p_{B_1}(b_1)\, p_{A_1|B_1}(a_1|b_1) \, \delta_{r_1,a_1} \,p_{B_{2}}(b_{2}) \, p_{A_{2}|R_{1}B_{2}}(a_{2}|r_{1},b_{2})\\
    &=  p_{B_{1}}(b_{1}) \,  p_{A_{1}|B_{1}}(a_{1}|b_{1}) \, p_{B_{2}}(b_{2}) \, p_{A_{2}|R_{1}B_{2}}(a_{2}|a_{1},b_{2}),
\end{aligned}
\end{equation*}
where in the second line we applied \cref{eq:p_r2} with $R' = A_1B_1$.

Using the expression \eqref{eq:classical_up} for the classical optimized R\'enyi entropy, we can compute the following quantities for $\alpha = 1.5$ (rounding to 5 d.p.) and any input state $p_{R_{0}}$:
\begin{equation*}
\begin{aligned}
        H_{\alpha}^{\uparrow}(A_{1}A_{2}|B_{1}B_{2})_{[\mcM_{2}\circ \mcM_{1}](p)} &= \frac{\alpha}{1-\alpha} \log \Bigg( \frac{1}{4}\sum_{b_{1},b_{2}} \Bigg( \sum_{a_{1},a_{2}} p_{A_{1}|B_{1}}(a_{1}|b_{1})^{\alpha} p_{A_{2}|R_{1}B_{2}}(a_{2}|a_{1},b_{2})^{\alpha}\Bigg)^{\frac{1}{\alpha}} \Bigg) \approx 0.82057,\\
        H_{\alpha}^{\uparrow}(A_{1}|B_{1})_{\mcM_{1}(p)} &= \frac{\alpha}{1-\alpha} \log \Bigg(\frac{1}{2} \Big[ \Big(\frac{3}{4}\Big)^{\alpha} + \Big(\frac{1}{4}\Big)^{\alpha} \Big]^{\frac{1}{\alpha}} + \frac{1}{2}  \Bigg)\approx 0.35295.
\end{aligned}
\end{equation*}
In Appendix \ref{app:counter} we show
\begin{equation}
    \inf_{\omega' \in \mcD(R_{1}\tilde{E})} H_{\alpha}^{\uparrow}(A_{2}|B_{2}\tilde{E})_{\mcM_{2}(\omega')} = \frac{\alpha}{1-\alpha}\log\Bigg(\frac{1}{2}\Bigg( \Big(\frac{1}{2}\Big)^{\alpha} + \Big(\frac{1}{2}\Big)^{\alpha} \Bigg)^{\frac{1}{\alpha}} + \frac{1}{2}\Bigg) \approx 0.47118, \label{eq:app_claim}
\end{equation}
where the infimum is taken over all cq-states $\omega_{R_{1}\tilde{E}}'$. For this example, the right hand side of \eqref{eq:EAT_wrong} is equal to 0.82413 which exceeds the left hand side value of 0.82057, i.e., the hypothesized chain rule is violated,
\begin{equation*}
    H_{\alpha}^{\uparrow}(A_{1}A_{2}|B_{1}B_{2})_{p} < H_{\alpha}^{\uparrow}(A_{1}|B_{1})_{\mcM_{1}(p)} +  \inf_{\omega' \in \mcD(R_{1}\tilde{E})} H_{\alpha}^{\uparrow}(A_{2}|B_{2}\tilde{E})_{\mcM_{2}(\omega')}. 
\end{equation*}
Furthermore, note that the channel $\mcM_{1}$ in this example is entirely independent of its input. We therefore see that
\begin{equation*}
    H_{\alpha}^{\uparrow}(A_{1}|B_{1})_{\mcM_{1} (\omega)} = \inf_{\omega'\in \mcD(R_{0}\tilde{E})} H_{\alpha}^{\uparrow}(A_{1}|B_{1}\tilde{E})_{\mcM_{1}(\omega')},
\end{equation*}
where $\tilde{E}$ is any purifying register. As a result, this example also rules out a variant of the chain rule \eqref{eq:EAT_wrong} where both terms on the right hand side are minimized over input states, and therefore the case where all three terms are minimized over input states cannot hold\footnote{This follows from the fact that $H_{\alpha}^{\uparrow}(A_{1}A_{2}|E_{0}B_{1}B_{2})_{[\mcM_{2} \circ \mcM_{1}] (\omega)} \geq \inf_{\omega} H_{\alpha}^{\uparrow}(A_{1}A_{2}|E_{0}B_{1}B_{2})_{[\mcM_{2} \circ \mcM_{1}] (\omega)}$, hence a chain rule in which all three terms are minimized would imply \eqref{eq:EAT_wrong} and be a contradiction.}. Furthermore, for this example we numerically observe that the chain rule of Dupuis, Fawzi and Renner~\cite{DFR} in \cref{eq:EAT_chain_1} is saturated, i.e.,
\begin{equation*}
    H_{\alpha}^{\downarrow}(A_{1}A_{2}|B_{1}B_{2})_{p} = H_{\alpha}^{\downarrow}(A_{1}|B_{1})_{\mcM_{1}(p)} +  \inf_{\omega' \in \mcD(R_{1}\tilde{E})} H_{\alpha}^{\downarrow}(A_{2}|B_{2}\tilde{E})_{\mcM_{2}(\omega')}.
\end{equation*}

A modified chain rule for non-signaling channels was considered in~\cite[Appendix A]{arqand2025}, where a counterexample was also shown. The key difference to the chain rule we consider is that~\cite{arqand2025} imposes marginal constraints on each entropy term (as described in \cref{sec:marg2}), something we do not enforce. Nonetheless, both examples highlight the difference between chain rules that allow for a memory register, as necessary for device-independent protocols, and chain rules applicable to memoryless settings such as device-dependent protocols.     

\subsection{Tightened chain rule} \label{sec:new_chain_cor}
Under the assumption that the side information is generated independently each round, the preceding counterexample shows that a quantum process described by \cref{fig:EAT} does not admit a chain rule for the optimized R\'enyi entropy $H_{\alpha}^{\uparrow}$ only. A natural follow up question is if \eqref{eq:EAT_chain_2} can be improved at all. We now show that this is the case. To do so, we introduce a new notion of conditional R\'enyi entropy for cq-states when there are two conditioning registers, one of which is classical. 

\begin{definition}[Partially optimized conditional R\'enyi entropy for cq-states]
    Let $\rho_{ABC} = \sum_{b} p_{B}(b) \, \ketbra{b}{b}_{B} \otimes \rho_{AC}^{|b} \in \mcD(ABC)$ be a cq-state classical on $B$, and let $\alpha \in (1,\infty)$. The partially optimized conditional R\'enyi entropy of order $\alpha$ between $A$ and $BC$ is defined as
    \begin{equation}
        H_{\alpha}(A|B^{\uparrow}C^{\downarrow})_{\rho} := \frac{\alpha}{1-\alpha} \log\Bigg( \sum_{b} p_{B}(b) 2^{\frac{1-\alpha}{\alpha}H_{\alpha}^{\downarrow}(A|C)_{\rho^{|b}}}\Bigg). \label{eq:partial_ent}
    \end{equation}
    \label{def:partial_ent}
\end{definition}
We call $H_{\alpha}(A|B^{\uparrow}C^{\downarrow})_{\rho}$ ``partially optimized'' in contrast to the ``fully'' optimized entropy $H_{\alpha}^{\uparrow}(A|BC)_{\rho}$. To see the difference, when conditioned on a classical register the optimized R\'enyi entropy takes the form (see~\cite[Proposition 5.4]{Tomamichel_2016}) 
\begin{equation*}
        H_{\alpha}^{\uparrow}(A|BC)_{\rho} = \frac{\alpha}{1-\alpha} \log\Bigg( \sum_{b} p_{B}(b) 2^{\frac{1-\alpha}{\alpha}H_{\alpha}^{\uparrow}(A|C)_{\rho^{|b}}}\Bigg). 
\end{equation*}
It also contrasts the ``un-optimized'' entropy $H_{\alpha}^{\downarrow}(A|BC)_{\rho}$, given by
\begin{equation*}
        H_{\alpha}^{\downarrow}(A|BC)_{\rho} = \frac{1}{1-\alpha} \log\Bigg( \sum_{b} p_{B}(b) 2^{(1-\alpha)H_{\alpha}^{\downarrow}(A|C)_{\rho^{|b}}}\Bigg). 
\end{equation*}
In particular, we have the chain of inequalities
\begin{equation*}
    H_{\alpha}^{\downarrow}(A|BC)_{\rho} \leq H_{\alpha}(A|B^{\uparrow}C^{\downarrow})_{\rho} \leq H_{\alpha}^{\uparrow}(A|BC)_{\rho}.
\end{equation*}
The upper bound follows directly from the fact that $H_{\alpha}^{\downarrow}(A|C)_{\rho^{|b}} \leq H_{\alpha}^{\uparrow}(A|C)_{\rho^{|b}}$, and the lower bound can be seen from the convexity of the function $x \mapsto x^{\alpha}$ for $x \geq 0$ and $\alpha \in (1,\infty)$. The following variational characterization of $H_{\alpha}(A|B^{\uparrow}C^{\downarrow})_{\rho}$ also illustrates this series of inequalities.
\begin{lemma}[Variational formula]
    Let $\rho_{ABC} = \sum_{b} p_{B}(b) \, \ketbra{b}{b}_{B} \otimes \rho_{AC}^{|b} \in \mcD(ABC)$ be a cq-state classical on $B$, and let $\alpha \in (1,\infty)$. Then the partially optimized conditional R\'enyi entropy of order $\alpha$ according to \cref{def:partial_ent} is equal to the variational expression
    \begin{equation}
        H_{\alpha}(A|B^{\uparrow}C^{\downarrow})_{\rho} = \sup_{q_{B} \in \Delta(B)} - D_{\alpha}\big( \rho_{ABC} \| \id_{A} \otimes \sigma_{BC}\big), \label{eq:var_form}
    \end{equation}
    where $\sigma_{BC} = \sum_{b} q_{B}(b) \, \ketbra{b}{b} \otimes \rho_{C}^{|b}$, $\rho_{C}^{|b} = \tr_{A}[\rho_{AC}^{|b}]$ and the supremum is taken over all probability distributions on the classical register $B$. \label{lem:var_ent}
\end{lemma}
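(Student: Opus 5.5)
Since both $\rho_{ABC}$ and $\sigma_{BC}$ are block diagonal in the classical basis of $B$, the sandwiched quantity in $D_{\alpha}$ splits into blocks. The plan is to compute the divergence for a fixed $q_B$, reduce the problem to a classical optimization over $q_B$, and solve that optimization with H\"older's inequality.

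\textbf{Step 1 (block decomposition).} Fix $q_B \in \Delta(B)$ and write $\gamma = \frac{1-\alpha}{2\alpha}$. Then
\[
(\id_A\otimes\sigma_{BC})^{\gamma}\rho_{ABC}(\id_A\otimes\sigma_{BC})^{\gamma} = \sum_b \ketbra{b}{b}\otimes p_B(b)\,q_B(b)^{\frac{1-\alpha}{\alpha}}\,(\rho^{|b}_C)^{\gamma}\rho^{|b}_{AC}(\rho^{|b}_C)^{\gamma}.
\]
Taking the $\alpha$-th power and the trace gives
\[
Q_\alpha(q) := \tr[\cdots] = \sum_b p_B(b)^{\alpha} q_B(b)^{1-\alpha}\, x_b, \qquad x_b := 2^{(1-\alpha)H^{\downarrow}_\alpha(A|C)_{\rho^{|b}}},
\]
using the definition $H^{\downarrow}_\alpha(A|C)_{\rho^{|b}} = -D_\alpha(\rho^{|b}_{AC}\|\id_A\otimes\rho^{|b}_C)$ for the normalized $\rho^{|b}$. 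Support issues need a short check. If $q_B(b)=0$ for some $b$ with $p_B(b)>0$, the divergence is $+\infty$, so such $q$ can be discarded. Within each block, $\rho^{|b}_{AC}$ is supported on $\id_A\otimes\rho^{|b}_C$, so there is no further support obstruction.

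\textbf{Step 2 (reduce to a classical optimization).} Since $\alpha>1$, the map $Q \mapsto -\frac{1}{\alpha-1}\log Q$ is decreasing. The right-hand side of \eqref{eq:var_form} therefore equals $\frac{1}{1-\alpha}\log \inf_q Q_\alpha(q)$.

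\textbf{Step 3 (solve with H\"older).} Set $y_b := p_B(b)^{\alpha}x_b$. Apply H\"older's inequality with exponents $\alpha$ and $\alpha/(\alpha-1)$:
\[
\sum_b y_b^{1/\alpha} = \sum_b \big(y_b q_B(b)^{1-\alpha}\big)^{1/\alpha}\, q_B(b)^{\frac{\alpha-1}{\alpha}} \le Q_\alpha(q)^{1/\alpha}\Big(\sum_b q_B(b)\Big)^{\frac{\alpha-1}{\alpha}}.
\]
This gives $\inf_q Q_\alpha(q) \ge \big(\sum_b y_b^{1/\alpha}\big)^{\alpha}$. Equality is attained at $q_B(b)\propto y_b^{1/\alpha} = p_B(b)\,x_b^{1/\alpha}$, as in the classical derivation of \eqref{eq:classical_up}. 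Hence
\[
\inf_q Q_\alpha(q) = \Big(\sum_b p_B(b)\, 2^{\frac{1-\alpha}{\alpha}H^{\downarrow}_\alpha(A|C)_{\rho^{|b}}}\Big)^{\alpha}.
\]
Substituting into Step 2 gives $\frac{\alpha}{1-\alpha}\log(\cdots)$, which is exactly \eqref{eq:partial_ent}.

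\textbf{Expected obstacle.} The only delicate points are the support bookkeeping for zero probabilities and the exact attainment of the optimum, both of which are routine. The comparison with $H^{\uparrow}_\alpha$ would additionally allow $\sigma$ to be arbitrary in each $C$-block. Here the blocks are frozen to $\rho^{|b}_C$, which explains the sandwich of inequalities stated before the lemma.
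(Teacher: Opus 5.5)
Your proposal is correct and follows essentially the same route as the paper. The paper also splits the divergence blockwise over the classical register $B$ (via its cq-divergence formula) and reduces the problem to minimizing $\sum_b q_B(b)^{1-\alpha} p_B(b)^{\alpha} 2^{(1-\alpha)H^{\downarrow}_{\alpha}(A|C)_{\rho^{|b}}}$ over the simplex. The only difference is that you derive the optimum and its attainment with H\"older's inequality, whereas the paper cites the known solution of this convex problem.
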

\noindent Proof can be found in \cref{sec:ent_props}. The entropy $H_{\alpha}^{\downarrow}(A|BC)_{\rho}$ is equal to the right hand side of \eqref{eq:var_form} for the feasible point $q_{B} = p_{B}$, i.e., $\sigma_{BC} = \rho_{BC}$, and is therefore a lower bound. On the other hand, $H_{\alpha}^{\uparrow}(A|BC)_{\rho}$ is equal to the right hand side of \eqref{eq:var_form} when the supremum is taken over all quantum states $\sigma_{BC}$, and is therefore an upper bound. We list some further properties that follow directly from the underlying R\'enyi divergence, also proven in \cref{sec:ent_props}.   

\begin{lemma}[Additional properties]
     Let $\rho_{ABCD} = \sum_{b,d} p_{BD}(b,d) \, \ketbra{b}{b}_{B} \otimes \ketbra{d}{d}_{D} \otimes \rho_{AC}^{|b,d} \in \mcD(ABCD)$ be a cq-state classical on $BD$, and let $\alpha \in (1,\infty)$. Then the partially optimized conditional R\'enyi entropy of order $\alpha$ according to \cref{def:partial_ent} has the following properties:
    \begin{enumerate}[(i)]
        \item When $\rho_{ABC} = \rho_{AB} \otimes \rho_{C}$, $H_{\alpha}(A|B^{\uparrow}C^{\downarrow})_{\rho} = H_{\alpha}^{\uparrow}(A|B)_{\rho}$. Similarly, when $\rho_{ABC} = \rho_{AC} \otimes \rho_{B}$,  $H_{\alpha}(A|B^{\uparrow}C^{\downarrow})_{\rho} = H_{\alpha}^{\downarrow}(A|C)_{\rho}$ (consistency with $H_{\alpha}^{\uparrow}$ and $H_{\alpha}^{\downarrow}$).
        \item Let $\mcE: C \to \hat{C}$ be a CP map. Then $H_{\alpha}(A|B^{\uparrow}C^{\downarrow})_{\rho} \leq H_{\alpha}(A|B^{\uparrow}\hat{C}^{\downarrow})_{\mcE(\rho)}$, with equality if $\mcE$ is an isometry (data processing and isometric invariance for $C$).
        \item $H_{\alpha}(AD|B^{\uparrow}C^{\downarrow})_{\rho} \geq H_{\alpha}(A|B^{\uparrow}C^{\downarrow})_{\rho}  \geq H_{\alpha}(A|B^{\uparrow}(CD)^{\downarrow})_{\rho}$ (classical registers cannot decrease entropy and subadditivity). 
    \end{enumerate} \label{lem:ent_prop}
\end{lemma}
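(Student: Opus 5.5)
My plan is to derive all three properties from the variational formula of \cref{lem:var_ent}, supplemented where convenient by the closed-form expression \eqref{eq:partial_ent}. Both forms express $H_{\alpha}(A|B^{\uparrow}C^{\downarrow})_{\rho}$ through the conditional entropies $H_{\alpha}^{\downarrow}(A|C)_{\rho^{|b}}$ of the branches $\rho^{|b}$. Each claim should then reduce to a known property of $H_{\alpha}^{\downarrow}$ (equivalently of $D_{\alpha}$) applied branchwise, combined with the fact that the map $x\mapsto \frac{\alpha}{1-\alpha}\log\big(\sum_b p_B(b)\,2^{\frac{1-\alpha}{\alpha}x_b}\big)$ is monotonically increasing in each $x_b$.

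\textbf{Property (i).} If $\rho_{ABC}=\rho_{AB}\otimes\rho_C$, then $\rho^{|b}_{AC}=\rho^{|b}_A\otimes\rho_C$. Since $H_{\alpha}^{\downarrow}(A|C)$ of a product state equals the unconditional entropy $H_\alpha(A)_{\rho^{|b}}=H_{\alpha}^{\uparrow}(A)_{\rho^{|b}}$, \eqref{eq:partial_ent} reduces to the classical-conditioning formula for $H_{\alpha}^{\uparrow}(A|B)$ (\cite[Proposition 5.4]{Tomamichel_2016}).

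If instead $\rho_{ABC}=\rho_{AC}\otimes\rho_B$, every branch equals $\rho_{AC}$. The sum in \eqref{eq:partial_ent} then collapses to $2^{\frac{1-\alpha}{\alpha}H_{\alpha}^{\downarrow}(A|C)_\rho}$, and the result $H_{\alpha}^{\downarrow}(A|C)_{\rho}$ follows immediately.

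\textbf{Property (ii).} I work branchwise. A CP map $\mcE$ on $C$ acts on each branch as $\rho^{|b}_{AC}\mapsto\mcE(\rho^{|b}_{AC})$ and leaves $p_B$ unchanged. It therefore suffices to show $H_{\alpha}^{\downarrow}(A|C)_{\rho^{|b}}\le H_{\alpha}^{\downarrow}(A|\hat C)_{\mcE(\rho^{|b})}$ and then invoke monotonicity of the outer function.

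For a channel this is data processing for $D_\alpha$: the map $\mcE$ takes $\id_A\otimes\rho_C^{|b}$ to $\id_A\otimes\mcE(\rho_C^{|b})$. For a general trace-non-increasing CP map, I would use the normalization convention of $D_\alpha$ (division by $\tr\rho$) to reduce to the channel case. This step is the one I expect to need care. The plan is to complete $\mcE$ to a channel with an extra flag output and then discard the flag, or else to simply restrict to trace-preserving $\mcE$ if the normalization breaks monotonicity. Isometric invariance follows from invariance of $D_\alpha$ under isometries, or from applying data processing in both directions.

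\textbf{Property (iii).} For the first inequality, I use the variational form and a pinching argument. Since $D$ is classical, $\rho_{ADBC}\le\rho_{ABC}\otimes\id_D$ up to the block structure: each block $\rho^{|b,d}$ satisfies $p(d|b)\rho^{|b,d}_{AC}\le\rho^{|b}_{AC}$. Combining this with the standard fact that classical registers cannot decrease $H_\alpha^{\downarrow}$ gives $H_{\alpha}^{\downarrow}(AD|C)_{\rho^{|b}}\ge H_{\alpha}^{\downarrow}(A|C)_{\rho^{|b}}$ branchwise, and monotonicity finishes the argument.

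For the second inequality, conditioning on more side information, I again work branchwise. I need $H_{\alpha}^{\downarrow}(A|CD)_{\rho^{|b}}\le H_{\alpha}^{\downarrow}(A|C)_{\rho^{|b}}$, which is exactly data processing of $D_\alpha$ under $\tr_D$ applied to the pair $(\rho^{|b}_{ACD},\id_A\otimes\rho^{|b}_{CD})$. Since this holds for every $b$, the claim follows.

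The main obstacle is the CP (non-trace-preserving) case in (ii); all remaining steps are routine.
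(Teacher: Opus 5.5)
Your proposal is correct. For (i), (ii) and the first inequality of (iii) it coincides with the paper's proof: you reduce each claim to a branchwise statement about $H_{\alpha}^{\downarrow}(A|C)_{\rho^{|b}}$ and push it through the monotone outer function in \eqref{eq:partial_ent}. The branchwise statements are the product-state reduction, data processing, and the fact that classical registers cannot decrease $H_{\alpha}^{\downarrow}$ \cite[Lemma 5.14]{Tomamichel_2016}. The operator inequality $p_{D|B}(d|b)\,\rho^{|b,d}_{AC}\le\rho^{|b}_{AC}$ you mention does not drive that step, since $D_{\alpha}$ is not monotone in its first argument; the cited standard fact does.

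For the second inequality of (iii) your route differs from the paper's. The paper passes through the variational formula of \cref{lem:var_ent} and applies data processing under $\tr_{D}$ to the whole divergence. It then restricts the feasible set to $\sigma_{BCD}=\sum_{b,d}q_{B}(b)\,p_{D|B}(d|b)\ketbra{b}{b}_{B}\otimes\ketbra{d}{d}_{D}\otimes\rho_{C}^{|b,d}$. You instead apply data processing under $\tr_{D}$ to each branch pair $(\rho^{|b}_{ACD},\id_{A}\otimes\rho^{|b}_{CD})$, i.e.\ you recognize the claim as (ii) with $\mcE=\tr_{D}$. Your version is shorter, avoids the variational formula, and does not use that $D$ is classical.

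On the step you flagged, the flag-completion idea cannot work, because (ii) is false for non-trace-preserving CP maps with the paper's normalized $D_{\alpha}$. Take $B$ trivial, $\rho_{AC}=\frac12\ketbra{0}{0}_{A}\otimes\ketbra{0}{0}_{C}+\frac14\id_{A}\otimes\ketbra{1}{1}_{C}$, and $\mcE(X)=\ketbra{0}{0}X\ketbra{0}{0}$ on $C$. The entropy drops from $\frac{1}{1-\alpha}\log\big(\frac12+\frac12\,2^{1-\alpha}\big)>0$ to $0$. Completing $\mcE$ with a flag only gives data processing for the completed channel. The flag-$0$ branch on its own can have lower entropy than the flag-averaged quantity.

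So your fallback is the right fix: state (ii) for channels. That is all the paper's one-line appeal to data processing actually proves. It is also all that is needed later: isometries in \cref{lem:pureOpt}, and $\tr_{D}$ in your proof of (iii).
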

\noindent Note that when $C$ is also classical, $\rho$ is equivalent to a probability distribution $p_{ABC}(a,b,c) = p_{B}(b)\,p_{B|C}(b|c)\,p_{A|BC}(a|b,c)$, and we have
\begin{equation*}
     H_{\alpha}(A|B^{\uparrow}C^{\downarrow})_{\rho} = \frac{\alpha}{1-\alpha} \log\Bigg( \sum_{b} p_{B}(b) \Bigg[ \sum_{a,c} p_{C|B}(c|b) \, p_{A|BC}(a|b,c)^{\alpha}\Bigg]^{\frac{1}{\alpha}}\Bigg).
\end{equation*}
The lower and upper bounds for the classical expressions \eqref{eq:classical_down} and \eqref{eq:classical_up} of $H_{\alpha}^{\downarrow}$ and $H_{\alpha}^{\uparrow}$, respectively, can then be recovered using the concavity of the function $x \mapsto x^{\frac{1}{\alpha}}$ when $\alpha \in (1,\infty)$.  

Equipped with \cref{def:partial_ent} and the subsequent properties, we can tighten the chain rule \eqref{eq:EAT_chain_2}.

\begin{corollary}[Chain rule]
    Let $\mcM : R \to A_{2}B_{2}$ be a quantum channel such that $B_{2}$ is classical and for all input states $\omega_{RR'} \in \mcD(RR')$ where $R'$ is an arbitrary register, the output state $\rho_{A_{2}B_{2}R'} = \mcM(\omega_{RR'})$ satisfies $\rho_{B_{2}R'} = \rho_{B_{2}} \otimes \omega_{R'}$. Let $\omega_{A_{1}B_{1}R} \in \mcD(A_{1}B_{1}R)$ be a quantum state. Then for any $\alpha \in (1,\infty)$,
    \begin{equation}
        H_{\alpha}^{\uparrow}(A_{1}A_{2}|B_{1}B_{2})_{\mcM(\omega)} \geq H_{\alpha}^{\uparrow}(A_{1}|B_{1})_{\omega} + \inf_{\omega'\in \mcD(R\tilde{E})} H_{\alpha}(A_{2}|B_{2}^{\uparrow}\tilde{E}^{\downarrow})_{\mcM(\omega')},
    \end{equation}
    where the infimum is taken over all pure states and the system $\tilde{E}$ has the same dimension as $R$. \label{cor:new_chain_1}
\end{corollary}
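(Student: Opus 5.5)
The plan is to adapt the proof of the Dupuis--Fawzi--Renner chain rule behind \eqref{eq:EAT_chain_2}. The new ingredient is to handle the classical register $B_{2}$ separately, block by block, so that the optimization over $B_{2}$ can be kept instead of being collapsed to $\sigma_{B_2}=\rho_{B_2}$.

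\textbf{Step 1 (decomposition).} Since $B_{2}$ is classical, write $\mcM=\sum_{b_2}\ketbra{b_2}{b_2}_{B_2}\otimes\mcM_{b_2}$ with CP maps $\mcM_{b_2}:R\to A_{2}$. The non-signaling hypothesis $\rho_{B_2R'}=\rho_{B_2}\otimes\omega_{R'}$, applied with a purifying $R'$, forces $\tr\circ\mcM_{b_2}=p(b_2)\,\tr$ for a fixed distribution $p$ that does not depend on the input. In particular $\tr[\mcM_{b_2}(\omega')]=p(b_2)$ for every input state $\omega'$.

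\textbf{Step 2 (block-diagonal reduction).} Fix $\sigma_{B_1}$ optimal for $H_\alpha^{\uparrow}(A_1|B_1)_\omega$, and use $\sigma_{B_1}\otimes\sum_{b_2}q(b_2)\ketbra{b_2}{b_2}$ as a feasible point for $H_\alpha^{\uparrow}(A_1A_2|B_1B_2)$. Write $Q_\alpha(\rho\|\sigma)=\tr[(\sigma^{\frac{1-\alpha}{2\alpha}}\rho\sigma^{\frac{1-\alpha}{2\alpha}})^\alpha]$. Block-diagonality in $b_2$ gives
\begin{equation*}
Q_\alpha = \sum_{b_2} q(b_2)^{1-\alpha}\, Q_\alpha\big(\mcM_{b_2}(\omega_{A_1B_1R})\,\big\|\,\id_{A_1A_2}\otimes\sigma_{B_1}\big).
\end{equation*}

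\textbf{Step 3 (key lemma).} I would prove the following refinement of the DFR single-step lemma. There is a state $\omega'_{R\tilde E}$, depending only on $\omega$, $\sigma_{B_1}$ and $\alpha$ and \emph{not} on the map applied afterwards, such that for every CP map $\mcN:R\to A_2$
\begin{equation*}
Q_\alpha\big(\mcN(\omega)\,\|\,\id\otimes\sigma_{B_1}\big)\le Q_\alpha(\omega_{A_1B_1}\|\id\otimes\sigma_{B_1})\cdot Q_\alpha\big(\mcN(\omega'_{R\tilde E})\,\|\,\id_{A_2}\otimes\omega'_{\tilde E}\big).
\end{equation*}
The natural candidate for $\omega'$ is the ``tilted'' state obtained by purifying $\omega_{A_1B_1R}$ into $\tilde E\simeq R$. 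Its $\tilde E$-marginal is taken proportional to the transpose of $\tr_{A_1B_1}\big[(\sigma_{B_1}^{\frac{1-\alpha}{2\alpha}}\omega\sigma_{B_1}^{\frac{1-\alpha}{2\alpha}})^{\alpha}\big]$, suitably square-rooted. This is essentially the state that appears implicitly in the DFR proof, which rests on Hölder-type and operator inequalities together with the transpose trick.

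This step is the main obstacle. DFR take a supremum over inputs for a single channel, whereas here the \emph{same} $\omega'$ must serve all $b_2$ simultaneously. Otherwise Step 4 would need $\sum_b\sup\le\sup\sum_b$, which is false in general. I would therefore rerun the DFR argument line by line and check that the optimizer is chosen before $\mcN$ enters. If the fixed-point equation for $\omega'$ makes this fail, I would fall back on a minimax/convexity argument over the (compact, convex) set of input states.

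\textbf{Step 4 (assembling).} Apply the lemma with $\mcN=\mcM_{b_2}$. Write $Q_\alpha(\mcM_{b_2}(\omega')\|\omega'_{\tilde E})=p(b_2)^{\alpha}\,2^{(1-\alpha)H^{\downarrow}_\alpha(A_2|\tilde E)_{\mcM(\omega')^{|b_2}}}$; here Step 1 is used, since the normalization equals $p(b_2)$ and the marginal of $\tilde E$ given $b_2$ is $\omega'_{\tilde E}$. Then optimize over $q$: the choice $q(b_2)\propto p(b_2)\,2^{\frac{1-\alpha}{\alpha}H^\downarrow_\alpha(A_2|\tilde E)_{\mcM(\omega')^{|b_2}}}$ turns $\sum_{b_2}q^{1-\alpha}(\cdot)$ into $\big(\sum_{b_2}p(b_2)2^{\frac{1-\alpha}{\alpha}H^\downarrow_\alpha(\cdot)}\big)^{\alpha}$. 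This is exactly $2^{(1-\alpha)H_\alpha(A_2|B_2^{\uparrow}\tilde E^{\downarrow})_{\mcM(\omega')}}$ by Definition~\ref{def:partial_ent}, consistent with Lemma~\ref{lem:var_ent}. Taking $\frac{1}{1-\alpha}\log$ gives the claim for this $\omega'$, which is bounded below by the infimum. Finally, restricting the infimum to pure states with $\dim\tilde E=\dim R$ is justified by data processing and isometric invariance in $\tilde E$ (Lemma~\ref{lem:ent_prop}(ii)).
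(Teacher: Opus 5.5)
Your Steps 1, 2 and 4 are correct and follow the same architecture as the paper. You decompose over $b_2$ and use one $\sigma_{B_1}$, optimal for $H_\alpha^{\uparrow}(A_1|B_1)_\omega$, in every block. You then need a single tilted input serving all $b_2$ at once, optimize over $q_{B_2}$ to reach \cref{def:partial_ent}, and purify at the end. The gap is Step 3, which carries essentially the whole content of the corollary and which you only assert. The candidate you sketch is not the right object: already for classical states, your $\tr_{A_1B_1}[(\sigma_{B_1}^{\frac{1-\alpha}{2\alpha}}\omega\,\sigma_{B_1}^{\frac{1-\alpha}{2\alpha}})^{\alpha}]$ weights $p(r|a_1,b_1)^{\alpha}$, whereas the correct marginal has $p(r|a_1,b_1)$. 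Your mention of a fixed-point equation and a minimax fallback shows the mechanism has not been identified. In fact no fixed point, H\"older-type inequality or minimax is needed. A minimax over inputs would not by itself cure the $\sum_{b}\sup$ versus $\sup\sum_{b}$ problem you correctly flag.

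The missing idea is that the DFR single-step result (\cref{lem:DFR_chain}) is an exact identity whose tilted state depends only on the $A_1B_1$-marginal and on $\sigma_{B_1}$. Write $\omega_{A_1B_1R}=\omega_{A_1B_1}^{1/2}X\,\omega_{A_1B_1}^{1/2}$ with $X=\omega_{A_1B_1}^{-1/2}\omega_{A_1B_1R}\,\omega_{A_1B_1}^{-1/2}$. Then $\mcN(\omega)=\omega_{A_1B_1}^{1/2}\mcN(X)\,\omega_{A_1B_1}^{1/2}$ for every CP map $\mcN:R\to A_2$. Put $Y=\omega_{A_1B_1}^{1/2}\sigma_{B_1}^{\frac{1-\alpha}{2\alpha}}$ and $\nu_{A_1B_1}=(YY^{\dagger})^{\alpha}/\tr[(YY^{\dagger})^{\alpha}]$. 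Applying $\tr[(Y^{\dagger}ZY)^{\alpha}]=\tr[(Z^{1/2}YY^{\dagger}Z^{1/2})^{\alpha}]$ twice gives, with equality,
\[
Q_\alpha(\mcN(\omega)\,\|\,\id\otimes\sigma_{B_1})=Q_\alpha(\omega_{A_1B_1}\,\|\,\id\otimes\sigma_{B_1})\cdot Q_\alpha(\mcN(\omega')\,\|\,\id_{A_2}\otimes\nu_{A_1B_1}),\qquad \omega':=\nu_{A_1B_1}^{1/2}X\,\nu_{A_1B_1}^{1/2}.
\]
Since $\Pi(\nu_{A_1B_1})=\Pi(\omega_{A_1B_1})$, $\omega'$ is a normalized state on $A_1B_1R$ with $\omega'_{A_1B_1}=\nu_{A_1B_1}$. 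So $\tilde E=A_1B_1$ works, and $\mcN$ never enters the construction.

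The paper organizes exactly this. The hypothesis on $\mcM$ gives $\rho^{|b_2}_{A_1B_1}=\omega_{A_1B_1}$ for every $b_2$. Hence \cref{lem:DFR_chain}, applied blockwise with the same $\tilde\sigma_{B_1}$, yields the same $\nu_{A_1B_1}$ (\cref{lem:newchain}). Then \cref{lem:chan_1} realizes the resulting cq-state as $\mcM(\omega')$ with this $\omega'$. With that in place your Step 4 goes through unchanged. The reduction to pure states with $\tilde E\cong R$ is \cref{lem:pureOpt}.
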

\noindent See \cref{sec:chain_prood} for proof.

\section{Application to entropy accumulation} \label{sec:REAT}
As described in \cref{sec:markov}, chain rules for the conditional R\'enyi entropy are often converted into an entropy accumulation theorem. This is achieved by successively applying the rule to the output state of a sequence of channels $\mcM_{1},...,\mcM_{n}$, where each channel is a particular case of \cref{fig:chan}. This sequence can be chosen to model a cryptographic protocol, and the EAT allows one to quantify security in terms of an entropy contribution from each round. 

Furthermore, the EAT goes beyond the $n$-fold application of a chain rule by accommodating what is refered to as ``testing''. Here, the user can collect an additional piece of classical information $C_{i}$ at each round $i \in \{1,...,n\}$, and based on the observed frequency distribution on $C^n$, decide if they want to abort the protocol. The values stored in these testing registers should indicate whether the devices are behaving as expected. For example, $C_{i}$ could encode whether a nonlocal game was won or lost on round $i$, and if too few rounds are won, the user might suspect that the devices are behaving classically and abort the protocol. If the protocol does not abort, the user can certify roughly $n$ times the entropy of a single round, where the worst case is taken over a constrained set of states defined by the non-abort event. This constrained minimization, as opposed to the unconstrained minimization that appears in the chain rules we have considered, is essential for obtaining non-trivial bounds on the single round entropy. 

A proof technique handling testing based on ``min-tradeoff functions'' was first introduced by Dupuis, Fawzi and Renner~\cite{DFR}, and the resulting second order term in the EAT was improved in follow up works~\cite{EAT2,LLR&}. More recently, Van Himbeeck and Brown~\cite{VHB25} introduced a new method for testing based on the framework of ``$f$-weighted R\'enyi entropies'', a technique that builds on the quantum probability estimation framework~\cite{Zhang18_PEF,Zhang20,knill2023} by directly encoding the trade-off function into the optimized R\'enyi entropy. This serves as a precursor to randomness extraction, resulting in a complete security proof. Subsequent works have generalized this framework~\cite{arqand2024,arqand2025,Fawzi2026}, and it has been shown to improve the rates of numerous cryptographic protocols~\cite{hahn2024,hahn2025,kamin2025,chung25,jee2025,Lu26}.  

\subsection{R\'enyi entropy accumulation}

In Appendix \ref{app:REAT}, we prove a version of the EAT derived from the tightened chain rule in \cref{cor:new_chain_1}. Recall, the setting to which our chain rule applies is a special case of the Markov conditions illustrated in~\cref{fig:EAT}. More specifically, we require that new side information generated at round $i$, denoted $B_{i}$, satisfies two conditions. Firstly, it should be classical, and secondly, it should be independent of all previous sub-systems. This was formalized in \cref{eq:DIch}. As discussed in \cref{sec:new_chain}, these conditions are typically satisfied in device-independent protocols, and our version of the EAT should be viewed with this application in mind. 

While the original EAT is concerned with lower bounding the smooth min-entropy~\cite{DFR,MetgerGEAT} (cf. \cref{sec:markov}), we follow the approach of more recent works~\cite{VHB25,arqand2024} that work directly with the R\'enyi entropy $H_{\alpha}^{\uparrow}$. The resulting bound can then be combined with a R\'enyi-based randomness extractor~\cite{Dupuis23}, or later converted into a bound on the smooth min-entropy~\cite{DFR}. Furthermore, by using the techniques of~\cite{VHB25,arqand2024} to handle testing, we obtain a single round quantity that is related to the conditional R\'enyi entropy for $\alpha > 1$. This can be compared to the original EAT proof, where one adds an additional step (via a continuity bound) to arrive at the conditional von Neumann entropy~\cite{DFR}. Our approach follows the proof of~\cite{arqand2024} closely, generalizing certain steps to accommodate our new chain rule. We provide a detailed comparison to~\cite{arqand2024} and~\cite{DFR} throughout the proof in \cref{app:REAT}.    

The most general form of our EAT is stated and discussed in \cref{thm:big_eat} of \cref{sec:big_EAT}. For simplicity, we present a special case of the theorem that is suitable for DI applications in the following section.

\subsection{Example with infrequent sampling channels}

The sequence of channels we are interested in take the following form.

\begin{definition}[Sequence of DI channels]
    Let $n$ be a positive integer, and for every $i \in \{1,...,n\}$ let $\mcM_{i} : R_{i-1} \to A_{i}C_{i}B_{i}R_{i}$ be a quantum channel, where the systems $C_{i}$ and $B_{i}$ are classical and every $C_{i}$ has the same dimension, and is isomorphic to a single register $C$. Then $\{\mcM_{i}\}_{i=1}^{n}$ is a sequence of DI channels if for any input state $\omega_{R_{0}E}$ and every $i \in \{1,...,n\}$, the state $\rho_{A^{i}C^{i}B^{i}E} = [\tr_{R_{i}} \circ \mcM_{i} \circ \cdots \circ \mcM_{1}](\omega_{R_{0}E})$ satisfies $\rho_{A^{i-1}C^{i-1}B^{i}E} = \rho_{B_{i}} \otimes \rho_{A^{i-1}C^{i-1}E}$. \label{def:DIchan}
\end{definition}
\noindent Our most general theorem (\cref{thm:big_eat}) is stated for channels of this form. In the DI context, $A_{i}$ contains the raw randomness output by the devices at round $i$, while the system $B_{i}$ contains the measurement settings at round $i$ that are sampled independently\footnote{Here, the channel $\mcM_{i}$ includes sampling the measurement settings which are therefore regarded as outputs of the channel. They can still however be understood as the inputs to a Bell test, as is commonly the case in DI protocols.}. The registers $C_{i}$ are the testing registers. Note that we treat $C_{i}$ as an independent output of $\mcM_{i}$, rather than imposing it to be a deterministic function of $A_{i}B_{i}$, as done in the original EAT~\cite{DFR}. We will then consider the joint entropy of $A^n C^n$ conditioned on the side information $B^nE$, where $E$ is an initial system held by Eve. Consider the following special case of \cref{def:DIchan}, refered to as infrequent sampling channels~\cite{EAT2,LLR&}. 

\begin{definition}[Infrequent sampling channels]
    A sequence of DI channels $\{\mcM_{i}\}_{i=1}^{n}$ (according to \cref{def:DIchan}) is infrequent sampling if the following conditions hold:
    \begin{enumerate}
        \item Each $C_{i}$ takes values in $\{0,1\}^{d} \cup \{\perp\}$ for a positive integer $d$.
        \item Each $B_{i}$ is of the form $T_{i}B_{i}$ where $T_{i}$ is a classical bit and each $B_{i}$ is isomorphic to a single register $B$ that takes values in a finite alphabet $\mcB$.
        \item Each $A_{i}$ is classical and isomorphic to a single register $A$ that takes values in a finite alphabet $\mcA$.
        \item There exists a $\gamma \in [0,1]$, distributions $p_{B}^{\text{gen}}, \, p_{B}^{\text{test}} \in \Delta(B)$, a deterministic function $f : \mcA \times \mcB \to \{0,1\}^{d}$ and for every $i \in \{1,...,n\}$, a collection of CP maps $\mcM^{a|b}_{i}:R_{i-1} \to R_{i}$ satisfying $\sum_{a}\tr[\mcM_{i}^{a|b}(\omega)] = 1$ for all input states $\omega$ and all $b\in \mcB$, such that
        \begin{equation}
        \mcM_{i} = (1-\gamma) \ketbra{0}{0}_{T_{i}} \otimes \mcM^{\text{gen}}_{i}  
        + \gamma \, \ketbra{1}{1}_{T_{i}} \otimes \mcM^{\text{test}}_{i}, \label{eq:sample_ch_main}
        \end{equation}
        where
        \begin{equation*}
            \begin{aligned}
                \mcM^{\text{gen}}_{i} = \sum_{a\in \mcA,b\in \mcB} p_{B}^{\text{gen}}(b)\ketbra{a}{a}_{A_{i}} \otimes \ketbra{\perp}{\perp}_{C_{i}} \otimes \ketbra{b}{b}_{B_{i}} \otimes \mcM_{i}^{a|b} \ \ \text{and} \\
                \mcM^{\text{test}}_{i} = \sum_{a\in \mcA,b\in \mcB} p_{B}^{\text{test}}(b)\ketbra{a}{a}_{A_{i}} \otimes \ketbra{f(a,b)}{f(a,b)}_{C_{i}} \otimes \ketbra{b}{b}_{B_{i}} \otimes \mcM_{i}^{a|b}.
            \end{aligned}
        \end{equation*}
    \end{enumerate} 
    \label{def:sample}
\end{definition}
\noindent Infrequent sampling channels can be used to describe DI protocols based on spot-checking~\cite{MS2,Arnon-Friedman2018,LLR&,Bhavsar2023}. For example, in DI randomness expansion~\cite{ColbeckThesis,CK2,PAMBMMOHLMM,MS1,MS2}, we wish to minimize the amount of randomness consumed by sampling each variable $B_{i}$. However, it is necessary that some randomness is consumed to test if the devices are behaving non-classically, i.e., to perform a Bell test~\cite{Brunner_review}. A spot-checking protocol will randomly label a small subset of rounds as ``test rounds'', in which $B_{i}$ is sampled randomly according to the desired Bell test being performed. Specifically, a round is labeled ``test'' with probability $\gamma$ (indicated by the flag $T_{i} = 1$), and a deterministic function of $A_{i}$ and $B_{i}$ is stored in the register $C_{i}$ that might, for example, indicate whether a nonlocal game was won or lost. The remaining rounds are labeled ``generate rounds'', where typically a fixed value of $B_{i}$ is chosen for the purpose of generating raw randomness. This occurs with probability $1-\gamma$ for each round (indicated by the flag $T_{i} = 0$), and the value of $C_{i}$ is set to a fixed symbol $\perp$. The probability $\gamma$ should be chosen carefully such that there are a sufficient number of test rounds to provide statistical confidence in the Bell test, but the amount of randomness consumed is much smaller than the final amount of randomness generated, i.e., there is randomness expansion. For a detailed discussion of spot-checking DI protocols, we refer to reader to~\cite[Sections IV.D and XIV.B]{pirandola2020advances} and~\cite[Section 2]{Bhavsar2023}. Note that in \cref{def:sample}, we allow for arbitrary distributions over $B_{i}$ on test and generate rounds, given by $p_{B}^{\text{test}}$ and $p_{B}^{\text{gen}}$, respectively (i.e., the value of $B_{i}$ in generate rounds need not be deterministic, encompassing a broader class of protocols~\cite{Schwonnek2021,Bhavsar2023}). 

We now present a corollary of \cref{thm:big_eat}.

\begin{corollary}[EAT with infrequent sampling]
    Let $\omega_{R_{0}E} \in \mcD(R_{0}E)$ be a state, $\{\mcM_{i}\}_{i=1}^{n}$ be a sequence of DI channels that are infrequent sampling according to \cref{def:sample} and $\rho_{A^{n}C^{n}B^{n}E} = [\tr_{R_{n}} \circ \mcM_{n} \circ \cdots \mcM_{1}](\omega_{R_{0}E})$. Let $\Omega \subset \mcC^{n}$ be an event on $C^{n}$, $\rho_{A^{n}C^{n}B^{n}E}^{|\Omega}$ be the output state conditioned on $\Omega$ and $\Delta_{\Omega} \subset \Delta(C)$ be any compact convex subset such that $\mathsf{freq}(c^{n}) \in \Delta_{\Omega}$ for all $c^{n} \in \Omega$. Then for any $\alpha \in (1,\infty)$, 
    \begin{equation*}
        H_{\alpha}^{\uparrow}(A^{n}C^{n}|B^{n}E)_{\rho^{|\Omega}} \geq n \, h_{\alpha}(\Omega) - \frac{\alpha}{\alpha - 1} \log \Big(\frac{1}{p_{\Omega}}\Big)
    \end{equation*}
    where $p_{\Omega}$ is the probability of observing $\Omega$ in $\rho_{C^{n}}$ and
    \begin{equation*}
        h_{\alpha}(\Omega) \geq \inf_{(\mcM,R,\tilde{E})} \inf_{\omega \in \mcD(R\tilde{E})}\inf_{v_{C} \in \Delta_{\Omega}} \Bigg(\frac{1}{\alpha - 1}D\big(v_{C}\|p_{C}\big)
        + v_{C}(\perp)\,H_{\alpha}(A|B^{\uparrow}\tilde{E}^{\downarrow})_{\mcM^{\mathrm{gen}}(\omega)}\Bigg),
    \end{equation*}
    where the outer infimum is over all quantum systems $R$ and $\tilde{E}$ and all quantum channels $\mcM:R\to ACB$ of the form \eqref{eq:sample_ch}, $[\tr_{AB\tilde{E}} \circ \mcM](\omega) = \sum_{c}p_{C}(c) \ketbra{c}{c}$ and $p_{C} = [p_{C}(c)]_{c\in \mcC}$. \label{cor:REAT}
\end{corollary}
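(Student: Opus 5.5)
The plan follows the $f$-weighted R\'enyi entropy framework of \cite{VHB25,arqand2024}, with the chain rule \cref{cor:new_chain_1} replacing the Dupuis--Fawzi--Renner step. The proof has three stages: conditioning on $\Omega$, a chain rule for an $f$-weighted entropy, and a single-round bound.

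\textbf{Conditioning on $\Omega$ and the $f$-weighted entropy.} First I remove the conditioning on $\Omega$. For any real-valued tradeoff function $f:\mcC\to\mbR$, I define the $f$-weighted entropy $H^{\uparrow,f}_{\alpha}(A^nC^n|B^nE)_{\rho}$. It is obtained by inserting the operator $2^{-\sum_i f(C_i)}$ (acting diagonally on the classical $C^n$) into the optimized R\'enyi divergence, so that, on classical $C$, it behaves like $H^{\uparrow}_{\alpha}$ plus the weight $\sum_i f(c_i)$. The standard conditioning lemma of \cite{arqand2024} gives
\begin{equation*}
H^{\uparrow}_{\alpha}(A^nC^n|B^nE)_{\rho^{|\Omega}} \geq H^{\uparrow,f}_{\alpha}(A^nC^n|B^nE)_{\rho} + \min_{c^n\in\Omega}\sum_i f(c_i) - \frac{\alpha}{\alpha-1}\log\frac{1}{p_\Omega}.
\end{equation*}
Since $\mathsf{freq}(c^n)\in\Delta_\Omega$ for all $c^n\in\Omega$, the middle term is at least $n\inf_{v\in\Delta_\Omega}\sum_c v(c)f(c)$.

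\textbf{Chain rule for the weighted entropy.} Next I apply \cref{cor:new_chain_1} $n$ times, with $A_1\mapsto A^{i-1}C^{i-1}$, $B_1\mapsto B^{i-1}E$ and $B_2\mapsto B_i$. The hypothesis $\rho_{B_iR'}=\rho_{B_i}\otimes\omega_{R'}$ is exactly \cref{def:DIchan}. Because $C_i$ is classical, the weight factorizes as a CP map $2^{-f(C_i)}$ applied after $\mcM_i$. I therefore need the chain rule to hold for CP maps (weighted channels) with the weighted entropy on each side.

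This extension is the main obstacle. The proof of \cref{cor:new_chain_1} uses trace preservation of $\mcM$ in the non-signalling condition, so I would try to verify that the argument only needs the marginal condition on $B_2R'$ of the unweighted channel, with the weight absorbed into the $A_2C_2$ conditional operator. This mirrors how \cite{VHB25,arqand2024} handle $f$-weighting. The outcome I expect is
\begin{equation*}
H^{\uparrow,f}_{\alpha}(A^nC^n|B^nE)_\rho\geq \sum_i\inf_{\omega}H^{f}_{\alpha}(A_iC_i|B_i^{\uparrow}\tilde E^{\downarrow})_{\mcM_i(\omega)},
\end{equation*}
where the right-hand side uses the partially optimized entropy of \cref{def:partial_ent} with the weight inserted.

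\textbf{Single-round bound.} For each round, I decompose $T_iB_i$. On test rounds, $C_i$ is a deterministic function of $A_iB_i$. On generate rounds, $C_i=\perp$. The variational formula \cref{lem:var_ent} then splits the single-round weighted quantity, and a Donsker--Varadhan/Gibbs-type duality over the distribution of $C$ gives, for the optimal choice of $f$, the expression
\begin{equation*}
\inf_{v_C\in\Delta_\Omega}\Big[\tfrac{1}{\alpha-1}D(v_C\|p_C)+v_C(\perp)H_\alpha(A|B^{\uparrow}\tilde E^{\downarrow})_{\mcM^{\rm gen}(\omega)}\Big].
\end{equation*}
In this step, test-round entropy of $A$ is dropped, which is valid by \cref{lem:ent_prop}(iii). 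The infimum over channels and inputs is exchanged with the supremum over $f$ by a minimax argument, using compactness and convexity of $\Delta_\Omega$. Combining the three stages yields $h_\alpha(\Omega)$ as stated.
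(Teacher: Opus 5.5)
Your outline has the same architecture as the paper's proof: $f$-weighted R\'enyi entropy, $n$ applications of \cref{cor:new_chain_1}, a minimax over $f$, an explicit solution of the $f$-optimization, then specialization to infrequent sampling. However, the step you flag in Stage 2 is left unproved, and the route you suggest for it is harder than necessary. Reweighting $\mcM_i$ by the $C_i$-dependent factor $2^{-f(C_i)}$ breaks exactly what the proof of \cref{cor:new_chain_1} uses. \cref{lem:newchain} needs $\rho_{A_1B_1B_2}=\rho_{A_1B_1}\otimes\rho_{B_2}$, and \cref{lem:chan_1} needs $\tr_{A_2B_2}[\mcM(\omega_{A_1B_1R})]=\omega_{A_1B_1}$. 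Both fail for the reweighted map, because $C_2$ can be correlated with $A_1B_1$ through the memory and with $B_2$ directly. The missing idea is to stay with CPTP maps and encode $f$ in an auxiliary register. Fix $M$ with $M-f_c>0$, take classical states $\tau(c)_D$ with $H_\alpha(D)_{\tau(c)}=M-f_c$ (\cref{lem:prep}), and apply the read-and-prepare channel $\mcR_i:C_i\to C_iD_i$ of \cref{def:randp}. Since $\tr_{D_i}\circ\mcR_i=\mcI$, the maps $\mcR_i\circ\mcM_i$ are still CPTP and satisfy the independence condition of \cref{def:DIchan}. Hence \cref{cor:new_chain_1} applies verbatim to $H_\alpha^{\uparrow}(A^nC^nD^n|B^nE)$. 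Additivity for the product states $\tau(c^n)$ gives $H_\alpha(D^n)_{\tau(c^n)}=nM-n\,\mathsf{freq}(c^n)\cdot f$, which together with \cite[Lemma B.5]{DFR} yields your Stage 1 inequality. On each round, \cref{lem:var_ent} and the cq decomposition of $D_\alpha$ give $H_\alpha(A_iC_iD_i|B_i^{\uparrow}\tilde{E}^{\downarrow})_{[\mcR_i\circ\mcM_i](\omega)}=\sup_{q_{B_i}}H^f_\alpha(A_iC_i|B_i\tilde{E})_{\mcM_i(\omega)|\sigma}+M$. This is precisely the weighted partially optimized quantity you were hoping to obtain.

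Stage 3 is also missing the ingredients that actually produce the stated $h_\alpha(\Omega)$. First, the minimax cannot simply be run over channels: the same $f$ must serve all $n$ rounds, so you need one channel dominating every $\mcM_i$. The paper takes $R=F\tilde{R}$ and $\mcM(\omega)=\sum_i\mcM_i(\omega^i_{\tilde{R}})$, with $\omega^i$ the $i$-th diagonal block in $F$. It checks that this is again of the form \eqref{eq:sample_ch} and applies Sion's theorem only over $(\omega_R,v_C)\in\mcD(R)\times\Delta_\Omega$. This requires concavity in $f$ and convexity in $\omega_R$ of $\sup_{q_B}H^f_\alpha(AC|B\tilde{E})_{\mcM(\omega)|\sigma}$ (\cref{lem:concave,lem:convex}), not only convexity of $\Delta_\Omega$; the infimum over all infrequent sampling channels is taken only at the end, as a relaxation. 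Second, solving the $f$-optimization (stationary point $f^*_c=\frac{1}{\alpha-1}\log\frac{v_C(c)}{p_C(c)^{\alpha}}-D_\alpha(\rho^{|c}\|\id_A\otimes\sigma)$, after smoothing the supports) leaves $\frac{1}{\alpha-1}\sum_c v_C(c)\log\frac{v_C(c)}{p_C(c)^{\alpha}}-\sum_c v_C(c)\,D_\alpha(\rho^{|c}_{ABT\tilde{E}}\|\id_A\otimes\sigma_{BT\tilde{E}})$, with $p_C^{\alpha}$ rather than $p_C$. Property (iii) of \cref{lem:ent_prop} does not bridge this: removing $A$ on test rounds still leaves the divergence of $\rho^{|c}_{BT\tilde{E}}$ against $\sigma_{BT\tilde{E}}$ uncontrolled. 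What is needed is the choice $q_{BT}(b,0)=(1-\gamma)q_B(b)$, $q_{BT}(b,1)=\gamma\,p_B^{\mathrm{test}}(b)$, together with $\rho^{|bt}_{\tilde{E}}=\rho_{\tilde{E}}$ (since $\sum_a\mcM^{a|b}$ is trace preserving). With these, the supremum over $q_B$ of the $c=\perp$ term equals $\log(1-\gamma)+H_\alpha(A|B^{\uparrow}\tilde{E}^{\downarrow})_{\mcM^{\mathrm{gen}}(\omega)}$ by \cref{lem:var_ent}. For $c\neq\perp$, the operator inequality $p_C(c)\,\rho^{|c}_{ABT\tilde{E}}\le\id_A\otimes\sigma_{BT\tilde{E}}$ gives $-D_\alpha\ge-D_\infty\ge\log p_C(c)$ (the max-divergence bound of \cref{claim:div_bnd}). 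These terms $v_C(c)\log p_C(c)$, with $p_C(\perp)=1-\gamma$, are exactly what convert $\frac{1}{\alpha-1}\sum_c v_C(c)\log\frac{v_C(c)}{p_C(c)^{\alpha}}$ into $\frac{1}{\alpha-1}D(v_C\|p_C)$. A generic duality argument combined with \cref{lem:ent_prop}(iii) does not supply them.
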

\noindent See \cref{app:infreq_sample} for proof. We refer the reader to \cref{app:notation} for the definition of a frequency distribution $\mathsf{freq}(c^n)$ and the conditional state $\rho^{|\Omega}$. The function $D(v_{C}\|p_{C}) = \sum_{c} v_{C}(c) \log\big( v_{C}(c) / p_{C}(c) \big)$ if $\text{supp}(p_{C}) \subseteq \text{supp}(v_{C})$ and $+\infty$ otherwise is the Kullback–Leibler (KL) divergence~\cite{KL51}.

To gain some intuition about the single round quantity in \cref{cor:REAT}, it is instructive to make the following observation, as described in~\cite{VHB25,arqand2024}. As $n \to \infty$, $\alpha \to 1$ and $\gamma \to 0$, the pre-factor $1/(\alpha - 1)$ diverges. Thus, to minimize the objective function, the KL divergence $D\big(v_{C}\|p_{C}\big)$ must go to zero, which in turn forces $p_{C} = v_{C} \in \Delta_{\Omega}$, i.e., the distribution over $C$ of the channel output $\mcM(\omega)$ are forced to lie in the set of distributions that do not cause the protocol to abort with high probability. We also have $v_{C}(\perp) \to 1$ as $\gamma \to 0$, and $H_{\alpha}$ tends to the von Neumann entropy. Therefore, we recover an optimization problem of the form
\begin{equation*}
    \begin{aligned}
        \inf_{\mcM,R,\tilde{E}} \inf_{\omega \in \mcD(R\tilde{E})} \ & \ H(A|B\tilde{E})_{\mcM^{\mathrm{gen}}(\omega)} \\
        \text{s.t.} \ & \ p_{C} \in \Delta_{\Omega},
    \end{aligned}
\end{equation*}
which is exactly equal to the asymptotic rate of the protocol~\cite{DW,TCR}. We also refer the reader to~\cite[Section 5.2]{arqand2024} for a more detailed discussion of this single round quantity.

\cref{cor:REAT} should be directly compared to~\cite[Lemma 5.1]{arqand2024}, and the subsequent lower bound for $h_{\alpha}$ tailored to infrequent sampling channels~\cite[Eq. (93)]{arqand2024} in which the entropy contributions from the test rounds are dropped\footnote{\cref{thm:big_eat} can also be used to obtain analogous lower bounds on $h_{\alpha}$ to~\cite[Lemmas 5.1 and 5.2]{arqand2024}. We choose to present \cref{cor:REAT} due to its relative simplicity, and straightforward comparison with~\cite[Eq. (93)]{arqand2024}. Furthermore,~\cite[Eq. (93)]{arqand2024} has outperformed alternative bounds presented in~\cite{arqand2024} in the finite security analysis of certain DI protocols~\cite{hahn2024,hahn2025}}. The objective function in~\cite[Eq. (93)]{arqand2024} is of the form
\begin{equation*}
    \frac{1}{\alpha - 1}D\big(v_{C}\|p_{C}\big)
        + v_{C}(\perp)\,H_{\alpha}^{\downarrow}(A|B\tilde{E})_{\mcM^{\mathrm{gen}}(\omega)}.
\end{equation*}
Here, we can see the tightening offered by \cref{cor:REAT} as a result of our new chain rule (\cref{cor:new_chain_1}):
\begin{equation}
    H_{\alpha}(A|B^{\uparrow}\tilde{E}^{\downarrow})_{\mcM^{\mathrm{gen}}(\omega)} \geq H_{\alpha}^{\downarrow}(A|B\tilde{E})_{\mcM^{\mathrm{gen}}(\omega)}. \label{eq:ent_ineq}
\end{equation}
The significance of this difference for applications is, however, less clear. Firstly, note that if $p_{B}^{\text{gen}}$ is deterministic, i.e., $p_{B}^{\text{gen}}(b) = \delta_{b,b^*}$ where $b^* \in \mcB$ is some fixed value, we have equality in \cref{eq:ent_ineq}. This can be seen by expanding the R\'enyi entropies:
\begin{multline*}
    H_{\alpha}^{\downarrow}(A|BE)_{\rho} = \frac{1}{1-\alpha} \log \Bigg( \sum_{b}p_{B}(b)2^{(1-\alpha)H_{\alpha}(A|E)_{\rho^{|b}}}\Bigg) \\ \text{and} \ \ H_{\alpha}(A|B^{\uparrow}E^{\downarrow})_{\rho} = \frac{\alpha}{1-\alpha} \log \Bigg( \sum_{b}p_{B}(b)2^{\frac{1-\alpha}{\alpha}H_{\alpha}(A|E)_{\rho^{|b}}}\Bigg).
\end{multline*}
Typical spot-checking protocols, such those outlined in~\cite[Sections IV.D and XIV.B]{pirandola2020advances}, have this property. We thus hope to find improvements when $p_{B}^{\text{gen}}$ is close to uniform, that is, the measurement settings in generation rounds are chosen randomly. Examples of such protocols include DIQKD with a random key basis~\cite{Schwonnek2021} and DI randomness expansion with recycled input randomness~\cite{Bhavsar2023,ramanathan2025}. However, regardless of the distribution $p_{B}^{\text{gen}}$, equality holds in \cref{eq:ent_ineq} when the conditional entropies $H_{\alpha}(A|E)_{\rho^{|b}}$ are the same for different $B=b$. We expect this to be the case for DIQKD with random key basis for the following reasons. Intuitively, the advantage of a random key basis in QKD is that Eve has to guess between two randomly chosen incompatible measurements for key generation~\cite{Schwonnek2021}. Incompatibility ensures that she cannot perform an optimal attack on both bases simultaneously, resulting in a higher key rate than single basis protocols. If one basis choice (corresponding to a particular value of $B=b$) has a significantly larger entropy than another, better rates could be found from a spot-checking protocol using that basis in generation rounds. Furthermore, for symmetric protocols, it may be optimal for Eve to attack both bases evenly, rather than prefer one over the other, resulting in equal values of $H_{\alpha}(A|E)_{\rho^{|b}}$. 

Similar arguments could apply to the randomness expansion protocol with recycled input randomness considered by Bhavsar, Ragy and Colbeck~\cite{Bhavsar2023}, which is based on the Clauser-Horne-Shimony-Holt (CHSH) inequality~\cite{CHSH}. For that protocol, preliminary investigations suggest that for two-qubit strategies, the minimum values of $H_{\alpha}^{\downarrow}(A|BE)_{\rho}$ and $H_{\alpha}(A|B^{\uparrow}E^{\downarrow})_{\rho}$ compatible with a given CHSH violation differ in at most the third of fourth decimal place. This is also a consequence of each $H_{\alpha}(A|E)_{\rho^{|b}}$ being approximately equal under the best attack found by our search.

Based on this, we conclude that any improvement would be found in scenarios beyond the CHSH inequality. Good candidates are Bell inequalities whose maximum quantum violation is uniquely achieved by, i.e., self-tests~\cite{mayers2004self,SupicSelfTest}, a strategy that satisfies $H_{\alpha}(A|E)_{\rho^{|b}} \neq H_{\alpha}(A|E)_{\rho^{|b'}}$ for some $b\neq b'$. Since the self-tested state must be pure, witnessing maximum violation implies $H_{\alpha}(A|B^{\uparrow}E^{\downarrow})_{\rho} = H_{\alpha}^{\uparrow}(A|B)_{\rho}$, and the difference in the values $H_{\alpha}(A|E)_{\rho^{|b}}$ could imply that this is noticeably larger than $H_{\alpha}^{\downarrow}(A|B)_{\rho}$. For example, maximally violating the correlator part of the $I_{3322}$ Bell inequality~\cite{Froissart1981,DC_2004}, where there are three inputs and two outputs per party, self-tests a two-qubit system with this property~\cite{K20}. Performing a heuristic minimization over two-qubit strategies that violate $I_{3322}$, we found an improvement of $H_{\alpha}(A|B^{\uparrow}E^{\downarrow})_{\rho}$ over $H_{\alpha}^{\downarrow}(A|BE)_{\rho}$ in the second decimal place when $\alpha = 2$. Whether further improvements can be found with Bell inequalities tailored to higher dimensional systems with more inputs, such as the $I_{4422}$ inequality~\cite{DC_2004,VBP10} whose randomness certification properties were studied in~\cite{ZL25,ramanathan2025}, remains to be seen. 

\section{Discussion} \label{sec:disc}
The finite-size security of many protocols in quantum cryptography can be achieved by estimating the worst case conditional entropy of the output state after many rounds. Chain rules reduce this task to a round-by-round analysis, where it is sufficient to quantify the worst case output entropy of the channel associated to each round. This reduction is tight in some cases. We presented such chain rules in the literature under a unified framework, highlighting their differences and the contexts to which they apply. We then identified a potential improvement to the chain rule of~\cite{DFR} that is relevant to device-independent protocols, and is the analogue of a known result in the device-dependent setting. We provided a counterexample to this, but went on to show that some improvement is still possible, proving a new chain rule that can be used to derive a tighter version of the entropy accumulation theorem in certain contexts.  

Our counterexample provides a contrast between chain rules in the device-dependent and device-independent setting, an observation also reported in~\cite{arqand2025}. The device-dependent setting is characterized by the absence of a memory register (labeled $R_{i}$ in \cref{sec:setup}). Here, the chain rules for $H_{\alpha}^{\uparrow}$ are tight, in the sense that they match the worst case i.i.d.~value~\cite{VHB25,arqand2025,Fawzi2026}. Taking $H_{\alpha}^{\uparrow}(A^{n}|B^nE)_{\rho}$ as the figure of merit for a security proof\footnote{Quantifying security in terms of $H_{\alpha}^{\uparrow}(A^{n}|B^nE)_{\rho}$ follows directly from~\cite{Dupuis23}. For the purposes of this discussion, we do not consider the tightness of this theorem.}, this implies that the optimal attack by Eve on the $n$-round protocol is an i.i.d.~attack, i.e., the optimal output state for Eve is of the form $\rho^{\otimes n}$ where $\rho$ corresponds to the optimal attack on a single round of the protocol\footnote{Or more generally, an independent state of the form $\bigotimes_{i}\rho^{i}$ where $\rho^i$ is optimal for the channel associated to round $i$~\cite{arqand2025,Fawzi2026}.}. In the DI setting, we showed that the presence of a memory implies the existence of a state $\rho$ and non-signaling channels $\mcM_{1}$ and $\mcM_{2}$ such that $H_{\alpha}^{\uparrow}(A_{1}A_{2}|B_{1}B_{2}E)_{\rho} < \inf_{\omega}H_{\alpha}^{\uparrow}(A_{1}|B_{1}\tilde{E})_{\mcM_{1}(\omega)} + \inf_{\omega} H_{\alpha}^{\uparrow}(A_{2}|B_{2}\tilde{E})_{\mcM_{2}(\omega)}$. This prompts the following question: does there exist a sequence of channels and an initial state such that $H_{\alpha}^{\uparrow}(A^n|B^nE)_{\rho}$ is significantly smaller than $\sum_{i} \inf_{\omega} H_{\alpha}^{\uparrow}(A_i|B_i\tilde{E})_{\mcM_{i}(\omega)}$? Moreover, can this attack be designed such that the protocol aborts with a small probability? An affirmative answer would imply that for a finite $n$ and $\alpha > 1$, memory based attacks are strictly stronger than i.i.d.~attacks. This is in a similar spirit to~\cite{sandfuchs2023}, who showed this to be the case for a variant of DIQKD that uses random post-selection. Note that for $\alpha = 1 + O(1/\sqrt{n})$, the lower bound proven in this and previous works in terms of $H_{\alpha}^{\downarrow}$~\cite{DFR,MetgerGEAT,arqand2024} matches the i.i.d.~rate as $n \to \infty$. This follows from the fact that all R\'enyi entropies converge the the von Neumann entropy, recovering the asymptotic rate $H(A|BE)$~\cite{DW,TCR}. The question we ask here is if there exists a counterexample at a finite $n$. If this is not the case, the finite size rates of DI protocols could still be tightened. However, our counterexample suggests that the existing proof techniques based on~\cite{DFR} might be limited.   

We also introduced a new version of the conditional R\'enyi entropy, denoted by $H_{\alpha}(A|B^{\uparrow}E^{\downarrow})_{\rho}$. This quantity lies in between the fully optimized $H_{\alpha}^{\uparrow}$ and un-optimized $H_{\alpha}^{\downarrow}$, and in \cref{lem:var_ent} we found a variational expression in terms of the R\'enyi divergence. Specifically, rather than optimizing over all marginals in the second argument (as one would to define $H_{\alpha}^{\uparrow}$), we optimize over a subset of marginals $\sigma_{BE}$ such that for every $b$, the state of $E$ conditioned on $B=b$ is equal to that of $\rho_{BE}$, but the classical distribution over $B$ can vary. Our definition is restricted to a classical system $B$, and it would be interesting to find a fully quantum definition that is consistent with the properties in \cref{lem:ent_prop} and recovers the classical quantum definition as a special case. Furthermore, the applications of this quantity, or indeed other definitions of the conditional R\'enyi entropy that arise from a constrained marginal, could be an interesting future direction. 

Finally, it would be interesting to understand if there exists a protocol that can benefit form the new version of the entropy accumulation theorem derived in this manuscript. As discussed in \cref{sec:REAT}, the advantage can be seen in the ``stronger averaging'' of the quantity $H_{\alpha}(A|B^{\uparrow}E^{\downarrow})_{\rho}$ versus $H_{\alpha}^{\downarrow}(A|BE)_{\rho}$. Thus, protocols that generate the side information $B$ randomly on all (or at least most) rounds are good candidates. Furthermore, since both quantities converge to the von Neumann entropy $H(A|BE)$ as $\alpha \to 1$, we expect a more visible improvement when $\alpha$ deviates further from 1. We also observed in \cref{sec:REAT} that asymmetry between the values of $H_{\alpha}(A|E)_{\rho^{|b}}$ for different $B=b$ is necessary to see an improvement. In addition, applying numerical tools such as those in~\cite{hahn2024} to compute DI lower bounds on the partially optimized entropy is a related direction to consider.     

\section*{Acknowledgments}
LW thanks Rutvij Bhavsar and Shashank Kumar Ranu for fruitful discussions. We are grateful for funding support by the European
Union Horizon Europe research and innovation program
under the project “Quantum Secure Networks Partnership” (QSNP, Grant Agreement No. 101114043), by ChistEra-2023/05/Y/ST2/00005 under the project Modern Device Independent Cryptography (MoDIC) and by a government grant managed by the Agence Nationale de la Recherche under the Plan France 2030 with the reference ANR-22-PETQ-0009. OF acknowledges funding by the European Research Council (ERC Grant AlgoQIP, Agreement No. 851716). PB is supported by the Agence Nationale de la Recherche (ANR) through the JCJC programme under grant number ANR-25-CE47-3449.


\newcommand{\etalchar}[1]{$^{#1}$}

\appendix

\section{Additional notation} \label{app:notation}
Recall that for a classical system $C$ with a finite alphabet $\mcC$, the set of probability distributions on $C$ is denoted by $\Delta(C)$. The $n$-fold cartesian product of $\mcC$ is denoted $\mcC^{n}$, and we define the frequency distribution $\mathsf{freq}(c^{n}) \in \Delta(C)$ induced by a string $c^{n} \in \mcC^{n}$ by
\begin{equation*}
    \mathsf{freq}(c^{n})(c) := \frac{\big | \big\{ k \, : \, c_{k} = c \big\} \big| }{n}, \ \ \ \forall c \in \mcC.
\end{equation*}
Vectors in $\mbR^{m}$ for a positive integer $m$ are denoted $f = [f_{k}]_{k=1}^{m}\in \mbR^{m}$. 

We also formalize some notation for cq-states. A state $\rho_{CQ}$ is a cq-state, where $C$ is a classical register that takes values from a finite alphabet $\mcC$, if
\begin{equation*}
    \rho_{CQ} = \sum_{c \in \mcC} p_{C}(c) \, \ketbra{c}{c}_{C} \otimes \rho_{Q}^{|c},
\end{equation*}
where $p_{C} \in \Delta(C)$ and $\{\rho_{Q}^{|c}\}_{c \in \mcC} \subseteq \mcD(Q)$. For any event $\Gamma \subset \mcC$ on $C$, $\rho_{CQ}^{|\Gamma}$ denotes the normalized state conditioned on $\Gamma$, 
\begin{equation*}
    \rho_{CQ}^{|\Gamma} = \frac{1}{p_{\Gamma}} \sum_{c \in \Gamma} p_{C}(c) \, \ketbra{c}{c}_{C} \otimes \rho_{Q}^{|c},
\end{equation*}
where $p_{\Gamma} = \sum_{c \in \Gamma} p_{C}(c)$. We also denote the sub-normalized state by $\rho^{\Gamma}_{CQ} = p_{\Gamma} \, \rho_{CQ}^{|\Gamma}$. Recall the conditional operator of a bipartite state $\rho_{AB} \in \mcD(AB)$ is given by $\rho_{A|B} = \rho_{B}^{-\frac{1}{2}} \rho_{AB} \rho_{B}^{-\frac{1}{2}}$, and for a state $\sigma_{B} \in \mcD(B)$, $\Pi(\sigma_{B})$ denotes the projector onto its support. 

Let $\rho_{CQ}$ be a cq-state, and $\sigma_{CQ} = \sum_{c} q_{C}(c) \, \ketbra{c}{c}_{C} \otimes \sigma_{Q}^{|c}$ be a positive operator such that $\text{supp}(\rho_{CQ}) \subseteq \text{supp}(\sigma_{CQ})$. Then we recall that the sandwiched R\'enyi divergence admits the following decomposition (see, e.g.,~\cite[Eq. (5.32)]{Tomamichel_2016})
\begin{equation}
    D_{\alpha}(\rho_{CQ} \| \sigma_{CQ}) = \frac{1}{\alpha-1} \log \Bigg( \sum_{c} p_{C}(c)^{\alpha} q_{C}(c)^{1-\alpha} 2^{(\alpha-1) D_{\alpha}(\rho_{Q}^{|c} \| \sigma_{Q}^{|c})}\Bigg). \label{eq:cq_div}
\end{equation}
Furthermore, for a state $\rho_{CAQ} = \sum_{c} p_{C}(c)\ketbra{c}{c} \otimes \rho_{AQ}^{|c}$ classical on $C$, we have~\cite[Proposition 5.4]{Tomamichel_2016}
\begin{equation}
    \begin{aligned}
        H_{\alpha}^{\downarrow}(A|CQ)_{\rho} &= \frac{1}{1-\alpha} \log \Bigg( \sum_{c}p_{C}(c) 2^{(1-\alpha)H_{\alpha}^{\downarrow}(A|Q)_{\rho^{|c}}} \Bigg)\ \ \ \text{and} \\
        H_{\alpha}^{\uparrow}(A|CQ)_{\rho} &= \frac{\alpha}{1-\alpha} \log \Bigg( \sum_{c}p_{C}(c) 2^{\frac{1-\alpha}{\alpha}H_{\alpha}^{\uparrow}(A|Q)_{\rho^{|c}}}\Bigg).
    \end{aligned} \label{eq:c_ent}
\end{equation}

\section{Proofs for \cref{sec:new_chain}: chain rules} \label{app:new_chain_proofs}

\subsection{Solving the minimization in the counterexample} \label{app:counter}
In this subsection, we wish to evaluate the third term in the hypothesized chain rule \eqref{eq:EAT_wrong} for the classical state and channels described in~\cref{sec:counter}. Specifically, we need to evaluate
\begin{equation}
    \inf_{\omega' \in \mcD(R_{1}\tilde{E})} H_{\alpha}^{\uparrow}(A_{2}|B_{2}\tilde{E})_{\mcM_{2}(\omega')}, \label{eq:app_opt}
\end{equation}
where the infimum is taken over all cq-states $\omega_{R_{1}\tilde{E}}' = \sum_{r_{1}} q(r_{1})\ketbra{r_{1}}{r_{1}}_{R_{1}} \otimes \rho_{\tilde{E}}^{|r_{1}}$ with $\rho_{\tilde{E}}^{|r_{1}} \in \mcD(\tilde{E})$. For ease of reading, recall the channel $\mcM_2 : R_1 \to A_2 B_2$ is a fully classical channel that first generates $B_2$ as a uniformly random bit and then generates $A_2$ according to the conditional distribution
\begin{equation}
    p_{A_2|R_1B_2}(a_2|r_1,b_2) = \begin{cases}
        1/2 & \quad \text{if } r_1 \oplus b_2 = 0 \\
        1 & \quad \text{if } r_1 \oplus b_2 = 0  \,\wedge\, a_2=0\\
        0 & \quad \text{if } r_1 \oplus b_2 = 0  \,\wedge\, a_2=1\,.
    \end{cases}
\end{equation}We first show it is sufficient to consider cq-states in which $\{\rho_{\tilde{E}}^{|r_{1}}\}_{r_{1}}$ is a set of orthogonal pure states.

\begin{claim}
    The optimization \eqref{eq:app_opt} is achieved by a cq-state of the form $\omega_{R_{1}\tilde{E}} = \sum_{r_{1}} q(r_{1}) \ketbra{r_{1}}{r_{1}}_{R_{1}} \otimes \ketbra{\phi_{r_{1}}}{\phi_{r_{1}}}_{\tilde{E}}$, where each $\ket{\phi_{r_{1}}} \in \mcH_{\tilde{E}}$ for $i \in \{0,1\}$ is a pure state and $\braket{\phi_{0}}{\phi_{1}} = 0$. \label{claim:counter}
\end{claim}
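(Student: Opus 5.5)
The claim reduces to the statement that, for fixed weights $q(r_{1})$, the entropy $H_{\alpha}^{\uparrow}(A_{2}|B_{2}\tilde{E})_{\mcM_{2}(\omega')}$ can only decrease when the conditional states $\rho_{\tilde{E}}^{|r_{1}}$ are replaced by orthogonal pure states. The plan has two steps. First, data processing shows that the infimum over all cq-states is attained on the "perfectly distinguishable" family. Second, we check that this family is exactly the family described in the claim.

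\textbf{Step 1 (reduction by data processing).} Take an arbitrary feasible cq-state $\omega'_{R_{1}\tilde{E}} = \sum_{r_{1}} q(r_{1})\ketbra{r_{1}}{r_{1}}_{R_{1}} \otimes \rho_{\tilde{E}}^{|r_{1}}$. Compare it with the state $\hat\omega_{R_{1}\tilde{E}'} = \sum_{r_{1}} q(r_{1})\ketbra{r_{1}}{r_{1}}_{R_{1}}\otimes\ketbra{r_{1}}{r_{1}}_{\tilde{E}'}$, where $\tilde{E}'$ holds a classical copy of $R_{1}$. Define the channel $\mcP:\tilde{E}'\to\tilde{E}$ by $\ketbra{r}{r'}\mapsto \delta_{r,r'}\rho_{\tilde{E}}^{|r}$ (measure, then prepare). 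Then $\mcP(\hat\omega)=\omega'$. Since $\mcM_{2}$ acts only on $R_{1}$, it commutes with $\mcP$, so $\mcM_{2}(\omega') = \mcP(\mcM_{2}(\hat\omega))$. Data processing for $H_{\alpha}^{\uparrow}$ under channels on the conditioning system (valid for sandwiched R\'enyi divergences with $\alpha>1$) then gives
\begin{equation*}
H_{\alpha}^{\uparrow}(A_{2}|B_{2}\tilde{E})_{\mcM_{2}(\omega')} \geq H_{\alpha}^{\uparrow}(A_{2}|B_{2}\tilde{E}')_{\mcM_{2}(\hat\omega)}.
\end{equation*}

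\textbf{Step 2 (matching the form in the claim).} We need $\tilde{E}$ to have dimension at least two, which holds because it matches $R_{1}$. The state $\hat\omega$ is then isometrically equivalent to $\sum_{r_{1}} q(r_{1})\ketbra{r_{1}}{r_{1}}\otimes\ketbra{\phi_{r_{1}}}{\phi_{r_{1}}}$ with $\ket{\phi_{0}},\ket{\phi_{1}}$ orthonormal in $\mcH_{\tilde{E}}$. Isometric invariance of $H_{\alpha}^{\uparrow}$ shows this state has the same entropy as $\hat\omega$, and it is itself feasible. The infimum therefore equals the infimum over this family, parametrised only by $q$.

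Since $\{0,1\}$ is finite, the remaining optimisation is over $q\in\Delta(R_{1})$, a compact set on which the (now classical) entropy is continuous. The infimum is therefore attained, which gives the "achieved by" part of the claim.

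\textbf{Main obstacle.} The delicate point is justifying the restriction to cq-states in the first place. If the statement is read as an optimisation over general $\omega'$, first note that $\mcM_{2}$ is classical on its input: it measures $R_{1}$ in the computational basis before acting. So $\mcM_{2}(\omega')=\mcM_{2}(\Delta_{R_{1}}(\omega'))$, where $\Delta_{R_{1}}$ is the pinching (dephasing) in that basis. Every input can thus be replaced by a cq-state without changing the output.
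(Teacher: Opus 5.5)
Your proof is correct, but it takes a more direct route than the paper's.

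The paper works in two stages. It first spectrally decomposes each conditional state $\rho_{\tilde{E}}^{|r_{1}}$ and writes $\omega'$ as a mixture over $k$ of cq-states with pure conditionals $\ket{\phi^{k}_{r_{1}}}$. It then invokes quasi-concavity of $H_{\alpha}^{\uparrow}$ to reduce to pure conditional states. Only after that does it apply data processing, through the channel $\mcK$ with Kraus operators $\ket{\phi_{0}}\bra{\psi_{0}}$, $\ket{\phi_{1}}\bra{\psi_{1}}$ and $\ket{\phi_{0}}\bra{\psi_{l}}$ for $l>1$, which maps orthogonal pure conditionals onto arbitrary pure ones.

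Your measure-and-prepare channel $\mcP$ does both jobs at once. Because it may prepare the mixed states $\rho_{\tilde{E}}^{|r}$ directly, the spectral decomposition and the quasi-concavity step become unnecessary. Your argument rests only on data processing and isometric invariance on the conditioning system. Note that the paper's $\mcK$ is itself a measure-and-prepare channel with pure outputs, which is exactly why it needs the preliminary reduction to pure conditionals.

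You also add two points the paper leaves implicit:
\begin{itemize}
\item a compactness and continuity argument showing that the remaining infimum over $q$ is actually attained;
\item the observation that $\mcM_{2}$ dephases $R_{1}$, so restricting to cq-inputs loses nothing.
\end{itemize}
The second point is harmless but not needed here, since the paper defines the optimization \eqref{eq:app_opt} as an infimum over cq-states from the outset.
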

\begin{proof}
    Let $\omega'_{R_{1}\tilde{E}}= \sum_{r_{1}} q(r_{1}) \ketbra{r_{1}}{r_{1}}_{R_{1}} \otimes \rho_{\tilde{E}}^{|r_{1}}$ be an arbitrary cq-state that is input to $\mcM_{2}$. Let us denote the spectral decomposition of each $\rho_{\tilde{E}}^{|r_{1}}$ for $r_{1}\in \{0,1\}$ by
    \begin{equation*}
        \rho_{\tilde{E}}^{|r_{1}} = \sum_{k} q(k|r_{1}) \ketbra{\phi^{k}_{r_{1}}}{\phi^{k}_{r_{1}}}_{\tilde{E}} 
    \end{equation*}
    where $q(k|r_{1}) \geq 0$, $\sum_{k}q(k|r_{1}) = 1$ and $\{\ket{\phi_{r_{1}}^{k}}\}_{k}$ forms an orthonormal basis for $\mcH_{\tilde{E}}$. By defining $q(k) = \sum_{r_{1}} q(r_{1})\, q(k|r_{1})$ and $q(r_{1}|k) = q(r_{1})q(k|r_{1})/q(k)$, we can write
    \begin{equation*}
        \omega'_{R_{1}\tilde{E}} = \sum_{k}q(k) \sum_{r_1} q(r_1|k) \ketbra{r_{1}}{r_{1}}_{R_{1}} \otimes \ketbra{\phi_{r_1}^k}{\phi_{r_1}^k}_{\tilde{E}}.
    \end{equation*}
    Then by the linearity of $\mcM_{2}$,
    \begin{equation*}
        \mcM_{2}(\omega') = \sum_{k}q(k) \rho_{A_{2}B_{2}\tilde{E}}^{|k},
    \end{equation*}
    where we defined the cq-state
    \begin{multline*}
        \rho_{A_{2}B_{2}\tilde{E}}^{|k} := \sum_{b_{2}} p_{B_{2}}(b_{2}) \ketbra{b_{2}}{b_{2}}_{B_{2}} \otimes \sum_{a_{2}} \ketbra{a_{2}}{a_{2}}_{A_{2}} \otimes \sum_{r_{1}}q(r_{1}|k) \, p_{A_{2}|B_{2}R_{1}}(a_{2}|b_{2},r_{1}) \ketbra{\phi^{k}_{r_{1}}}{\phi^{k}_{r_{1}}}_{\tilde{E}} \\= \mcM_{2}\Bigg(\sum_{r_{1}}q(r_{1}|k)\ketbra{r_{1}}{r_{1}}_{R_{1}} \otimes \ketbra{\phi_{r_{1}}^{k}}{\phi_{r_{1}}^{k}}_{\tilde{E}}\Bigg).
    \end{multline*}
    Using the quasi-concavity of the optimized sandwich R\'enyi entropy~\cite[Section 5.2.3]{Tomamichel_2016},
    \begin{equation*}
        H_{\alpha}^{\uparrow}(A_{2}|B_{2}\tilde{E})_{\mcM_{2}(\omega')} = H_{\alpha}^{\uparrow}(A_{2}|B_{2}\tilde{E})_{\sum_{k}q(k) \rho^{|k}} \geq \min_{k} H_{\alpha}^{\uparrow}(A_{2}|B_{2}\tilde{E})_{\rho^{|k}}.
    \end{equation*}
    Therefore, the infimum must be achieved by a cq-state where each $\rho_{\tilde{E}}^{|r_{1}} = \ketbra{\phi_{r_{1}}}{\phi_{r_{1}}}$ is a pure state. 

    We now show that choosing $\ket{\phi_{r_{1}}}$ such that they satisfy $\braket{\phi_{0}}{\phi_{1}}$ is optimal. Let 
    $\{\ket{\psi_{l}}\}_{l}$ be any orthonormal basis for $\mcH_{\tilde{E}}$ and define the following operators:
    \begin{equation*}
        K_{0} := \ket{\phi_{0}} \otimes \bra{\psi_{0}}, \ K_{1} := \ket{\phi_{1}} \otimes \bra{\psi_{1}} , \ \text{and} \ K_{l} := \ket{\phi_{0}} \otimes \bra{\psi_{l}}, \ \forall l > 1.   
    \end{equation*}
    Note that $\sum_{l}K_{l}^{\dagger}K_{l} = \sum_{l} \ketbra{\psi_{l}}{\psi_{l}} = \id_{\tilde{E}}$, hence the set $\{K_{l}\}_{l}$ is a set of Kraus operators. Let $\mcK$ denote the corresponding quantum channel, and note that
    \begin{equation*}
        \omega_{R_{1}\tilde{E}} = \sum_{r_{1}}q(r_{1})\ketbra{r_{1}}{r_{1}}_{R_{1}} \otimes \ketbra{\phi_{r_{1}}}{\phi_{r_{1}}}_{\tilde{E}} = \mcK\Bigg(\sum_{r_{1}}q(r_{1})\ketbra{r_{1}}{r_{1}}_{R_{1}} \otimes \ketbra{\psi_{r_{1}}}{\psi_{r_{1}}}_{\tilde{E}}\Bigg) =: \mcK(\bar{\omega}_{R_{1}\tilde{E}}).
    \end{equation*}
    Furthermore, since $\mcK$ acts on $\tilde{E}$ and $\mcM_{2}$ acts on $R_{1}$ they commute, i.e., $\mcK \circ \mcM_{2} = \mcM_{2} \circ \mcK$. Using these facts,
    \begin{equation*}
        H_{\alpha}^{\uparrow}(A_{2}|B_{2}\tilde{E})_{\mcM_{2}(\omega)} = H_{\alpha}^{\uparrow}(A_{2}|B_{2}\tilde{E})_{[\mcK \circ \mcM_{2}](\bar{\omega})} \geq H_{\alpha}^{\uparrow}(A_{2}|B_{2}\tilde{E})_{\mcM_{2}(\bar{\omega})}, 
    \end{equation*}
    where the inequality follows from the data processing inequality on the system $\tilde{E}$. Therefore, for every cq-state in $\omega_{R_{1}\tilde{E}}$, there exists another cq-state $\bar{\omega}_{R_{1}\tilde{E}}$ with $\rho_{\tilde{E}}^{|0}$ and $\rho_{\tilde{E}}^{|1}$ being orthogonal that has either the same or a smaller entropy, proving the claim. 
\end{proof}

A consequence of \cref{claim:counter} is that, after noting the entropy is invariant under the choice of basis for $\tilde{E}$, we can restrict the infimum in \eqref{eq:app_opt} to states of the form
\begin{equation*}
    \omega'_{R_{1}\tilde{E}} = \sum_{r_{1}} q(r_{1}) \ketbra{r_{1}}{r_{1}} \otimes \ketbra{r_{1}}{r_{1}}_{\tilde{E}}
\end{equation*}
that are fully classical, and described by the probability distribution $p_{R_{1}\tilde{E}}(r_{1},e) = p_{\tilde{E}}(e) \, p_{R_{1}|\tilde{E}}(r_{1}|e)$ where $p_{\tilde{E}}(e) =q(e)$ and $p_{R_{1}|\tilde{E}}(r_{1}|e) = \delta_{r_{1},e}$. With this in mind, the output distribution of the channel $\mcM_{2}$ is given by
\begin{equation*}
    p_{A_{2}B_{2}\tilde{E}}(a_{2},b_{2},e) = \sum_{r_{1}} p_{R_{1}\tilde{E}}(r_{1},e)\,p_{A_{2}B_{2}|R_{1}}(a_{2},b_{2}|r_{1}) = p_{\tilde{E}}(e) \, p_{B_{2}}(b_{2})  \, p_{A_{2}|R_{1}B_{2}}(a_{2}|e,b_{2}).
\end{equation*}
Passing the optimization inside the logarithm, \eqref{eq:app_opt} is equal to
\begin{equation*}
    \frac{\alpha}{1-\alpha} \log \Bigg( \sup_{p_{\tilde{E}}} \sum_{e} p_{\tilde{E}}(e)\sum_{b_{2}} p_{B_{2}}(b_{2}) \Bigg[ \sum_{a_{2}} p_{A_{2}|R_{1}B_{2}}(a_{2}|e,b_{2})^{\alpha} \Bigg]^{\frac{1}{\alpha}}\Bigg),  
\end{equation*}
where the supremum is over all probability distribution $p_{\tilde{E}}$ on the system $\tilde{E}$. Since the objective function is affine in the distribution $p_{\tilde{E}}$, the optimum must be attained at the boundary of the feasible set, resulting in the expression
\begin{equation*}
    \inf_{\omega' \in \mcD(R_{1}\tilde{E})} H_{\alpha}^{\uparrow}(A_{2}|B_{2}\tilde{E})_{\mcM_{2}(\omega')} = \frac{\alpha}{1-\alpha} \log \Bigg( \max_{e \in \{0,1\}} \sum_{b_{2}} p_{B_{2}}(b_{2}) \Bigg[ \sum_{a_{2}} p_{A_{2}|R_{1}B_{2}}(a_{2}|e,b_{2})^{\alpha} \Bigg]^{\frac{1}{\alpha}}\Bigg)
\end{equation*}
By substituting in the values of the distributions from \cref{eq:classical_ch1} and evaluating the maximum, we obtain the claimed value in \eqref{eq:app_claim}. 

\subsection{Properties of the partially optimized conditional R\'enyi entropy} \label{sec:ent_props}

\noindent \textbf{Lemma 4.2} (Variational expression)\textbf{.} \textit{Let $\rho_{ABC} = \sum_{b} p(b) \, \ketbra{b}{b}_{B} \otimes \rho_{AC}^{|b} \in \mcD(ABC)$ be a cq-state classical on $B$, and let $\alpha \in (1,\infty)$. Then the partially optimized conditional R\'enyi entropy of order $\alpha$ according to \cref{def:partial_ent} admits the following variational expression:
    \begin{equation}
        H_{\alpha}(A|B^{\uparrow}C^{\downarrow})_{\rho} = \sup_{q_{B} \in \Delta(B)} - D_{\alpha}\big( \rho_{ABC} \| \id_{A} \otimes \sigma_{BC}\big) \label{eq:var_form_app}
    \end{equation}
    where $\sigma_{BC} = \sum_{b} q_{B}(b) \, \ketbra{b}{b} \otimes \rho_{C}^{|b}$, $\rho_{C}^{|b} = \tr_{A}[\rho_{AC}^{|b}]$ and the supremum is taken over all probability distributions on the classical register $B$.}
\begin{proof}
    Let $\sigma_{BC} = \sum_{b} q_{B}(b) \, \ketbra{b}{b} \otimes \rho_{C}^{|b}$ be the marginal state constructed from a feasible point $q_{B} \in \Delta(B)$. By \cref{eq:cq_div}, the divergence is equal to
    \begin{equation}
        -D_{\alpha}\big( \rho_{ABC} \| \id_{A} \otimes \sigma_{BC}\big) = \frac{1}{1 - \alpha} \log \Bigg( \sum_{b} p_{B}(b)^{\alpha} q_{B}(b)^{1-\alpha} 2^{(\alpha-1)D_{\alpha}(\rho_{AC}^{|b} \| \id_{A} \otimes \rho_{C}^{|b})}\Bigg). \label{eq:diverg_cq}
    \end{equation}
    Furthermore, $(\alpha-1)D_{\alpha}(\rho_{AC}^{|b} \| \id_{A} \otimes \rho_{C}^{|b}) =(1-\alpha)H_{\alpha}^{\downarrow}(A|C)_{\rho^{|b}}$ since $\rho_{C}^{|b} = \tr_{A}[\rho_{AC}^{|b}]$. Let
    \begin{equation*}
        r_{b} := p_{B}(b)^{\alpha} 2^{(1-\alpha)H_{\alpha}^{\downarrow}(A|C)_{\rho^{|b}}}.
    \end{equation*}
    Passing the supremum inside the logarithm, we have the following optimization problem:
    \begin{equation*}
        \begin{aligned}
            \inf \ & \ \sum_{b}q_{B}(b)^{1-\alpha} r_{b} \\
            \text{subject to:} \ & \ \sum_{b} q_{B}(b) = 1,\\
            & \ q_{B}(b) \geq 0.
        \end{aligned}
    \end{equation*}
    It is known that this is a convex optimization problem, and the optimal value is given by $\Big( \sum_{b} r_{b}^{\frac{1}{\alpha}} \Big)^{\alpha}$
    at the point $q_{B}^*(b) = r_{b}^{\frac{1}{\alpha}} / \Big(\sum_{b} r_{b}^{\frac{1}{\alpha}}\Big)$ (see, e.g.,~\cite[Proposition 5.4]{Tomamichel_2016}). Inserting this into \eqref{eq:diverg_cq} we obtain \eqref{eq:var_form_app} as claimed.
\end{proof}

\noindent \textbf{Lemma 4.3} (Additional properties)\textbf{.} \textit{Let $\rho_{ABCD} = \sum_{b,d} p_{BD}(b,d) \, \ketbra{b}{b}_{B} \otimes \ketbra{d}{d}_{D} \otimes \rho_{AC}^{|b,d} \in \mcD(ABCD)$ be a cq-state classical on $BD$, and let $\alpha \in (1,\infty)$. Then the partially optimized conditional R\'enyi entropy of order $\alpha$ according to \cref{def:partial_ent} has the following properties:
    \begin{enumerate}[(i)]
        \item When $\rho_{ABC} = \rho_{AB} \otimes \rho_{C}$, $H_{\alpha}(A|B^{\uparrow}C^{\downarrow})_{\rho} = H_{\alpha}^{\uparrow}(A|B)_{\rho}$. Similarly, when $\rho_{ABC} = \rho_{AC} \otimes \rho_{B}$,  $H_{\alpha}(A|B^{\uparrow}C^{\downarrow})_{\rho} = H_{\alpha}^{\downarrow}(A|C)_{\rho}$ (consistency with $H_{\alpha}^{\uparrow}$ and $H_{\alpha}^{\downarrow}$).
        \item Let $\mcE: C \to \hat{C}$ be a CP map. Then $H_{\alpha}(A|B^{\uparrow}C^{\downarrow})_{\rho} \leq H_{\alpha}(A|B^{\uparrow}\hat{C}^{\downarrow})_{\mcE(\rho)}$, with equality if $\mcE$ is an isometry (data processing and isometric invariance for $C$).
        \item $H_{\alpha}(AD|B^{\uparrow}C^{\downarrow})_{\rho} \geq H_{\alpha}(A|B^{\uparrow}C^{\downarrow})_{\rho}  \geq H_{\alpha}(A|B^{\uparrow}(CD)^{\downarrow})_{\rho}$ (classical registers cannot decrease entropy and subadditivity). 
    \end{enumerate}}
\begin{proof}
    \noindent \underline{Part $(i)$}

    \vspace{0.2cm}

    \noindent Consider any cq-state classical on $B$ of the form $\rho_{ABC} = \sum_{b}p_{B}(b) \, \ketbra{b}{b}_{B} \otimes \rho_{A}^{|b} \otimes \rho_{C}$. We see that $H_{\alpha}^{\downarrow}(A|C)_{\rho^{|b}} = H_{\alpha}^{\downarrow}(A)_{\rho^{|b}} =  H_{\alpha}^{\uparrow}(A)_{\rho^{|b}}$ for all $b$. Hence
    \begin{equation*}
        H_{\alpha}(A|B^{\uparrow}C^{\downarrow})_{\rho} = \frac{\alpha}{1-\alpha} \log \Bigg( \sum_{b} p_{B}(b) 2^{\frac{1-\alpha}{\alpha}H_{\alpha}^{\uparrow}(A)_{\rho^{|b}}}\Bigg) = H_{\alpha}^{\uparrow}(A|B)_{\rho}.
    \end{equation*}
    If instead $\rho_{ABC} = \sum_{b}p_{B}(b) \, \ketbra{b}{b}_{B} \otimes \rho_{AC}$, $\rho_{AC}^{|b} = \rho_{AC}$ and we find
    \begin{equation*}
        H_{\alpha}(A|B^{\uparrow}C^{\downarrow})_{\rho} = \frac{\alpha}{1-\alpha} \log \Bigg( \sum_{b} p_{B}(b) 2^{\frac{1-\alpha}{\alpha}H_{\alpha}^{\downarrow}(A|C)_{\rho}}\Bigg) = H_{\alpha}^{\downarrow}(A|C)_{\rho}.
    \end{equation*}

    \vspace{0.2cm}

    \noindent \underline{Part $(ii)$}

    \vspace{0.2cm}

    \noindent Consider any cq-state classical on $B$ of the form $\rho_{ABC} = \sum_{b}p_{B}(b) \, \ketbra{b}{b}_{B} \otimes \rho_{AC}^{|b}$, and note that for any $\mcE : C \to \hat{C}$ we have $\mcE(\rho_{ABC}) = \sum_{b}p_{B}(b) \, \ketbra{b}{b}_{B} \otimes \mcE(\rho_{AC}^{|b})$. Furthermore, by the data processing inequality $H_{\alpha}^{\downarrow}(A|C)_{\rho^{|b}} \leq H_{\alpha}^{\downarrow}(A|\hat{C})_{\mcE(\rho^{|b})}$. Putting this together, 
    \begin{align*}
        H_{\alpha}(A|B^{\uparrow}\hat{C}^{\downarrow})_{\mcE(\rho)} &= \frac{\alpha}{1-\alpha} \log \Bigg( \sum_{b} p_{B}(b) 2^{\frac{1-\alpha}{\alpha}H_{\alpha}^{\downarrow}(A|\hat{C})_{\mcE(\rho^{|b})}}\Bigg) \\ 
        &\geq \frac{\alpha}{1-\alpha} \log \Bigg( \sum_{b} p_{B}(b) 2^{\frac{1-\alpha}{\alpha}H_{\alpha}^{\downarrow}(A|C)_{\rho^{|b}}}\Bigg) \\
        &= H_{\alpha}(A|B^{\uparrow}C^{\downarrow})_{\rho}
    \end{align*}
    as claimed. Note that if $\mcE$ is an isometry, the R\'enyi entropy satisfies $H_{\alpha}^{\downarrow}(A|C)_{\rho^{|b}} = H_{\alpha}^{\downarrow}(A|\hat{C})_{\mcE(\rho^{|b})}$ and we have equality above.

    \vspace{0.2cm}

    \noindent \underline{Part $(iii)$}

    \vspace{0.2cm}

    Consider the cq-state $\rho_{ABCD} = \sum_{b}p_{B}(b) \ketbra{b}{b}_{B} \otimes \rho_{ACD}^{|b}$ where $\rho_{ACD}^{|b} = \sum_{d}p_{D|B}(d|b) \ketbra{d}{d} \otimes \rho_{AC}^{|b,d}$. Since $D$ is classical, $H_{\alpha}^{\downarrow}(AD|C)_{\rho^{|b}} \geq H_{\alpha}^{\downarrow}(A|C)_{\rho^{|b}}$ \cite[Lemma 5.14]{Tomamichel_2016}. This implies
    \begin{align*}
        H_{\alpha}(A|B^{\uparrow}C^{\downarrow})_{\rho} &= \frac{\alpha}{1-\alpha} \log \Bigg( \sum_{b} p_{B}(b) 2^{\frac{1-\alpha}{\alpha}H_{\alpha}^{\downarrow}(A|C)_{\rho^{|b}}}\Bigg) \\ 
        &\leq \frac{\alpha}{1-\alpha} \log \Bigg( \sum_{b} p_{B}(b) 2^{\frac{1-\alpha}{\alpha}H_{\alpha}^{\downarrow}(AD|C)_{\rho^{|b}}}\Bigg) \\
        &= H_{\alpha}(AD|B^{\uparrow}C^{\downarrow})_{\rho},
    \end{align*}
    establishing the first inequality. 

    For the second, consider the variational expression from \cref{lem:var_ent}:
    \begin{multline}
        H_{\alpha}(A|B^{\uparrow}C^{\downarrow})_{\rho} = \sup \Big\{ -D_{\alpha}(\rho_{ABC} \| \id_{A} \otimes \sigma_{BC}) \ : \ \sigma_{BC} = \sum_{b} q_{B}(b) \ketbra{b}{b} \otimes \rho_{C}^{|b}, \ q_{B} \in \Delta(B)\Big\} \\
        \geq \sup \Big\{ -D_{\alpha}(\rho_{ABCD} \| \id_{A} \otimes \sigma_{BCD}) \ : \ \tr_{D}[\sigma_{BCD}] = \sum_{b} q_{B}(b) \ketbra{b}{b} \otimes \rho_{C}^{|b}, \ q_{B} \in \Delta(B)\Big\}, \label{eq:p_tr}
    \end{multline}
    where we used the data processing for $D_{\alpha}$ under the partial trace over register $D$. Recall that $\rho_{C}^{|b} = \sum_{d}p_{D|B}(d|b)\rho_{C}^{|b,d}$, and therefore every state in the set
    \begin{equation*}
        \Big\{ \sigma_{BCD} \in \mcD(BCD) \ : \ \sigma_{BCD} = \sum_{b,d}q_{B}(b)\,p_{D|B}(d|b) \ketbra{b}{b}_{B} \otimes \ketbra{d}{d}_{D} \otimes \rho_{C}^{|b,d}, \ q_{B} \in \Delta(B)\Big\}
    \end{equation*}
    satisfies the partial trace constraint in \eqref{eq:p_tr}. Restricting the supremum in \eqref{eq:p_tr} to this set shows that $H_{\alpha}(A|B^{\uparrow}C^{\downarrow})_{\rho}$ is lower bounded by
    \begin{multline*}
        \sup \Big\{ -D_{\alpha}(\rho_{ABCD} \| \id_{A} \otimes \sigma_{BCD}) \ : \ \sigma_{BCD} = \sum_{b,d}q_{B}(b)\,p_{D|B}(d|b) \ketbra{b}{b}_{B} \otimes \ketbra{d}{d}_{D} \otimes \rho_{C}^{|b,d}, \ q_{B} \in \Delta(B) \Big\}
    \end{multline*}
    which is equal to $H_{\alpha}(A|B^{\uparrow}(CD)^{\downarrow})_{\rho}$, concluding the proof.
\end{proof}

\subsection{Proofs for the new chain rule} \label{sec:chain_prood}

To prove our new chain rule, we follow closely the approach taken in~\cite[Section 3]{DFR} to establish \cref{eq:EAT_chain_1}. In particular, we require the following lemma.
\begin{lemma}[\cite{DFR} Lemma 3.1]
    Let $\rho_{A_{1}A_{2}B} \in \mcD(A_{1}A_{2}B)$ and $\sigma_{B} \in \mcD(B)$ be states, $\alpha \in (0,\infty)$ and define $\alpha' =  (\alpha - 1)/\alpha$. Then the following equality holds:
    \begin{equation}
        -D_{\alpha}( \rho_{A_{1}A_{2}B}  \| \id_{A_{1}A_{2}} \otimes \sigma_{B}) =  - D_{\alpha}(\rho_{A_{1}B} \| \id_{A_{1}} \otimes \sigma_{B}) + H_{\alpha}^{\downarrow}(A_{2}|A_{1}B)_{\nu},
    \end{equation}
    where
    \begin{equation}
        \nu_{A_{1}B} = \frac{\Big(\rho_{A_{1}B}^{\frac{1}{2}} \sigma_{B}^{-\alpha'} \rho_{A_{1}B}^{\frac{1}{2}} \Big)^{\alpha}}{\tr\Big[ \Big(\rho_{A_{1}B}^{\frac{1}{2}} \sigma_{B}^{-\alpha'} \rho_{A_{1}B}^{\frac{1}{2}} \Big)^{\alpha}\Big]}, \ \ \ \text{and} \ \ \ \nu_{A_{1}A_{2}B} = \nu_{A_{1}B}^{\frac{1}{2}} \, \rho_{A_{2}|A_{1}B} \, \nu_{A_{1}B}^{\frac{1}{2}}.
    \end{equation}
    \label{lem:DFR_chain}
\end{lemma}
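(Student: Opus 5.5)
The plan is to prove the identity by direct manipulation of the trace expressions, using the conditional operator $\rho_{A_{2}|A_{1}B}$ to factor $\rho_{A_{1}A_{2}B}$, and then a polar decomposition to move the $\sigma_{B}$ weights onto $\rho_{A_{1}B}$. First I would dispose of support issues. If $\mathrm{supp}(\rho_{A_{1}B}) \not\subseteq \mathrm{supp}(\id_{A_{1}}\otimes\sigma_{B})$, then the same failure occurs for $\rho_{A_{1}A_{2}B}$, so both divergences are $+\infty$ and the identity holds in the extended sense. Otherwise I restrict all inverses to supports. I then write $c := \tr\big[(\rho_{A_{1}B}^{1/2}\sigma_{B}^{-\alpha'}\rho_{A_{1}B}^{1/2})^{\alpha}\big]$. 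Because $\rho$ is normalized and $\alpha' = (\alpha-1)/\alpha$, the first term on the right is
\begin{equation*}
    -D_{\alpha}(\rho_{A_{1}B}\|\id_{A_{1}}\otimes\sigma_{B}) = \frac{1}{1-\alpha}\log c .
\end{equation*}
This uses the unitary invariance $\tr[(\sigma^{-\alpha'/2}\rho\,\sigma^{-\alpha'/2})^{\alpha}] = \tr[(\rho^{1/2}\sigma^{-\alpha'}\rho^{1/2})^{\alpha}]$, which follows from $XX^{\dagger}$ and $X^{\dagger}X$ having the same nonzero spectrum.

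Next I use $\rho_{A_{1}A_{2}B} = \rho_{A_{1}B}^{1/2}\,\rho_{A_{2}|A_{1}B}\,\rho_{A_{1}B}^{1/2}$. This holds because $\rho_{A_{1}A_{2}B}$ is supported in $\mathrm{supp}(\rho_{A_{1}B})\otimes\mcH_{A_{2}}$. Setting $Z := \sigma_{B}^{-\alpha'/2}\rho_{A_{1}B}^{1/2}$, this gives
\begin{equation*}
    \sigma_{B}^{-\alpha'/2}\rho_{A_{1}A_{2}B}\,\sigma_{B}^{-\alpha'/2} = Z\,\rho_{A_{2}|A_{1}B}\,Z^{\dagger}.
\end{equation*}
I take the polar decomposition $Z = U|Z|$ with a partial isometry $U$, where $|Z| = (\rho_{A_{1}B}^{1/2}\sigma_{B}^{-\alpha'}\rho_{A_{1}B}^{1/2})^{1/2}$. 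Then $Z\rho_{A_{2}|A_{1}B}Z^{\dagger}$ is unitarily equivalent, on its support, to $|Z|\,\rho_{A_{2}|A_{1}B}\,|Z|$, so their $\alpha$-th power traces coincide. By the definition of $\nu_{A_{1}B}$ we have $|Z|^{2} = c^{1/\alpha}\nu_{A_{1}B}^{1/\alpha}$, hence $|Z| = c^{1/(2\alpha)}\nu_{A_{1}B}^{1/(2\alpha)}$. Therefore
\begin{equation*}
    \tr\big[(\sigma_{B}^{-\alpha'/2}\rho_{A_{1}A_{2}B}\,\sigma_{B}^{-\alpha'/2})^{\alpha}\big] = c\,\tr\big[(\nu_{A_{1}B}^{1/(2\alpha)}\rho_{A_{2}|A_{1}B}\,\nu_{A_{1}B}^{1/(2\alpha)})^{\alpha}\big].
\end{equation*}

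It remains to identify the last trace with $2^{(1-\alpha)H_{\alpha}^{\downarrow}(A_{2}|A_{1}B)_{\nu}}$. The first point is that the $A_{1}B$-marginal of $\nu_{A_{1}A_{2}B} = \nu_{A_{1}B}^{1/2}\rho_{A_{2}|A_{1}B}\nu_{A_{1}B}^{1/2}$ is indeed $\nu_{A_{1}B}$. This holds because $\tr_{A_{2}}\rho_{A_{2}|A_{1}B}$ is the projector onto $\mathrm{supp}(\rho_{A_{1}B})$, and $\mathrm{supp}(\nu_{A_{1}B}) \subseteq \mathrm{supp}(\rho_{A_{1}B})$. Consequently $H_{\alpha}^{\downarrow}(A_{2}|A_{1}B)_{\nu}$ is computed relative to $\nu_{A_{1}B}$ itself. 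The sandwiching then gives $\nu_{A_{1}B}^{-\alpha'/2}\nu_{A_{1}B}^{1/2} = \nu_{A_{1}B}^{1/(2\alpha)}$, since $\tfrac12-\tfrac{\alpha-1}{2\alpha} = \tfrac{1}{2\alpha}$. This matches the expression above exactly. Taking $\frac{1}{1-\alpha}\log$ of both sides and splitting the product $c\cdot(\cdot)$ yields the claimed equality.

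I expect the main obstacle to be the careful bookkeeping of supports and generalized inverses. This covers $\sigma_{B}^{-\alpha'}$, $\rho_{A_{1}B}^{-1/2}$ inside $\rho_{A_{2}|A_{1}B}$, and $\nu_{A_{1}B}^{-\alpha'/2}$, where one must check that each identity holds after projecting onto the relevant supports. A second point to check is that the polar-decomposition step remains valid for $\alpha\in(0,1)$, where $\alpha' < 0$. The algebra is sign-agnostic, but one should verify that the support condition is not needed there. The remaining steps are routine.
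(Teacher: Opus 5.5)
The paper gives no proof of this lemma; it imports it from \cite{DFR}, and your argument is correct and is essentially the original proof there. That argument factors $\rho_{A_{1}A_{2}B}=\rho_{A_{1}B}^{1/2}\rho_{A_{2}|A_{1}B}\rho_{A_{1}B}^{1/2}$, replaces $\sigma_{B}^{-\alpha'/2}\rho_{A_{1}B}^{1/2}$ via its polar decomposition by $(\rho_{A_{1}B}^{1/2}\sigma_{B}^{-\alpha'}\rho_{A_{1}B}^{1/2})^{1/2}=c^{1/(2\alpha)}\nu_{A_{1}B}^{1/(2\alpha)}$, and recognizes $\nu_{A_{1}B}^{1/(2\alpha)}=\nu_{A_{1}B}^{-\alpha'/2}\nu_{A_{1}B}^{1/2}$ after checking $\tr_{A_{2}}\nu_{A_{1}A_{2}B}=\nu_{A_{1}B}$.

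The only nit is your opening support reduction. Declaring both divergences $+\infty$ is right for $\alpha>1$, which is the only range where this paper defines and uses $D_{\alpha}$. For $\alpha\in(0,1)$ under the usual convention no reduction is needed: as you suspected, every step goes through verbatim whenever $c>0$.
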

\noindent We will also need the following modifications of~\cite[Claim 3.4]{DFR} and~\cite[Corollary 3.5]{DFR}, respectively.
\begin{claim}
    Let $\rho_{A_{1}B_{1}A_{2}B_{2}} = \sum_{b_{2}}p_{B_{2}}(b_{2}) \ketbra{b_{2}}{b_{2}} \otimes \rho_{A_{1}B_{1}A_{2}}^{|b_{2}} \in \mcD(A_{1}B_{1}A_{2}B_{2})$ be a cq-state classical on $B_{2}$ such that $\rho_{A_{1}B_{1}B_{2}} = \rho_{A_{1}B_{1}} \otimes \rho_{B_{2}}$, and let $\sigma_{B_{1}} \in \mcD(B_{1})$ be any state that satisfies $\mathrm{supp}(\rho_{B_{1}}) \subseteq \mathrm{supp}(\sigma_{B_{1}})$. Then for any $\alpha \in (1,\infty)$, define the cq-state
    \begin{equation}
        \nu_{A_{1}B_{1}A_{2}B_{2}} := \sum_{b_{2}}p_{B_{2}}(b_{2}) \ketbra{b_{2}}{b_{2}}_{B_{2}} \otimes \nu^{|b_{2}}_{A_{1}A_{2}B_{1}}
    \end{equation}
    where 
    \begin{equation}
        \nu_{A_{1}B_{1}} := \frac{\Big(\rho_{A_{1}B_{1}}^{\frac{1}{2}} \sigma_{B_{1}}^{-\alpha'} \rho_{A_{1}B_{1}}^{\frac{1}{2}} \Big)^{\alpha}}{\tr \Big[\Big(\rho_{A_{1}B_{1}}^{\frac{1}{2}} \sigma_{B_{1}}^{-\alpha'} \rho_{A_{1}B_{1}}^{\frac{1}{2}}\Big)^{\alpha}\Big]}, \ \ \ \text{and} \ \ \ \nu_{A_{1}A_{2}B_{1}}^{|b_{2}} := \nu_{A_{1}B}^{\frac{1}{2}} \, \rho_{A_{1}B_{1}}^{-\frac{1}{2}} \, \rho_{A_{1}A_{2}B_{1}}^{|b_{2}} \, \rho_{A_{1}B_{1}}^{-\frac{1}{2}} \, \nu_{A_{1}B}^{\frac{1}{2}}. \label{eq:nu_state}
    \end{equation}
    Then $\nu_{A_{1}B_{1}A_{2}B_{2}}$ satisfies $\nu_{A_{2}B_{2}|A_{1}B_{1}} = \rho_{A_{2}B_{2}|A_{1}B_{1}}$.
    \label{lem:nu_prop}
\end{claim}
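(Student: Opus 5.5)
The plan is a direct computation with conditional operators, mirroring \cite[Claim 3.4]{DFR}. There are two ingredients. First, I show that the $A_{1}B_{1}$-marginal of the cq-state $\nu_{A_{1}B_{1}A_{2}B_{2}}$ defined in \eqref{eq:nu_state} coincides with the operator $\nu_{A_{1}B_{1}}$ used in its construction, so the notation is consistent. Second, I show that $\nu_{A_{1}B_{1}}$ and $\rho_{A_{1}B_{1}}$ have the same support. Once both hold, the inverse square roots cancel in the definition of the conditional operator.

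\emph{Step 1 (supports).} Write $X := \rho_{A_{1}B_{1}}^{\frac{1}{2}} \sigma_{B_{1}}^{-\alpha'} \rho_{A_{1}B_{1}}^{\frac{1}{2}}$, where $\sigma_{B_{1}}^{-\alpha'}$ is taken on the support of $\sigma_{B_{1}}$, so $\nu_{A_{1}B_{1}} \propto X^{\alpha}$ and $\mathrm{supp}(\nu_{A_{1}B_{1}}) = \mathrm{supp}(X)$.
\begin{itemize}
\item We have $\mathrm{supp}(\rho_{A_{1}B_{1}}) \subseteq \mcH_{A_{1}} \otimes \mathrm{supp}(\rho_{B_{1}}) \subseteq \mcH_{A_{1}} \otimes \mathrm{supp}(\sigma_{B_{1}})$.
\item The operator $\id_{A_{1}} \otimes \sigma_{B_{1}}^{-\alpha'}$ is positive definite on $\mcH_{A_{1}} \otimes \mathrm{supp}(\sigma_{B_{1}})$.
\item Hence $X$ is positive definite on $\mathrm{supp}(\rho_{A_{1}B_{1}})$ and vanishes outside it.
\end{itemize}
Therefore $\Pi(\nu_{A_{1}B_{1}}) = \Pi(\rho_{A_{1}B_{1}})$. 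I expect this support bookkeeping to be the only delicate point, since it is what makes $\nu^{-\frac{1}{2}}\nu^{\frac{1}{2}}$ act as the identity on the relevant space.

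\emph{Step 2 (marginal).} Trace out $A_{2}B_{2}$:
\begin{equation*}
\tr_{A_{2}B_{2}}[\nu_{A_{1}B_{1}A_{2}B_{2}}] = \nu_{A_{1}B_{1}}^{\frac{1}{2}}\rho_{A_{1}B_{1}}^{-\frac{1}{2}}\Big(\sum_{b_{2}}p_{B_{2}}(b_{2})\rho^{|b_{2}}_{A_{1}B_{1}}\Big)\rho_{A_{1}B_{1}}^{-\frac{1}{2}}\nu_{A_{1}B_{1}}^{\frac{1}{2}}.
\end{equation*}
The sum in the middle is just $\rho_{A_{1}B_{1}}$. (The product hypothesis $\rho_{A_{1}B_{1}B_{2}} = \rho_{A_{1}B_{1}}\otimes\rho_{B_{2}}$ moreover gives $\rho^{|b_{2}}_{A_{1}B_{1}} = \rho_{A_{1}B_{1}}$ termwise, which also shows that each $\nu^{|b_{2}}$ is normalized.) The expression thus reduces to $\nu_{A_{1}B_{1}}^{\frac{1}{2}}\Pi(\rho_{A_{1}B_{1}})\nu_{A_{1}B_{1}}^{\frac{1}{2}}$. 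By Step 1 this equals $\nu_{A_{1}B_{1}}$.

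\emph{Step 3 (conditional operator).} By definition and Step 2,
\begin{equation*}
\nu_{A_{2}B_{2}|A_{1}B_{1}} = \nu_{A_{1}B_{1}}^{-\frac{1}{2}}\nu_{A_{1}B_{1}A_{2}B_{2}}\nu_{A_{1}B_{1}}^{-\frac{1}{2}}.
\end{equation*}
Inserting \eqref{eq:nu_state} produces the factors $\nu_{A_{1}B_{1}}^{-\frac{1}{2}}\nu_{A_{1}B_{1}}^{\frac{1}{2}} = \Pi(\nu_{A_{1}B_{1}}) = \Pi(\rho_{A_{1}B_{1}})$ on both sides. Since $\Pi(\rho_{A_{1}B_{1}})\rho_{A_{1}B_{1}}^{-\frac{1}{2}} = \rho_{A_{1}B_{1}}^{-\frac{1}{2}}$, we obtain
\begin{equation*}
\nu_{A_{2}B_{2}|A_{1}B_{1}} = \rho_{A_{1}B_{1}}^{-\frac{1}{2}}\Big(\sum_{b_{2}}p_{B_{2}}(b_{2})\ketbra{b_{2}}{b_{2}}\otimes\rho^{|b_{2}}_{A_{1}A_{2}B_{1}}\Big)\rho_{A_{1}B_{1}}^{-\frac{1}{2}}.
\end{equation*}
The bracketed operator is $\rho_{A_{1}B_{1}A_{2}B_{2}}$, and $\rho_{A_{1}B_{1}}$ is its marginal. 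Hence the right-hand side is exactly $\rho_{A_{2}B_{2}|A_{1}B_{1}}$, which proves the claim.
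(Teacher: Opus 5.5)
Your proposal is correct and follows essentially the same route as the paper's proof. The paper also uses a direct computation in which $\nu_{A_{1}B_{1}}^{-\frac{1}{2}}\nu_{A_{1}B_{1}}^{\frac{1}{2}}$ collapses to $\Pi(\nu_{A_{1}B_{1}}) = \Pi(\rho_{A_{1}B_{1}})$, leaving $\rho_{A_{1}B_{1}}^{-\frac{1}{2}}\rho_{A_{1}B_{1}A_{2}B_{2}}\rho_{A_{1}B_{1}}^{-\frac{1}{2}}$; your Steps 1 and 2 usefully spell out what the paper only asserts or uses implicitly, namely why the supports coincide and that the $A_{1}B_{1}$-marginal of the constructed cq-state really is $\nu_{A_{1}B_{1}}$.
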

\begin{proof}
    First note that the condition $\mathrm{supp}(\rho_{B_{1}}) \subseteq \mathrm{supp}(\sigma_{B_{1}})$ implies that $\nu_{A_{1}B_{1}}$ is a well defined state. By direct calculation,
    \begin{equation*}
        \begin{aligned}
            \nu_{A_{2}B_{2}|A_{1}B_{1}} &= \nu_{A_{1}B_{1}}^{-\frac{1}{2}} \nu_{A_{1}B_{1}A_{2}B_{2}}\nu_{A_{1}B_{1}}^{-\frac{1}{2}} \\
            &= \sum_{b_{2}}p_{B_{2}}(b_{2})\ketbra{b_{2}}{b_{2}}_{B_{2}} \otimes \nu_{A_{1}B_{1}}^{-\frac{1}{2}} \nu_{A_{1}B_{1}A_{2}}^{|b_{2}}\nu_{A_{1}B_{1}}^{-\frac{1}{2}} \\
            &= \sum_{b_{2}}p_{B_{2}}(b_{2})\ketbra{b_{2}}{b_{2}}_{B_{2}} \otimes \Pi(\nu_{A_{1}B_{1}})\rho_{A_{1}B_{1}}^{-\frac{1}{2}} \rho_{A_{1}B_{1}A_{2}}^{|b_{2}}\rho_{A_{1}B_{1}}^{-\frac{1}{2}}\Pi(\nu_{A_{1}B_{1}}) \\
            &= \Pi(\nu_{A_{1}B_{1}})\rho_{A_{1}B_{1}}^{-\frac{1}{2}} \rho_{A_{1}B_{1}A_{2}B_{2}} \rho_{A_{1}B_{1}}^{-\frac{1}{2}}\Pi(\nu_{A_{1}B_{1}}).
        \end{aligned}
    \end{equation*}
    Note that $\Pi(\nu_{A_{1}B_{1}}) = \Pi\Big( \Big(\rho_{A_{1}B_{1}}^{\frac{1}{2}} \sigma_{B_{1}}^{-\alpha'} \rho_{A_{1}B_{1}}^{\frac{1}{2}} \Big)^{\alpha} \Big) = \Pi(\rho_{A_{1}B_{1}})$, which implies 
    \begin{equation*}
        \Pi(\nu_{A_{1}B_{1}})\rho_{A_{1}B_{1}}^{-\frac{1}{2}} \rho_{A_{1}B_{1}A_{2}B_{2}} \rho_{A_{1}B_{1}}^{-\frac{1}{2}}\Pi(\nu_{A_{1}B_{1}}) = \rho_{A_{2}B_{2}|A_{1}B_{1}},     
    \end{equation*}
    completing the proof.
\end{proof}

\begin{claim}
    Let $\mcM : R \to A_{2}B_{2}$ be a quantum channel such that $B_{2}$ is classical and for all input states $\omega_{RR'} \in \mcD(RR')$ where $R'$ is an arbitrary register, the output state $\rho_{A_{2}B_{2}R'} = \mcM(\omega_{RR'})$ satisfies $\rho_{B_{2}R'} = \rho_{B_{2}} \otimes \omega_{R'}$. Let $\omega_{A_{1}B_{1}R} \in \mcD(A_{1}B_{1}R)$ be a state and $\rho_{A_{1}B_{1}A_{2}B_{2}} = \mcM(\omega_{A_{1}B_{1}R})$. Let $\nu_{A_{1}B_{1}A_{2}B_{2}} \in \mcD(A_{1}B_{1}A_{2}B_{2})$ be any state that satisfies 
    \begin{equation}
        \nu_{A_{2}B_{2}|A_{1}B_{1}} = \rho_{A_{2}B_{2}|A_{1}B_{1}}
    \end{equation}
    and $\mathrm{supp}(\nu_{A_{1}B_{1}}) = \mathrm{supp}(\rho_{A_{1}B_{1}})$. Then there exists a state $\omega'_{A_{1}B_{1}R} \in \mcD(A_{1}B_{1}R)$ that satisfies
    \begin{equation}
        \mcM(\omega'_{RA_{1}B_{1}}) = \nu_{A_{1}B_{1}A_{2}B_{2}}.
    \end{equation} \label{lem:chan_1}
\end{claim}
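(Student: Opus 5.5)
We follow the construction behind \cite[Corollary 3.5]{DFR}, adapted to the DI channel condition. The natural candidate is
\begin{equation*}
    \omega'_{A_{1}B_{1}R} := \nu_{A_{1}B_{1}}^{\frac{1}{2}}\, \rho_{A_{1}B_{1}}^{-\frac{1}{2}}\, \omega_{A_{1}B_{1}R}\, \rho_{A_{1}B_{1}}^{-\frac{1}{2}}\, \nu_{A_{1}B_{1}}^{\frac{1}{2}},
\end{equation*}
where the inverse is taken on the support of $\rho_{A_{1}B_{1}}$. This operator is positive, since it is a congruence of a positive operator.

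Two facts make this candidate workable. First, $\mcM$ acts only on $R$, so the map $X \mapsto K X K^{\dagger}$ with $K = \nu_{A_{1}B_{1}}^{1/2}\rho_{A_{1}B_{1}}^{-1/2}$ acting on $A_{1}B_{1}$ commutes with $\mcM$. Second, because $\mcM$ is trace preserving and non-signaling from $R$ to everything except its outputs, the condition $\rho_{B_{2}R'} = \rho_{B_{2}}\otimes\omega_{R'}$ with $R' = A_{1}B_{1}$ gives $\rho_{A_{1}B_{1}} = \omega_{A_{1}B_{1}}$.

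Normalization then follows: $\tr_{R}[\omega'] = K\omega_{A_{1}B_{1}}K^{\dagger} = \Pi(\rho_{A_{1}B_{1}})\,\nu_{A_{1}B_{1}}\,\Pi(\rho_{A_{1}B_{1}}) = \nu_{A_{1}B_{1}}$. This uses the support assumption $\mathrm{supp}(\nu_{A_{1}B_{1}}) = \mathrm{supp}(\rho_{A_{1}B_{1}})$, and it shows $\omega'$ is a normalized state.

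Applying $\mcM$ and commuting it past $K$ gives
\begin{equation*}
    \mcM(\omega') = K \rho_{A_{1}B_{1}A_{2}B_{2}} K^{\dagger} = \nu_{A_{1}B_{1}}^{\frac{1}{2}}\, \rho_{A_{2}B_{2}|A_{1}B_{1}}\, \nu_{A_{1}B_{1}}^{\frac{1}{2}} = \nu_{A_{1}B_{1}}^{\frac{1}{2}}\, \nu_{A_{2}B_{2}|A_{1}B_{1}}\, \nu_{A_{1}B_{1}}^{\frac{1}{2}}.
\end{equation*}
The middle equality is the definition of the conditional operator, and the last step uses the hypothesis $\nu_{A_{2}B_{2}|A_{1}B_{1}} = \rho_{A_{2}B_{2}|A_{1}B_{1}}$.

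The remaining step, and the one needing care, is to show that $\nu_{A_{1}B_{1}}^{1/2}\nu_{A_{2}B_{2}|A_{1}B_{1}}\nu_{A_{1}B_{1}}^{1/2}$ equals $\nu_{A_{1}B_{1}A_{2}B_{2}}$. Expanding the conditional operator, this equals $\Pi(\nu_{A_{1}B_{1}})\,\nu_{A_{1}B_{1}A_{2}B_{2}}\,\Pi(\nu_{A_{1}B_{1}})$. For any positive operator, the support of $\nu_{A_{1}B_{1}A_{2}B_{2}}$ is contained in $\mathrm{supp}(\nu_{A_{1}B_{1}})\otimes \mcH_{A_{2}B_{2}}$, so the projectors act trivially and we obtain $\mcM(\omega') = \nu_{A_{1}B_{1}A_{2}B_{2}}$.

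The classicality of $B_{2}$ plays no role in this argument; it is needed only later, where \cref{lem:nu_prop} supplies the specific $\nu$.
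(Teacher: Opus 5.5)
Your proof is correct and follows the paper's argument essentially step for step. You use the same candidate $\omega' = \nu_{A_1B_1}^{1/2}\rho_{A_1B_1}^{-1/2}\,\omega_{A_1B_1R}\,\rho_{A_1B_1}^{-1/2}\nu_{A_1B_1}^{1/2}$, the identity $\rho_{A_1B_1}=\omega_{A_1B_1}$, normalization via the support condition, and commute $\mcM$ past the congruence; you also spell out the support argument the paper leaves implicit when it writes $\nu_{A_1B_1A_2B_2} = \nu_{A_1B_1}^{1/2}\rho_{A_2B_2|A_1B_1}\nu_{A_1B_1}^{1/2}$. One cosmetic slip: with your $K=\nu_{A_1B_1}^{1/2}\rho_{A_1B_1}^{-1/2}$, $K\omega_{A_1B_1}K^{\dagger}$ literally equals $\nu_{A_1B_1}^{1/2}\Pi(\rho_{A_1B_1})\nu_{A_1B_1}^{1/2}$ rather than $\Pi(\rho_{A_1B_1})\nu_{A_1B_1}\Pi(\rho_{A_1B_1})$, but both reduce to $\nu_{A_1B_1}$ under the support assumption, so nothing breaks.
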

\begin{proof}
    Note that $\nu_{A_{2}B_{2}|A_{1}B_{1}} = \rho_{A_{2}B_{2}|A_{1}B_{1}}$ is equivalent to 
    \begin{equation*}
        \nu_{A_{1}B_{1}A_{2}B_{2}} = \nu_{A_{1}B_{1}}^{\frac{1}{2}} \, \rho_{A_{2}B_{2}|A_{1}B_{1}} \, \nu_{A_{1}B_{1}}^{\frac{1}{2}}.
    \end{equation*}
    Furthermore, the assumption on $\mcM$ implies 
    \begin{equation}
        \rho_{A_{1}B_{1}} = \tr_{A_{2}B_{2}}[\mcM(\omega_{A_{1}B_{1}R})] = \tr_{B_{2}}[\rho_{B_{2}} \otimes \omega_{A_{1}B_{1}}] = \omega_{A_{1}B_{1}}. \label{eq:eq_marg}
    \end{equation}
    Now, consider the following positive semi-definite operator:
    \begin{equation*}
        \omega_{RA_{1}B_{1}}' := \nu_{A_{1}B_{1}}^{\frac{1}{2}} \, \rho_{A_{1}B_{1}}^{-\frac{1}{2}} \, \omega_{A_{1}B_{1}R} \, \rho_{A_{1}B_{1}}^{-\frac{1}{2}} \, \nu_{A_{1}B_{1}}^{\frac{1}{2}}.
    \end{equation*}
    By combining \cref{eq:eq_marg} with the fact that $\mathrm{supp}(\nu_{A_{1}B_{1}}) = \mathrm{supp}(\rho_{A_{1}B_{1}})$, or equivalently $\Pi(\nu_{A_{1}B_{1}}) = \Pi(\rho_{A_{1}B_{1}})$, we find $\tr[\omega_{RA_{1}B_{1}}']=1$, i.e., $\omega_{RA_{1}B_{1}}'$ is a normalized quantum state. Then, by direct calculation,
    \begin{equation*}
        \begin{aligned}
            \mcM(\omega_{RA_{1}B_{1}}') &= \nu_{A_{1}B_{1}}^{\frac{1}{2}} \, \rho_{A_{1}B_{1}}^{-\frac{1}{2}} \, \mcM(\omega_{A_{1}B_{1}R}) \, \rho_{A_{1}B_{1}}^{-\frac{1}{2}} \, \nu_{A_{1}B_{1}}^{\frac{1}{2}}  \\
            &= \nu_{A_{1}B_{1}}^{\frac{1}{2}} \, \rho_{A_{1}B_{1}}^{-\frac{1}{2}} \, \rho_{A_{1}B_{1}A_{2}B_{2}} \, \rho_{A_{1}B_{1}}^{-\frac{1}{2}} \, \nu_{A_{1}B_{1}}^{\frac{1}{2}} \\
            &= \nu_{A_{1}B_{1}A_{2}B_{2}}
        \end{aligned}
    \end{equation*}
    as claimed. 
\end{proof}

Having established these facts, we are now ready to prove a chain rule for $H_{\alpha}^{\uparrow}$ in terms of the partially optimized conditional entropy for cq-states (\cref{def:partial_ent}).

\begin{lemma}
    Let $\rho_{A_{1}B_{1}A_{2}B_{2}} = \sum_{b_{2}}p_{B_{2}}(b_{2}) \ketbra{b_{2}}{b_{2}} \otimes \rho_{A_{1}B_{1}A_{2}}^{|b_{2}} \in \mcD(A_{1}B_{1}A_{2}B_{2})$ be a cq-state classical on $B_{2}$ such that $\rho_{A_{1}B_{1}B_{2}} = \rho_{A_{1}B_{1}} \otimes \rho_{B_{2}}$, and let $\alpha \in (0,\infty)$. Then
    \begin{equation}
        H_{\alpha}^{\uparrow}(A_{1}A_{2}|B_{1}B_{2})_{\rho} \geq H_{\alpha}^{\uparrow}(A_{1}|B_{1})_{\rho} + \inf_{\nu} H_{\alpha}(A_{2}|B_{2}^{\uparrow}(A_{1}B_{1})^{\downarrow})_{\nu}, \label{eq:new_chain1}
    \end{equation}
    where the infimum ranges over all cq-states classical on $B_{2}$ such that $\nu_{A_{2}B_{2}|A_{1}B_{1}} = \rho_{A_{2}B_{2}|A_{1}B_{2}}$ and $\mathrm{supp}(\nu_{A_{1}B_{1}}) = \mathrm{supp}(\rho_{A_{1}B_{1}})$. \label{lem:newchain}
\end{lemma}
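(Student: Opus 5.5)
Take $\sigma_{B_1}$ to be the optimizer in $H_{\alpha}^{\uparrow}(A_{1}|B_{1})_{\rho}$, so that $H_{\alpha}^{\uparrow}(A_{1}|B_{1})_{\rho} = -D_{\alpha}(\rho_{A_1B_1}\|\id_{A_1}\otimes\sigma_{B_1})$ and $\mathrm{supp}(\rho_{B_1})\subseteq\mathrm{supp}(\sigma_{B_1})$. (If the supremum is not attained, take a near-optimal $\sigma_{B_1}$ and let the slack go to zero at the end.) The plan is to lower bound $H_{\alpha}^{\uparrow}(A_1A_2|B_1B_2)_\rho$ by one well-chosen feasible point $\id_{A_1A_2}\otimes\sigma_{B_1}\otimes q_{B_2}$, where $q_{B_2}\in\Delta(B_2)$ is free and will be optimized at the end. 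First I would decompose the divergence along the classical register $B_2$ using \eqref{eq:cq_div}, conditioning on $b_2$:
\begin{equation*}
-D_{\alpha}(\rho_{A_1A_2B_1B_2}\|\id\otimes\sigma_{B_1}\otimes q_{B_2}) = \frac{1}{1-\alpha}\log\Bigg(\sum_{b_2} p_{B_2}(b_2)^{\alpha}q_{B_2}(b_2)^{1-\alpha}\,2^{-(\alpha-1)\left[-D_\alpha(\rho^{|b_2}_{A_1A_2B_1}\|\id\otimes\sigma_{B_1})\right]}\Bigg).
\end{equation*}
The hypothesis $\rho_{A_1B_1B_2}=\rho_{A_1B_1}\otimes\rho_{B_2}$ gives $\rho^{|b_2}_{A_1B_1}=\rho_{A_1B_1}$ for every $b_2$. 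Applying \cref{lem:DFR_chain} to each $\rho^{|b_2}$ with the same $\sigma_{B_1}$ then yields
\begin{equation*}
-D_{\alpha}(\rho^{|b_2}_{A_1A_2B_1}\|\id\otimes\sigma_{B_1}) = H_{\alpha}^{\uparrow}(A_1|B_1)_\rho + H_{\alpha}^{\downarrow}(A_2|A_1B_1)_{\nu^{|b_2}}.
\end{equation*}
Here the state $\nu_{A_1B_1}$ is the same for all $b_2$, because it depends only on $\rho_{A_1B_1}$ and $\sigma_{B_1}$, and $\nu^{|b_2}_{A_1A_2B_1}$ is exactly the operator in \eqref{eq:nu_state}. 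This is the key point where the independence of $B_2$ is used.

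Next I assemble $\nu_{A_1B_1A_2B_2}=\sum_{b_2}p_{B_2}(b_2)\ketbra{b_2}{b_2}\otimes\nu^{|b_2}_{A_1A_2B_1}$ as in \cref{lem:nu_prop}. Its conditional states on $b_2$ are the $\nu^{|b_2}$, and each has marginal $\nu_{A_1B_1}$: the support of $\nu_{A_1B_1}$ equals that of $\rho_{A_1B_1}$, so $\tr_{A_2}\nu^{|b_2}=\nu_{A_1B_1}$. By \cref{lem:nu_prop} we have $\nu_{A_2B_2|A_1B_1}=\rho_{A_2B_2|A_1B_1}$, and the support condition holds, so $\nu$ is feasible for the infimum. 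The constant $H_{\alpha}^{\uparrow}(A_1|B_1)_\rho$ factors out of the sum over $b_2$, giving
\begin{equation*}
-D_{\alpha}(\cdots) = H_{\alpha}^{\uparrow}(A_1|B_1)_\rho + \frac{1}{1-\alpha}\log\Bigg(\sum_{b_2}p_{B_2}(b_2)^{\alpha}q_{B_2}(b_2)^{1-\alpha}2^{(1-\alpha)H^{\downarrow}_\alpha(A_2|A_1B_1)_{\nu^{|b_2}}}\Bigg).
\end{equation*}
The second term is precisely the expression \eqref{eq:diverg_cq} for $\nu$ with $C=A_1B_1$ and marginal $\nu^{|b_2}_{C}=\nu_{A_1B_1}$. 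Taking the supremum over $q_{B_2}$ therefore gives $H_\alpha(A_2|B_2^{\uparrow}(A_1B_1)^{\downarrow})_\nu$ by \cref{lem:var_ent}. Since $H^{\uparrow}_\alpha(A_1A_2|B_1B_2)_\rho$ is at least this value, and $\nu$ is feasible, we obtain \eqref{eq:new_chain1}.

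I expect the main obstacle to be the bookkeeping that verifies the per-$b_2$ application of \cref{lem:DFR_chain} produces states sharing a common $A_1B_1$ marginal. This is needed because \cref{lem:var_ent} requires $\sigma_{BC}$ to use the conditional marginals $\rho_C^{|b}$. The other delicate points are the support conditions and attainment of the supremum in $H^{\uparrow}_\alpha(A_1|B_1)$. The statement allows $\alpha\in(0,\infty)$, but I would treat $\alpha>1$, where \cref{def:partial_ent} and \cref{lem:var_ent} apply, and check the sign conventions separately for $\alpha<1$ if needed.
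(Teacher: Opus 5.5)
Your proposal is correct and is essentially the paper's proof. Both arguments fix the optimizer $\tilde{\sigma}_{B_1}$ of $H_{\alpha}^{\uparrow}(A_1|B_1)_\rho$ for every $b_2$, apply \cref{lem:DFR_chain} to each $\rho^{|b_2}$ using $\rho^{|b_2}_{A_1B_1}=\rho_{A_1B_1}$, and obtain feasibility of $\nu$ from \cref{lem:nu_prop}; the only difference is bookkeeping, since the paper starts from the closed-form decomposition \eqref{eq:c_ent} and matches \cref{def:partial_ent} directly, whereas you restrict to product states $\sigma_{B_1}\otimes q_{B_2}$ and recover the same quantity through \cref{lem:var_ent}.
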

\begin{proof}
    We begin by noting that the condition $\rho_{A_{1}B_{1}B_{2}} = \rho_{A_{1}B_{1}} \otimes \rho_{B_{2}}$ implies
    \begin{equation*}
        \sum_{b_{2}}p_{B_{2}}(b_{2})\ketbra{b_{2}}{b_{2}}_{B_{2}} \otimes \rho_{A_{1}B_{1}}^{|b_{2}} = \sum_{b_{2}}p_{B_{2}}(b_{2})\ketbra{b_{2}}{b_{2}}_{B_{2}} \otimes \rho_{A_{1}B_{1}}
    \end{equation*}
    where $\rho_{A_{1}B_{1}}^{|b_{2}} = \tr_{A_{2}}[\rho_{A_{1}B_{1}A_{2}}^{|b_{2}}]$ and $\rho_{A_{1}B_{1}} = \sum_{b_{2}}p_{B_{2}}(b_{2})\rho_{A_{1}B_{1}}^{|b_{2}}$. Therefore, $\rho_{A_{1}B_{1}}^{|b_{2}} = \rho_{A_{1}B_{1}}$ for all $b_{2}$. Using the fact that $B_{2}$ is classical (see \cref{eq:c_ent}) we can write 
    \begin{equation}
        H_{\alpha}^{\uparrow}(A_{1}A_{2}|B_{1}B_{2})_{\rho} = \frac{\alpha}{1-\alpha} \log \Bigg(\sum_{b_{2}}p_{B_{2}}(b_{2})\, 2^{\frac{1-\alpha}{\alpha} H_{\alpha}^{\uparrow}(A_{1}A_{2}|B_{1})_{\rho^{|b_{2}}} }\Bigg). \label{eq:sum_ent}
    \end{equation}
    Now applying \cref{lem:DFR_chain} to each entropy in the summation,
    \begin{equation}
    \begin{aligned}
        H_{\alpha}^{\uparrow}(A_{1}A_{2}|B_{1})_{\rho^{|b_{2}}} &= \sup_{\sigma \in \mcD(B_{1})} - D_{\alpha}(\rho^{|b_{2}}_{A_{1}A_{2}B_{1}} \| \id_{A_{1}A_{2}} \otimes \sigma_{B_{1}}) \\ &= \sup_{\sigma \in \mcD(B_{1})} \Big(- D_{\alpha}(\rho_{A_{1}B_{1}} \| \id_{A_{1}} \otimes \sigma_{B_{1}}) + H_{\alpha}^{\downarrow}(A_{2}|A_{1}B_{1})_{\nu^{|b_{2}}} \Big), \label{eq:sup_chain}
    \end{aligned}
    \end{equation}
    where we note the supremum can be restricted to states $\sigma_{B_{1}}$ that satisfy $\Pi(\sigma_{B_{1}}) \geq \Pi(\rho_{B_{1}})$~\cite{ML_2013}, and according to \eqref{eq:nu_state}
    \begin{equation*}
        \nu_{A_{1}B_{1}} = \frac{\Big(\rho_{A_{1}B_{1}}^{\frac{1}{2}} \sigma_{B_{1}}^{-\alpha'} \rho_{A_{1}B_{1}}^{\frac{1}{2}} \Big)^{\alpha}}{\tr \Big[\Big(\rho_{A_{1}B_{1}}^{\frac{1}{2}} \sigma_{B_{1}}^{-\alpha'} \rho_{A_{1}B_{1}}^{\frac{1}{2}}\Big)^{\alpha}\Big]}, \ \ \ \text{and} \ \ \ \nu_{A_{1}A_{2}B_{1}}^{|b_{2}} = \nu_{A_{1}B_{1}}^{\frac{1}{2}} \,  \rho_{A_{2}|A_{1}B_{1}}^{|b_{2}} \, \nu_{A_{1}B_{1}}^{\frac{1}{2}}.
    \end{equation*}
    Let $\tilde{\sigma}_{B_{1}} \in \mcD(B_{1})$ be a state that satisfies
    \begin{equation*}
         - D_{\alpha}(\rho_{A_{1}B_{1}} \| \id_{A_{1}} \otimes \tilde{\sigma}_{B_{1}}) = H_{\alpha}^{\uparrow}(A_{1}|B_{1})_{\rho}.
    \end{equation*}
    Note that such a state is guaranteed to exist, and satisfies $\Pi(\tilde{\sigma}_{B_1}) \geq \Pi(\rho_{B_1})$~\cite{ML_2013}. Choosing this state in \eqref{eq:sup_chain} results in the lower bound
    \begin{equation*}
        H_{\alpha}^{\uparrow}(A_{1}A_{2}|B_{1})_{\rho^{|b_{2}}} \geq H_{\alpha}^{\uparrow}(A_{1}|B_{1})_{\rho} + H_{\alpha}^{\downarrow}(A_{2}|A_{1}B_{1})_{\nu^{|b_{2}}}, 
    \end{equation*}
    where 
    \begin{equation*}
        \nu_{A_{1}B_{1}} = \frac{\Big(\rho_{A_{1}B_{1}}^{\frac{1}{2}} \tilde{\sigma}_{B_{1}}^{-\alpha'} \rho_{A_{1}B_{1}}^{\frac{1}{2}} \Big)^{\alpha}}{\tr \Big[\Big(\rho_{A_{1}B_{1}}^{\frac{1}{2}} \tilde{\sigma}_{B_{1}}^{-\alpha'} \rho_{A_{1}B_{1}}^{\frac{1}{2}}\Big)^{\alpha}\Big]}, \ \ \ \text{and} \ \ \ \nu_{A_{1}A_{2}B_{1}}^{|b_{2}} = \nu_{A_{1}B_{1}}^{\frac{1}{2}} \,  \rho_{A_{2}|A_{1}B_{1}}^{|b_{2}} \, \nu_{A_{1}B_{1}}^{\frac{1}{2}}.
    \end{equation*}
    Substituting this lower bound into \eqref{eq:sum_ent} and noting that $H_{\alpha}^{\uparrow}(A_{1}|B_{1})_{\rho}$ is independent of $b_{2}$, we obtain
    \begin{equation}
         H_{\alpha}^{\uparrow}(A_{1}A_{2}|B_{1}B_{2})_{\rho} \geq  H_{\alpha}^{\uparrow}(A_{1}|B_{1})_{\rho} + \frac{\alpha}{1-\alpha} \log\Bigg( \sum_{b_{2}}p_{B_{2}}(b_{2})\, 2^{\frac{1-\alpha}{\alpha} H_{\alpha}^{\downarrow}(A_{2}|A_{1}B_{1})_{\nu^{|b_{2}}}}\Bigg). \label{eq:pen_eq}
    \end{equation}
    Next, we define the following cq-state classical on $B_{2}$:
    \begin{equation*}
        \nu_{A_{1}B_{1}A_{2}B_{2}} := \sum_{b_{2}}p_{B_{2}}(b_{2})\ketbra{b_{2}}{b_{2}}_{B_{2}} \otimes \nu_{A_{1}A_{2}B_{1}}^{|b_{2}},
    \end{equation*}
    According to \cref{def:partial_ent}, 
    \begin{equation*}
        \frac{\alpha}{1-\alpha} \log \Bigg(\sum_{b_{2}}p_{B_{2}}(b_{2})\, 2^{\frac{1-\alpha}{\alpha} H_{\alpha}^{\downarrow}(A_{2}|A_{1}B_{1})_{\nu^{|b_{2}}} }\Bigg) = H_{\alpha}(A_{2}|B_{2}^{\uparrow}(A_{1}B_{1})^{\downarrow})_{\nu}.
    \end{equation*}
    Furthermore, the state $\nu_{A_{1}B_{1}A_{2}B_{2}}$ is exactly of the form \eqref{eq:nu_state} from \cref{lem:nu_prop}, with $\tilde{\sigma}_{B_{1}}$ and $\rho_{A_{1}B_{1}A_{2}B_{2}}$ satisfying the conditions of the lemma, which implies
    \begin{equation*}
        \nu_{A_{2}B_{2}|A_{1}B_{1}} = \rho_{A_{2}B_{2}|A_{1}B_{2}}.
    \end{equation*}
    We can therefore take the infimum of \eqref{eq:pen_eq} over all cq-states $\nu$ that satisfy this condition, proving the claim. 
\end{proof}

By combining \cref{lem:newchain} with \cref{lem:chan_1}, we can lower bound the infimum in \cref{eq:new_chain1} by an infimum over all extensions of inputs to the quantum channel $\mcM$. Then using \cref{lem:pureOpt} (proven separately below) we can restrict this infimum to pure states.

\vspace{0.2cm}

\noindent \textbf{Corollary 4.4} (Chain rule)\textbf{.}  
    \textit{Let $\mcM : R \to A_{2}B_{2}$ be a quantum channel such that $B_{2}$ is classical and for all input states $\omega_{RR'} \in \mcD(RR')$ where $R'$ is an arbitrary register, the output state $\rho_{A_{2}B_{2}R'} = \mcM(\omega_{RR'})$ satisfies $\rho_{B_{2}R'} = \rho_{B_{2}} \otimes \omega_{R'}$. Let $\omega_{A_{1}B_{1}R} \in \mcD(A_{1}B_{1}R)$ be a quantum state. Then for any $\alpha \in (1,\infty)$,
    \begin{equation}
        H_{\alpha}^{\uparrow}(A_{1}A_{2}|B_{1}B_{2})_{\mcM(\omega)} \geq H_{\alpha}^{\uparrow}(A_{1}|B_{1})_{\omega} + \inf_{\omega'} H_{\alpha}(A_{2}|B_{2}^{\uparrow}\tilde{E}^{\downarrow})_{\mcM(\omega')}, \label{eq:app_chain_new}
    \end{equation}
    where the infimum is taken over all pure states and the system $\tilde{E}$ has the same dimension as $R$.} 

\begin{proof}
   Consider the state $\rho_{A_{1}B_{1}A_{2}B_{2}} = \mcM(\omega_{A_{1}B_{1}R})$. Due to the assumption on $\mcM$, 
   \begin{equation*}
       \rho_{A_{1}B_{1}B_{2}} = \omega_{A_{1}B_{1}} \otimes \rho_{B_{2}}. 
   \end{equation*}
   We therefore see $\rho_{A_{1}B_{1}A_{2}B_{2}}$ satisfies the conditions needed to apply \cref{lem:newchain}:
    \begin{equation*}
        H_{\alpha}^{\uparrow}(A_{1}A_{2}|B_{1}B_{2})_{\rho} \geq H_{\alpha}^{\uparrow}(A_{1}|B_{1})_{\rho} + \inf_{\nu} H_{\alpha}(A_{2}|B_{2}^{\uparrow}(A_{1}B_{1})^{\downarrow})_{\nu},
    \end{equation*}
    where the infimum ranges over all cq-states $\nu_{A_{1}B_{1}A_{2}B_{2}}$ classical on $B$ such that $\nu_{A_{2}B_{2}|A_{1}B_{1}} = \rho_{A_{2}B_{2}|A_{1}B_{2}}$ and $\Pi(\nu_{A_{1}B_{1}}) = \Pi(\rho_{A_{1}B_{1}}) = \Pi(\omega_{A_{1}B_{1}})$. Every feasible point $\nu_{A_{1}B_{1}A_{2}B_{2}}$ satisfies the conditions of \cref{lem:chan_1}, i.e., can be written as $\mcM(\omega'_{A_{1}B_{1}R})$ for some state $\omega'_{A_{1}B_{1}R}$. This allows us to lower bound the infimum:
    \begin{equation*}
         \inf_{\nu} H_{\alpha}(A_{2}|A_{1}^{\downarrow}B_{1}^{\downarrow}B_{2}^{\uparrow})_{\nu} \geq \inf_{\omega' \in \mcD(A_{1}B_{1}R)} H_{\alpha}(A_{2}|B_{2}^{\uparrow}(A_{1}B_{1})^{\downarrow})_{\mcM(\omega')} \geq \inf_{E} \inf_{\omega' \in \mcD(RE)} H_{\alpha}(A_{2}|B_{2}^{\uparrow}E^{\downarrow})_{\mcM(\omega')}.
    \end{equation*}
    By applying \cref{lem:pureOpt} we arrive at the desired statement.
\end{proof}

\begin{claim}
    Let $\mcM : R \to AC$ be a quantum channel. Then for any $\alpha \in (1,\infty)$ the infimum 
    \begin{equation*}
        \inf_{C}\inf_{\omega \in \mcD(RC)} H_{\alpha}(A|B^{\uparrow}C^{\downarrow})_{\mcM(\omega)} 
    \end{equation*}
    is achieved by a system $\tilde{C}$ with dimension equal to $R$ and a pure state on $R\tilde{C}$. \label{lem:pureOpt}
\end{claim}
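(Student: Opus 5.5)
The plan is the standard purification argument, applied conditionally on each value of the classical register $B$ and then lifted through \cref{def:partial_ent}. Here $\mcM:R\to AB$ with $B$ classical, as in \cref{cor:new_chain_1}. Fix an arbitrary system $C$ and state $\omega_{RC}$.

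First I purify $\omega_{RC}$ to $\ket{\psi}_{RCF}$. By \cref{lem:ent_prop}(ii), taking $\mcE=\tr_F$ acting on the $\downarrow$-register $CF$, we get
\[
H_{\alpha}(A|B^{\uparrow}(CF)^{\downarrow})_{\mcM(\psi)} \leq H_{\alpha}(A|B^{\uparrow}C^{\downarrow})_{\mcM(\omega)}.
\]
The crucial point is that $\mcM$ acts only on $R$, so tracing out $F$ commutes with $\mcM$. The classical distribution $p_B$ is unchanged, and for each $b$ the conditional states satisfy $\mcM(\psi)^{|b}_{ACF}\mapsto \mcM(\omega)^{|b}_{AC}$ under $\tr_F$. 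This means \cref{lem:ent_prop}(ii) applies to a cq-state of the required form. Hence it suffices to restrict the infimum to pure states on $R\tilde C$ with $\tilde C = CF$ of arbitrary dimension.

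Next I reduce the dimension. For a pure state $\ket{\psi}_{R\tilde C}$, the Schmidt decomposition gives an isometry $V:\mcH_{\tilde C'}\to\mcH_{\tilde C}$, with $\dim\tilde C'=\dim R$, and a pure state $\ket{\psi'}_{R\tilde C'}$ satisfying $(I_R\otimes V)\ket{\psi'}=\ket{\psi}$. Since $V$ acts on the conditioning register and commutes with $\mcM$, the isometric-invariance clause of \cref{lem:ent_prop}(ii) gives equality of the partially optimized entropies. Thus the infimum may be taken over pure states on $R\tilde C'$ with $\dim\tilde C' = \dim R$.

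Finally, attainment. The set of pure states on $R\tilde C'$ is compact. The map $\omega\mapsto H_\alpha(A|B^{\uparrow}\tilde C'^{\downarrow})_{\mcM(\omega)}$ is assembled from $p_B(b)$ and the $H^{\downarrow}_\alpha(A|\tilde C')_{\rho^{|b}}$ via \cref{def:partial_ent}, so I would argue it is lower semicontinuous, giving a minimizer by Weierstrass. This is the main obstacle. Both $p_B$ and $\rho^{|b}$ depend on $\omega$, and $H^{\downarrow}_\alpha$ can jump where supports change, since the conditioning marginal changes rank. I would handle this with the variational form of \cref{lem:var_ent}, writing the entropy as a supremum of $-D_\alpha$, and use lower semicontinuity of $D_\alpha$ for $\alpha>1$. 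Alternatively, the weaker reading that the infimum equals the infimum over such pure states already suffices for \cref{cor:new_chain_1}.
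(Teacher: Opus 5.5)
Your core argument matches the paper's proof and already proves the claim in the sense the paper uses, which is all \cref{cor:new_chain_1} needs: the infimum may be restricted to pure states on $R\tilde C$ with $\dim\tilde C=\dim R$. The steps are the same: purify $\omega_{RC}$, pass to the purification by monotonicity of the partially optimized entropy under discarding part of the $\downarrow$-register, then use the Schmidt decomposition and isometric invariance to shrink the conditioning register to dimension $\dim R$. Your justification of the first step via \cref{lem:ent_prop}(ii) with $\mcE=\tr_F$ is actually the right one. The paper instead cites (iii), which is stated only for a classical extra register, whereas the purifying register is quantum. In both versions the isometry from the $\dim R$ register into $CF$ needs $\dim CF\ge\dim R$. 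Choose the purifying register large enough, or use the isometry in the opposite direction; isometric invariance works either way.

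The attainment paragraph, however, does not work as sketched, and the paper does not attempt attainment. Lower semicontinuity of $D_\alpha$ only makes each $-D_\alpha$ upper semicontinuous. A supremum of upper semicontinuous functions need not be lower semicontinuous; over a compact index set such as $\Delta(B)$ it is again upper semicontinuous. That is the wrong direction for a minimizer to exist.

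If you want genuine attainment, the right ingredient is continuity, and the jump you worry about does not occur. For $\alpha>1$ the bound $\rho_{A\tilde C}\le d_A\,\id_A\otimes\rho_{\tilde C}$ gives $\rho_{\tilde C}^{\frac{1-\alpha}{2\alpha}}\rho_{A\tilde C}\rho_{\tilde C}^{\frac{1-\alpha}{2\alpha}}\le d_A\,\id_A\otimes\rho_{\tilde C}^{1/\alpha}$. So eigenvectors of $\rho_{\tilde C}$ with small eigenvalue contribute little, and $H^{\downarrow}_\alpha(A|\tilde C)$ is continuous in the state even when the rank of $\rho_{\tilde C}$ changes. Moreover each factor $2^{\frac{1-\alpha}{\alpha}H^{\downarrow}_\alpha(A|\tilde C)_{\rho^{|b}}}$ lies in $[d_A^{-(\alpha-1)/\alpha},d_A^{(\alpha-1)/\alpha}]$, so terms with $p_B(b)\to 0$ simply vanish. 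The closed form in \cref{def:partial_ent} is then continuous on the compact set of pure states on $R\tilde C'$, and Weierstrass gives a minimizer. Otherwise, just drop that paragraph.
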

\begin{proof}
    Let $(C,\omega_{RC})$ be any feasible point, and let $\omega_{RC\tilde{C}} \in \mcD(RCC')$ be any purification of $\omega_{RC}$ for a purifying register $C'$. By the subadditivity property $(iii)$ in \cref{lem:ent_prop},
    \begin{equation*}
        H_{\alpha}(A|B^{\uparrow}C^{\downarrow})_{\mcM(\omega)} \geq H_{\alpha}(A|B^{\uparrow}(CC')^{\downarrow})_{\mcM(\omega)},
    \end{equation*}
    hence for every feasible point $(C,\omega_{RC})$, there exists another feasible point $(CC',\omega_{RCC'})$ with non-increasing entropy for which the state is pure. Let $\tilde{C}$ be a register that has the same dimension as $R$, and let $\tilde{\omega}_{R\tilde{C}}$ be a purification of $\omega_{R} = \tr_{CC'}[\omega_{RCC'}]$ for the purifying system $\tilde{C}$. All purifications are equal up to an isometry on the purifying system, hence there exists an isometry $V$ from $\tilde{C}$ to $CC'$ such that $V \tilde{\omega}_{R\tilde{C}} V^{\dagger} =  \omega_{RCC'}$. Using the isometric invariance property $(ii)$ in \cref{lem:ent_prop} and the fact that $V$ commutes with $\mcM$,
    \begin{equation*}
         H_{\alpha}(A|B^{\uparrow}\tilde{C}^{\downarrow})_{\mcM(\tilde{\omega})} = H_{\alpha}(A|B^{\uparrow}(CC')^{\downarrow})_{V \mcM(\tilde{\omega}) V^{\dagger}} = H_{\alpha}(A|B^{\uparrow}(CC')^{\downarrow})_{ \mcM(V\tilde{\omega}V^{\dagger})} = H_{\alpha}(A|B^{\uparrow}(CC')^{\downarrow})_{ \mcM(\omega)}.
    \end{equation*}
    Therefore, for every feasible point $(C,\omega_{RC})$ there exists another feasible point $(\tilde{C},\tilde{\omega}_{R\tilde{C}})$ with non-increasing entropy where the state is pure and $\tilde{C}$ has the same dimension as $R$, completing the proof.   
\end{proof}

\section{Proofs for \cref{sec:REAT}: entropy accumulation} \label{app:REAT}

In this appendix, we provide a detailed proof of our new entropy accumulation theorem. Specifically, we consider the ``R\'enyi EAT'' (REAT) as presented by Arqand, Hahn and Tan~\cite{arqand2024}, and follow their proof structure closely. If desired, the REAT can be converted into a statement akin to the original EAT derived by Dupuis Fawzi and Renner~\cite{DFR}, relating the $n$-round smooth min-entropy to $n$ times the single round von Neumann entropy. However, by avoiding this conversion and working with R\'enyi entropies directly, the REAT has been shown to improve the finite size performance of DI protocols in particular~\cite{arqand2024,hahn2024,hahn2025}. 

\subsection{Definitions}

We begin with a modified definition of the ``$f$-weighted'' R\'enyi entropy~\cite{VHB25,arqand2024} needed for our work. 

\begin{definition}[$f$-weighted R\'enyi entropy]
    Let $C$ be a classical system that takes values in a finite alphabet $\mcC$, $\rho_{ABC} = \sum_{c\in \mcC}p_{C}(c)\ketbra{c}{c}_{C} \otimes \rho_{AB}^{|c}$ be a cq-state and $\sigma_{B} \in \mcD(B)$ be a state. Let $f \in \mbR^{|\mcC|}$, $f = [f_{c}]_{c \in \mcC}$ denote a real vector and $\alpha \in (1,\infty)$. Then the $f$-weighted R\'enyi entropy is defined as
    \begin{equation}
        H_{\alpha}^{f}\big(AC|B\big)_{\rho|\sigma} := \frac{1}{1 - \alpha} \log \Bigg(\sum_{c\in \mcC} p_{C}(c)^{\alpha} \, 2^{(\alpha-1)\big[ f_{c} + D_{\alpha}(\rho_{AB}^{|c}\|\id_{A} \otimes \sigma_{B}) \big]} \Bigg)
    \end{equation}
    if $\text{supp}(\rho_{B}) \subseteq \text{supp}(\sigma_{B})$ and $+\infty$ otherwise.\label{def:f_div} 
\end{definition}
\noindent If $\sigma_{B} = \tr_{AC}[\rho_{ABC}]$, \cref{def:f_div} reduces to a special case of~\cite[Definition 4.1]{arqand2024}. The notation $\rho|\sigma$ is chosen to reflect similar definitions in the literature, $H_{\alpha}(A|B)_{\rho|\sigma} := -D_{\alpha}(\rho_{AB}\|\id_{A}\otimes \sigma_{B})$ if $\text{supp}(\rho_{B}) \subseteq \text{supp}(\sigma_{B})$ and $+\infty$ otherwise (see, e.g.,~\cite[Definition 3]{TCR}), which is what \cref{def:f_div} reduces to if $f = 0$. The $f$-weighted R\'enyi entropy satisfies a number of important properties, many of which have been proving in previous works~\cite{VHB25,arqand2024}. We prove analogous versions of these properties for our modified definition in \cref{sec:f_prop}. 

We will also need to define a ``read-and-prepare'' channel, and state some useful properties.

\begin{definition}[read-and-prepare channel]
    Let $C$ denote a classical register that takes values in a finite alphabet $\mcC$, and $\rho_{CQ} = \sum_{c} p_{C}(c) \ketbra{c}{c}_{C} \otimes \rho_{Q}^{|c}$ be a cq-state. Then a channel $\mcR : C \to CD$ is a read-and-prepare channel if it is of the form
    \begin{equation*}
        \mcR(\rho_{CQ}) = \sum_{c \in \mcC} p_{C}(c) \ketbra{c}{c}_{C} \otimes \rho_{Q}^{|c} \otimes \tau(c)_{D},
    \end{equation*}
    where $\{\tau(c)_{D}\}_{c \in \mcC} \subset \mcD(D)$.  \label{def:randp}
\end{definition}

\begin{lemma}[\cite{arqand2024} Lemmas 4.1 and 4.2]
    Let $\mcC$ be a finite alphabet and $f \in \mbR^{|\mcC|}$ be any real vector, and let $M > 0$ be a constant such that $M - f_{c} > M/2 > 0$ for all $c \in \mcC$. Then there always exists a read-and-prepare channel $\mcR:C \to CD$ according to \cref{def:randp} satisfying
    \begin{equation*}
        H_{\alpha}(D)_{\tau(c)} \in \big[ M - f_{c} , M - f_{c} + 2^{-\frac{M}{2}}\log(e) \big]
    \end{equation*}
    for all $\alpha \in [0,\infty]$ and all $c \in \mcC$. Furthermore, for any fixed $\alpha \in [0,\infty]$, there always exists a read-and-prepare channel $\mcR:C \to CD$ according to \cref{def:randp} satisfying
    \begin{equation*}
        H_{\alpha}(D)_{\tau(c)} = M - f_{c} 
    \end{equation*}
    for all $c \in \mcC$. Moreover, every state $\tau(c)_{D}$ can be classical. \label{lem:prep}
\end{lemma}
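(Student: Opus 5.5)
The plan is to build each $\tau(c)_D$ explicitly as a classical distribution on a finite alphabet, so that $\mcR$ is automatically a read-and-prepare channel in the sense of \cref{def:randp}. Every state $\tau(c)_D$ is then classical, which gives the last assertion. Throughout, write $x_c := M - f_c$. By hypothesis $x_c > M/2 > 0$, so $2^{x_c} > 1$.

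\emph{Uniform-in-$\alpha$ version.} For each $c$ I would set $N_c := \lceil 2^{x_c} \rceil \geq 2$. Let $\tau(c)_D$ be the uniform distribution on $N_c$ of the symbols of $D$, taking $|D| = \max_c N_c$. A uniform distribution has $H_\alpha(D)_{\tau(c)} = \log N_c$ for every $\alpha \in [0,\infty]$, which removes any dependence on $\alpha$. It then suffices to check that $\log N_c$ lies in the required interval:
\begin{equation*}
    x_c \leq \log N_c \leq \log(2^{x_c}+1) = x_c + \log(1+2^{-x_c}) \leq x_c + 2^{-x_c}\log(e) \leq x_c + 2^{-\frac{M}{2}}\log(e),
\end{equation*}
using $\ln(1+y)\leq y$ and $x_c > M/2$.

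\emph{Exact version for a fixed $\alpha$.} If $2^{x_c}$ is an integer, the uniform distribution above already gives equality. Otherwise, let $N_c = \lceil 2^{x_c}\rceil$, so that $\log(N_c-1) < x_c < \log N_c$. Consider the path of distributions
\begin{equation*}
    p_t = (1-t)\,U_{N_c-1} + t\,U_{N_c}, \qquad t \in [0,1],
\end{equation*}
where $U_k$ is uniform on the first $k$ symbols. For $\alpha \in (0,\infty]$ the map $t \mapsto H_\alpha(p_t)$ is continuous:
\begin{itemize}
    \item for $\alpha \neq 1,\infty$ it is a continuous function of $\sum_d p_t(d)^\alpha > 0$;
    \item for $\alpha = 1$ it is the Shannon entropy, which is continuous;
    \item for $\alpha = \infty$ it equals $-\log\max_d p_t(d)$, which is also continuous.
\end{itemize}
At the endpoints the values are $\log(N_c-1)$ and $\log N_c$. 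The intermediate value theorem then gives a $t_c$ with $H_\alpha(p_{t_c}) = x_c$, and I would set $\tau(c) := p_{t_c}$.

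The only delicate step is $\alpha = 0$ in the exact statement. There $H_0 = \log|\mathrm{supp}|$ takes only values $\log k$ with $k \in \mathbb{N}$, so continuity fails and exact matching is impossible unless $2^{x_c}$ is an integer. I expect this to need a convention, either reading $H_0$ as the limit $\alpha \to 0^+$ or excluding $\alpha = 0$ from the exact claim. The interval version covers $\alpha = 0$ without difficulty, since it uses uniform distributions only.
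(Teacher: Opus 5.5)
Your argument is correct and is essentially the construction behind the cited result; the paper gives no proof of its own and imports the lemma from \cite{arqand2024}. Maximally mixed classical states on $\lceil 2^{M-f_c}\rceil$ symbols give the $\alpha$-independent bracket via $\log(1+2^{-x})\leq 2^{-x}\log(e)$ with $x=M-f_c>M/2$, and continuity together with the intermediate value theorem gives the exact version for each fixed $\alpha\in(0,\infty]$. You are also right that the exact claim fails at $\alpha=0$ unless $2^{M-f_c}\in\mathbb{N}$. However, reading $H_0$ as the limit $\alpha\to 0^+$ does not rescue it, since $\lim_{\alpha\to 0^+}H_\alpha(D)_{\tau}=\log\mathrm{rank}(\tau)$ is again the logarithm of an integer. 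The exact claim must therefore simply exclude $\alpha=0$, which is harmless because the paper never needs that case.
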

\noindent Note that a read-and-prepare channel $\mcR$ according to \cref{def:randp} does not disturb the state, i.e., $\tr_{D} \circ \mcR = \mcI$, where $\mcI$ is the identity channel. 

\subsection{Reduction to single round quantities}
In the first step of the proof, we reduce the $n$ round R\'enyi entropy of the output state $\tr_{R_{n}} \circ \mcM_{n} \circ \cdots \circ \mcM_{1}[\omega_{R_{0}E}]$, conditioned on an event $\Omega$ on $C^n$, to a sum of entropy contributions from each round. This step relies on an $n$ fold application of the chain rule in \cref{lem:newchain}. Importantly, we take into account a trade-off function $f$ (that can be chosen later) by applying a read-and-prepare channel to the output state. For each round $i\in \{1,...,n\}$, this channel reads the value in the classical register $C_i=c_i$, and encodes the number $f_{c_i}$ as the entropy of a state prepared in a new register $D_{i}$. This is the main departure from the EAT proof sketched in \cref{sec:markov}, where testing was not taken into account. 

The proof of \cref{lem:single} borrows ideas from~\cite[Proposition 4.5]{DFR}  and~\cite[Corollary 5.1]{arqand2024}. Furthermore, a direct comparison can be made to~\cite[Corollary 5.1]{arqand2024}, where the main difference is the additional supremum over $q_{B_{i}} \in \Delta(B_{i})$ in \cref{lem:single} inherited from our new chain rule. 

\begin{lemma}[Reduction to single round quantities]
     Let $\omega_{R_{0}E} \in \mcD(R_{0}E)$ be a state, $\{\mcM_{i}\}_{i=1}^{n}$ be a sequence of DI channels according to \cref{def:DIchan} and $\rho_{A^{n}C^{n}B^{n}E} = [\tr_{R_{n}} \circ \mcM_{n} \circ \cdots \mcM_{1}](\omega_{R_{0}E})$. Let $\Omega \subset \mcC^{n}$ be an event on $C^{n}$, $f = [f_{c}]_{c \in \mcC} \in \mbR^{|\mcC|}$ be any real vector and $\Delta_{\Omega} \subset \Delta(C)$ be any compact convex subset such that $\mathsf{freq}(c^{n}) \in \Delta_{\Omega}$ for all $c^{n} \in \Omega$. Then for all $\alpha \in (1,\infty)$,
     \begin{multline}
         H_{\alpha}^{\uparrow}(A^{n}C^{n}|B^{n}E)_{\rho^{|\Omega}} \geq  \sum_{i=1}^{n} \inf_{\omega \in \mcD(R_{i-1}\tilde{E}_{i-1})} \inf_{v_{C} \in \Delta_{\Omega}} \sup_{q_{B_{i}} \in \Delta(B_{i})} \Bigg( v_{C} \cdot f + H_{\alpha}^{f}(A_{i}C_{i}|B_{i}\tilde{E}_{i-1})_{\mcM_{i}(\omega)|\sigma} \Bigg) \\ - \frac{\alpha}{\alpha-1} \log \frac{1}{p_{\Omega}}, \label{eq:single_round}
     \end{multline}
     where $\sigma_{B_{i}\tilde{E}_{i-1}} = \sum_{b}q_{B_{i}}(b)\ketbra{b}{b}_{B_{i}} \otimes \rho_{\tilde{E}_{i-1}}^{|b}$ and $\{\rho_{\tilde{E}_{i-1}}^{|b}\}_{b}$ is the set of conditional states defined via  $\tr_{A_{i}C_{i}}[\mcM_{i}(\omega_{R_{i-1}\tilde{E}_{i-1}})] = \sum_{b}p_{B_{i}}(b)\ketbra{b}{b}_{B_{i}} \otimes \rho_{\tilde{E}_{i-1}}^{|b}$.
     \label{lem:single}  
\end{lemma}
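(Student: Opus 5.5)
We follow the structure of \cite[Corollary 5.1]{arqand2024} and \cite[Proposition 4.5]{DFR}, with the chain rule of \cref{lem:newchain} in place of the one used there.

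\textbf{Step 1: remove the conditioning on $\Omega$ and encode $f$.} Apply a read-and-prepare channel $\mcR_i : C_i \to C_i D_i$ from \cref{lem:prep} in every round. We choose it so that $H_\alpha(D_i)_{\tau(c)} = M - f_c$ exactly for the fixed $\alpha$; this is possible because $D_i$ is classical. The standard argument (e.g.\ \cite[Lemma B.5]{DFR} or the analogous step in \cite{arqand2024}) gives two facts. First, passing from $\rho^{|\Omega}$ to the unconditioned subnormalized state costs at most $\frac{\alpha}{\alpha-1}\log\frac{1}{p_\Omega}$. Second, on the event $\Omega$ the added registers $D^n$ contribute at most $\sum_i (M - f_{c_i}) \leq nM - n\,\min_{v\in\Delta_\Omega} v\cdot f$, using $\mathsf{freq}(c^n)\in\Delta_\Omega$. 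Together these give
\[
H_\alpha^{\uparrow}(A^nC^n|B^nE)_{\rho^{|\Omega}} \geq H_\alpha^{\uparrow}(A^nC^nD^n|B^nE)_{\mcR(\rho)} - nM + n\inf_{v_C\in\Delta_\Omega} v_C\cdot f - \tfrac{\alpha}{\alpha-1}\log\tfrac{1}{p_\Omega}.
\]
The $v_C\cdot f$ term is placed inside each round's infimum at the end.

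\textbf{Step 2: peel off rounds.} Apply \cref{lem:newchain} $n$ times. At step $i$ we set $A_1\mapsto A^{i-1}C^{i-1}D^{i-1}$, $B_1\mapsto B^{i-1}E$, $A_2\mapsto A_iC_iD_i$ and $B_2\mapsto B_i$. The hypothesis $\rho_{A_1B_1B_2}=\rho_{A_1B_1}\otimes\rho_{B_2}$ holds by \cref{def:DIchan}, since the read-and-prepare channels act locally on $C$ registers and preserve this independence. Each step yields a term $H_\alpha(A_iC_iD_i|B_i^{\uparrow}(A^{i-1}C^{i-1}D^{i-1}B^{i-1}E)^{\downarrow})_\nu$. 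Here $\nu$ has the same conditional operator as $\rho$ given the past, so \cref{lem:chan_1} (extended with $R_{i-1}$ kept as input) writes $\nu = \mcR_i\circ\mcM_i(\omega')$. Then \cref{lem:pureOpt} reduces the past to a purifying register $\tilde E_{i-1}$ of dimension $|R_{i-1}|$.

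\textbf{Step 3: convert to the $f$-weighted entropy.} By \cref{lem:var_ent}, each per-round term equals
\[
\sup_{q_{B_i}} -D_\alpha\big(\mcR_i\mcM_i(\omega)\,\big\|\,\id\otimes\sigma_{B_i\tilde E_{i-1}}\big)
\]
with $\sigma$ as in the statement. The divergence decomposes over the classical register $C_i$ via \cref{eq:cq_div}. Since $D_i$ is in a product state $\tau(c)$ conditioned on $c$, it contributes the additive factor $-(1-\alpha)(M-f_c)$ inside the exponent. This turns the expression into exactly $M + H^f_\alpha(A_iC_i|B_i\tilde E_{i-1})_{\mcM_i(\omega)|\sigma}$ of \cref{def:f_div}. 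The $M$'s cancel with the $-nM$, the sum over rounds gives \eqref{eq:single_round}, and the constant $\inf_{v_C} v_C\cdot f$ is absorbed into the round-wise infimum, which can only lower the bound.

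\textbf{Main obstacle.} The delicate step is the first. We must handle the conditioning on $\Omega$ for $H^\uparrow_\alpha$, which needs the optimized-entropy version of the conditioning lemma. We must also check that the $D$ registers with exact entropies produce the weighting with the correct sign and order of infimum and supremum. Step 2 additionally needs care: the supremum over $q_{B_i}$ sits inside the infimum over states because \cref{lem:var_ent} applies after the state has been fixed.
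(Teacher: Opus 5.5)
Your proposal is correct and follows the paper's proof essentially step for step. You encode $f$ with read-and-prepare channels on the conditioned state and bound the $D^n$ contribution via $\mathsf{freq}(c^n)\in\Delta_\Omega$. You then drop the conditioning with \cite[Lemma B.5]{DFR} and peel off rounds with the new chain rule; your combination of \cref{lem:newchain}, \cref{lem:chan_1} and \cref{lem:pureOpt} is exactly how \cref{cor:new_chain_1} is proved, and the paper simply invokes that corollary. Finally you use \cref{lem:var_ent} with the cq-decomposition to recover $M + H_\alpha^f$. One small slip: in the exponent $(\alpha-1)D_\alpha(\cdot)$ the register $D_i$ adds $(1-\alpha)(M-f_c)$ rather than $-(1-\alpha)(M-f_c)$, though your final identity $-D_\alpha = M + H^f_\alpha$ is the correct one.
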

\begin{proof}
    Using the formula for the R\'enyi divergence between cq-states (cf. \cref{eq:cq_div}), we can write 
    \begin{equation}
        H_{\alpha}^{\uparrow}(A^{n}C^{n}|B^{n}E)_{\rho^{|\Omega}} = \sup_{\sigma \in \mcD(B^{n}E)} \frac{1}{1-\alpha} \log \Bigg(\sum_{C^{n} \in \Omega} \Big(\frac{p_{C^{n}}(c^{n})}{p_{\Omega}} \Big)^{\alpha} 2^{(\alpha - 1)D_{\alpha}\big( \rho_{A^{n}B^{n}E}^{|c^{n}} \| \id_{A^{n}} \otimes \sigma_{B^{n}E}\big)} \Bigg). \label{eq:decEnt}
    \end{equation}
    Consider any vector $f \in \mbR^{|\mcC|}$, and let $M > 0$ be such that $M - f_{c} > 0$ for all $c \in \mcC$. For every $c \in \mcC$, let $\tau(c)_{D}$ be any state such that $H_{\alpha}(D)_{\tau(c)} = M - f_{c}$. Such states are guaranteed to exist according to \cref{lem:prep}. Let $\mcR_{i}:C_{i} \to C_{i}D_{i}$ be a read-and-prepare channel for the states $\{\tau(c)\}_{c\in \mcC}$ according to \cref{def:randp}, and let $\bar{\rho}^{|\Omega} = [\mcR_{n} \circ \cdots \circ \mcR_{1}](\rho^{|\Omega})$. Using the additivity of the sandwiched R\'enyi divergence under tensor products~\cite{Tomamichel_2016} and the fact that $\bar{\rho}^{|c^{n}}_{A^{n}D^{n}B^{n}E} = \rho^{|c^{n}}_{A^{n}B^{n}E} \otimes \tau(c^{n})_{D^{n}}$, we can break up every exponent in the summation of \eqref{eq:decEnt},
    \begin{equation}
    \begin{aligned}
        - D_{\alpha}\big(\bar{\rho}^{|c^{n}}_{A^{n}D^{n}B^{n}E} \, \| \, \id_{A^{n}D^{n}} \otimes \sigma_{B^{n}E}\big) &= -D_{\alpha}\big(\rho^{|c^{n}}_{A^{n}B^{n}E} \| \id_{A^{n}} \otimes \sigma_{B^{n}E}\big) - D_{\alpha}(\tau(c^{n}) \| \id_{D^{n}}) \\
        &= -D_{\alpha}\big(\rho^{|c^{n}}_{A^{n}B^{n}E} \| \id_{A^{n}} \otimes \sigma_{B^{n}E}\big) + H_{\alpha}(D^{n})_{\tau(c^{n})}. \label{eq:div_bnd1}
    \end{aligned}
    \end{equation}
    Furthermore,
    \begin{equation*}
        \begin{aligned}
            H_{\alpha}(D_{1}^{n})_{\tau(c_{1}^{n})}
            &= \sum_{i=1}^{n}H_{\alpha}(D_{i})_{\tau(c_{i})} \\
            &= \sum_{i=1}^{n}\big( M  - f_{c_{i}}\big) \\
            &= n\,M - n\sum_{c}\frac{\{i \ : \ c_{i}=x\}}{n} f_{c} \\
            &= n\,M  - n\sum_{c}\mathsf{freq}(c_{1}^{n})(c) f_{c} .
        \end{aligned}
    \end{equation*}
    Inserting this bound into \eqref{eq:div_bnd1} we arrive at
    \begin{equation}
         -D_{\alpha}\big(\rho^{|c^{n}}_{A^{n}B^{n}E} \| \id_{A^{n}} \otimes \sigma_{B^{n}E}\big) = - D_{\alpha}\big(\rho^{|c^{n}}_{A^{n}D^{n}B^{n}E} \, \| \, \id_{A^{n}D^{n}} \otimes \sigma_{B^{n}E}\big) - n \, M  + n\sum_{c}\mathsf{freq}(c_{1}^{n})(c) f_{x} \label{eq:lb4}
    \end{equation}
    Thus, by combining \eqref{eq:lb4} and \eqref{eq:decEnt} we obtain the following inequalities: 
    \begin{equation}
    \begin{aligned}
        H_{\alpha}^{\uparrow}(A^{n}C^{n}|B^{n}E)_{\rho^{|\Omega}} &= \sup_{\sigma \in \mcD(B^{n}E)} \frac{1}{1-\alpha} \log \Bigg(\sum_{c^{n} \in \Omega} \Big(\frac{p_{C^{n}}(c^{n})}{p_{\Omega}} \Big)^{\alpha} \\ & \hspace{1.5cm} \cdot 2^{(1 - \alpha)\big[ -D_{\alpha}\big(\bar{\rho}^{|c^{n}}_{A^{n}D^{n}B^{n}E} \, \| \, \id_{A^{n}D^{n}} \otimes \sigma_{B^{n}E}\big) - nM + n\sum_{c}\mathsf{freq}(c^{n})(c) f_{c} \big]}\Bigg) \\
        &\geq - n \, M  + n \min_{c^{n} \in \Omega} \sum_{c\in \mcC}\mathsf{freq}(c^{n})(c) \, f_{c} \\ 
        & \hspace{1.5cm} + \sup_{\sigma \in \mcD(B^{n}E)} \frac{1}{1-\alpha} \log \Bigg(\sum_{c^{n} \in \Omega} \Big(\frac{p_{C^{n}}(c^{n})}{p_{\Omega}} \Big)^{\alpha} 2^{(\alpha - 1)D_{\alpha}\big(\bar{\rho}^{|c^{n}}_{A^{n}D^{n}B^{n}E} \, \| \, \id_{A^{n}D^{n}} \otimes \sigma_{B^{n}E}\big)} \Bigg) \\
        &= - n \, M + n\min_{c^{n} \in \Omega} \sum_{c\in \mcC}\mathsf{freq}(c^{n})(c) \, f_{c} + H_{\alpha}^{\uparrow}(A^{n}C^{n}D^{n}|B^{n}E)_{\bar{\rho}^{|\Omega}} \\
        & \geq - n \, M  + n\inf_{v_{C} \in \Delta_{\Omega}} v_{C} \cdot f + H_{\alpha}^{\uparrow}(A^{n}C^{n}D^{n}|B^{n}E)_{\bar{\rho}^{|\Omega}}.
    \end{aligned} \label{eq:lb1}
    \end{equation}
    For the second line, we took the minimum over $c^{n} \in \Omega$ of the exponent to separate it from the sum, and for the final line we used the fact that for every $c^{n}$, the associated frequency distribution lies in $\mathcal{P}_{\Omega}$. We next apply \cite[Lemma B.5]{DFR} to the state $\bar{\rho} = (1-p_{\Omega})\bar{\rho}^{|\Omega_{\perp}} + p_{\Omega} \bar{\rho}^{|\Omega}$, where $\Omega_{\perp} \cup \Omega = \mcC^{n}$, to lower bound
    \begin{equation}
    \begin{aligned}
        H_{\alpha}^{\uparrow}(A^{n}C^{n}D^{n}|B^{n}E)_{\bar{\rho}^{|\Omega}} &\geq H_{\alpha}^{\uparrow}(A^{n}C^{n}D^{n}|B^{n}E)_{\bar{\rho}} - \frac{\alpha}{\alpha-1} \log \frac{1}{p_{\Omega}}.
    \end{aligned} \label{eq:no_cond}
    \end{equation}

    We are now ready to apply the chain rule in \cref{cor:new_chain_1}. For $i$ decreasing from $n$ to $2$ consider the following substitutions
    \begin{equation*}
        \begin{gathered}
            \mcM \mapsto \mcR_{i} \circ \mcM_{i}, \ 
            R \mapsto R_{i-1}, \ 
            A_{1} \mapsto A^{i-1}C^{i-1}D^{i-1}, \ 
            B_{1} \mapsto B^{i-1}E, \ 
            A_{2} \mapsto A_{i}C_{i}D_{i}, \ 
            B_{2} \mapsto B_{i} \\
            \omega \mapsto [ \mcR_{i-1} \circ \cdots \circ \mcR_{1} \circ \mcM_{i-1} \circ \cdots \circ \mcM_{1}](\omega_{R_{0}E}) =: \bar{\rho}^{(i-1)}_{A^{i-1}C^{i-1}B^{i-1}R_{i-1}E}.
        \end{gathered}
    \end{equation*}
    With this substitution, \cref{eq:app_chain_new} reads
    \begin{multline*}
        H_{\alpha}^{\uparrow}(A^{i}C^{i}D^{i}|B^{i}E)_{[\mcR_{i} \circ \mcM_{i}](\bar{\rho}^{(i-1)})} \geq H_{\alpha}^{\uparrow}(A^{i-1}C^{i-1}D^{i-1}|B^{i-1}E)_{\bar{\rho}^{(i-1)}} \\+ \inf_{\omega\in \mcD(R_{i-1}\tilde{E}_{i-1})} H_{\alpha}(A_{i}C_{i}D_{i}|B_{i}^{\uparrow}\tilde{E}^{\downarrow}_{i-1})_{[\mcR_{i} \circ \mcM_{i}](\omega')}, \label{eq:applychain}
    \end{multline*}
    where the infimum is taken over all pure states $\omega'_{R_{i-1}\tilde{E}_{i-1}}$ with $\tilde{E}_{i-1}$ and $R_{i-1}$ having the same dimension. For the above relation to hold, we we need to verify that the channels $\bar{\mcM}_{i} = \mcR_{i} \circ \mcM_{i}$ form a set of DI channels according to \cref{def:DIchan}, namely, that for every $i \in \{1,...,n\}$ and any input state $\omega_{R_{i-1}R'}$, the output state $\rho_{A_{i}C_{i}D_{i}B_{i}R'} = \bar{\mcM}_{i}(\omega_{R_{i-1}R'})$ satisfies $\rho_{B_{i}R'} = \rho_{B_{i}} \otimes \omega_{R'}$. This can be verified from the fact that $\tr_{D_{i}} \circ \mcR_{i} = \mcI$ and $\mcM_{i}$ form a sequence of DI channels by assumption:
    \begin{equation*}
        \rho_{B_{i}R'} = [\tr_{A_{i}C_{i}D_{i}} \circ \mcR_{i} \circ \mcM_{i}](\omega_{RR'}) =  [\tr_{A_{i}C_{i}}  \circ \mcM_{i}](\omega_{RR'}) = \rho_{B_{i}} \otimes \omega_{R'}.
    \end{equation*}
    We can therefore apply the chain rule $n-1$ times to the right hand side of \eqref{eq:no_cond} to obtain
    \begin{equation*}
        H_{\alpha}^{\uparrow}(A^{n}C^{n}D^{n}|B^{n}E)_{\bar{\rho}} \geq \sum_{i=1}^{n} \inf_{\omega \in \mcD(R_{i-1}\tilde{E}_{i-1})} H_{\alpha}(A_{i}C_{i}D_{i}|B_{i}^{\uparrow}\tilde{E}^{\downarrow}_{i-1})_{[\mcR_{i} \circ \mcM_{i}](\omega)}.
    \end{equation*}
    Inserting this back into \eqref{eq:lb1}, we arrive at the lower bound
    \begin{multline}
        H_{\alpha}^{\uparrow}(A^{n}C^{n}|B^{n}E)_{\rho^{|\Omega}} \geq - n  \, M + n\inf_{v_{C} \in \Delta_{\Omega}} v_{C} \cdot f \\ + \sum_{i=1}^{n} \inf_{\omega \in \mcD(R_{i-1}\tilde{E}_{i-1})} H_{\alpha}(A_{i}C_{i}D_{i}|B_{i}^{\uparrow}\tilde{E}^{\downarrow}_{i-1})_{[\mcR_{i} \circ \mcM_{i}](\omega)} - \frac{\alpha}{\alpha-1} \log \frac{1}{p_{\Omega}}. \label{eq:lb2}
    \end{multline}
    
    We now turn our attention to removing the registers $D_{i}$. Using \cref{lem:var_ent}, each entropy in the summation admits a variational expression  
    \begin{equation}
    \begin{aligned}
        H_{\alpha}(A_{i}C_{i}D_{i}|B_{i}^{\uparrow}\tilde{E}^{\downarrow}_{i-1})_{[\mcR_{i} \circ \mcM_{i}](\omega)} = \sup_{q_{B_{i}} \in \Delta(B_{i})} -D_{\alpha}\Big( [\mcR_{i} \circ \mcM_{i}](\omega_{R_{i-1}\tilde{E}_{i-1}}) \Big \| \id_{A_{i}C_{i}D_{i}} \otimes \sigma_{B_{i}\tilde{E}} \Big)
    \end{aligned} \label{eq:varDiv}
    \end{equation}
    where the states inside the divergence are of the form 
    \begin{equation*}
         \begin{aligned}
             [\mcR_{i} \circ \mcM_{i}](\omega_{R_{i-1}\tilde{E}_{i-1}}) &= \sum_{b} p_{B_{i}}(b) \, \ketbra{b}{b}_{B_{i}} \otimes \mcR_{i}[\rho_{A_{i}C_{i}\tilde{E}_{i-1}}^{|b}],  \ \ \ \text{and} \\
             \sigma_{B_{i}\tilde{E}} &= \sum_{b}q_{B_{i}}(b) \, \ketbra{b}{b}_{B_{i}} \otimes \rho_{\tilde{E}_{i-1}}^{|b}, 
         \end{aligned}
     \end{equation*}
     with $\rho_{\tilde{E}_{i-1}}^{|b} = [\tr_{A_{i}C_{i}D_{i}} \circ \mcR_{i}](\rho_{A_{i}C_{i}\tilde{E}_{i-1}}^{|b}) = \tr_{A_{i}C_{i}}[\rho_{A_{i}C_{i}\tilde{E}_{i-1}}^{|b}]$. Since the system $C_{i}$ is classical, we can also write
     \begin{equation*}
         [\mcR_{i} \circ \mcM_{i}](\omega_{R_{i-1}\tilde{E}_{i-1}}) = \sum_{c} p_{C_{i}}(c) \, \mcR_{i}[\ketbra{c}{c}_{C_{i}}] \otimes \rho_{A_{i}B_{i}\tilde{E}_{i-1}}^{|c} = \sum_{c} p_{C_{i}}(c) \, \ketbra{c}{c}_{C_{i}} \otimes \rho_{A_{i}B_{i}\tilde{E}_{i-1}}^{|c} \otimes \tau(c)_{D_{i}}.
     \end{equation*}
     Applying the formula for cq-states to the divergence in \cref{eq:varDiv}, we have
     \begin{equation}
     \begin{aligned}
         D_{\alpha}\Big( [\mcR_{i} \circ \mcM_{i}]&(\omega_{R_{i-1}\tilde{E}_{i-1}}) \Big \| \id_{A_{i}C_{i}D_{i}} \otimes \sigma_{B_{i}\tilde{E}} \Big) \\ &= \frac{1}{\alpha - 1} \log \Bigg( \sum_{c} p_{C_{i}}(c)^{\alpha} 2^{(\alpha - 1)D_{\alpha}\big( \rho_{A_{i}B_{i}\tilde{E}_{i-1}}^{|c} \otimes \tau(c)_{D_{i}} \| \id_{A_{i}D_{i}} \otimes \sigma_{B_{i}\tilde{E}_{i-1}}\big)}\Bigg) \\
         &= \frac{1}{\alpha - 1} \log \Bigg( \sum_{c} p_{C_{i}}(c)^{\alpha} 2^{(\alpha - 1)\big[D_{\alpha}\big( \rho_{A_{i}B_{i}\tilde{E}_{i-1}}^{|c} \| \id_{A_{i}} \otimes \sigma_{B_{i}\tilde{E}_{i-1}}\big) -M + f_{c}\big]}\Bigg)\\
         &= -H_{\alpha}^{f}(A_{i}C_{i}|B_{i}\tilde{E}_{i-1})_{\mcM_{i}(\omega)|\sigma} + M.
    \end{aligned} \label{eq:rec_fdiv}
\end{equation}
 By inserting the upper bound \eqref{eq:rec_fdiv} into the variational expression \eqref{eq:varDiv}, and using the resulting expression to lower bound \eqref{eq:lb2}, we obtain the final lower bound
\begin{multline*}
        H_{\alpha}^{\uparrow}(A^{n}C^{n}|B^{n}E)_{\rho^{|\Omega}}  \geq - n \, M + n\inf_{v_{C} \in \Delta_{\Omega}} v_{C} \cdot f \\ + \sum_{i=1}^{n} \inf_{\omega \in \mcD(R_{i-1}\tilde{E}_{i-1})} \sup_{q_{B_{i}} \in \Delta(B_{i})}  H_{\alpha}^{f}(A_{i}C_{i}|B_{i}\tilde{E}_{i-1})_{\mcM_{i}(\omega)|\sigma} + n \,M - \frac{\alpha}{\alpha-1} \log \frac{1}{p_{\Omega}},
\end{multline*}
completing the proof. 
\end{proof}

\subsection{Optimizing over trade-off functions}
In the next step of the proof, we choose concrete values of the trade-off function $f$ for the event $\Omega$. Before doing so, we review the approach taken in the original EAT proof. In~\cite{DFR}, $f$ is taken to be a min-tradeoff function. Specifically, each entry $f_{c}$ is chosen such that for all $i\in \{1,...,n\}$, and all input states $\omega_{R_{i-1}\tilde{E}}$, 
\begin{equation*}
    H(A_iC_i|B_i\tilde{E})_{\mcM_{i}(\omega)} \geq \sum_{c} f_{c}\cdot p_{C}(c),
\end{equation*}
where $\mcM_{i}(\omega)_{C_i} = \sum_{c}p_{C}(c)\ketbra{c}{c}$ is the distribution on $C_{i}$ induced by $\mcM_{i}(\omega)$. Such a function can be constructed from the tangent of any convex lower bound on the function 
\begin{equation*}
    p_{C} \mapsto \min_{i\in\{1,...,n\}}\inf_{\substack{\omega \in \mcD(R_{i-1}\tilde{E}) \\
    \text{s.t.} \, \mcM_{i}(\omega)_{C_i} = \sum_{c}p_{C}(c)\ketbra{c}{c}}} H(A_iC_i|B_i\tilde{E})_{\mcM_{i}(\omega)},
\end{equation*}
where $p_{C}$ is shorthand for a probability distribution $\{p_{C}(c)\}_{c\in \mcC}$. With this choice, we can take analogues steps to those in the proof of~\cite[Proposition 4.5.]{DFR}, and obtain bound of roughly $n$ times the worst can von Neumann entropy compatible with the abort condition (albeit with a worse second order term). For illustration purposes, we outline this now. 

Rather than employing the lower bound in \cref{eq:rec_fdiv}, we proceed with the entropy $H_{\alpha}(ACD|B\tilde{E})_{\mcM(\omega)}$ in \cref{eq:varDiv} (taking $q_{B} = p_{B}$, where $p_{B}$ is the distribution on $B$ induced by $\mcM(\omega)$). The proof then roughly proceeds as follows (omitting the index $i$ to ease notation):
\begin{equation}
    \begin{aligned}
        H_{\alpha}(ACD|B\tilde{E})_{\mcM(\omega)} &\geq H(ACD|B\tilde{E})_{\mcM(\omega)} - \zeta_{\alpha} \\
        &= H(C|B\tilde{E})_{\mcM(\omega)} + \sum_{c}p_{C}(c)H(AD|B,C=c,\tilde{E})_{\mcM(\omega)} - \zeta_{\alpha} \\
        &= H(C|B\tilde{E})_{\mcM(\omega)} + \sum_{c}p_{C}(c) \big( H(A|B,C=c,\tilde{E})_{\mcM(\omega)} -f_{c} + M\big) - \zeta_{\alpha}  \\
        &= H(AC|B\tilde{E})_{\mcM(\omega)} - \sum_{c}f_{c}\cdot p_{C}(c) + M - \zeta_{\alpha}\\
        &\geq M - \zeta_{\alpha},
    \end{aligned} \label{eq:vN_calc}
\end{equation}
where $\zeta_{\alpha}=(\alpha-1)\big(\log(1+2\text{dim}[ACD])\big)^2$. For the first inequality, we related R\'enyi entropy to the von Neumann entropy via the continuity bound \cite[Lemma B.9]{DFR}. For the first equality, we used the chain rule for the von Neumann entropy $H(AB|C)_{\rho} = H(B|C)_{\rho} + H(A|BC)_{\rho}$, followed by the fact that the system $C$ is classical. We then applied \cref{lem:prep} and the chain rule for the second and third equalities. Finally, we used the fact that $f$ is a min trade-off function for the last inequality. Now, let $\alpha = 1 + 1/\sqrt{n}$, $M = \max_{c}|f_{c}|$ and $h_{\text{vN}}(\Omega) \geq 0$ be such that $v_{C} \cdot f \geq h_{\text{vN}}(\Omega)$ for all $v_{C} \in \Delta_{\Omega}$. Then inserting \cref{eq:vN_calc} back into \cref{eq:lb2} we obtain 
\begin{equation*}
        H_{\alpha}^{\uparrow}(A^{n}C^{n}|B^{n}E)_{\rho^{|\Omega}} \geq  n\, h_{\text{vN}}(\Omega) - \sqrt{n}\big(\log(1+2\,\text{dim}[A] \cdot |\mcC| \cdot 2^{\ceil{M}})\big)^2 - (1+\sqrt{n}) \log \frac{1}{p_{\Omega}}. 
\end{equation*}
To obtain the above bound, we noted $\text{dim}[D] \leq 2^{\ceil{M}}$. This sketch illustrates how the original EAT proof can be recovered; importantly, the error term scales as $O(\sqrt{n})$, which becomes insignificant compared to $n\, h_{\text{vN}}(\Omega)$ as $n$ grows. A lower bound on the min-entropy with the same scaling properties can then be recovered via its relationship to $H_{\alpha}^{\uparrow}$ (see~\cite[Lemma B.10]{DFR}).    

A significant difference between this and the proof approach that we follow, namely that of~\cite{arqand2024}, is that rather than taking $f$ to be a min trade-off function, we take the supremum of \cref{eq:single_round} over all $f \in \mathbb{R}^{|\mcC|}$. Towards this end, in \cref{lem:swap} we exchange the order of the supremum over the trade-off function and infimum in \cref{eq:single_round} using Sion's minimax theorem~\cite{Sion1958}. We then solve the supremum in \cref{lem:singleProve}. These steps follow the the ideas presented in~\cite[Section 5.3]{arqand2024}; in particular, the discussion of Sion's minimax theorem around~\cite[Eq. 113]{arqand2024}, and the proof of~\cite[Lemma 5.3]{arqand2024}.   

\begin{lemma}[Single round convex problem]
    Let $\{\mcM_{i}\}_{i=1}^{n}$ be a sequence of DI channels according to \cref{def:DIchan}, $\Omega \subset \mcC^{n}$ be an event on $C^{n}$ and $\Delta_{\Omega} \subset \Delta(C)$ be any compact convex subset such that $\mathsf{freq}(c^{n}) \in \Delta_{\Omega}$ for all $c^{n} \in \Omega$.  Then for all $\alpha \in (1,\infty)$,
    \begin{multline*}
        \sup_{f\in \mathbb{R}^{|\mcC|}} \sum_{i=1}^{n}\inf_{\omega \in \mcD(R_{i-1}\tilde{E}_{i-1})} \inf_{v_{C} \in \Delta_{\Omega}} \sup_{q_{B_{i}} \in \Delta(B_{i})} \Bigg( v_{C} \cdot f + H_{\alpha}^{f}(A_{i}C_{i}|B_{i}\tilde{E}_{i-1})_{\mcM_{i}(\omega)|\sigma} \Bigg) \\ \geq  n \sup_{M>0}\inf_{\omega \in \mcD(R\tilde{E})}\inf_{v_{C} \in \Delta_{\Omega}} \sup_{q_{B} \in \Delta(B)} \sup_{f\in \mcI_{M}} \Bigg( v_{C} \cdot f + H_{\alpha}^{f}(AC|B\tilde{E})_{\mcM(\omega)|\sigma} \Bigg)
    \end{multline*}
    where $\mcI_{M} = \{ f \in \mbR^{|\mcC|} \ : \ -M \leq f_{c} \leq M \ \forall c \in \mcC\}$, and $\mcM : R \to ACB$ is any channel that satisfies
    \begin{equation*}
        \inf_{\omega \in \mcD(R_{i-1}\tilde{E}_{i-1})}\sup_{q_{B_{i}} \in \Delta(B_{i})}  H_{\alpha}^{f}(A_{i}C_{i}|B_{i}\tilde{E}_{i-1})_{\mcM_{i}(\omega)|\sigma} \geq \inf_{\omega \in \mcD(R\tilde{E})}\sup_{q_{B} \in \Delta(B)}  H_{\alpha}^{f}(AC|B\tilde{E})_{\mcM(\omega)|\sigma}
    \end{equation*}
    for all $i \in \{ 1,...,n\}$ and all $f \in \mbR^{|\mcC|}$. In the above, both infimums over $\omega_{R_{i-1}\tilde{E}_{i-1}}$ and $\omega_{R\tilde{E}}$ are restricted to pure states with $R_{i-1} \cong \tilde{E}_{i-1}$ and $R \cong \tilde{E}$, respectively, and the state $\sigma_{B\tilde{E}}$ ($\sigma_{B_{i}\tilde{E}_{i-1}}$) is defined in terms of $q_{B}$ ($q_{B_{i}}$) and $\mcM(\omega)$ ($\mcM_{i}(\omega)$) analogously to \cref{lem:single}. \label{lem:swap}
\end{lemma}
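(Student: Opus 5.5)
We first restrict the outer supremum over $f\in\mbR^{|\mcC|}$ to $f\in\mcI_{M}$ for a fixed $M>0$ (which can only decrease the left-hand side), and then lower bound each summand. Using the assumed property of the channel $\mcM$, and noting that $v_C\cdot f$ does not depend on $\omega$ or $q_{B_i}$, each summand is at least
\begin{equation*}
    \inf_{\omega \in \mcD(R\tilde{E})}\inf_{v_{C} \in \Delta_{\Omega}} \sup_{q_{B} \in \Delta(B)} \Big( v_{C} \cdot f + H_{\alpha}^{f}(AC|B\tilde{E})_{\mcM(\omega)|\sigma} \Big).
\end{equation*}
To be careful here, we first swap the order $\inf_\omega \inf_{v_C} = \inf_{v_C}\inf_\omega$, which is trivially allowed. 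We then pull $v_C\cdot f$ outside the $\inf_\omega\sup_q$ and apply the hypothesis for fixed $f$. All $n$ terms now coincide, giving $n$ times a single quantity. It therefore remains to show
\begin{equation*}
    \sup_{f\in\mcI_M}\inf_{\omega}\inf_{v_C}\sup_{q_B}\big(\cdots\big)\;\geq\;\inf_{\omega}\inf_{v_C}\sup_{q_B}\sup_{f\in\mcI_M}\big(\cdots\big),
\end{equation*}
after which we take $\sup_{M>0}$ on both sides. Note that the trivial inequality goes the wrong way, so a minimax argument is genuinely needed.

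We will apply Sion's minimax theorem to the pair $(f,q_B)$ jointly versus $(\omega,v_C)$. The first task is to choose a convex parametrisation of the minimising variables. Following the discussion around \cite[Eq.~113]{arqand2024}, we replace the pure state $\omega$ by the set of output states $\mcM(\omega)$. Equivalently, we use Choi-type operators: the set $\{\mcM(\omega_{R\tilde E})\}$, with $\tilde E\cong R$ and $\omega$ ranging over all (not necessarily pure) states, is compact and convex. By \cref{lem:pureOpt}, together with the analogous purification argument for $H^f_\alpha$, enlarging from pure to mixed states does not change the infimum. Combined with the compact convex set $\Delta_\Omega$, the minimising set is compact and convex. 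The maximising set $\mcI_M\times\Delta(B)$ is also compact and convex.

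Next we verify the semicontinuity and convexity conditions. Writing $2^{(1-\alpha)H^f_\alpha}$ via \cref{def:f_div} and the cq decomposition \eqref{eq:cq_div}, the objective is
\begin{equation*}
    v_C\cdot f+\frac{1}{1-\alpha}\log\sum_{c,b}(\cdots)\,q_B(b)^{1-\alpha}\,2^{(\alpha-1)f_c}.
\end{equation*}
We need three properties. First, concavity in $(f,q_B)$ jointly, or at least quasi-concavity, which may require a reparametrisation such as $g_c = 2^{(\alpha-1)f_c}$ together with a monotone transform. Second, convexity, or quasi-convexity, in the output state, which follows from joint convexity-type properties of $Q_\alpha$ and of $-\frac{1}{\alpha-1}\log$ of a convex function. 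Third, linearity in $v_C$. Here we expect to mirror \cite[Lemma~5.3]{arqand2024}, with $q_B$ treated like the marginal $\sigma$ in the $H^\uparrow$-type variational formula. Because $\sigma_{B\tilde E}$ depends on $\omega$ through the conditional states $\rho^{|b}_{\tilde E}$, we will substitute $\sigma_{B\tilde E}=\sum_b q_B(b)\ketbra{b}{b}\otimes\rho^{|b}_{\tilde E}$. This is linear in the output state once $q_B(b)/p_B(b)$ is fixed, and $p_B$ is fixed by the DI structure of $\mcM$, so this substitution is legitimate.

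The main obstacle is establishing the joint concavity, or quasi-concavity, in $(f,q_B)$ while retaining convexity in the state, since $H^f_\alpha$ mixes $f$ and $q_B$ multiplicatively inside the logarithm. Our first attempt is to apply Sion in two stages. In the first stage we fix $q_B$ and swap $\sup_f$ with $\inf_{\omega,v_C}$ exactly as in \cite{arqand2024}, where the $f$-dependence is the same as theirs. In the second stage we note that, for fixed $f$, $\sup_{q_B}$ is an $H^\uparrow$-style optimisation whose optimiser has the closed form given in the proof of \cref{lem:var_ent}. Substituting this closed form yields an expression that is jointly concave in $(f,\cdot)$, after which a single application of Sion completes the argument. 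Upper semicontinuity in $(f,q_B)$ at the boundary of $\Delta(B)$, where $q_B(b)=0$, will be handled by the convention $+\infty$ for the divergence, or by restricting to $q_B$ with full support and taking a limit.
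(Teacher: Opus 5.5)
Your reduction to $n$ copies of a single-round quantity is correct and matches the paper, and you are right that a genuine minimax exchange is needed. The gap is in verifying Sion's hypothesis on the minimizing side.

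You convexify the non-convex set of pure inputs by letting $\omega_{R\tilde{E}}$ range over all mixed states, i.e.\ over the convex set of outputs $\mcM(\omega_{R\tilde{E}})$. You then argue convexity from joint convexity of $Q_\alpha$ together with ``$-\frac{1}{\alpha-1}\log$ of a convex function''. But if $g$ is convex, then $-\frac{1}{\alpha-1}\log g$ is quasi-\emph{concave}: its superlevel sets $\{g\le 2^{-(\alpha-1)t}\}$ are convex. That is the wrong direction for the infimum variable.

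This is not a technicality. Conditional entropy tends to increase when the joint state is mixed, because Eve loses the mixture label. For example, take $B$ and $C$ trivial and let $\mcM$ measure a qubit $R$ in the computational basis into $A$. Then the objective reduces to $H^{\downarrow}_\alpha(A|\tilde{E})_{\mcM(\omega)}$ for every $f$ and $q_B$. It vanishes at $\ketbra{0}{0}_R\otimes\ketbra{0}{0}_{\tilde{E}}$ and at $\ketbra{1}{1}_R\otimes\ketbra{0}{0}_{\tilde{E}}$, but equals $1$ at their equal mixture. So Sion's theorem does not apply on your domain. \cref{lem:pureOpt} only guarantees that enlarging the domain leaves the infimum unchanged, not that the objective is convex there.

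The missing idea is to parametrize by the marginal $\omega_R\in\mcD(R)$ and evaluate on its purification, with $\tilde{E}$ as the purifying system. A mixture $\sum_i\lambda_i\omega_R^{|i}$ is then purified by $\sum_i\sqrt{\lambda_i}\ket{\psi_i}_{R\tilde{E}}\otimes\ket{i}_F$, so Eve holds the label $i$. Convexity then follows in three steps:
\begin{itemize}
\item Pinching $F$ commutes with $\mcM$ and can only increase the entropy, by data processing (\cref{lem:DPI}).
\item The DI structure makes $p_B$ independent of $i$, so $\sigma$ splits blockwise over $F$.
\item Concavity of $\log$ then gives convexity in $\omega_R$.
\end{itemize}
This is the content of \cref{lem:convex}.

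The step you flag as the main obstacle is actually easy. For fixed $\omega$, the quantity $\log\sum_{c,b}w_{cb}\,q_B(b)^{1-\alpha}2^{(\alpha-1)f_c}$ is a log-sum-exp of exponents $(\alpha-1)f_c+(1-\alpha)\log q_B(b)+\mathrm{const}$. These exponents are jointly convex in $(f,q_B)$, so the objective is jointly concave there. Your joint Sion argument over $(f,q_B)$ would therefore go through once the minimizing side is parametrized as above.

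The paper takes a different route. It removes $q_B$ via its closed form (\cref{lem:fexp}), proves concavity in $f$ alone (\cref{lem:concave}), and applies Sion to $f\in\mcI_M$ against $(\omega_R,v_C)$.

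You should drop your ``two-stage'' variant. Fixing $q_B$ before $\omega$ reverses the quantifier order of the left-hand side. Closing the argument would then require a further exchange of $\sup_{q_B}$ with $\inf_{\omega,v_C}$, which you do not justify.
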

\begin{proof}
    Using the definition of the channel $\mcM$, 
    \begin{multline*}
        \sup_{f\in \mathbb{R}^{|\mcC|}} \sum_{i=1}^{n}\inf_{\omega \in \mcD(R_{i-1}\tilde{E}_{i-1})} \inf_{v_{C} \in \Delta_{\Omega}} \sup_{q_{B_{i}} \in \Delta(B_{i})} \Bigg( v_{C} \cdot f + H_{\alpha}^{f}(A_{i}C_{i}|B_{i}\tilde{E}_{i-1})_{\mcM_{i}(\omega)|\sigma} \Bigg) \\ \geq n \sup_{f\in \mathbb{R}^{|\mcC|}} \inf_{\omega \in \mcD(R\tilde{E})} \inf_{v_{C} \in \Delta_{\Omega}} \sup_{q_{B} \in \Delta(B)} \Bigg( v_{C} \cdot f +  H_{\alpha}^{f}(AC|B\tilde{E})_{\mcM(\omega)|\sigma} \Bigg).
    \end{multline*}
    We now need to exchange the outer supremum over $f$ with the joint infimum over $\omega_{R\tilde{E}}$ and $v_{C}$. The infimum over $\omega_{R\tilde{E}}$ is restricted to pure states with $R$ and $\tilde{E}$ having the same dimension; this can be equivalently stated as an infimum over $\omega_{R} \in \mcD(R)$ where the objective function is evaluated on any purification of $\omega_{R}$, since the choice of purification will not affect the objective value. The domain $\mcD(R)$ is compact and convex. We also note that $\Delta_{\Omega}$ is compact and convex by assumption. Let $M > 0$, and $\mcI_{M} = \{ f \in \mbR^{|\mcC|} \ : \ -M \leq f_{c} \leq M \ \forall c \in \mcC \}$. The set $\mcI_{M}$ is compact and convex for any $M$, and we can write\footnote{In the following, we will apply Sion's minimax theorem~\cite[Corollary 3.3]{Sion1958} to exchange the supremum over $f\in \mathcal{X}$ with the joint infimum, where $\mathcal{X} \subseteq \mathbb{R}^{|\mathcal{C}|}$. This does not require compactness of $\mathcal{X}$, implying we could take $\mathcal{X} = \mathbb{R}^{|\mathcal{C}|}$. However, we still split the supremum because it improves the clarity of later proofs.} 
    \begin{multline*}
        \sup_{f\in \mathbb{R}^{|\mcC|}} \inf_{\omega \in \mcD(R\tilde{E})} \inf_{v_{C} \in \Delta_{\Omega}} \sup_{q_{B} \in \Delta(B)} \Bigg( v_{C} \cdot f +  H_{\alpha}^{f}(AC|B\tilde{E})_{\mcM(\omega)|\sigma} \Bigg) \\= \sup_{M > 0} \sup_{f\in \mcI_{M}} \inf_{\omega \in \mcD(R\tilde{E})} \inf_{v_{C} \in \Delta_{\Omega}} \sup_{q_{B} \in \Delta(B)} \Bigg( v_{C} \cdot f +  H_{\alpha}^{f}(AC|B\tilde{E})_{\mcM(\omega)|\sigma} \Bigg).
    \end{multline*}
    
    We now recall the following convexity properties of the objective function proven in \cref{lem:concave,lem:convex}. For any fixed $\omega_{R\tilde{E}} \in \mcD(R\tilde{E})$ and $v_{C} \in \mbR^{|\mcC|}$, the function
    \begin{equation}
        f \mapsto \sup_{q_{B} \in \Delta(B)} \Bigg( v_{C} \cdot f +  H_{\alpha}^{f}(AC|B\tilde{E})_{\mcM(\omega)|\sigma} \Bigg)
    \end{equation}
    is concave in $f \in \mcI_{M}$. For any fixed $f \in \mbR^{|\mcC|}$, the function
    \begin{equation}
        (\omega_{R},v_{C}) \mapsto \sup_{q_{B} \in \Delta(B)} \Bigg( v_{C} \cdot f +  H_{\alpha}^{f}(AC|B\tilde{E})_{\mcM(\omega)|\sigma} \Bigg),
    \end{equation}
    where $\omega_{R\tilde{E}}$ is the canonical purification of $\omega_{R}$, is jointly convex in $(\omega_{R},v_{C}) \in \mcD(R) \times \Delta_{\Omega}$. 
    
    We can therefore apply Sion's minimax theorem~\cite[Corollary 3.3]{Sion1958} to exchange the supremum over $f \in \mcI_{M}$ with the joint infimum. We also note that $\Delta(B)$ is a compact convex set and the objective function is continuous in $q_{B}$. The objective function is also continuous in $f$, hence each supremum is achieved by a point inside the domain. This allows us to exchange their order, proving the claim.  
\end{proof}

\begin{lemma}[Optimizing over trade-off functions]
     Let $\{\mcM_{i}\}_{i=1}^{n}$ be a sequence of DI channels according to \cref{def:DIchan}, $\Omega \subset \mcC^{n}$ be an event on $C^{n}$ and $\Delta_{\Omega} \subset \Delta(C)$ be any compact convex subset such that $\mathsf{freq}(c^{n}) \in \Delta_{\Omega}$ for all $c^{n} \in \Omega$.  Let $\mcM:R\to ACB$ be as defined in \cref{lem:swap}, $\omega_{R\tilde{E}} \in \mcD(R\tilde{E})$, $v_{C} \in \Delta_{\Omega}$, $\alpha \in (1,\infty)$ and let $\rho_{AB\tilde{E}}^{|c}$ be the conditional states for which
     \begin{equation*}
         \mcM(\omega) = \sum_{c\in \mathrm{supp}(p_{C})}p_{C}(c) \ketbra{c}{c}_{C} \otimes \rho_{AB\tilde{E}}^{|c}.
     \end{equation*} 
     Then there exists an $0 < M < \infty$ such that
     \begin{multline}
        \inf_{\omega \in \mcD(R\tilde{E})}\inf_{v_{C} \in \Delta_{\Omega}} \sup_{q_{B} \in \Delta(B)} \sup_{f\in \mcI_{M}} \Bigg( v_{C} \cdot f + H_{\alpha}^{f}(AC|B\tilde{E})_{\mcM(\omega)|\sigma} \Bigg)\\ \geq \inf_{\omega \in \mcD(R\tilde{E})}\inf_{v_{C} \in \Delta_{\Omega}} \sup_{q_{B} \in \Delta(B)}\Bigg(\frac{1}{\alpha - 1}\sum_{c \in \mathrm{supp}(p_{C})} v_{C}(c) \log \Big(\frac{v_{C}(c)}{p_{C}(c)^{\alpha}} \Big)
        \\- \sum_{c \in \mathrm{supp}(p_{C})} v_{C}(c)\,D_{\alpha}(\rho_{AB\tilde{E}}^{|c} \| \id_{A} \otimes \sigma_{B\tilde{E}})\Bigg). \label{eq:2prove}
    \end{multline} \label{lem:singleProve}
\end{lemma}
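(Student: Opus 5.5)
The plan is to solve the inner supremum over $f$ in closed form for fixed $(\omega_{R\tilde{E}},v_C,q_B)$ and check that, up to the truncation to $\mcI_M$, it reproduces the objective on the right-hand side of \eqref{eq:2prove}. The genuinely delicate part is making one finite $M$ work uniformly. I have not settled that step, and as written the argument is only guaranteed for each fixed $(\omega_{R\tilde{E}},v_C,q_B)$ with its own $M$.

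\textbf{Step 1 (closed form for the supremum over $f$).} Write $p_C$ for the distribution of $C$ in $\mcM(\omega)$ and set $D_c := D_{\alpha}(\rho^{|c}_{AB\tilde{E}}\|\id_A\otimes\sigma_{B\tilde{E}})$ for $c\in\mathrm{supp}(p_C)$. By \cref{def:f_div}, the objective is
\begin{equation*}
v_C\cdot f-\frac{1}{\alpha-1}\log\sum_{c\in\mathrm{supp}(p_C)}p_C(c)^{\alpha}\,2^{(\alpha-1)(f_c+D_c)} .
\end{equation*}
The substitution $x_c=(\alpha-1)(f_c+D_c)+\alpha\log p_C(c)$ rewrites it as
\begin{equation*}
\frac{1}{\alpha-1}\Big[\sum_c v_C(c)x_c-\log\sum_c 2^{x_c}\Big]-\sum_c v_C(c)D_c-\frac{\alpha}{\alpha-1}\sum_c v_C(c)\log p_C(c),
\end{equation*}
plus the term $\sum_{c\notin\mathrm{supp}(p_C)}v_C(c)f_c$. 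The objective is invariant under $f\mapsto f+k\mathbf{1}$, so $f$ may be normalized freely.

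By the Gibbs variational principle, $\sup_x\big(\sum_c v_c x_c-\log\sum_c 2^{x_c}\big)=\sum_c v_c\log v_c$. It is attained at $2^{x_c}\propto v_C(c)$ when $v_C$ is positive on $\mathrm{supp}(p_C)$. Coordinates with $c\notin\mathrm{supp}(p_C)$ can only increase the value, taking $f_c=+M$. Together these give exactly the bracket on the right-hand side of \eqref{eq:2prove}. The explicit maximizer is
\begin{equation*}
f^*_c=\frac{\log v_C(c)-\alpha\log p_C(c)}{\alpha-1}-D_c+k .
\end{equation*}

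\textbf{Step 2 (feasibility in a box).} For fixed $(\omega_{R\tilde{E}},v_C,q_B)$ with $v_C>0$ on $\mathrm{supp}(p_C)$ and all $D_c$ finite, the normalized $f^*$ lies in $\mcI_{M}$ once $M\ge\max_c|f^*_c|$. Then $\sup_{f\in\mcI_M}$ is at least the right-hand objective at that point. If $v_C(c)=0$ for some $c\in\mathrm{supp}(p_C)$, I would use that sending $f_c\to-\infty$ removes that term. The pointwise supremum over $\mcI_M$ then converges to the right-hand objective as $M\to\infty$.

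\textbf{Step 3 (uniform $M$; the main obstacle).} To pass to $\inf_\omega\inf_{v_C}\sup_{q_B}$ we need a single finite $M$ for which the left-hand side dominates. Three quantities threaten $\max_c|f^*_c|$:
\begin{enumerate}[(i)]
\item $p_C(c)\to 0$ as $\omega_R$ varies;
\item $v_C(c)\to 0$ within $\Delta_\Omega$;
\item $D_c$ unbounded, since $q_B$ may put zero weight on some $b$.
\end{enumerate}
Item (iii) is harmless in one direction. At the maximizing $q_B$, every $b$ charged by $p_B$ receives positive weight, which fixes $q_B$ on the support that matters, and within that support $D_c$ is bounded in terms of the dimensions.

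For (i) and (ii), I would first argue that infimizing points with $v_C(c)>0=p_C(c)$ have right-hand value $+\infty$ and can be discarded. Near infimizing points, the bracket is continuous in $(\omega_R,v_C)$ on the compact set $\mcD(R)\times\Delta_\Omega$. My first attempt would be to pick an infimizer $(\omega^*,v^*)$ of the right-hand side, obtained from lower semicontinuity and compactness. I would set $M$ to bound $f^*$ there, and then show that any $(\omega,v)$ giving a smaller left-hand value at this $M$ yields a contradiction with optimality.

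I am not certain this contradiction goes through. The left-hand value at a fixed finite $M$ can fall below its supremum over all $f$ at points where $f^*$ is not in $\mcI_M$. Ruling this out requires a quantitative bound on how far those points can lie from the infimizer.

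If this fails, the natural repair is to read ``exists $M$'' as holding up to $\epsilon$ for every $\epsilon>0$. The fallback argument would combine the monotonicity in $M$ with compactness, as in a Dini-type argument. That fallback suffices for \cref{cor:REAT}, since \cref{lem:swap} already carries $\sup_{M>0}$ on the left.
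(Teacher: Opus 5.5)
The gap is your Step~3, and it cannot be closed as stated. With a single finite $M$ the claimed inequality is false in general, so your contradiction argument near an infimizer cannot succeed.

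Take $A$, $B$, $R$, $\tilde{E}$ trivial, $\mcC=\{0,1\}$, and $\mcM$ outputting $C$ with $p_C=(p_0,p_1)$, $p_1>0$. Take $\Omega=\{0^n\}$ and $\Delta_\Omega=\{(1,0)\}$. The objective $f_0+\frac{1}{1-\alpha}\log\big(p_0^\alpha 2^{(\alpha-1)f_0}+p_1^\alpha 2^{(\alpha-1)f_1}\big)$ increases in $f_0$ and decreases in $f_1$. Its maximum over $\mcI_M$ is therefore $\frac{1}{1-\alpha}\log\big(p_0^\alpha+p_1^\alpha 2^{-2(\alpha-1)M}\big)$, which is strictly below the right-hand side $\frac{1}{1-\alpha}\log p_0^\alpha$ for every finite $M$. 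The same happens whenever an infimizer has $v_C(c)=0<p_C(c)$, since then $f^*_c=-\infty$ lies in no box. Your remark that $D_c$ is bounded in terms of dimensions also fails: with $A$, $B$ trivial and $\tilde{E}$ holding a copy of $C$, one gets $D_c=-\log p_C(c)$.

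The paper does not supply this uniformity either. It fixes $(\omega,v_C,q_B)$ and the smoothing parameter $\epsilon$, and picks $M$ so that $f^*\in\mcI_M$. It then writes the inequality between infima as if that $M$ were uniform, and sends $\epsilon\to0$ with $M$ held fixed. But the $M$ containing $f^*$ diverges as $\epsilon\to0$ whenever $v_C$ or $p_C$ lacks full support. At each fixed point, your Gibbs-variational computation handles zero entries by monotonicity: $f_c=+M$ off $\mathrm{supp}(p_C)$, and $f_c\to-\infty$ where $v_C(c)=0$. This is a valid and more direct substitute for the paper's smoothing, stationary-point calculation and continuity bound (\cref{lem:f_div_cont}).

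What is true, and all that \cref{thm:big_eat} uses, is your repaired form $\sup_{M>0}\inf_{\omega}\inf_{v_C}\sup_{q_B}\sup_{f\in\mcI_M}(\cdots)\ge\inf_{\omega}\inf_{v_C}\sup_{q_B}(\cdots)$; the left side of \cref{lem:swap} already carries $\sup_{M>0}$. Steps~1--2 deliver this only with $\sup_M$ inside the infima. The missing ingredient is exchanging $\sup_M$ with the infimum over the compact set $\mcD(R)\times\Delta_\Omega$.

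The Dini-type argument runs as follows: for $t$ below the limiting value, the sets $\{F_M\le t\}$ are closed, decreasing in $M$, and have empty intersection, so one of them is empty. This needs each $F_M(\omega_R,v_C):=\sup_{q_B}\sup_{f\in\mcI_M}(\cdots)$ to be lower semicontinuous. In $v_C$ this is harmless. In $\omega_R$ it is not automatic: the conditional divergences enter $H_\alpha^f$ with a minus sign, so the general lower semicontinuity of $D_\alpha$ gives upper, not lower, semicontinuity.

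To get continuity in $\omega_R$ you would need the DI structure together with the closed form of \cref{lem:fexp}. The relevant facts are that $p_B$ is independent of the input, that $\sigma_{B\tilde{E}}=\sum_b q_B(b)\ketbra{b}{b}\otimes\omega_{\tilde{E}}$, and that each $(c,b)$-block of $\mcM(\omega)$ is dominated by a multiple of $\id_A\otimes\omega_{\tilde{E}}$. Alternatively, you could bound the truncation loss $\sup_{f}(\cdots)-\sup_{f\in\mcI_M}(\cdots)$ uniformly over the points where the right-hand side is not already large. Without one of these, Step~3 remains open.
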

\begin{proof}
    By \cref{lem:concave}, for a fixed $\omega_{R\tilde{E}}$ and $v_{C}$, the objective function is concave in $f$. It therefore suffices to evaluate its stationary point over $\mbR^{|\mcC|}$, and choose an $M > 0$ such that $\mcI_{M}$ contains that point. 

    We first modify the objective function by replacing $p_{C}$ and $v_{C}$ with nearby distributions $\tilde{p}_{C}$ and $\tilde{v}_{C}$ that have full support. Let 
    \begin{equation}
        \tilde{\rho}_{ACB\tilde{E}} = (1-\epsilon)\mcM(\omega_{R\tilde{E}}) + \epsilon \Big(\frac{\id_{A}}{\text{dim}(A)}\otimes \frac{\id_{C}}{|\mcC|} \otimes \sigma_{B\tilde{E}} \Big) = \sum_{c} \tilde{p}_{C}(c) \ketbra{c}{c}_{C} \otimes \tilde{\rho}_{AB\tilde{E}}^{|c},
    \end{equation}
    where $\tilde{p}_{C}(c) > 0$ and $\tilde{\rho}_{AB\tilde{E}}^{|c} \in \mcD(AB\tilde{E})$ for all $c \in \mcC$. Let $k = \| \mcM(\omega_{R\tilde{E}}) - \frac{\id_{A}}{\text{dim}(A)}\otimes \frac{\id_{C}}{|\mcC|} \otimes \sigma_{B\tilde{E}}\|_{1}$ and therefore $\frac{1}{2}\| \mcM(\omega) - \tilde{\rho}_{ACB\tilde{E}}\|_{1} = \epsilon \, k / 2$. Furthermore, let $\tilde{v}_{C}(c) = (1-\epsilon) v_{C}(c) + \epsilon/|\mcC|$. Then
    \begin{equation}
        v_{C} \cdot f = \sum_{c} \big(v_{C}(c) + \tilde{v}_{C}(c) - \tilde{v}_{C}(c)\big)f_{c} = \tilde{v_{C}} \cdot f - \epsilon\Big(1 - \frac{1}{|\mcC|}\Big)\sum_{c}f_{c} \geq \tilde{v_{C}} \cdot f - \epsilon\big(|\mcC| - 1\big)M. \label{eq:vec_cont}
    \end{equation}
    Furthermore, note that the optimization over $q_{B}\in \Delta(B)$ in \cref{eq:2prove} can be restricted to $q_{B}$ such that $\sigma_{B\tilde{E}}$ satisfies $\text{supp}(\rho_{B\tilde{E}}) \subseteq \text{supp}(\sigma_{B\tilde{E}})$, and because $\tilde{\rho}_{B\tilde{E}} = (1-\epsilon)\rho_{B\tilde{E}} + \epsilon \, \sigma_{B\tilde{E}}$ we consequently have $\text{supp}(\tilde{\rho}_{B\tilde{E}}) \subseteq \text{supp}(\sigma_{B\tilde{E}})$. The tuple of states $(\rho_{ACB\tilde{E}},\tilde{\rho}_{ACB\tilde{E}},\sigma_{B\tilde{E}})$ therefore fulfills the conditions needed to apply the continuity bound \cref{lem:f_div_cont}. Combining this with \cref{eq:vec_cont},
    \begin{equation}
        \sup_{f\in \mcI_{M}} \Bigg( v_{C} \cdot f + H_{\alpha}^{f}(AC|B\tilde{E})_{\mcM(\omega)|\sigma} \Bigg) \geq \sup_{f\in \mcI_{M}} \Bigg( \tilde{v_{C}} \cdot f + H_{\alpha}^{f}(AC|B\tilde{E})_{\tilde{\rho}|\sigma} \Bigg) - g(\epsilon,M) \label{eq:shift}
    \end{equation}
    where
    \begin{equation*}
        g(\epsilon,M) = \epsilon\big(|\mcC| - 1\big)M + \frac{\alpha}{\alpha-1}\log\Bigg(1 + \frac{\epsilon \, k}{2} \Bigg(\frac{\mathrm{dim}(A) \cdot |\mcC| \cdot 2^{\ceil{2M}}}{m_{\sigma}}\Bigg)^{\frac{\alpha-1}{\alpha}}\Bigg)
    \end{equation*}
    where $m_{\sigma}$ is the smallest non-zero eigenvalue of $\sigma_{B\tilde{E}}$.
    
    We now solve the optimization problem in \cref{eq:shift}. Let $c \in \mcC$. We can compute
    \begin{multline}
        \frac{\partial}{\partial f_{c}} \Bigg( \tilde{v_{C}} \cdot f + H_{\alpha}^{f}(AC|B\tilde{E})_{\tilde{\rho}|\sigma} \Bigg) \\ = \tilde{v}_{C}(c) - \frac{\exp\big\{ \ln(\tilde{p}_{C}(c)^{\alpha}) + \ln(2)(\alpha - 1) \big[D_{\alpha}(\tilde{\rho}_{AB\tilde{E}}^{|c} \| \id_{A} \otimes \sigma_{B\tilde{E}}) + f_{c}  \big]\big\}}{\sum_{c'} \exp\big\{ \ln(\tilde{p}_{C}(c')^{\alpha}) + \ln(2)(\alpha - 1) \big[D_{\alpha}(\tilde{\rho}_{AB\tilde{E}}^{|c'} \| \id_{A} \otimes \sigma_{B\tilde{E}}) + f_{c'}  \big]\big\}}. \label{eq:deriv}
    \end{multline}
    Let
    \begin{equation*}
        f_{c}^* = \frac{1}{\alpha -1}\log \Big(\frac{\tilde{v}_{C}(c)}{\tilde{p}_{C}(c)^{\alpha}} \Big)- D_{\alpha}(\tilde{\rho}_{AB\tilde{E}}^{|c} \| \id_{A} \otimes \sigma_{B\tilde{E}}). 
    \end{equation*}
    We need consider two cases.
    
    \vspace{0.2cm}

    \noindent \underline{Case 1: $D_{\alpha}(\tilde{\rho}_{AB\tilde{E}}^{|c} \| \id_{A} \otimes \sigma_{B\tilde{E}})$ is finite for all $c \in \mcC$.}

    \vspace{0.2cm}
    
    \noindent In this case, $f_{c}^*$ is well defined and finite for all $c \in \mcC$ because $\tilde{p}_{C} > 0$ and $\tilde{v}_{C} > 0$. Moreover, it satisfies 
    \begin{equation*}
        \exp\big\{ \ln(\tilde{p}_{C}(c)^{\alpha}) + \ln(2)(\alpha - 1) \big[D_{\alpha}(\tilde{\rho}_{AB\tilde{E}}^{|c} \| \id_{A} \otimes \sigma_{B\tilde{E}}) + f_{c}^{*}  \big]\big\} = \tilde{v}_{C}(c).
    \end{equation*}
    Using the fact that $\sum_{c} \tilde{v}_{C}(c) = 1$, the vector $f^{*} = [f_{c}^{*}]_{c\in \mcC}$ then satisfies \cref{eq:deriv}. Moreover, because $f_{c}^*$ is finite for all $c$, there must always exist a large enough $M$ such that $f^{*} \in \mcI_{M}$, implying $f^*$ is a maximizer of \cref{eq:shift}. Inserting the optimal solution $f^*$ into \cref{eq:shift} and noting that $H_{\alpha}^{f^*}(AC|B\tilde{E})_{\tilde{\rho}|\sigma} = 0$, we obtain 
    \begin{equation}
        \tilde{v_{C}} \cdot f^{*} + H_{\alpha}^{f^*}(AC|B\tilde{E})_{\tilde{\rho}|\sigma} = \frac{1}{\alpha - 1}\sum_{c} \tilde{v}_{C}(c) \log \Big(\frac{\tilde{v}_{C}(c)}{\tilde{p}_{C}(c)^{\alpha}} \Big)- \sum_{c} \tilde{v}_{C}(c)\,D_{\alpha}(\tilde{\rho}_{AB\tilde{E}}^{|c} \| \id_{A} \otimes \sigma_{B\tilde{E}}). \label{eq:sol1}
    \end{equation}
    Next, we split the summation over $c$ into a sum over $\text{supp}(p_{C})$ and a sum over the complement set $\text{supp}(p_{C})^{\perp}$. Note that for any $c \in \text{supp}(p_{C})^{\perp}$, $\tilde{p}_{C}(c) = \epsilon/|\mcC|$ and $\tilde{\rho}^{|c}_{AB\tilde{E}} = \id_{A}/\text{dim}(A) \otimes \sigma_{B\tilde{E}}$, hence
    \begin{equation*}
    \begin{gathered}
        \tilde{v}_{C}(c) \log \Big(\frac{\tilde{v}_{C}(c)}{\tilde{p}_{C}(c)^{\alpha}} \Big) = \tilde{v}_{C}(c)\Big( \log(\tilde{v}_{C}(c)) - \alpha \log \Big(\frac{\epsilon}{|\mcC|}\Big)\Big) \geq \tilde{v}_{C}(c)\Big( \log(\epsilon) - \alpha \log \Big(\frac{\epsilon}{|\mcC|}\Big)\Big) \geq 0 \ \ \ \text{and} \\
        -\tilde{v}_{C}(c)\,D_{\alpha}(\rho^{|c}_{AB\tilde{E}}\|\id_{A} \otimes \sigma_{B\tilde{E}}) = -\tilde{v}_{C}(c)\,D_{\alpha}\Bigg(\frac{\id_{A}}{\text{dim}(A)}\otimes \sigma_{B\tilde{E}}\Big\|\id_{A} \otimes \sigma_{B\tilde{E}}\Bigg) = \tilde{v}_{C}(c)\,\text{dim}(A) \geq 0.
    \end{gathered}
    \end{equation*}
    This allows us to lower bound \cref{eq:sol1} by restricting the summation over $\mcC$ to a summation over $\text{supp}(p_{C})$. 
    
    \vspace{0.2cm}

    \noindent \underline{Case 2: $D_{\alpha}(\tilde{\rho}_{AB\tilde{E}}^{|c} \| \id_{A} \otimes \sigma_{B\tilde{E}}) = +\infty$ for some $c \in \mcC$.}

    \vspace{0.2cm}
    
    \noindent If there exists a $c\in \mcC$ such that $D_{\alpha}(\tilde{\rho}_{AB\tilde{E}}^{|c} \| \id_{A} \otimes \sigma_{B\tilde{E}}) = + \infty$, $f^{*} \notin \mcI_{M}$ for a finite $M$. In fact, for any finite $M$ and all $f \in \mcI_{M}$, we have $H_{\alpha}^{f}(AC|B\tilde{E})_{\mcM(\omega)|\sigma} = - \infty$, and the objective function in \cref{eq:shift} diverges to $-\infty$. This is consistent with \cref{eq:sol1} if any of the divergences are not finite. Thus, \cref{eq:sol1} (when $c$ is restricted to $\text{supp}(p_{C})$) lower bounds \cref{eq:shift} without assuming the divergences are finite.  

    \vspace{0.2cm}

    \noindent We have now obtained for any $\epsilon > 0$ and a sufficiently large $M$,
    \begin{multline*}
        \inf_{\omega \in \mcD(R\tilde{E})}\inf_{v_{C} \in \Delta_{\Omega}} \sup_{q_{B} \in \Delta(B)} \sup_{f\in \mcI_{M}} \Bigg( v_{C} \cdot f + H_{\alpha}^{f}(AC|B\tilde{E})_{\mcM(\omega)|\sigma} \Bigg)\\ \geq \inf_{\omega \in \mcD(R\tilde{E})}\inf_{v_{C} \in \Delta_{\Omega}} \sup_{q_{B} \in \Delta(B)}\Bigg(\frac{1}{\alpha - 1}\sum_{c \in \text{supp}(p_{C})} \tilde{v}_{C}(c) \log \Big(\frac{\tilde{v}_{C}(c)}{\tilde{p}_{C}(c)^{\alpha}} \Big)
        \\- \sum_{c \in \text{supp}(p_{C})} \tilde{v}_{C}(c)\,D_{\alpha}(\tilde{\rho}_{AB\tilde{E}}^{|c} \| \id_{A} \otimes \sigma_{B\tilde{E}})\Bigg) - g(\epsilon,M).
    \end{multline*}
    The above expression is continuous in $\epsilon$; this follows from the continuity of the logarithm and the continuity of the sandwiched R\'enyi divergence in its first argument~\cite[Theorem 4.17]{Bluhm_2024}. It is also well defined at $\epsilon = 0$ because the summation is restricted to $c \in \text{supp}(p_{C})$. We can therefore take the limit as $\epsilon \to 0$, noting that $g(\epsilon,M) \to 0$, $\tilde{v}_{C} \to v_{C}$, $\tilde{p}_{C} \to p_{C}$ and $\tilde{\rho}_{AB\tilde{E}}^{|c} \to \rho_{AB\tilde{E}}^{|c}$, to complete the proof.   
\end{proof}

\subsection{Entropy accumulation theorem} \label{sec:big_EAT}

We now present the most general form of our modified R\'enyi EAT, which combines the single round reduction in \cref{lem:single} with the analysis of the single round quantity in \cref{lem:swap,lem:singleProve}.

\begin{theorem}[Entropy accumulation theorem]
    Let $\omega_{R_{0}E} \in \mcD(R_{0}E)$ be a state, $\{\mcM_{i}\}_{i=1}^{n}$ be a sequence of DI channels according to \cref{def:DIchan} and $\rho_{A^{n}C^{n}B^{n}E} = [\tr_{R_{n}} \circ \mcM_{n} \circ \cdots \mcM_{1}](\omega_{R_{0}E})$. Let $\Omega \subset \mcC^{n}$ be an event on $C^{n}$ and $\Delta_{\Omega} \subset \Delta(C)$ be any compact convex subset such that $\mathsf{freq}(c^{n}) \in \Delta_{\Omega}$ for all $c^{n} \in \Omega$. Let $\alpha \in (1,\infty)$ and $\mcM : R \to ACB$ be any quantum channel that satisfies for all $i \in \{ 1,...,n\}$ and all $f \in \mbR^{|\mcC|}$,
    \begin{equation*}
        \inf_{\omega \in \mcD(R_{i-1}\tilde{E}_{i-1})}\sup_{q_{B_{i}} \in \Delta(B_{i})}  H_{\alpha}^{f}(A_{i}C_{i}|B_{i}\tilde{E}_{i-1})_{\mcM_{i}(\omega)|\sigma} \geq \inf_{\omega \in \mcD(R\tilde{E})}\sup_{q_{B} \in \Delta(B)}  H_{\alpha}^{f}(AC|B\tilde{E})_{\mcM(\omega)|\sigma},
    \end{equation*}
    where $\sigma$ is constructed as described below. Then
    \begin{equation*}
        H_{\alpha}^{\uparrow}(A^{n}C^{n}|B^{n}E)_{\rho^{|\Omega}} \geq n \, h_{\alpha}(\Omega) - \frac{\alpha}{\alpha - 1} \log \Big(\frac{1}{p_{\Omega}}\Big),
    \end{equation*}
    where $p_{\Omega}$ is the probability of observing $\Omega$ in $\rho_{C^{n}}$ and
    \begin{multline*}
        h_{\alpha}(\Omega) = \inf_{\omega \in \mcD(R\tilde{E})}\inf_{v_{C} \in \Delta_{\Omega}} \sup_{q_{B} \in \Delta(B)}\Bigg(\frac{1}{\alpha - 1}\sum_{c \in \mathrm{supp}(p_{C})} v_{C}(c) \log \Big(\frac{v_{C}(c)}{p_{C}(c)^{\alpha}} \Big)
        \\- \sum_{c \in \mathrm{supp}(p_{C})} v_{C}(c)\,D_{\alpha}(\rho_{AB\tilde{E}}^{|c} \| \id_{A} \otimes \sigma_{B\tilde{E}})\Bigg),
    \end{multline*}
    where $\mcM(\omega_{R\tilde{E}}) = \sum_{c} p_{C}(c)\ketbra{c}{c}_{C} \otimes \rho_{AB\tilde{E}}^{|c}$ and $\sigma_{B\tilde{E}} = \sum_{b}q_{B}(b)\ketbra{b}{b}_{B} \otimes \rho_{\tilde{E}}^{|b}$ for the collection of conditional states $\{\rho_{\tilde{E}}^{|b}\}_{b}$ defined via $\tr_{AC}[\mcM(\omega_{R\tilde{E}})] = \sum_{b}p_{B}(b)\ketbra{b}{b}_{B}\otimes \rho_{\tilde{E}}^{|b}$. \label{thm:big_eat}
\end{theorem}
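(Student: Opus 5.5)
The proof chains together \cref{lem:single}, \cref{lem:swap} and \cref{lem:singleProve}; after the first step no further analysis of the channels is needed.

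\textbf{Step 1: reduce to single rounds.} Fix an arbitrary $f \in \mbR^{|\mcC|}$ and apply \cref{lem:single} to $\rho^{|\Omega}$. This gives
\[
H_{\alpha}^{\uparrow}(A^{n}C^{n}|B^{n}E)_{\rho^{|\Omega}} \;\geq\; \sum_{i} \inf_{\omega}\inf_{v_{C}}\sup_{q_{B_i}}\big(\cdots\big) \;-\; \frac{\alpha}{\alpha-1}\log\frac{1}{p_{\Omega}}.
\]
The left-hand side does not depend on $f$, so we may take the supremum over $f$ on the right. The infima over $\omega$ may be restricted to pure states with $R_{i-1}\cong\tilde{E}_{i-1}$, as allowed by \cref{cor:new_chain_1} and \cref{lem:pureOpt}.

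\textbf{Step 2: pass to one channel.} The hypothesis on $\mcM$ in the theorem is exactly the condition required in \cref{lem:swap}. Applying that lemma lower bounds the supremum over $f$ of the sum by
\[
n\sup_{M>0}\inf_{\omega}\inf_{v_C}\sup_{q_B}\sup_{f\in\mcI_M}\Big(v_C\cdot f + H^f_\alpha(AC|B\tilde{E})_{\mcM(\omega)|\sigma}\Big),
\]
where the minimax exchange has already been carried out inside that lemma.

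\textbf{Step 3: solve the supremum over $f$.} \cref{lem:singleProve} gives a finite $M$ for which the inner quantity is at least $h_\alpha(\Omega)$ as defined in the theorem. Since the expression is a supremum over $M$, it is at least its value at that particular $M$, and hence at least $h_\alpha(\Omega)$. Combining Steps 1–3 yields the claimed bound $n\,h_\alpha(\Omega)-\frac{\alpha}{\alpha-1}\log(1/p_\Omega)$.

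\textbf{Main obstacle.} The delicate point is the quantifier over $M$ in \cref{lem:singleProve}. That lemma produces its $M$ for a fixed pair $(\omega,v_C)$, whereas Step 3 needs one $M$ inside the outer infimum over all $\omega$ and $v_C$. I would handle this by keeping $\sup_{M}$ outside and arguing via monotonicity in $M$: for each $(\omega,v_C)$ the objective $\sup_{f\in\mcI_M}$ is nondecreasing in $M$ and converges to the unconstrained supremum over $f\in\mbR^{|\mcC|}$, which by the stationary-point computation in that proof is at least the bracketed expression in $h_\alpha$.

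Two routes then close the gap:
\begin{itemize}
\item a compactness/Dini-type argument over the compact domain $\mcD(R)\times\Delta_\Omega$, which lets $\sup_M$ and $\inf$ be interchanged;
\item applying Sion's theorem directly with $f$ ranging over all of $\mbR^{|\mcC|}$, as the footnote in \cref{lem:swap} permits, so that no $M$-truncation is needed at all.
\end{itemize}
I would use the second route: keep $f\in\mbR^{|\mcC|}$ throughout and invoke the Case 1/Case 2 analysis of \cref{lem:singleProve} pointwise.
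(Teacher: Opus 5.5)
Your Steps 1--3 are exactly the paper's proof of \cref{thm:big_eat}. The paper applies \cref{lem:single} (with $\sup_f$ taken on the right, since the left-hand side does not depend on $f$), then \cref{lem:swap}, and concludes by bounding $\sup_{M>0}$ from below by its value at the $M$ supplied by \cref{lem:singleProve}. You depart from it only in the treatment of $M$, and your concern is justified. \cref{lem:singleProve} asserts a single $M$ for an expression containing $\inf_{\omega}\inf_{v_C}$, but its proof only produces an $M$ for fixed $(\omega,v_C,q_B)$ and a fixed perturbation parameter $\epsilon$; the paper then uses it as if it were uniform. Running Sion's theorem with $f$ ranging over all of $\mbR^{|\mcC|}$ is legitimate: only the minimizing set $\mcD(R)\times\Delta_\Omega$ must be compact, as the footnote in \cref{lem:swap} notes, and \cref{lem:concave} is already stated on $\mbR^{|\mcC|}$. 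It leaves you with $\inf_{\omega}\inf_{v_C}\sup_{q_B}\sup_{f\in\mbR^{|\mcC|}}$, so no uniformity in $M$ is ever needed. Your Dini-type alternative also works, but it needs lower semicontinuity of the truncated objective in $(\omega,v_C)$, which is the same regularity Sion already requires. One caution for your chosen route: do not re-import the $\epsilon$-perturbation of \cref{lem:singleProve}. Its continuity error $g(\epsilon,M)$ from \cref{lem:f_div_cont} is not guaranteed to vanish once $M$ must grow like $\frac{1}{\alpha-1}\log(1/\epsilon)$, which happens e.g.\ when $v_C$ vanishes somewhere on $\mathrm{supp}(p_C)$. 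Instead, evaluate the unconstrained supremum directly. It is $+\infty$ if $\mathrm{supp}(v_C)\not\subseteq\mathrm{supp}(p_C)$. Otherwise, write $D_c=D_\alpha(\rho^{|c}_{AB\tilde{E}}\|\id_A\otimes\sigma_{B\tilde{E}})$, which is finite for $q_B$ of full support. The substitution $y_c=(\alpha-1)(f_c+D_c)+\alpha\log p_C(c)$ together with the identity $\sup_{y}\bigl(\sum_c v_C(c)\,y_c-\log\sum_c 2^{y_c}\bigr)=\sum_c v_C(c)\log v_C(c)$ then gives exactly the bracket defining $h_\alpha(\Omega)$.
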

\begin{proof}
    By applying \cref{lem:single} to the state $\rho$,
    \begin{multline*}
         H_{\alpha}^{\uparrow}(A^{n}C^{n}|B^{n}E)_{\rho^{|\Omega}} \geq  \sum_{i=1}^{n} \inf_{\omega \in \mcD(R_{i-1}\tilde{E}_{i-1})} \inf_{v_{C} \in \Delta_{\Omega}} \sup_{q_{B_{i}} \in \Delta(B_{i})} \Bigg( v_{C} \cdot f + H_{\alpha}^{f}(A_{i}C_{i}|B_{i}\tilde{E}_{i-1})_{\mcM_{i}(\omega)|\sigma} \Bigg) \\- \frac{\alpha}{\alpha-1} \log \frac{1}{p_{\Omega}},
     \end{multline*}    
     where $\sigma_{B_{i}\tilde{E}_{i-1}} = \sum_{b}q_{B_{i}}(b)\ketbra{b}{b}_{B_{i}} \otimes \rho_{\tilde{E}_{i-1}}^{|b}$ and the set of conditional states $\{\rho_{\tilde{E}_{i-1}}^{|b}\}_{b}$ is defined via $\tr_{A_{i}C_{i}}[\mcM_{i}(\omega_{R_{i-1}\tilde{E}_{i-1}})]=\sum_{b}p_{B_{i}}(b)\ketbra{b}{b}_{B_{i}} \otimes \rho_{\tilde{E}_{i-1}}^{|b}$. Next we apply \cref{lem:swap} to lower bound the summation by
     \begin{equation*}
         n \sup_{M>0}\inf_{\omega \in \mcD(R\tilde{E})}\inf_{v_{C} \in \Delta_{\Omega}} \sup_{q_{B} \in \Delta(B)} \sup_{f\in \mcI_{M}} \Bigg( v_{C} \cdot f + H_{\alpha}^{f}(AC|B\tilde{E})_{\mcM(\omega)|\sigma} \Bigg)
     \end{equation*}
     where $\mcM:R \to ACB$ and $\sigma$ are as defined in the theorem statement. The proof is completed by lower bounding the above expression using \cref{lem:singleProve}.
\end{proof}

\subsection{Properties of the $f$-weighted R\'enyi entropy} \label{sec:f_prop}

In this subsection, we prove the various properties of the $f$-weighted R\'enyi entropy from \cref{def:f_div} that were required in the proof of \cref{thm:big_eat}. A summary of the different properties can be found below:
\begin{enumerate}
    \item \cref{lem:fexp}: a closed form expression for $\sup_{q_{B} \in \Delta(B)} H_{\alpha}^{f}(AC|BE)_{\rho|\sigma}$.
    \item \cref{lem:altexpr}: an expression for $H_{\alpha}^{f}(AC|BE)_{\rho|\sigma}$ as a shifted R\'enyi divergence.
    \item \cref{lem:f_div_cont}: continuity of $H_{\alpha}^{f}(AC|BE)_{\rho|\sigma}$ in $\rho$.
    \item \cref{lem:DPI}: data processing inequality for $\sup_{q_{B} \in \Delta(B)} H_{\alpha}^{f}(AC|BE)_{\rho|\sigma}$.
    \item \cref{lem:concave}: concavity of $\sup_{q_{B} \in \Delta(B)} H_{\alpha}^{f}(AC|BE)_{\mcM(\omega)|\sigma}$ in $f$.
    \item \cref{lem:convex}: convexity of $\sup_{q_{B} \in \Delta(B)} H_{\alpha}^{f}(AC|BE)_{\mcM(\omega)|\sigma}$ in $\omega$.
\end{enumerate}

\begin{lemma}[Closed form expression]
    Let $\rho_{ACBE} = \sum_{c,b}p_{CB}(c,b)\ketbra{cb}{cb}_{CB} \otimes \rho_{AE}^{|c,b}$ be a cq-state where $B$ and $C$ are classical, and $C$ takes values in a finite alphabet $\mcC$. Then for any vector $f\in \mathbb{R}^{|\mcC|}$ and any $\alpha \in (1,\infty)$ the following equality is true:
    \begin{equation*}
        \sup_{q_{B} \in \Delta(B)} H_{\alpha}^{f}(AC|BE)_{\rho|\sigma} = \frac{\alpha}{1-\alpha} \log \Bigg(\sum_{b } p_{B}(b) \Bigg( \sum_{c } p_{C|B}(c|b)^{\alpha} 2^{(\alpha -1)\big[ D_{\alpha}(\rho_{AE}^{|c,b} \| \id_{A} \otimes \rho_{E}^{|b}) + f_{c} \big]}\Bigg)^{\frac{1}{\alpha}}\Bigg),
    \end{equation*}
    where $\sigma_{BE} = \sum_{b}q_{B}(b)\ketbra{b}{b}_{B} \otimes \rho_{E}^{|b}$ and $\rho_{E}^{|b} = \sum_{c} p_{C|B}(c|b)\rho_{E}^{|c,b}$.
    \label{lem:fexp}
\end{lemma}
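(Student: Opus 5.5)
The plan mirrors the proof of \cref{lem:var_ent}: expand $H_{\alpha}^{f}$ for a fixed $q_B$ using the cq-decomposition \eqref{eq:cq_div}, then solve the resulting convex problem in $q_B$ in closed form. First I will rewrite \cref{def:f_div} with the classical register split as $CB$ rather than $C$ alone. Conditioned on $C=c$, the state $\rho_{ABE}^{|c}$ is itself classical on $B$, namely
\[
\rho_{ABE}^{|c}=\sum_b p_{B|C}(b|c)\,\ketbra{b}{b}\otimes\rho_{AE}^{|c,b}.
\]
Since $\sigma_{BE}=\sum_b q_B(b)\ketbra{b}{b}\otimes\rho_E^{|b}$ is block diagonal in $B$, \eqref{eq:cq_div} gives
\[
2^{(\alpha-1)D_{\alpha}(\rho_{ABE}^{|c}\|\id_A\otimes\sigma_{BE})}=\sum_b p_{B|C}(b|c)^{\alpha}q_B(b)^{1-\alpha}\,2^{(\alpha-1)D_{\alpha}(\rho_{AE}^{|c,b}\|\id_A\otimes\rho_E^{|b})}.
\]
The support condition $\mathrm{supp}(\rho_{BE})\subseteq\mathrm{supp}(\sigma_{BE})$ lets me restrict to $q_B(b)>0$ whenever $p_B(b)>0$. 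Any $q_B$ violating this gives $+\infty$ in the divergence, hence $-\infty$ in the entropy, so such $q_B$ do not affect the supremum.

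Next I substitute this into \cref{def:f_div} and use $p_C(c)^{\alpha}p_{B|C}(b|c)^{\alpha}=p_B(b)^{\alpha}p_{C|B}(c|b)^{\alpha}$. This yields
\[
H_{\alpha}^{f}(AC|BE)_{\rho|\sigma}=\frac{1}{1-\alpha}\log\Big(\sum_b q_B(b)^{1-\alpha}r_b\Big),
\]
where
\[
r_b:=p_B(b)^{\alpha}\sum_c p_{C|B}(c|b)^{\alpha}\,2^{(\alpha-1)[D_{\alpha}(\rho_{AE}^{|c,b}\|\id_A\otimes\rho_E^{|b})+f_c]}.
\]
Because $\frac{1}{1-\alpha}<0$ for $\alpha>1$, maximizing over $q_B$ is the same as minimizing $\sum_b q_B(b)^{1-\alpha}r_b$ over $\Delta(B)$. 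This is exactly the convex problem solved in the proof of \cref{lem:var_ent}. I will quote \cite[Proposition 5.4]{Tomamichel_2016}, or rederive it from Hölder's inequality, to get the optimal value $\big(\sum_b r_b^{1/\alpha}\big)^{\alpha}$, attained at $q_B^*(b)\propto r_b^{1/\alpha}$. Plugging this in and pulling $p_B(b)$ out of $r_b^{1/\alpha}$ gives the claimed expression, with the outer prefactor $\frac{\alpha}{1-\alpha}$.

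The main point needing care is degenerate cases. If some $r_b=+\infty$, a divergence is infinite for every $q_B$, so both sides equal $-\infty$ and the identity holds trivially. Terms with $p_B(b)=0$ contribute nothing and may be dropped. I also need the minimizer $q_B^*$ to have support matching $p_B$ so that it is feasible under the support condition; this holds because $r_b>0$ exactly when $p_B(b)>0$. Apart from these checks the argument is routine.
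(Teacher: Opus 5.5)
Your proposal is correct and follows the paper's proof essentially step for step. You expand each conditional divergence over the classical register $B$ via the cq-decomposition, collect terms into $\sum_b q_B(b)^{1-\alpha} r_b$ with $r_b=\sum_c p_{CB}(c,b)^{\alpha}2^{(\alpha-1)[D_\alpha(\rho_{AE}^{|c,b}\|\id_A\otimes\rho_E^{|b})+f_c]}$, and then use the known minimizer $q_B^*(b)\propto r_b^{1/\alpha}$ with value $\big(\sum_b r_b^{1/\alpha}\big)^{\alpha}$. Your extra handling of supports (treating a support violation as $-\infty$ for the entropy, and checking that $q_B^*$ has the same support as $p_B$) is a detail the paper's proof leaves implicit, and it is the reading consistent with how the paper uses the definition elsewhere.
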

\begin{proof}
    Using \cref{def:f_div}, for any feasible $\sigma_{BE}$
    \begin{equation*}
        H_{\alpha}^{f}(AC|BE)_{\rho|\sigma} = \frac{1}{1-\alpha} \log \Bigg(\sum_{c} p_{C}(c)^{\alpha} 2^{(\alpha - 1)\big[ D_{\alpha}(\rho_{ABE}^{|c} \| \id_{A} \otimes \sigma_{BE}) + f_{c}\big]} \Bigg)
    \end{equation*}
    where $\rho_{ABE}^{|c} = \sum_{b} p_{B}(b|c) \ketbra{b}{b}_{B} \otimes \rho_{AE}^{|c,b}$. Since $B$ is classical,
    \begin{equation*}
        2^{(\alpha - 1)\big[ D_{\alpha}(\rho_{ABE}^{|c} \| \id_{A} \otimes \sigma_{BE}) + f_{c}\big]} = \sum_{b} p_{B}(b|c)^{\alpha} q_{B}(b)^{1-\alpha} 2^{(\alpha - 1)\big[ D_{\alpha}(\rho_{AE}^{|c,b} \| \id_{A} \otimes \rho_{E}^{|b}) + f_{c} \big]}.
    \end{equation*}
    Inserting this, 
    \begin{equation*}
    \begin{aligned}
        \sup_{q_{B} \in \Delta(B)}H_{\alpha}^{f}(AC|BE)_{\rho|\sigma} &= \sup_{q_{B} \in \Delta(B)} \frac{1}{1-\alpha} \log \Bigg(\sum_{c} p_{C}(c)^{\alpha} \sum_{b} p_{B|C}(b|c)^{\alpha} q_{B}(b)^{1-\alpha} 2^{(\alpha - 1)\big[ D_{\alpha}(\rho_{AE}^{|c,b} \| \id_{A} \otimes \rho_{E}^{|b}) + f_{c} \big]} \Bigg)\\
        &=  \sup_{q_{B} \in \Delta(B)}\frac{1}{1-\alpha} \log \Bigg(\sum_{b}q_{B}(b)^{1-\alpha} \sum_{c} p_{CB}(c,b)^{\alpha} 2^{(\alpha - 1)\big[ D_{\alpha}(\rho_{AE}^{|c,b} \| \id_{A} \otimes \rho_{E}^{|b}) + f_{c} \big]}\Bigg) \\
        &=  \frac{1}{1-\alpha} \log \Bigg(\inf_{q_{B} \in \Delta(B)}\sum_{b}q_{B}(b)^{1-\alpha} r_{b}\Bigg),
    \end{aligned}
    \end{equation*}
    where 
    \begin{equation*}
        r_{b} = \sum_{c} p_{CB}(c,b)^{\alpha} 2^{(\alpha - 1)\big[ D_{\alpha}(\rho_{AE}^{|c,b} \| \id_{A} \otimes \rho_{E}^{|b}) + f_{c} \big]}.
    \end{equation*} 
    We now have the following optimization problem:
    \begin{equation*}
        \begin{aligned}
            \inf \ & \ \sum_{b}q_{B}(b)^{1-\alpha} r_{b} \\
            \text{subject to:} \ & \ \sum_{b} q_{B}(b) = 1,\\
            & \ q_{B}(b) \geq 0.
        \end{aligned}
    \end{equation*}
    It is know that this is a convex optimization problem, and the optimal value is given by $\Big( \sum_{b} r_{b}^{\frac{1}{\alpha}} \Big)^{\alpha}$
    at the point $q_{B}^*(b) = r_{b}^{\frac{1}{\alpha}} / \Big(\sum_{b} r_{b}^{\frac{1}{\alpha}}\Big)$ (see, e.g.,~\cite[Proposition 5.4]{Tomamichel_2016}). We therefore have
    \begin{equation*}
        \sup_{q_{B} \in \Delta(B)}H_{\alpha}^{f}(AC|BE)_{\rho|\sigma} = \frac{\alpha}{1-\alpha}\log \Bigg(\sum_{b} p_{B}(b) \Bigg(\sum_{c} p_{C|B}(c|b)^{\alpha} 2^{(\alpha - 1)\big[ D_{\alpha}(\rho_{AE}^{|c,b} \| \id_{A} \otimes \rho_{E}^{|b}) + f_{c} \big]}\Bigg)^{\frac{1}{\alpha}}\Bigg)
    \end{equation*}
    as claimed.
\end{proof}

The following is a direct analogue of~\cite[Lemma 4.1.]{arqand2024}. 

\begin{lemma}[Divergence expression]
    Let $C$ be a classical system which takes values in a finite alphabet $\mcC$, $\rho_{ABC} = \sum_{c\in \mcC}p_{C}(c)\ketbra{c}{c}_{C} \otimes \rho_{AB}^{|c}$ be a cq-state and $\sigma_{B} \in \mcD(B)$ be a state. Let $f \in \mbR^{|\mcC|}$, $f = [f_{c}]_{c \in \mcC}$ denote a real vector and $M > 0$ be such that $M/2 < M - f_{c}$ for all $c \in \mcC$. Then there exists a read-and-prepare channel $\mcR : C \to CD$ according to \cref{def:randp} such that
    \begin{equation*}
        H_{\alpha}^{f}\big(AC|B\big)_{\rho|\sigma} = -D_{\alpha}\big(\bar{\rho}_{ABCD} \|\id_{ACD} \otimes \sigma_{B}\big) - M
    \end{equation*}
    where $\bar{\rho}_{ABCD} = \mcR(\rho_{ABC})$. \label{lem:altexpr}
\end{lemma}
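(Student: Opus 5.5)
We take the read-and-prepare channel $\mcR : C \to CD$ given by \cref{lem:prep} for the fixed order $\alpha$, with classical states $\tau(c)_{D}$ satisfying $H_{\alpha}(D)_{\tau(c)} = M - f_{c}$ for every $c \in \mcC$. This is possible because the hypothesis $M/2 < M - f_{c}$ makes every target entropy strictly positive. The output state is then
\begin{equation*}
    \bar{\rho}_{ABCD} = \sum_{c} p_{C}(c)\,\ketbra{c}{c}_{C} \otimes \rho_{AB}^{|c} \otimes \tau(c)_{D}.
\end{equation*}
This is a cq-state classical on $C$, so the divergence against $\id_{ACD} \otimes \sigma_{B}$ can be expanded using the cq decomposition \eqref{eq:cq_div}. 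In that expansion the reference operator on $C$ is $\id_{C}$, so every weight $q_{C}(c)$ equals $1$.

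First we check supports. If $\mathrm{supp}(\rho_{B}) \not\subseteq \mathrm{supp}(\sigma_{B})$, then some $\rho_{AB}^{|c}$ with $p_{C}(c) > 0$ violates the support condition. In that case both sides are $\pm\infty$ consistently: $H_{\alpha}^{f}$ is set to $+\infty$ by definition, and $-D_{\alpha}$ is then $-\infty$. This case needs a remark and probably a convention check, since the signs do not obviously agree. Otherwise we apply \eqref{eq:cq_div} to obtain
\begin{equation*}
    D_{\alpha}(\bar{\rho}_{ABCD}\|\id_{ACD}\otimes\sigma_{B}) = \frac{1}{\alpha-1}\log\Bigg(\sum_{c} p_{C}(c)^{\alpha}\, 2^{(\alpha-1)D_{\alpha}(\rho_{AB}^{|c}\otimes\tau(c)_{D}\,\|\,\id_{A}\otimes\sigma_{B}\otimes\id_{D})}\Bigg).
\end{equation*}
Additivity of the sandwiched divergence under tensor products splits each inner term as $D_{\alpha}(\rho_{AB}^{|c}\|\id_{A}\otimes\sigma_{B}) + D_{\alpha}(\tau(c)\|\id_{D})$. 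The second summand is $-H_{\alpha}(D)_{\tau(c)} = f_{c} - M$.

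Substituting, the exponent becomes $(\alpha-1)\big[f_{c} + D_{\alpha}(\rho_{AB}^{|c}\|\id_{A}\otimes\sigma_{B})\big] - (\alpha-1)M$. The factor $2^{-(\alpha-1)M}$ is common to every term, so it comes out of the sum, and after the logarithm it contributes exactly $-M$ to the divergence. What remains is $\frac{1}{\alpha-1}\log\sum_{c} p_{C}(c)^{\alpha}\,2^{(\alpha-1)[f_{c}+D_{\alpha}(\cdots)]}$, which by \cref{def:f_div} equals $-H_{\alpha}^{f}(AC|B)_{\rho|\sigma}$. Rearranging the signs gives $-D_{\alpha}(\bar\rho\|\id_{ACD}\otimes\sigma_{B}) = H_{\alpha}^{f}(AC|B)_{\rho|\sigma} + M$.

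This appears to have the opposite sign of $M$ from the claimed statement. The main obstacle is therefore reconciling this constant with the conventions; the calculation \eqref{eq:rec_fdiv} in \cref{lem:single} has the same sign. I would recheck whether \cref{lem:prep} is meant to give $H_{\alpha}(D) = M - f_{c}$ or $f_{c} - M$, and whether the intended identity carries $+M$. The remaining steps are routine.
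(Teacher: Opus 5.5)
Your derivation is the paper's proof step for step, and it proves the lemma as stated: you choose $\mcR$ from \cref{lem:prep} with $H_{\alpha}(D)_{\tau(c)} = M - f_{c}$, expand with \eqref{eq:cq_div} using weight $1$ on $C$, split off $D_{\alpha}(\tau(c)\|\id_{D}) = f_{c}-M$ by additivity, and pull out the common factor $2^{-(\alpha-1)M}$.

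The sign problem in your last paragraph is not real. Your identity $-D_{\alpha}(\bar{\rho}_{ABCD}\|\id_{ACD}\otimes\sigma_{B}) = H_{\alpha}^{f}(AC|B)_{\rho|\sigma} + M$ is exactly the claim $H_{\alpha}^{f}(AC|B)_{\rho|\sigma} = -D_{\alpha}(\bar{\rho}_{ABCD}\|\id_{ACD}\otimes\sigma_{B}) - M$, with $M$ moved to the other side. So there is nothing to reconcile, and \cref{lem:prep} needs no change. The actual inconsistency is in \eqref{eq:rec_fdiv}: its final $+M$ should be $-M$. The last display of the proof of \cref{lem:single} confirms this, since there the per-round $+M$ is what cancels the earlier $-nM$, which is the sign this lemma states.

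Your caution about supports is warranted, but it resolves cleanly. The $+\infty$ in \cref{def:f_div} should be $-\infty$: that is what the defining formula gives when a divergence is infinite, and it is the convention the paper itself uses in Case 2 of \cref{lem:singleProve}. Since $\bar{\rho}_{B} = \rho_{B}$, with this convention both sides equal $-\infty$ when $\mathrm{supp}(\rho_{B})\not\subseteq\mathrm{supp}(\sigma_{B})$. The paper's proof treats only the finite case and does not mention this.
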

\begin{proof}
    By \cref{lem:prep}, there always exists a read-and-prepare channel $\mcR:C \to CD$ satisfying
    \begin{equation*}
        H_{\alpha}(D)_{\tau(c)} = H_{\alpha}^{\uparrow}(D)_{\tau(c)} = M - f_{c} 
    \end{equation*}
    for a set of states $\tau(c)_{D}$ all $c \in \mcC$. According to \cref{def:randp}, the state $\bar{\rho}_{ABCD}$ is of the form $\bar{\rho}_{ABCD} = \sum_{c\in \mcC}p_{C}(c)\ketbra{c}{c}_{C} \otimes \rho_{AB}^{|c} \otimes \tau(c)_{D}$. We therefore have
    \begin{equation*}
        \begin{aligned}
            D_{\alpha}\big(\bar{\rho}_{ABCD} \|\id_{ACD} \otimes \sigma_{B}\big) &= \frac{1}{\alpha - 1}\log\Bigg( \sum_{c\in \mcC} p_{C}(c)^{\alpha} 2^{(\alpha-1)D_{\alpha}(\rho_{AB}^{|c}\otimes \tau(c)_{D}\|\id_{AD} \otimes \sigma_{B})}\Bigg)\\
            &= \frac{1}{\alpha - 1}\log\Bigg( \sum_{c\in \mcC} p_{C}(c)^{\alpha} 2^{(\alpha-1)\big[D_{\alpha}(\rho_{AB}^{|c}\|\id_{A} \otimes \sigma_{B}) - H_{\alpha}(D)_{\tau(c)}\big]}\Bigg) \\
            &= \frac{1}{\alpha - 1}\log\Bigg( \sum_{c\in \mcC} p_{C}(c)^{\alpha} 2^{(\alpha-1)\big[D_{\alpha}(\rho_{AB}^{|c}\|\id_{A} \otimes \sigma_{B}) + f_{c} \big]}\Bigg) - M \\
            &= - H_{\alpha}^{f}\big(AC|B\big)_{\rho|\sigma} - M
        \end{aligned}
    \end{equation*}
    as claimed.
\end{proof}

In the following lemma, we show how continuity of the sandwiched R\'enyi divergence in the first argument, as obtained by~\cite{Bluhm_2024}, implies a similar continuity bound for $f$-weighted R\'enyi entropies. 

\begin{lemma}[Continuity]
    Let $C$ be a classical system that takes values in a finite alphabet $\mcC$, $\rho_{ABC} = \sum_{c\in \mcC}p_{C}(c)\ketbra{c}{c}_{C} \otimes \rho_{AB}^{|c}$ and $\tau_{ABC} = \sum_{c\in \mcC}q_{C}(c)\ketbra{c}{c}_{C} \otimes \tau_{AB}^{|c}$ be cq-states and $\sigma_{B} \in \mcD(B)$ be any state such that $\mathrm{supp}(\rho_{B}) \subseteq \mathrm{supp}(\sigma_{B})$ and $\mathrm{supp}(\tau_{B}) \subseteq \mathrm{supp}(\sigma_{B})$. Let $f \in \mbR^{|\mcC|}$, $f = [f_{c}]_{c \in \mcC}$ denote a real vector and $M > 0$ be such that $M/2 < M - f_{c}$ and $-M < f_{c}$ for all $c \in \mcC$. Suppose $\frac{1}{2}\|\rho_{ABC} - \tau_{ABC}\|_{1} \leq \epsilon$ for some $\epsilon \in [0,1]$. Then for any $\alpha \in (1,\infty)$, the $f$-weighted R\'enyi entropy satisfies
    \begin{equation*}
        \Big| H_{\alpha}^{f}\big(AC|B\big)_{\rho|\sigma} -  H_{\alpha}^{f}\big(AC|B\big)_{\tau|\sigma}\Big| \leq \frac{\alpha}{\alpha-1}\log\Bigg(1 + \epsilon \Bigg(\frac{\mathrm{dim}(A) \cdot |\mcC| \cdot 2^{\ceil{2M}}}{m_{\sigma}}\Bigg)^{\frac{\alpha-1}{\alpha}}\Bigg),
    \end{equation*}
    where $m_{\sigma}$ is the smallest non-zero eigenvalue of $\sigma_{B}$. \label{lem:f_div_cont}
\end{lemma}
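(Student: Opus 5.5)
The plan is to use \cref{lem:altexpr} to rewrite both $f$-weighted entropies as shifted sandwiched R\'enyi divergences, so that the shifts cancel and the question becomes continuity of $D_{\alpha}(\cdot\|\id\otimes\sigma_B)$ in its first argument. Fix a read-and-prepare channel $\mcR:C\to CD$ from \cref{lem:prep} with $H_\alpha(D)_{\tau(c)}=M-f_c$ for all $c$. Since the same channel works for both $\rho$ and $\tau$, \cref{lem:altexpr} gives
\begin{equation*}
H_{\alpha}^{f}(AC|B)_{\rho|\sigma}-H_{\alpha}^{f}(AC|B)_{\tau|\sigma} = D_\alpha(\mcR(\tau_{ABC})\|\id_{ACD}\otimes\sigma_B)-D_\alpha(\mcR(\rho_{ABC})\|\id_{ACD}\otimes\sigma_B),
\end{equation*}
and the constant $M$ cancels. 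Because $\mcR$ is a channel, the data processing inequality for the trace distance gives $\frac12\|\mcR(\rho)-\mcR(\tau)\|_1\le\epsilon$. The support hypotheses $\mathrm{supp}(\rho_B),\mathrm{supp}(\tau_B)\subseteq\mathrm{supp}(\sigma_B)$ make both divergences finite.

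For the continuity step I would write $Q_\alpha(X)=\tr[(\Gamma^{\frac{1-\alpha}{2\alpha}}X\Gamma^{\frac{1-\alpha}{2\alpha}})^\alpha]$ with $\Gamma=\id_{ACD}\otimes\sigma_B$. The function $Q_\alpha^{1/\alpha}$ is a Schatten $\alpha$-norm of a positive operator, so the triangle inequality gives
\begin{equation*}
Q_\alpha(\bar\rho)^{1/\alpha}\le Q_\alpha(\bar\tau)^{1/\alpha}+\big\|\Gamma^{\frac{1-\alpha}{2\alpha}}(\bar\rho-\bar\tau)_+\Gamma^{\frac{1-\alpha}{2\alpha}}\big\|_\alpha .
\end{equation*}
The last term is at most $\epsilon\,\|\Gamma^{\frac{1-\alpha}{\alpha}}\|_\infty$ up to the usual factor handling, using $\|X\|_\alpha\le\|X\|_1$ for positive $X$. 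The key point is the restriction to the support of $\sigma_B$, which gives $\|\Gamma^{\frac{1-\alpha}{\alpha}}\|_\infty=m_\sigma^{-\frac{\alpha-1}{\alpha}}$. If needed, I would cite the continuity bound of~\cite{Bluhm_2024} directly in the form $|D_\alpha(\rho\|\Gamma)-D_\alpha(\tau\|\Gamma)|\le\frac{\alpha}{\alpha-1}\log(1+\epsilon\,(\cdots)/Q_\alpha(\cdot)^{1/\alpha})$.

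What remains is to divide by $Q_\alpha(\bar\tau)^{1/\alpha}$, which requires a lower bound on it. Here the dimensional factor enters. Writing $Q_\alpha(\bar\tau)^{1/\alpha}=2^{\frac{\alpha-1}{\alpha}D_\alpha}$ and using $D_\alpha(\bar\tau\|\id\otimes\sigma)\ge -\log\dim(ACD)$, which holds because $\sigma_B$ is normalised, gives $Q_\alpha(\bar\tau)^{-1/\alpha}\le(\dim A\cdot|\mcC|\cdot\dim D)^{\frac{\alpha-1}{\alpha}}$. \cref{lem:prep} lets me take $\tau(c)$ classical with $H_\alpha(D)_{\tau(c)} = M-f_c<2M$ (using $-M<f_c$), so $\dim D\le 2^{\ceil{2M}}$ suffices. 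Combining these, taking logarithms and multiplying by $\frac{1}{\alpha-1}$ yields
\begin{equation*}
\frac{\alpha}{\alpha-1}\log\Big(1+\epsilon\big(\dim(A)|\mcC|2^{\ceil{2M}}/m_\sigma\big)^{\frac{\alpha-1}{\alpha}}\Big),
\end{equation*}
and swapping the roles of $\rho$ and $\tau$ gives the absolute value.

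The main obstacle is the perturbation step: controlling $\|\Gamma^{\frac{1-\alpha}{2\alpha}}(\bar\rho-\bar\tau)\Gamma^{\frac{1-\alpha}{2\alpha}}\|_\alpha$ with exactly the constant claimed, since the positive and negative parts must be treated correctly within the $\alpha$-norm. I would first try to import the bound of~\cite[Theorem 4.17]{Bluhm_2024} verbatim rather than rederive it.
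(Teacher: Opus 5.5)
Your proof is correct, and its skeleton matches the paper's: \cref{lem:altexpr} with one read-and-prepare channel $\mcR$ (depending only on $f$, $M$, $\alpha$) so that $M$ cancels, contractivity of the trace distance under $\mcR$, continuity of $D_\alpha(\cdot\|\id_{ACD}\otimes\sigma_B)$ in its first argument, and $\dim(D)\le 2^{\ceil{2M}}$. The only difference is the continuity step.

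The paper restricts $\mcH_B$ to $\mathrm{supp}(\sigma_B)$ and cites \cite[Theorem 4.17]{Bluhm_2024} for the singleton set $\mcS=\{\frac{1}{\dim(ACD)}\id_{ACD}\otimes\sigma_B\}$. The shift by $\log\dim(ACD)$ cancels in the difference, and $\kappa=\dim(ACD)/m_\sigma$ comes from the max-divergence bound $\log\|(\frac{1}{\dim(ACD)}\id_{ACD}\otimes\sigma_B)^{-1}\|_\infty$.

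Your Schatten-norm argument is a direct proof of exactly this singleton case, and it closes without the fallback citation. Because $\bar\rho$ and $\bar\tau$ are normalised, $\tr(\bar\rho-\bar\tau)_+=\frac12\|\bar\rho-\bar\tau\|_1\le\epsilon$. Monotonicity of $\|\cdot\|_\alpha$ on positive operators, the triangle inequality and $\|X\|_\alpha\le\|X\|_1$ then give $Q_\alpha(\bar\rho)^{1/\alpha}\le Q_\alpha(\bar\tau)^{1/\alpha}+\epsilon\, m_\sigma^{-\frac{\alpha-1}{\alpha}}$ with no extra factor. The support hypotheses ensure $(\bar\rho-\bar\tau)_+$ lives in the support of $\id_{ACD}\otimes\sigma_B$, so the negative power is legitimate. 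Finally, $Q_\alpha(\bar\tau)^{1/\alpha}\ge\dim(ACD)^{-\frac{\alpha-1}{\alpha}}$ reproduces precisely the factor $(\dim(ACD)/m_\sigma)^{\frac{\alpha-1}{\alpha}}$. The ``obstacle'' you flag is therefore already resolved by your own displayed inequality.

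In the paper, $\dim(ACD)$ enters through the worst-case divergence $\kappa$. In your route it enters through a lower bound on $Q_\alpha(\bar\tau)$, via non-negativity of the divergence against the normalised reference. The citation is shorter and sits in a framework that also handles divergences optimised over a convex set of second arguments. Your argument is elementary and self-contained, and makes transparent where each factor of the constant comes from.
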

\begin{proof}
    We first apply \cref{lem:altexpr}, which states that there exists a read-and-prepare channel $\mcR$ that satisfies     
    \begin{equation*}
        \begin{aligned}
            H_{\alpha}^{f}\big(AC|B\big)_{\rho|\sigma} &= -D_{\alpha}\big(\bar{\rho}_{ABCD} \|\id_{ACD} \otimes \sigma_{B}\big) - M \ \ \text{and}\\
            H_{\alpha}^{f}\big(AC|B\big)_{\tau|\sigma} &= -D_{\alpha}\big(\bar{\tau}_{ABCD} \|\id_{ACD} \otimes \sigma_{B}\big) - M
        \end{aligned}
    \end{equation*}
    where $\bar{\rho}_{ABCD} = \mcR(\rho_{ABC})$ and $\bar{\tau}_{ABCD} = \mcR(\tau_{ABC})$. We therefore have
    \begin{equation}
        \Big| H_{\alpha}^{f}\big(AC|B\big)_{\rho|\sigma} -  H_{\alpha}^{f}\big(AC|B\big)_{\tau|\sigma}\Big| = \Big| D_{\alpha}\big(\bar{\rho}_{ABCD} \|\id_{ACD} \otimes \sigma_{B}\big) -  D_{\alpha}\big(\bar{\tau}_{ABCD} \|\id_{ACD} \otimes \sigma_{B}\big) \Big|. \label{eq:c_bound_1}
    \end{equation}
    Note that the data-processing inequality for the trace norm implies $\frac{1}{2}\| \bar{\rho}_{ABCD} - \bar{\tau}_{ABCD}\|_{1}\leq \epsilon$. We can therefore apply a continuity bound for the sandwiched R\'enyi divergence~\cite[Theorem 4.17]{Bluhm_2024}\footnote{We could also apply~\cite[Theorem 4.4]{Bluhm_2024} or ~\cite[Theorem 4.20]{Bluhm_2024} for this purpose. We only choose~\cite[Theorem 4.17]{Bluhm_2024} because it has the simplest expression.}: for any compact convex set $\mcS \subseteq \mcD(\mcH)$ with at least one positive definite state,
    \begin{equation}
        \Big| \inf_{\nu \in \mcS} D_{\alpha}(\rho\|\nu) - \inf_{\nu \in \mcS} D_{\alpha}(\tau\|\nu)\Big | \leq \frac{\alpha}{\alpha - 1}\log\Big( 1 + \epsilon \, \kappa^{\frac{\alpha-1}{\alpha}}\Big) \label{eq:c_bound_2}
    \end{equation}
    for any pair of density operators $\rho,\tau \in \mathcal{D}(\mathcal{H})$ that satisfy $\frac{1}{2}\| \rho - \tau\|_{1}\leq \epsilon$, and where $\kappa$ satisfies
    \begin{equation*}
        \sup_{\rho \in \mcD(\mcH)} \inf_{\nu \in \mcS} D_{\alpha}(\rho\|\nu) \leq \log(\kappa) < \infty.
    \end{equation*}
    For our case, we first restrict the Hilbert space $\mcH_{B}$ to the support of $\sigma_{B}$, and denote it by $\overline{\mcH}_{B}$. The operator $\id_{ACD} \otimes \sigma_{B}$ on $\mcH_{A}\otimes \overline{\mcH}_{B} \otimes \mcH_{C} \otimes \mcH_{D}$ is now positive definite. Note that $\text{supp}(\rho_{B}) \subseteq \text{supp}(\sigma_{B})$ implies $\rho_{B} \in \mcD(\overline{\mcH})$, and similarly $\tau_B \in \mcD(\overline{\mcH})$. We then choose $\mcS = \{ \frac{1}{\text{dim}(ACD)}\id_{ACD} \otimes \sigma_{B}\}$, and note that $\mcS$ is a compact, convex subset of $\mcD(\mcH_{A}\otimes \overline{\mcH}_{B} \otimes \mcH_{C} \otimes \mcH_{D})$ containing a single positive definite state. Furthermore, following~\cite[Remark 4.5]{Bluhm_2024}, we can bound
    \begin{multline*}
        \sup_{\rho \in \mcD(ABCD)}  D_{\alpha}\Big(\rho_{ABCD}\big\|\frac{1}{\text{dim}(ACD)}\id_{ACD} \otimes \sigma_{B}\Big) \leq \\ \log\Big(\Big\|\Big(\frac{1}{\text{dim}(ACD)}\id_{ACD} \otimes \sigma_{B}\Big)^{-1}\Big\|_{\infty}\Big) = \log\Big( \frac{\text{dim}(ACD)}{m_{\sigma}}\Big)
    \end{multline*}
    where $m_{\sigma}$ is the smallest eigenvalue of $\sigma_{B}$, which is non-zero because $\sigma_{B}$ is positive definite. Inserting this into \cref{eq:c_bound_2,eq:c_bound_1} we obtain
    \begin{equation*}
        \Big| H_{\alpha}^{f}\big(AC|B\big)_{\rho|\sigma} -  H_{\alpha}^{f}\big(AC|B\big)_{\tau|\sigma}\Big|  \leq \frac{\alpha}{\alpha-1}\log\Bigg(1 + \epsilon \Bigg(\frac{\text{dim}(ACD)}{m_{\sigma}}\Bigg)^{\frac{\alpha-1}{\alpha}}\Bigg). 
    \end{equation*}
    To complete the proof, we recall that the read-and-prepare channel $\mcR$ is such that $H_{\alpha}(D)_{\tau(c)} = M - f_{c} \leq 2M$, and therefore we can always choose the system $D$ such that $\text{dim}(D)\leq 2^{\ceil{2M}}$.  
\end{proof}

The next lemma is an extension of~\cite[Lemma 4.3.]{arqand2024}. 

\begin{lemma}[Data processing]
    Let $\rho_{ACBE} = \sum_{c,b}p_{CB}(c,b)\ketbra{cb}{cb}_{CB} \otimes \rho_{AE}^{|c,b}$ be a cq-state where $B$ and $C$ are classical, and $C$ takes values in a finite alphabet $\mcC$. Let $f\in \mathbb{R}^{|\mcC|}$, and $\mcN : E \to E'$ be a quantum channel. Then for any $\alpha \in (1,\infty)$ the following equality is true:
    \begin{equation*}
        \sup_{q_{B} \in \Delta(B)} H_{\alpha}^{f}(AC|BE)_{\rho|\sigma} \geq \sup_{q_{B} \in \Delta(B)} H_{\alpha}^{f}(AC|BE')_{\mcN(\rho)|\sigma}.
    \end{equation*} \label{lem:DPI}
\end{lemma}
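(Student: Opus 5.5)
The plan is to pass through the closed form of \cref{lem:fexp}, which turns both sides into explicit functions of per-$(c,b)$ divergences. After that the claim reduces to the ordinary data processing inequality for the sandwiched R\'enyi divergence.

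First, I note how $\mcN$ acts. Since $\mcN$ acts only on $E$ and $B,C$ are classical, we have
\[
\mcN(\rho_{ACBE}) = \sum_{c,b} p_{CB}(c,b)\ketbra{cb}{cb}_{CB} \otimes \mcN(\rho_{AE}^{|c,b}).
\]
So the classical distribution $p_{CB}$ is unchanged. The conditional states become $\mcN(\rho_{AE}^{|c,b})$, and the conditional marginals on $E'$ become $\mcN(\rho_E^{|b})$, because $\rho_E^{|b} = \sum_c p_{C|B}(c|b)\rho_E^{|c,b}$ and $\mcN$ is linear.

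Applying \cref{lem:fexp} to both states then gives
\[
\sup_{q_B} H_\alpha^f(AC|BE)_{\rho|\sigma} = \frac{\alpha}{1-\alpha}\log\Big(\sum_b p_B(b)\, S_b^{1/\alpha}\Big),
\qquad
S_b = \sum_c p_{C|B}(c|b)^\alpha\, 2^{(\alpha-1)[D_\alpha(\rho_{AE}^{|c,b}\|\id_A\otimes\rho_E^{|b}) + f_c]},
\]
and the analogous expression $S_b'$ for the right-hand side, with $D_\alpha\big(\mcN(\rho_{AE}^{|c,b})\,\big\|\,\id_A\otimes\mcN(\rho_E^{|b})\big)$ in the exponent.

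The key step is data processing for $D_\alpha$ under the channel $\mcI_A\otimes\mcN$. Since $\mcN(\id_A\otimes\rho_E^{|b}) = \id_A\otimes\mcN(\rho_E^{|b})$, this yields
\[
D_\alpha\big(\mcN(\rho_{AE}^{|c,b})\,\big\|\,\id_A\otimes\mcN(\rho_E^{|b})\big) \le D_\alpha\big(\rho_{AE}^{|c,b}\,\big\|\,\id_A\otimes\rho_E^{|b}\big)
\]
for every $(c,b)$. Data processing for $\alpha>1$ holds for positive (not necessarily normalized) second arguments, which covers the $\id_A$ factor. This is the only point that needs a moment of care, together with the support condition $\mathrm{supp}(\rho_{AE}^{|c,b})\subseteq\mathrm{supp}(\id_A\otimes\rho_E^{|b})$, which holds automatically since $\rho_{E}^{|c,b}\le \rho_E^{|b}/p_{C|B}(c|b)$. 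The support inclusion is preserved under $\mcN$, so no term becomes infinite on the processed side.

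Finally, I chain the monotonicities. Because $\alpha-1>0$, each exponent decreases, so $S_b'\le S_b$ for every $b$. Since $x\mapsto x^{1/\alpha}$ is increasing, the sum $\sum_b p_B(b)\,S_b^{1/\alpha}$ also decreases. The prefactor $\frac{\alpha}{1-\alpha}$ is negative, so $\frac{\alpha}{1-\alpha}\log(\cdot)$ is decreasing in its argument. Hence the entropy for $\mcN(\rho)$ is at most the entropy for $\rho$, which is the stated inequality (the word "equality" in the statement should read "inequality"). I do not expect any real obstacle beyond bookkeeping, since the closed form of \cref{lem:fexp} already absorbs the optimization over $q_B$.
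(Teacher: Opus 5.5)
Your proposal is correct and is essentially the paper's own argument: rewrite both sides with the closed form of \cref{lem:fexp}, apply the data processing inequality for $D_{\alpha}$ under $\mcI_{A}\otimes\mcN$ to each term $D_{\alpha}(\rho_{AE}^{|c,b}\|\id_{A}\otimes\rho_{E}^{|b})$, and conclude by monotonicity of the outer expression. Your remarks on the support condition, and on ``equality'' being a typo for ``inequality'', are accurate and simply make explicit what the paper leaves implicit.
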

\begin{proof}
    We use the closed form expression from \cref{lem:fexp}, and note that by the data processing inequality for the sandwiched R\'enyi divergence
    \begin{equation*}
        D_{\alpha}(\rho_{AE}^{|c,b} \| \id_{A} \otimes \rho_{E}^{|b}) \geq D_{\alpha}(\mcN(\rho_{AE}^{|c,b}) \|  \id_{A} \otimes \mcN(\rho_{E}^{|b})).
    \end{equation*}
    This implies
    \begin{equation*}
        \sup_{q_{B} \in \Delta(B)} H_{\alpha}^{f}(AC|BE)_{\rho|\sigma} \geq \frac{\alpha}{1-\alpha} \log \Bigg(\sum_{b } p_{B}(b) \Bigg( \sum_{c } p_{C|B}(c|b)^{\alpha} 2^{(\alpha -1)\big[ D_{\alpha}(\mcN(\rho_{AE}^{|c,b}) \| \id_{A} \otimes \mcN(\rho_{E}^{|b})) + f_{c} \big]}\Bigg)^{\frac{1}{\alpha}}\Bigg).
    \end{equation*}
    Note that $\mcN(\rho_{ACBE}) = \sum_{c,b}p_{CB}(c,b)\ketbra{cb}{cb}_{CB} \otimes \mcN(\rho_{AE}^{|c,b})$, and by \cref{lem:fexp} 
    \begin{equation*}
        \frac{1}{1-\alpha} \log \Bigg(\sum_{b } p_{B}(b) \Bigg( \sum_{c } p_{C|B}(c|b)^{\alpha} 2^{(\alpha -1)\big[ D_{\alpha}(\mcN(\rho_{AE}^{|c,b}) \| \id_{A} \otimes \mcN(\rho_{E}^{|b})) + f_{c} \big]}\Bigg)^{\frac{1}{\alpha}}\Bigg) = \sup_{q_{B} \in \Delta(B)} H_{\alpha}^{f}(AC|BE')_{\mcN(\rho)|\sigma},
    \end{equation*}
    completing the proof.
\end{proof}

To conclude this subsection, we analyze the convexity of our extended definition of the $f$-weighted R\'enyi entropy. The standard case (i.e., when $q_{B} = p_{B}$ and there is no additional supremum) was established in \cite[Lemma 4.8.]{arqand2024}, and we follow this proof approach, modifying it to accommodate the additional supremum. 

\begin{lemma}[Concavity in $f$]
   Let $\rho_{ACBE} = \sum_{c,b}p_{CB}(c,b)\ketbra{cb}{cb}_{CB} \otimes \rho_{AE}^{|c,b}$ be a cq-state where $B$ and $C$ are classical, and $C$ takes values in a finite alphabet $\mcC$. For any $\alpha \in (1,\infty)$, the function 
    \begin{equation*}
        f \mapsto \sup_{q_{B} \in \Delta(B)} H_{\alpha}^{f}(AC|BE)_{\rho|\sigma}
    \end{equation*}
    is concave in $f \in \mathbb{R}^{|\mcC|}$. \label{lem:concave}
\end{lemma}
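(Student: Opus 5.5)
The plan is to reduce concavity to a convexity statement about a log-sum-exp type function, using the closed form expression in \cref{lem:fexp}. By \cref{lem:fexp}, for every $f\in\mbR^{|\mcC|}$,
\begin{equation*}
    \sup_{q_{B} \in \Delta(B)} H_{\alpha}^{f}(AC|BE)_{\rho|\sigma} = \frac{\alpha}{1-\alpha}\, G(f), \qquad G(f) := \log \Bigg(\sum_{b} p_{B}(b)\, 2^{F_{b}(f)}\Bigg),
\end{equation*}
where
\begin{equation*}
    F_{b}(f) := \frac{1}{\alpha}\log\Bigg(\sum_{c} w_{bc}\, 2^{(\alpha-1) f_{c}}\Bigg), \qquad w_{bc} := p_{C|B}(c|b)^{\alpha}\, 2^{(\alpha-1)D_{\alpha}(\rho_{AE}^{|c,b}\|\id_{A}\otimes\rho_{E}^{|b})}\geq 0.
\end{equation*}
The essential point is that the weights $w_{bc}$ do not depend on $f$: the supremum over $q_B$ has already been solved in closed form, so all dependence on $f$ sits in the exponentials $2^{(\alpha-1)f_c}$. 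Since $\alpha>1$, the prefactor $\alpha/(1-\alpha)$ is strictly negative, so it suffices to show that $G$ is convex on $\mbR^{|\mcC|}$.

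First I dispose of degenerate cases. If some $w_{bc}=+\infty$ with $p_{CB}(c,b)>0$, then $G\equiv+\infty$ for all $f$, the objective is $-\infty$ for every $f$, and it is trivially concave (as an extended-real function). Terms with $p_{CB}(c,b)=0$ have $w_{bc}=0$ and simply drop out, and values of $b$ with $p_B(b)=0$ contribute nothing. So I may assume every $w_{bc}$ is finite and restrict each sum to indices with $w_{bc}>0$.

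Next I show each $F_b$ is convex. It is $\frac{1}{\alpha}$ times a log-sum-exp function, $\log\sum_c \exp\big(\ln w_{bc} + (\alpha-1)\ln(2)\, f_c\big)$ up to the base change, composed with an affine map of $f$. Log-sum-exp is convex, and composing it with an affine map preserves convexity, so $F_b$ is convex. Then $G(f)=\mathrm{LSE}\big(\ln p_B(b) + \ln(2)\,F_b(f)\big)_b$, again up to base change. The outer log-sum-exp is convex and nondecreasing in each argument, and each inner argument is convex. The standard composition rule (convex, coordinatewise nondecreasing function of convex functions) therefore gives convexity of $G$. Equivalently, this follows from H\"older's inequality applied directly to $G(\lambda f+(1-\lambda)f')$. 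Multiplying by $\alpha/(1-\alpha)<0$ yields concavity of $f\mapsto\sup_{q_B}H^f_\alpha$.

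The only step I regard as delicate is the appeal to \cref{lem:fexp}. It presupposes that the optimal $q_B$ and the support conditions are such that the expression is valid uniformly in $f$. I would check that the support condition $\mathrm{supp}(\rho_{BE})\subseteq\mathrm{supp}(\sigma_{BE})$ depends only on $q_B$ and not on $f$, so the closed form holds for all $f$ simultaneously. Once that is confirmed, the rest is the routine convexity argument above.
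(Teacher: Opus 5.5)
Your proposal is correct and is essentially the paper's proof. Both arguments invoke the closed form of \cref{lem:fexp}, write each inner term $\frac{1}{\alpha}\log\sum_c w_{bc}2^{(\alpha-1)f_c}$ as a log-sum-exp of an affine function of $f$ (hence convex), and use that the outer log-sum-exp is convex and coordinatewise nondecreasing to get convexity of the composite, which the negative prefactor $\alpha/(1-\alpha)$ turns into concavity. Your treatment of degenerate cases is harmless but never needed: these divergences are always finite, since $\mathrm{supp}(\rho_{AE}^{|c,b})\subseteq\mathrm{supp}(\id_A\otimes\rho_E^{|b})$ whenever $p_{CB}(c,b)>0$. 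You are also right that the support condition depends only on $q_B$, not on $f$.
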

\begin{proof}
    We begin by using \cref{lem:fexp}, which allows us to write
    \begin{equation*}
       \sup_{q_{B} \in \Delta(B)} H_{\alpha}^{f}(AC|BE)_{\rho|\sigma} = \frac{\alpha}{1-\alpha} \log \Bigg(\sum_{b } p_{B}(b) \Bigg( \sum_{c } p_{C|B}(c|b)^{\alpha} 2^{(\alpha -1)\big[ D_{\alpha}(\rho_{AE}^{|c,b} \| \id_{A} \otimes \rho_{E}^{|b}) + f_{c} \big]}\Bigg)^{\frac{1}{\alpha}}\Bigg).
    \end{equation*}
    We can rewrite the inner summation as 
    \begin{equation*}
    \begin{aligned}
        \Bigg(\sum_{c} p_{C|B}(c|b)^{\alpha} 2^{(\alpha-1)\big[ D_{\alpha}(\rho_{AE}^{|c,b} \| \id_{A} \otimes \rho_{E}^{|b}) + f_{c}\big]} \Bigg)^{\frac{1}{\alpha}} &= \Bigg( \sum_{c} \exp \Big\{ \alpha \ln(p_{C|B}(c|b)) \\ & \hspace{2cm} + \ln(2)(\alpha-1)\big[ D_{\alpha}(\rho_{AE}^{|c,b} \| \id_{A} \otimes \rho_{E}^{|b}) + f_{c}\big] \Big\} \Bigg)^{\frac{1}{\alpha}} \\
        &= \exp \Big\{\frac{1}{\alpha} \text{lse}\Big [ \alpha \ln(p_{b}) + \ln(2)(\alpha - 1)\big[ f + D_{b}\big] \Big] \Big\} \\
        &= \exp \big\{ g_{b}(f) \}
    \end{aligned}
    \end{equation*}
    where for a vector $t = [t_{c}]_{c \in \mcC}$, $\text{lse} [t] = \ln \sum_{c} e^{t_{c}}$ is the log-sum-exponential function, we defined a family of real valued functions
    \begin{equation*}
        g_{b}(f) := \frac{1}{\alpha} \text{lse}\Big [ \alpha \ln(p_{b}) + \ln(2)(\alpha - 1)\big[ f + D_{b}\big] \Big],
    \end{equation*}
    and we defined the vectors $D_{b} = [D_{\alpha}(\rho_{AE}^{|c,b} \| \id_{A} \otimes \rho_{E}^{|b})]_{c}$ and $\ln(p_{b}) = [\ln(p_{C|B}(c|b))]_{c}$ in $\mbR^{|\mcC|}$. Due to the convexity of the log-sum-exponential function~\cite{BV04}, for every $b$ the function $f \mapsto g_{b}(f)$ is convex. We then have
    \begin{equation*}
    \begin{aligned}
        \sup_{q_{B} \in \Delta(B)} H_{\alpha}^{f}(AC|BE)_{\rho|\sigma} &= \frac{\alpha}{1-\alpha}\frac{1}{\ln(2)} \ln \Bigg(\sum_{b} \exp \Big\{ \ln(p_{B}(b)) + g_{b}(f)\Big\}\Bigg) \\
        &= \frac{\alpha}{1-\alpha}\frac{1}{\ln(2)} \text{lse} \Big[ \ln(p) + g(f) \Big],
    \end{aligned}
    \end{equation*}
    where we defined the vectors $\ln(p) = [\ln(p_{B}(b))]_{b}$ and $g(f)  = [g_{b}(f)]_{b}$ in $\mbR^{\text{dim}(B)}$. Denoting the left hand side of the above equation by the function $h(f)$, we find for any $\lambda \in [0,1]$ and any $f_{1}, \, f_{2} \in \mbR^{|\mcC|}$,
    \begin{equation*}
        \begin{aligned}
            h(\lambda f_{1} + (1-\lambda)f_{2}) &=  \frac{\alpha}{1-\alpha}\frac{1}{\ln(2)} \text{lse} \Big[ \ln(p) + g(\lambda f_{1} + (1-\lambda)f_{2}) \Big] \\
            &\geq  \frac{\alpha}{1-\alpha}\frac{1}{\ln(2)} \text{lse} \Big[  \lambda[ \ln(p) + g( f_{1})] + (1-\lambda)[\ln(p) + g(f_{2})] \Big] \\
            &\geq \lambda \frac{\alpha}{1-\alpha}\frac{1}{\ln(2)} \text{lse} \Big[ \ln(p) + g( f_{1}) \Big] + (1-\lambda) \frac{\alpha}{1-\alpha}\frac{1}{\ln(2)} \text{lse} \Big[ \ln(p) + g(f_{2}) \Big]\\
            &= \lambda h(f_{1}) + (1-\lambda)h(f_{2}),
        \end{aligned}
    \end{equation*}
    where for the first inequality we used the fact that $g(f)$ is component-wise convex in $f$ and the log-sum-exponential function is non-decreasing (i.e., for two vectors $t$ and $u$, $\text{lse}[t] \leq \text{lse}[u]$ if $t_{c} \leq u_{c}$ for all $c$), and for the final inequality we used the convexity of the log-sum-exponential function. This completes the proof.    
\end{proof}

\begin{lemma}[Convexity in $\omega_{R}$]
    Let $\mcM : R \to ACB$ be a quantum channel, where $B$ and $C$ are classical, $C$ takes values in a finite alphabet $\mcC$, and for any input state $\omega_{RR'}$ the output state $\rho_{ACBR'} = \mcM(\omega_{RR'})$ satisfies $\rho_{BR'} = \rho_{B} \otimes \omega_{R'}$. Let $f \in \mathbb{R}^{|\mcC|}$ be a vector and $\alpha \in (1,\infty)$. Then the function
    \begin{equation*}
        \omega_{R} \mapsto \sup_{q_{B} \in \Delta(B)} H_{\alpha}^{f}(AC|B\tilde{E})_{\mcM(\omega)|\sigma}
    \end{equation*}
    where the channel $\mcM$ is applied to any purification $\omega_{R\tilde{E}}$ of $\omega_{R}$, is concave in $\omega_{R}$. \label{lem:convex}
\end{lemma}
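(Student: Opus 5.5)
The plan is to follow the standard route for concavity of conditional entropies in the input state: encode the mixture in a classical flag held by the conditioning system, and then exploit the closed form of \cref{lem:fexp}. Write $h(\omega_{R}) := \sup_{q_{B} \in \Delta(B)} H_{\alpha}^{f}(AC|B\tilde{E})_{\mcM(\omega)|\sigma}$. By \cref{lem:DPI} together with isometric invariance, $h(\omega_{R})$ does not depend on which purification is used. It also does not increase when the purification is replaced by any extension $\omega_{RE'}$ that is obtained from it by a channel on the purifying system.

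Fix $\omega_{R} = \sum_{x} \lambda_{x} \omega_{R}^{x}$, with purifications $\ket{\psi^{x}}_{R\tilde{E}}$. I would purify $\omega_{R}$ as $\ket{\psi} = \sum_{x} \sqrt{\lambda_{x}} \ket{\psi^{x}}_{R\tilde{E}} \ket{x}_{X}$. Dephasing $X$ is a channel on the purifying system $\tilde{E}X$, so \cref{lem:DPI} gives
\[
h(\omega_{R}) \geq \sup_{q_{B}} H_{\alpha}^{f}(AC|B\tilde{E}X)_{\mcM(\bar{\omega})|\sigma}, \qquad \bar{\omega} = \sum_{x} \lambda_{x} \ketbra{x}{x}_{X} \otimes \psi^{x}_{R\tilde{E}}.
\]
Next I would use the non-signaling assumption with $R' = \tilde{E}X$. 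It gives $p_{X|B}(x|b) = \lambda_{x}$, and the distribution $p_{B}$ is the same for every $\omega^{x}$.

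Expanding the divergences in \cref{lem:fexp} via \cref{eq:cq_div} over the classical $X$ then gives
\[
\sup_{q_{B}} H^{f}_{\alpha}(AC|B\tilde{E}X)_{\mcM(\bar{\omega})|\sigma} = \frac{\alpha}{1-\alpha} \log \sum_{b} p_{B}(b) \Bigl( \sum_{x} \lambda_{x} S_{b,x} \Bigr)^{\frac{1}{\alpha}},
\]
where $S_{b,x} = \sum_{c} p_{C|B}^{x}(c|b)^{\alpha}\, 2^{(\alpha-1)[D_{\alpha}(\rho^{|c,b,x} \| \id_{A} \otimes \rho^{|b,x}) + f_{c}]}$ is exactly the inner sum appearing in $h(\omega^{x})$. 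Concavity would then follow from the inequality
\[
\sum_{b} p_{B}(b) \Bigl( \sum_{x} \lambda_{x} S_{b,x} \Bigr)^{\frac{1}{\alpha}} \leq \prod_{x} \Bigl( \sum_{b} p_{B}(b)\, S_{b,x}^{\frac{1}{\alpha}} \Bigr)^{\lambda_{x}},
\]
because the prefactor $\frac{\alpha}{1-\alpha}$ is negative.

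This last step is the main obstacle, and I am not confident it holds. Concavity of $t \mapsto t^{1/\alpha}$ gives $(\sum_{x} \lambda_{x} S_{b,x})^{1/\alpha} \geq \sum_{x} \lambda_{x} S_{b,x}^{1/\alpha}$. That is the wrong direction. Moreover, an arithmetic mean dominates a geometric mean, so the displayed inequality appears to go the wrong way even when $B$ is trivial. Dephasing $X$ therefore likely discards too much. The first thing I would try is to keep $X$ coherent and work with the unmeasured purification $\ket{\psi}$ directly. One option is a variational (Hölder duality) representation of $2^{\frac{\alpha-1}{\alpha} D_{\alpha}}$ as a supremum of linear functionals, applied per $b$, in which the mixture enters linearly and the purification can be exploited.

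If that fails as well, the statement as worded would have to be checked against a small classical example, such as a trivial $B$ and two orthogonal pure inputs. The convexity of $\omega_{R} \mapsto h(\omega_{R})$ is what Sion's theorem in \cref{lem:swap} actually uses. That convexity does follow cleanly from the bound above, combined with the monotonicity under the dephasing channel.
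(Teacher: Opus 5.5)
As you suspected, the lemma is false as printed: ``concave'' is a typo for ``convex''. Three things concern convexity, $h(\sum_x\lambda_x\omega^x)\le\sum_x\lambda_x h(\omega^x)$: the lemma's title, its use in \cref{lem:swap} (Sion's theorem needs convexity in the minimised variable $\omega_R$), and the paper's own proof.

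Your suggested test case settles it. Take $B$ and $C$ trivial and $\mcM$ the identity channel from a qubit $R$ to $A$. Then $h(\omega_R)=-f_c+H^{\downarrow}_{\alpha}(A|\tilde E)_{\psi}$, where $\psi$ is the purification. This equals $-f_c$ for $\omega_R=\ketbra{0}{0}$ and for $\omega_R=\ketbra{1}{1}$. For $\omega_R=\id/2$, whose purification is maximally entangled, it equals $-f_c-1$. So no refinement of the concavity route, such as keeping $X$ coherent or using H\"older duality, can work.

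The concrete error in your argument is the direction of the data-processing step. A channel on the conditioning system leaves $p_{CB}$ unchanged. It decreases every divergence $D_\alpha(\rho^{|c,b}_{AE}\|\id_A\otimes\rho^{|b}_E)$ in the closed form of \cref{lem:fexp}. Because $\frac{\alpha}{1-\alpha}<0$, this \emph{increases} $\sup_{q_B}H^f_\alpha$.

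The ``$\geq$'' printed in \cref{lem:DPI} is reversed. The paper's proof of this lemma in fact uses it as ``$\leq$'' after pinching. Your claim that the entropy does not increase when the purifying system is processed is reversed for the same reason. The correct inequality is $h(\omega_R)\le\sup_{q_B}H^f_\alpha(AC|B\tilde EX)_{\mcM(\bar\omega)|\sigma}$. In the example above, the left side is $-f_c-1$ while the dephased quantity is $-f_c$.

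This upper bound is useless for concavity but is exactly what convexity needs. With the direction you wrote, your closing remark that convexity ``follows cleanly'' from the dephasing bound would not go through.

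Once the sign is fixed, your computation proves the intended statement. Put $G_x=\sum_b p_B(b)S_{b,x}^{1/\alpha}$. Concavity of $t\mapsto t^{1/\alpha}$, followed by AM--GM, gives $\sum_b p_B(b)\bigl(\sum_x\lambda_xS_{b,x}\bigr)^{1/\alpha}\ge\sum_x\lambda_xG_x\ge\prod_xG_x^{\lambda_x}$. Apply the decreasing map $t\mapsto\frac{\alpha}{1-\alpha}\log t$ and combine with the corrected dephasing bound to get $h(\omega_R)\le\sum_x\lambda_xh(\omega^x)$.

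This is the paper's argument: a flagged purification, pinching of the flag, and non-signaling to get $p_{X|B}=\lambda$ and a common $p_B$. The only difference is the final averaging step. The paper does not pass to the closed form. It keeps $\sup_{q_B}$ outside with one $q_B$ shared by all branches, applies concavity of $\log$ to the $\lambda$-mixture, and then moves the supremum inside the sum. The two routes are equivalent.
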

\begin{proof}
    Consider two states $\omega^{|0}_{R}$ and $\omega^{|1}_{R}$, and define $\bar{\omega}_{R} := \sum_{i=0}^{1} \lambda_{i}\, \omega^{|i}_{R}$ where $\{\lambda_{i}\}_{i=0}^{1}$ is a probability distribution. Consider the following purification $\ket{\psi}_{REF}$ of $\bar{\omega}_{R}$,
    \begin{equation*}
        \ket{\psi}_{REF} := \sum_{i=0}^{1} \sqrt{\lambda_{i}} \ket{\psi_{i}}_{RE} \otimes \ket{i}_{F},
    \end{equation*}
    where $\ket{\psi_{i}}_{RE}$ is a purification of $\omega_{R}^{|i}$. The state $\bar{\omega}_{R}$ is therefore mapped in the following way:
    \begin{equation*}
        \bar{\omega}_{R} \mapsto \sup_{q_{B} \in \Delta(B)} H_{\alpha}^{f}(AC|BEF)_{\mcM(\ketbra{\psi}{\psi})|\sigma}.
    \end{equation*}
    Let $\mcZ_{F}(\sigma) = \sum_{i=0}^{1}\ketbra{i}{i}_{F} \sigma \ketbra{i}{i}_{F}$ be the pinching channel on $F$. Using the data processing inequality in \cref{lem:DPI} and the fact that $\mcZ_{F} \circ \mcM = \mcM \circ \mcZ_{F}$,  
    \begin{equation}
        \begin{aligned}
             \bar{\omega}_{R} \mapsto \sup_{q_{B} \in \Delta(B)} H_{\alpha}^{f}(AC|BEF)_{\mcM(\ketbra{\psi}{\psi})|\sigma} &\leq \sup_{q_{B} \in \Delta(B)} H_{\alpha}^{f}(AC|BEF)_{[\mcZ_{F} \circ \mcM](\ketbra{\psi}{\psi})|\sigma} \\
            &= \sup_{q_{B} \in \Delta(B)} H_{\alpha}^{f}(AC|BEF)_{[\mcM \circ \mcZ_{F}](\ketbra{\psi}{\psi})|\sigma}.
        \end{aligned} \label{eq:dpi_app_con}
    \end{equation}
    
    We now analyze the state $[\mcM \circ \mcZ_{F}](\ketbra{\psi}{\psi})$. Note that without loss of generality
    \begin{equation}
        \begin{aligned}
            [\mcM \circ \mcZ_{F}](\ketbra{\psi}{\psi}_{REF}) &= \sum_{i=0}^{1}\lambda_{i} \mcM(\ketbra{\psi_{i}}{\psi_{i}}_{RE}) \otimes \ketbra{i}{i}_{F} \\
            &= \sum_{i=0}^{1}\lambda_{i} \sum_{c,b}p_{B|F}(b|i)\,p_{C|BF}(c|b,i)\,\ketbra{cb}{cb}_{CB} \otimes \rho_{AE}^{|c,b,i} \otimes \ketbra{i}{i}_{F}
        \end{aligned} \label{eq:pinch_cq}
    \end{equation}
    where we used the fact that both $B$ and $C$ are classical. Tracing out $AC$, recall that the channel $\mcM$ satisfies $\rho_{BEF} = \rho_{B} \otimes [\tr_{R}\circ\mcZ_{F}](\ketbra{\psi}{\psi})$ where $\rho_{BEF}$ and $\rho_{B}$ are marginals of the output state $\rho_{ACBEF} = [\mcM \circ \mcZ_{F}](\ketbra{\psi}{\psi})$. This constraint evaluates to
    \begin{multline*}
        \sum_{i=0}^{1}\lambda_{i} \sum_{b}p_{B|F}(b|i)\,\ketbra{b}{b}_{B} \otimes \sum_{c}p_{C|BF}(c|b,i)\rho_{E}^{|c,b,i} \otimes \ketbra{i}{i}_{F} \\ = \Bigg(\sum_{i=0}^{1}\lambda_{i} \sum_{b}p_{B|F}(b|i)\,\ketbra{b}{b}_{B}\Bigg) \otimes \Bigg(\sum_{j=0}^{1}\lambda_{j} \tr_{R}[\ketbra{\psi_{j}}{\psi_{j}}] \otimes \ketbra{i}{i}_{F}\Bigg).
    \end{multline*}
    Multiplying by $\bra{b}_{B}\otimes \id_{E} \otimes \bra{i}_{F}$ on the left and by $\ket{b}_{B}\otimes \id_{E} \otimes \ket{i}_{F}$ on the right we find
    \begin{equation*}
        \lambda_{i}\,p_{B|F}(b|i)\,\rho_{E}^{|i,b} = \lambda_{i} \, p_{B}(b) \, \tr_{R}[\ketbra{\psi_{i}}{\psi_{i}}]
    \end{equation*}
    where $p_{B}(b) = \sum_{i=0}^{1}\lambda_{i} \, p_{B|F}(b|i)$. Tracing out $E$ further implies $\lambda_{i}\,p_{B|F}(b|i) = \lambda_{i}\,p_{B}(b)$, i.e., the distribution on $B$ is independent of $F$. It therefore follows that \eqref{eq:pinch_cq} can be rewritten as
    \begin{equation}
    \begin{aligned}
        [\mcM \circ \mcZ_{F}](\ketbra{\psi}{\psi}) &= \sum_{c} p_{C}(c) \ketbra{c}{c}_{C} \otimes \sum_{b,i} \frac{\lambda_{i}\,p_{B}(b)\,p_{C|BF}(c|b,i)}{p_{C}(c)} \ketbra{b}{b}_{B} \otimes \rho_{AE}^{|c,b,i} \otimes \ketbra{i}{i}_{F}\\
        &= \sum_{c} p_{C}(c) \ketbra{c}{c}_{C} \otimes \sum_{i} p_{F|C}(i|c)  \ketbra{i}{i}_{F} \otimes \sum_{b} \frac{\lambda_{i}\,p_{B}(b)\, p_{C|BF}(c|b,i)}{p_{C}(c)\, p_{F|C}(i|c)} \ketbra{b}{b}_{B} \otimes \rho_{AE}^{|c,b,i}\\
        &= \sum_{c} p_{C}(c) \ketbra{c}{c}_{C} \otimes \sum_{i} p_{F|C}(i|c)  \ketbra{i}{i}_{F} \otimes \rho_{ABE}^{|c,i}
    \end{aligned}
    \label{eq:cq_express}
    \end{equation}
    where 
    \begin{equation}
        \begin{aligned}
            p_{C}(c) &= \sum_{b,i}\lambda_{i}\,p_{B}(b)\,p_{C|BF}(c|b,i),\\
            p_{F|C}(i|c) &= \frac{\lambda_{i}\, p_{C|F}(c|i)}{p_{C}(c)}  = \frac{\lambda_{i}}{p_{C}(c)} \sum_{b}p_{B}(b)p_{C|FB}(c|i,b), \ \ \ \text{and}\\
            \rho_{ABE}^{|c,i} &= \sum_{b} \frac{\lambda_{i}\,p_{B}(b)\, p_{C|BF}(c|b,i)}{p_{C}(c)\, p_{F|C}(i|c)} \ketbra{b}{b}_{B} \otimes \rho_{AE}^{|c,b,i}.
        \end{aligned} \label{eq:p_dists}
    \end{equation}
    We next turn our attention to any feasible $q_{B} \in \Delta(B)$ in \cref{eq:dpi_app_con}, which defines a state 
    \begin{equation}
        \sigma_{BEF} = \sum_{b}q_{B}(b)\ketbra{b}{b}_{B} \otimes \rho_{EF}^{|b}. \label{eq:sigma_st}
    \end{equation}
    Using \cref{eq:cq_express}, the conditional states take the form $\rho_{EF}^{|b} = \sum_{i}\lambda_{i}\sum_{c}p_{C|BF}(c|b,i)\rho_{E}^{|c,b,i} \otimes \ketbra{i}{i}_{F}$, and by inserting this into \cref{eq:sigma_st} and re-ordering the summation we obtain
    \begin{equation}
        \sigma_{BEF} = \sum_{i}\lambda_{i} \sigma_{BE}^{|i} \otimes \ketbra{i}{i}_{F}, \label{eq:sigma_expr}
    \end{equation}
    where we defined $\sigma_{BE}^{|i} = \sum_{B}q_{B}(b)\ketbra{b}{b}_{B} \otimes \sum_{c} p_{C|BF}(c|b,i)\rho_{E}^{|c,b,i}$. 
    
    We are now ready to evaluate the expression on the right hand side of \cref{eq:dpi_app_con}. By definition,
    \begin{equation}
        H_{\alpha}^{f}(AC|BEF)_{[\mcM \circ \mcZ_{F}](\ketbra{\psi}{\psi})|\sigma} = \frac{1}{1-\alpha} \log \Bigg( \sum_{c}p_{C}(c)^{\alpha} 2^{(\alpha-1)\big[ D_{\alpha}(\rho_{ABEF}^{|c} \| \id_{A} \otimes \sigma_{BEF}) + f_{c}]} \Bigg). \label{eq:rho1}
    \end{equation}
    Using the fact that $F$ is classical and the form of $\rho_{ABEF}^{|c}$ and $\sigma_{BEF}$ given by \eqref{eq:cq_express} and \eqref{eq:sigma_expr}, respectively,
    \begin{equation*}
    \begin{aligned}
        2^{(\alpha-1)\big[ D_{\alpha}(\rho_{ABEF}^{|c} \| \id_{A} \otimes \sigma_{BEF}) + f_{c}]} &= \sum_{i=0}^{1}p_{F|C}(i|c)^{\alpha}\lambda_{i}^{1-\alpha} 2^{(\alpha-1)\big[D_{\alpha}(\rho_{ABE}^{|c,i}\|\id_{A}\otimes \sigma_{BE}^{|i}) + f_{c} \big]}\\
        &=\sum_{i=0}^{1}\lambda_{i}\Big(\frac{ p_{C|F}(c|i)}{p_{C}(c)}\Big)^{\alpha} 2^{(\alpha-1)\big[D_{\alpha}(\rho_{ABE}^{|c,i}\|\id_{A}\otimes \sigma_{BE}^{|i}) + f_{c} \big]},
    \end{aligned}
    \end{equation*}
    where for the second equality we inserted the definition of $p_{F|C}(i|c)$ from \eqref{eq:p_dists}. Inserting this back into \cref{eq:rho1}, together with \cref{eq:dpi_app_con} we obtain the following series of inequalities,
    \begin{equation}
        \begin{aligned}
            \bar{\omega}_{R} \mapsto \sup_{q_{B} \in \Delta(B)} H_{\alpha}^{f}(AC|BEF)_{\mcM(\ketbra{\psi}{\psi})|\sigma} &\leq \sup_{q_{B} \in \Delta(B)} H_{\alpha}^{f}(AC|BEF)_{[\mcM \circ \mcZ_{F}](\ketbra{\psi}{\psi})|\sigma} \\
            &= \sup_{q_{B} \in \Delta(B)} \frac{1}{1-\alpha} \log \Bigg( \sum_{i=0}^{1}\lambda_{i} \sum_{c} p_{C|F}(c|i)^{\alpha}2^{(\alpha-1)\big[D_{\alpha}(\rho_{ABE}^{|c,i}\|\id_{A}\otimes \sigma_{BE}^{|i}) + f_{c} \big]} \Bigg) \\
            &\leq \sup_{q_{B} \in \Delta(B)} \sum_{i} \lambda_{i}\frac{1}{1-\alpha} \log \Bigg( \sum_{c} p_{C|F}(c|i)^{\alpha}2^{(\alpha-1)\big[D_{\alpha}(\rho_{ABE}^{|c,i}\|\id_{A}\otimes \sigma_{BE}^{|i}) + f_{c} \big]} \Bigg)\\
            &\leq  \sum_{i} \lambda_{i}  \sup_{q_{B} \in \Delta(B)} \frac{1}{1-\alpha} \log \Bigg( \sum_{c} p_{C|F}(c|i)^{\alpha}2^{(\alpha-1)\big[D_{\alpha}(\rho_{ABE}^{|c,i}\|\id_{A}\otimes \sigma_{BE}^{|i}) + f_{c} \big]} \Bigg)\\
            &=  \sum_{i} \lambda_{i}  \sup_{q_{B} \in \Delta(B)} H_{\alpha}^{f}(AC|BEF)_{\mcM(\ketbra{\psi_{i}}{\psi_{i}})|\sigma}.
        \end{aligned} \label{eq:final_convex}
    \end{equation}
    For the second inequality, we used the concavity of the logarithm, and for the second we passed the supremum inside the sum. The final equality follows from the fact that
    \begin{equation*}
    \begin{aligned}
        \mcM(\ketbra{\psi_{i}}{\psi_{i}}) &= \sum_{c} \ketbra{c}{c}_{C} \otimes \sum_{b} p_{B}(b)\,p_{C|BF}(c|b,i) \ketbra{b}{b}_{B} \otimes \rho_{AE}^{|c,b,i} \\
        &= \sum_{c} p_{C|R}(c|i)\ketbra{c}{c}_{C} \otimes \sum_{b} \frac{p_{B}(b)\,p_{C|BF}(c|b,i)}{p_{C|R}(c|i)} \ketbra{b}{b}_{B} \otimes \rho_{AE}^{|c,b,i}\\
        &= \sum_{c} p_{C|R}(c|i)\ketbra{c}{c}_{C} \otimes \rho_{ABE}^{|c,i},
    \end{aligned}
    \end{equation*}
   where we used \cref{eq:p_dists} to write 
    \begin{equation*}
        \frac{p_{B}(b)\,p_{C|BF}(c|b,i)}{p_{C|R}(c|i)} = \frac{\lambda_{i}\,p_{B}(b)\,p_{C|BF}(c|b,i)}{\lambda_{i}\,p_{C|R}(c|i)} = \frac{\lambda_{i}\,p_{B}(b)\,p_{C|BF}(c|b,i)}{p_{C}(i)\,p_{F|C}(i|c)}
    \end{equation*}
    followed by the definition of $\rho_{ABE}^{|c,i}$. Furthermore, 
    \begin{equation*}
        [\tr_{AC} \circ \mcM](\ketbra{\psi_{i}}{\psi_{i}}) = \sum_{b} p_{B}(b) \ketbra{b}{b}_{B} \otimes \sum_{c}p_{C|BF}(c|b,i)\rho_{E}^{|c,b,i},
    \end{equation*}
    which after making the substitution $p_{B} \mapsto q_{B}$ is exactly the form of $\sigma_{BE}^{|i}$ as defined below \cref{eq:sigma_expr}.  

    To complete the proof, we recall that $\ket{\psi_{i}}_{RE}$ is a purification of $\omega^{|i}_{E}$, hence $\omega^{|i}_{E}$ is mapped in the following way:
    \begin{equation*}
        \omega^{|i}_{E} \mapsto \sup_{q_{B} \in \Delta(B)} H_{\alpha}^{f}(AC|BEF)_{\mcM(\ketbra{\psi_{i}}{\psi_{i}})|\sigma}.
    \end{equation*}
    Thus, \cref{eq:final_convex} proves the convexity of the function stated in the lemma. 
\end{proof}

\subsection{Application to infrequent sampling channels} \label{app:infreq_sample}

While \cref{thm:big_eat} is the most general accumulation statement proven in this manuscript, it is cumbersome to work with directly. For specific applications, certain simplifications can be made to make the analysis easier, whilst still maintaining an improvement over previous approaches~\cite{arqand2024,hahn2024,hahn2025}. We now do this by applying \cref{thm:big_eat} to a specific type of protocol commonly encountered in DI cryptography: spot-checking protocols. We begin by recalling \cref{def:sample} from the main text. 

\vspace{0.2cm}

\noindent \textbf{Definition 5.2} (Infrequent sampling channels)\textbf{.}
    A sequence of DI channels $\{\mcM_{i}\}_{i=1}^{n}$ (according to \cref{def:DIchan}) is infrequent sampling if the following holds:
    \begin{enumerate}
        \item Each $C_{i}$ takes values in $\{0,1\}^{d} \cup \{\perp\}$ for a positive integer $d$.
        \item Each $B_{i}$ is of the form $T_{i}B_{i}$ where $T_{i}$ is a classical bit and each $B_{i}$ is isomorphic to a single register $B$ that takes values in a finite alphabet $\mcB$.
        \item Each $A_{i}$ is classical and is isomorphic to a single register $A$ that takes values in a finite alphabet $\mcA$.
        \item There exists a $\gamma \in [0,1]$, distributions $p_{B}^{\text{gen}}, \, p_{B}^{\text{test}} \in \Delta(B)$, a deterministic function $f : \mcA \times \mcB \to \{0,1\}^{d}$ and for every $i$, a collection of CP maps $\mcM^{a|b}_{i}:R_{i-1} \to R_{i}$ satisfying $\sum_{a}\tr[\mcM_{i}^{a|b}(\omega)] = 1$ for all input states $\omega$ and all $b\in \mcB$, such that
        \begin{equation}
        \mcM_{i} = (1-\gamma) \ketbra{0}{0}_{T_{i}} \otimes \mcM^{\text{gen}}_{i}  
        + \gamma \, \ketbra{1}{1}_{T_{i}} \otimes \mcM^{\text{test}}_{i}, \label{eq:sample_ch}
        \end{equation}
        where
        \begin{equation*}
            \begin{aligned}
                \mcM^{\text{gen}}_{i} = \sum_{a\in \mcA,b\in \mcB} p_{B}^{\text{gen}}(b)\ketbra{a}{a}_{A_{i}} \otimes \ketbra{\perp}{\perp}_{C_{i}} \otimes \ketbra{b}{b}_{B_{i}} \otimes \mcM_{i}^{a|b} \ \ \text{and} \\
                \mcM^{\text{test}}_{i} = \sum_{a\in \mcA,b\in \mcB} p_{B}^{\text{test}}(b)\ketbra{a}{a}_{A_{i}} \otimes \ketbra{f(a,b)}{f(a,b)}_{C_{i}} \otimes \ketbra{b}{b}_{B_{i}} \otimes \mcM_{i}^{a|b}.
            \end{aligned}
        \end{equation*}
    \end{enumerate} 

\vspace{0.2cm}

\noindent For a sequence of infrequent sampling channels, we can provide a simplified version \cref{thm:big_eat} that does not involve divergences, and only includes the entropy obtained from generation rounds. While not tight in general, the single round optimization can be handled in a similar manner to~\cite{hahn2024,hahn2025}. The following corollary is an analogue of~\cite[Lemma 5.1.]{arqand2024} and~\cite[Eq. 93]{arqand2024}.

\vspace{0.2cm}

\noindent \textbf{Corollary 5.3} (EAT with infrequent sampling)\textbf{.}
    \textit{Let $\omega_{R_{0}E} \in \mcD(R_{0}E)$ be a state, $\{\mcM_{i}\}_{i=1}^{n}$ be a sequence of DI channels that are infrequent sampling according to \cref{def:sample} and $\rho_{A^{n}C^{n}B^{n}E} = [\tr_{R_{n}} \circ \mcM_{n} \circ \cdots \mcM_{1}](\omega_{R_{0}E})$. Let $\Omega \subset \mcC^{n}$ be an event on $C^{n}$, $\rho_{A^{n}C^{n}B^{n}E}^{|\Omega}$ be the output state conditioned on $\Omega$ and $\Delta_{\Omega} \subset \Delta(C)$ be any compact convex subset such that $\mathsf{freq}(c^{n}) \in \Delta_{\Omega}$ for all $c^{n} \in \Omega$. Then for any $\alpha \in (1,\infty)$, 
    \begin{equation*}
        H_{\alpha}^{\uparrow}(A^{n}C^{n}|B^{n}E)_{\rho^{|\Omega}} \geq n \, h_{\alpha}(\Omega) - \frac{\alpha}{\alpha - 1} \log \Big(\frac{1}{p_{\Omega}}\Big)
    \end{equation*}
    where $p_{\Omega}$ is the probability of observing $\Omega$ in $\rho_{C^{n}}$ and
    \begin{equation*}
        h_{\alpha}(\Omega) \geq \inf_{(\mcM,R,\tilde{E})} \inf_{\omega \in \mcD(R\tilde{E})}\inf_{v_{C} \in \Delta_{\Omega}} \Bigg(\frac{1}{\alpha - 1}D\big(v_{C}\|p_{C}\big)
        + v_{C}(\perp)\,H_{\alpha}(A|B^{\uparrow}\tilde{E}^{\downarrow})_{\mcM^{\mathrm{gen}}(\omega)}\Bigg),
    \end{equation*}
    where the outer infimum is over all quantum systems $R$ and $\tilde{E}$ and all quantum channels $\mcM:R\to ACB$ of the form \eqref{eq:sample_ch}, $[\tr_{AB\tilde{E}} \circ \mcM](\omega) = \sum_{c}p_{C}(c) \ketbra{c}{c}$ and $p_{C} = [p_{C}(c)]_{c\in \mcC}$.}
\begin{proof}
    We begin by applying \cref{thm:big_eat}. To do so, we need to specify a channel $\mcM: R \to ACB$ that satisfies
    \begin{equation}
        \inf_{\omega \in \mcD(R_{i-1}\tilde{E}_{i-1})}\sup_{q_{B_{i}T_{i}} \in \mcP_{B_{i}T_{i}}}  H_{\alpha}^{f}(A_{i}C_{i}|B_{i}T_{i}\tilde{E}_{i-1})_{\mcM_{i}(\omega)|\sigma} \geq \inf_{\omega \in \mcD(R\tilde{E})}\sup_{q_{BT} \in \mcP_{BT}}  H_{\alpha}^{f}(AC|BT\tilde{E})_{\mcM(\omega)|\sigma} \label{eq:chan_cond}
    \end{equation}
    for all $f$ and each infrequent sampling channel $\mcM_{i}$. We follow the same approach as that outlined in~\cite[Appendix D]{arqand2024}. Let $\mcH_{\tilde{R}}$ be such that $\text{dim}(R_{i}) \leq \text{dim}(\tilde{R})$ for all $i$, and let $F$ denote a quantum system with $\text{dim}(F) = n$. Let $\omega \in \mcD(FR)$ be any quantum state, and define the quantum channel $\mcM : R \to FACBT$ (recall every $A_{i}$ is isomorphic to a single register $A$, and similarly for $C_{i}$, $B_{i}$ and $T_{i}$),
    \begin{equation*}
        \mcM(\omega) = \tr_{F} \Bigg[\sum_{i=1}^{n} \ketbra{i}{i}_{F} \otimes \mcM_{i}\Big( (\bra{i} \otimes \id_{\tilde{R}})\omega_{F\tilde{R}} (\ket{i} \otimes \id_{\tilde{R}}) \Big) \Bigg] = \sum_{i=1}^{n} \mcM_{i}\Big(\omega_{\tilde{R}}^{i}\Big),
    \end{equation*}
    where $\omega^{i}_{\tilde{R}} = (\bra{i} \otimes \id_{\tilde{R}})\omega_{F\tilde{R}} (\ket{i} \otimes \id_{\tilde{R}})$ is the $i^{\text{th}}$ diagonal block of $\omega_{F\tilde{R}}$. Notice that for every $i$ and every feasible state on the left hand side of \eqref{eq:chan_cond}, $\omega_{R_{i-1}\tilde{E}_{i-1}}$, there exists a state $\nu_{R\tilde{E}}^{i}$ such that $\mcM(\nu^{i}) = \mcM_{i}(\omega)$, namely, $\nu^{i} = \ketbra{i}{i} \otimes \omega_{\tilde{R}\tilde{E}}$, where we replaced $\tilde{R}_{i-1}\tilde{E}_{i-1}$ with $\tilde{R}\tilde{E}$ since they are isomorphic. We therefore find that \cref{eq:chan_cond} is satisfied for all $f$, since for every $i$, the set of output states of $\mcM_{i}$ is contained in the set of output states of $\mcM$. 
    
    Applying \cref{thm:big_eat}, we obtain  
    \begin{equation*}
        H_{\alpha}^{\uparrow}(A^{n}C^{n}|B^{n}E)_{\rho^{|\Omega}} \geq n \, h_{\alpha} - \frac{\alpha}{\alpha - 1} \log \Big(\frac{1}{p_{\Omega}}\Big)
    \end{equation*}
    where 
    \begin{equation}
        h_{\alpha} = \inf_{\omega \in \mcD(R\tilde{E})}\inf_{v_{C} \in \Delta_{\Omega}} \sup_{q_{BT} \in \mcP_{BT}}\Bigg(\frac{1}{\alpha - 1}D\big(v_{C}\|p^{\alpha}\big)
        - \sum_{c \in \mathrm{supp}(p_{C})} v_{C}(c)\,D_{\alpha}(\rho_{ABT\tilde{E}}^{|c} \| \id_{A} \otimes \sigma_{BT\tilde{E}})\Bigg) \label{eq:h_term}
    \end{equation}
    where $\sum_{c} p_{C}(c)\ketbra{c}{c}_{C} \otimes \rho_{ABT\tilde{E}}^{|c} = \mcM(\omega_{R\tilde{E}})$ and $p^{\alpha} = [p_{C}(c)^{\alpha}]_{c}$. By writing $\omega_{\tilde{R}\tilde{E}}^i = p_{F}(i)\hat{\omega}_{\tilde{R}\tilde{E}}^i$ for normalized states $\hat{\omega}$, the output state $\mcM(\omega_{R\tilde{E}})$ is of the form
    \begin{equation*}
    \begin{aligned}
        \mcM(\omega_{R\tilde{E}}) &= \sum_{i=1}^{n}p_{F}(i)\,\mcM_{i}\Big(\hat{\omega}_{\tilde{R}\tilde{E}}^i\Big) \\
        &= (1-\gamma) \ketbra{0}{0}_{T} \otimes \sum_{i=1}^{n}p_{F}(i)\,\mcM^{\text{gen}}_{i}(\hat{\omega}_{\tilde{R}\tilde{E}}^{i})  
        + \gamma \, \ketbra{1}{1}_{T} \otimes \sum_{i=1}^{n}p_{F}(i)\,\mcM^{\text{test}}_{i}(\hat{\omega}_{\tilde{R}\tilde{E}}^{i}) \\
        &= (1-\gamma) \ketbra{0}{0}_{T} \otimes \sum_{ab} p_{B}^{\text{gen}}(b)\ketbra{a}{a}_{A} \otimes \ketbra{\perp}{\perp}_{C} \otimes \ketbra{b}{b}_{B} \otimes \bar{\mcM}^{a|b}(\omega_{F\tilde{R}E}) \\ 
        & \ \ \ \ \ \ \ \ + \gamma \, \ketbra{1}{1}_{T} \otimes \sum_{ab} p_{B}^{\text{test}}(b)\ketbra{a}{a}_{A} \otimes \ketbra{f(a,b)}{f(a,b)}_{C} \otimes \ketbra{b}{b}_{B} \otimes \bar{\mcM}^{a|b}(\omega_{F\tilde{R}E}),
    \end{aligned}
    \end{equation*}
    where we defined the CP maps $\bar{\mcM}^{a|b}:F\tilde{R}\to \tilde{R}$, \begin{equation*}
        \bar{\mcM}^{a|b}(\omega) = \tr_{F} \Bigg[\sum_{i=1}^{n} \ketbra{i}{i}_{F} \otimes \mcM_{i}^{a|b}\Big( (\bra{i} \otimes \id_{\tilde{R}})\omega_{F\tilde{R}} (\ket{i} \otimes \id_{\tilde{R}}) \Big) \Bigg] = \sum_{i=1}^{n}p_{F}(i)\,\mcM_{i}^{a|b}\Big(\hat{\omega}_{\tilde{R}\tilde{E}}^i\Big).
    \end{equation*}
    Since for every $b$, $\sum_{a}\tr[\bar{\mcM}^{a|b}(\omega)] = 1$, we see that $\mcM$ is an infrequent sampling channel according to \cref{eq:sample_ch}. Thus, rather than computing $h_{\alpha}$ for the particular choice $\mcM$ in \cref{eq:h_term}, we can take the worst case over all channels of the form \eqref{eq:sample_ch}. In the following, we define  $\rho_{\tilde{E}}^{a|b} = [\tr_{R'}\circ \mcM^{a|b}](\omega_{R\tilde{E}})$ and $p_{A|B}(a|b) = \tr[\rho_{\tilde{E}}^{a|b}]$.

    Next we turn our attention to the state $\sigma_{BT\tilde{E}}$. According to \cref{thm:big_eat}, it is of the form
    \begin{equation*}
        \sigma_{BT\tilde{E}} = \sum_{bt}q_{BT}(b,t) \ketbra{b}{b}_B \otimes \ketbra{t}{t}_{T} \otimes \rho_{\tilde{E}}^{|bt},
    \end{equation*}
     where the states $\rho_{\tilde{E}}^{|bt}$ are obtained by projecting the state $\mcM(\omega)$ onto $\ketbra{b}{b}_{B} \otimes \ketbra{t}{t}_{T}$, re-normalizing and tracing out $ACR'$. Applying this procedure to any state of the form \cref{eq:sample_ch}, we find $\rho_{\tilde{E}}^{|bt} = \sum_{a}[\tr_{R} \circ \mcM^{a|b}](\omega_{R\tilde{E}}) = \tr_{R}[\omega_{R\tilde{E}}] = \rho_{E}$. We will next make the choice 
     \begin{equation}
         q_{BT}(b,0) = q_{B}(b)\,(1-\gamma) \ \ \ \text{and} \ \ \ q_{BT}(b,1) = p_{B}^{\text{test}}(b) \, \gamma, \label{eq:qBT_choice}
     \end{equation}
    where $q_{B}(b) \in \Delta(B)$ is a variable free to optimize. Restricting the supremum in \cref{eq:h_term} to distributions of this form results in a lower bound. 
    
    Consider now the conditional states $\rho_{ABT\tilde{E}}^{|c}$. For $c = \perp$, \cref{eq:sample_ch} implies 
     \begin{equation*}
         \rho_{ABT\tilde{E}}^{|\perp} = \sum_{ab} p_{B}^{\text{gen}}(b)\ketbra{a}{a}_{A} \otimes \ketbra{b}{b}_{B} \otimes \ketbra{0}{0}_{T} \otimes \rho_{\tilde{E}}^{a|b} = \ketbra{0}{0}_{T} \otimes [\tr_{R'C} \circ \mcM^{\text{gen}}](\omega_{R\tilde{E}}).
     \end{equation*}
     Then, noting that for the restricted form of $q_{BT}$ in \cref{eq:qBT_choice}, the remaining supremum over $q_{B}$ only applies to the divergence with $c = \perp$ in \cref{eq:h_term},
     \begin{equation}
         \begin{aligned}
             \sup_{q_{B} \in \Delta(B)}-D_{\alpha}(\rho_{ABT\tilde{E}}^{|\perp} \| \id_{A} \otimes \sigma_{BT\tilde{E}}) &= \sup_{q_{B} \in \Delta(B)}\frac{1}{1 - \alpha}\log \Bigg( \sum_{b} p_{B}^{\text{gen}}(b)^{\alpha}q_{BT}(b,0)^{1-\alpha} 2^{(\alpha - 1)D_{\alpha}(\rho_{A\tilde{E}}^{|b}\|\id_{A} \otimes \rho_{E}) }\Bigg)\\
             &= \log(1-\gamma) + \sup_{q_{B} \in \Delta(B)}\frac{1}{1 - \alpha}\log \Bigg( \sum_{b} p_{B}^{\text{gen}}(b)^{\alpha}q_{B}(b)^{1-\alpha} 2^{(1-\alpha)H_{\alpha}(A|\tilde{E})_{\rho^{|b}}}\Bigg) \\
             &= \log(1-\gamma) + H_{\alpha}(A|B^{\uparrow}\tilde{E}^{\downarrow})_{\mcM^{\text{gen}}(\omega)}, \label{eq:perp}
         \end{aligned}
     \end{equation}
     where we defined $\rho^{|b}_{A\tilde{E}} = \sum_{a}\ketbra{a}{a}_{A} \otimes \rho_{\tilde{E}}^{a|b}$ and for the last equality applied \cref{lem:var_ent}. For the case $c \in \{0,1\}^{d}$, 
    \begin{equation*}
        p_{C}(c) \, \rho_{ABT\tilde{E}}^{|c} = \gamma \, \ketbra{1}{1}_{T} \otimes \sum_{ab} \delta_{f(a,b),c} \, p_{B}^{\text{test}}(b) \ketbra{a}{a}_{A} \otimes \ketbra{b}{b}_{B} \otimes \rho_{\tilde{E}}^{a|b},
    \end{equation*}
    where $p_{C}(c) = \gamma \sum_{ab} p_{B}^{\text{test}}(b)\, \delta_{f(a,b),c}\, p_{AB}(a|b)$. If we write $\rho_{BC} = [\tr_{AT\tilde{E}}\circ \mcM](\omega_{R\tilde{E}}) = \sum_{bc} p_{CB}(c,b)\ketbra{c}{c}_{C}\otimes \ketbra{b}{b}$, we find for $c \in \{0,1\}^{d}$
    \begin{equation*}
        p_{CB}(c,b) = \gamma \, p_{B}^{\text{test}}(b) \sum_{a}\delta_{f(a,b),c} \, p_{AB}(a|b) = p_{B}^{\text{test}}(b) \, p_{C|B}(c|b).
    \end{equation*}
    We can also define $p_{B|C}(b|c) = p_{CB}(c,b)/p_{C}(c)$. Hence, using Bayes' rule, $\rho_{ABT\tilde{E}}^{|c}$ is equal to the following cq-state:
    \begin{equation*}
        \rho_{ABT\tilde{E}}^{|c} = \ketbra{1}{1}_{T} \otimes \sum_{b} p_{B|C}(b|c) \ketbra{b}{b}_{B} \otimes \rho_{A\tilde{E}}^{|c,b}, \ \ \ \rho_{A\tilde{E}}^{|c,b} = \frac{\gamma}{p_{C|B}(c|b)}\sum_{a} \delta_{f(a,b),c} \ketbra{a}{a}_{A} \otimes \rho_{\tilde{E}}^{a|b}.
    \end{equation*}
    Inserting this expression into the divergence,
     \begin{equation}
         \begin{aligned}
             -D_{\alpha}(\rho_{ABT\tilde{E}}^{|c} \| \id_{A} \otimes \sigma_{BT\tilde{E}}) &= \frac{1}{1 - \alpha}\log \Bigg( \sum_{b} p_{B|C}(b|c)^{\alpha} (\gamma \, p_{B}^{\text{test}}(b))^{1-\alpha} 2^{(\alpha - 1)D_{\alpha}(\rho_{A\tilde{E}}^{|b,c}\|\id_{A} \otimes \rho_{\tilde{E}}) }\Bigg). \label{eq:div_term}
         \end{aligned}
     \end{equation}
     Let us define the following cq-state:
     \begin{equation*}
         \rho_{AC\tilde{E}}^{|b} = \sum_{c \in \{0,1\}^{d}} \frac{p_{C|B}(c|b)}{\gamma} \ketbra{c}{c}_{C} \otimes \rho_{A\tilde{E}}^{|b,c}.
     \end{equation*}
    Note that $\rho_{AC\tilde{E}}^{|b}$ satisfies
    \begin{equation*}
        \tr_{AC}[\rho_{AC\tilde{E}}^{|b}] = \sum_{c\in\{0,1\}^{d}}\sum_{a}\delta_{f(a,b),c}\, \rho_{\tilde{E}}^{a|b} = \sum_{a}\rho_{\tilde{E}}^{a|b} = \rho_{\tilde{E}}.
    \end{equation*}
    We can therefore apply \cref{claim:div_bnd} to the state $\rho_{AC\tilde{E}}^{|b}$ to obtain the bound
    \begin{equation*}
        2^{(\alpha - 1)D_{\alpha}(\rho_{A\tilde{E}}^{|b,c}\|\id_{A} \otimes \rho_{\tilde{E}}) } \leq \Big(\frac{p_{C|B}(c|b)}{\gamma}\Big)^{1-\alpha}.
    \end{equation*}
    Inserting this into \cref{eq:div_term}, we find
    \begin{equation*}
    \begin{aligned}
        \sum_{b} p_{B|C}(b|c)^{\alpha} (\gamma \, p_{B}^{\text{test}}(b))^{1-\alpha} \Big(\frac{p_{C|B}(c|b)}{\gamma}\Big)^{1-\alpha} &= \sum_{b} p_{B|C}(b|c)^{\alpha} \big(p_{B}^{\text{test}}(b) \, p_{C|B}(c|b)\big)^{1-\alpha} \\
        &= \sum_{b} p_{CB}(c,b) \Big(\frac{p_{B|C}(b|c)}{p_{CB}(c,b)}\Big)^{\alpha}\\
        &= \sum_{b} p_{CB}(c,b) \, p_{C}(c)^{-\alpha} \\
        &= p_{C}(c)^{1-\alpha}.
    \end{aligned}
    \end{equation*}
    We therefore obtain the lower bound
    \begin{equation}
        -D_{\alpha}(\rho_{ABT\tilde{E}}^{|c} \| \id_{A} \otimes \sigma_{BT\tilde{E}}) \geq \log(p_{C}(c)). \label{eq:notperp}
    \end{equation}

     Finally, combining \cref{eq:perp,eq:notperp} with \cref{eq:h_term},
     \begin{multline*}
             \frac{1}{\alpha - 1}D\big(v_{C}\|p^{\alpha}\big)
        - \sum_{c \in \mathrm{supp}(p_{C})} v_{C}(c)\,D_{\alpha}(\rho_{ABT\tilde{E}}^{|c} \| \id_{A} \otimes \sigma_{BT\tilde{E}}) \\ \geq  \frac{1}{\alpha - 1}D\big(v_{C}\|p^{\alpha}\big) + \sum_{c \in \text{supp}(p_{C})}v_C(c) \, \log(p_{C}(c)) + v_{C}(\perp)\,H_{\alpha}(A|B^{\uparrow}\tilde{E}^{\downarrow})_{\mcM^{\text{gen}}(\omega)} \\
        = \frac{1}{\alpha - 1}D\big(v_{C}\|p_{C}\big)  + v_{C}(\perp)\,H_{\alpha}(A|B^{\uparrow}\tilde{E}^{\downarrow})_{\mcM^{\text{gen}}(\omega)},
     \end{multline*}
     where in the last line we used the fact that
     \begin{equation*}
         \frac{1}{\alpha-1}v_{C}(c)\big( \log(v_{C}(c)) - \log(p_{C}(c)^{\alpha}) \big) + v_{C}(c) \log(p_{C}(c)) = \frac{1}{\alpha - 1} v_{C}(c)\big( \log(v_{C}(c)) - \log(p_{C}(c)) \big).
     \end{equation*}
     This completes the proof. 
\end{proof}

The missing proof of \cref{claim:div_bnd} is stated below. 

\begin{claim}
    Let $\rho_{ABE} = \sum_{a} p_{A}(a)\ketbra{a}{a} \otimes \rho_{BE}^{|a}$ be a cq-state where systems $A$ and $B$ are classical, and define $\rho_{E} = \tr_{AB}[\rho_{ABE}] = \sum_{a} p_{A}(a)\rho_{E}^{|a}$ where $\rho_{E}^{|a} = \tr_{B}[\rho_{BE}^{|a}]$. Then for all $a$ such that $p(a)>0$ and all $\alpha \in [1,\infty]$
    \begin{equation*}
        2^{(\alpha-1)D_{\alpha}(\rho_{BE}^{|a} \| \id_{B} \otimes \rho_{E})} \leq p_{A}(a)^{1-\alpha}. 
    \end{equation*} \label{claim:div_bnd}
\end{claim}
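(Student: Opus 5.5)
The plan is to reduce the claim to the operator inequality $p_{A}(a)\,\rho_{BE}^{|a} \leq \id_{B} \otimes \rho_{E}$ and then use the monotonicity of the sandwiched R\'enyi divergence in its second argument. The case $\alpha = 1$ is trivial, since both sides equal $1$. For $\alpha \in (1,\infty]$, the first step is to establish the operator bound. Since $B$ is classical, write $\rho_{BE}^{|a} = \sum_{b} p_{B|A}(b|a)\ketbra{b}{b}_{B} \otimes \rho_{E}^{|a,b}$. Then $\id_{B}\otimes\rho_{E} = \sum_{b} \ketbra{b}{b}_{B}\otimes \rho_{E}$, and it suffices to check blockwise that
\begin{equation*}
    p_{A}(a)\,p_{B|A}(b|a)\,\rho_{E}^{|a,b} \leq \rho_{E}
\end{equation*}
for every $b$. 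This holds because
\begin{equation*}
    \rho_{E} = \sum_{a',b'} p_{A}(a')\,p_{B|A}(b'|a')\,\rho_{E}^{|a',b'},
\end{equation*}
and the left-hand side is one of the positive summands.

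In particular, this gives $\mathrm{supp}(\rho^{|a}_{BE}) \subseteq \mathrm{supp}(\id_{B}\otimes\rho_{E})$, so the divergence is finite. We then use the standard fact that $\sigma \leq \sigma'$ implies $D_{\alpha}(\rho\|\sigma') \leq D_{\alpha}(\rho\|\sigma)$ for $\alpha \geq 1/2$ (see \cite{Tomamichel_2016}), together with the scaling rule $D_{\alpha}(\rho\|\lambda\sigma) = D_{\alpha}(\rho\|\sigma) - \log\lambda$. Setting $\lambda = p_{A}(a)$ and $\sigma = \rho^{|a}_{BE}$, these give
\begin{equation*}
    D_{\alpha}(\rho_{BE}^{|a}\|\id_{B}\otimes\rho_{E}) \leq D_{\alpha}\big(\rho_{BE}^{|a}\,\big\|\,p_{A}(a)\,\rho_{BE}^{|a}\big) = -\log p_{A}(a),
\end{equation*}
where we used $D_{\alpha}(\rho\|\rho)=0$.

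Multiplying by $\alpha - 1 \geq 0$ and exponentiating yields $2^{(\alpha-1)D_{\alpha}(\rho_{BE}^{|a}\|\id_{B}\otimes\rho_{E})} \leq p_{A}(a)^{1-\alpha}$. For $\alpha = \infty$ the same argument applies to the max-divergence, and the statement is read as the limit.

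The only delicate point is the blockwise operator inequality, which requires decomposing along the classical register $B$. The classicality of $B$ is not strictly necessary, since $p_{A}(a)\rho^{|a}_{BE} \leq \rho_{ABE}$ restricted to the $a$ block, combined with an operator-valued bound on a quantum $B$, would be more subtle. Here, however, the classical decomposition makes the step immediate. The rest is standard monotonicity.
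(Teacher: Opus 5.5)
Your proof is correct and is essentially the paper's. Both rest on the operator inequality $p_{A}(a)\,\rho_{BE}^{|a} \leq \id_{B}\otimes\rho_{E}$; the paper obtains it in two steps, via $\rho_{BE}^{|a}\leq \id_{B}\otimes\rho_{E}^{|a}$ and $p_{A}(a)\,\rho_{E}^{|a}\leq\rho_{E}$, and then concludes with $D_{\alpha}\leq D_{\infty}\leq -\log p_{A}(a)$, which is equivalent to your antimonotonicity-plus-scaling step.

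One correction to your closing aside: the classicality of $B$ is genuinely necessary. For trivial $A$ and $\rho_{BE}$ maximally entangled of local dimension $d$, the left-hand side equals $d^{\alpha-1}>1$ for $\alpha>1$.
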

\begin{proof}
    We first note that $D_{\alpha}(\rho_{BE}^{|a} \| \id_{B} \otimes \rho_{E}) \leq D_{\infty}(\rho_{BE}^{|a} \| \id_{B} \otimes \rho_{E})$, where $D_{\infty}(\rho \| \sigma)$ is the max-divergence, defined by
    \begin{equation}
        D_{\infty}(\rho\|\sigma) = \inf \{\lambda \ : \ \rho \leq 2^{\lambda} \sigma \}. \label{eq:max_div}
    \end{equation}
    Since $\rho_{BE}^{|a}$ is a cq-state, $\rho_{BE}^{|a} \leq \id_{B} \otimes \rho_{E}^{|a}$. Furthermore, $p_{A}(a)\, \rho_{E}^{|a} \leq \rho_{E}$. Putting both operator inequalities together, we find
    \begin{equation*}
        \rho_{BE}^{|a} \leq 2^{-\log(p_{A}(a))} \id_{B} \otimes \rho_{E}.
    \end{equation*}
    Hence $\lambda = -\log(p_{A}(a))$ is a feasible point to \eqref{eq:max_div}, and we find $D_{\alpha}(\rho_{BE}^{|a} \| \id_{B} \otimes \rho_{E}) \leq D_{\infty}(\rho_{BE}^{|a} \| \id_{B} \otimes \rho_{E}) \leq -\log(p_{A}(a))$. Inserting this upper bound into the expression $2^{(\alpha-1)D_{\alpha}(\rho_{BE}^{|a} \| \id_{B} \otimes \rho_{E})}$ establishes the claim. 
\end{proof}

\end{document}